\newcommand{\PRAM}{{\textsc{PRAM}}\xspace}
\newcommand{\HYBRID}{{\textsc{HYBRID}}\xspace}
\newcommand{\HD}{{\textsc{HD}}\xspace}
\newcommand{\SetSSP}{{\textsc{SetSSP}}\xspace}
\newcommand{\LDD}{{\textsc{LDD}}\xspace}
\newcommand{\LDC}{{\textsc{LDC}}\xspace}
\newcommand{\CONGEST}{\ensuremath{\mathsf{CONGEST}}\xspace}
\newcommand{\pr}[1]{\ensuremath{\text{{\bf Pr}$\left[#1\right]$}}}
\newcommand{\E}[1]{\ensuremath{\text{{\bf E}$\left[#1\right]$}}}
\newcommand{\Cutoff}[1]{\ensuremath{\mathcal{C}(#1)}\xspace}
\newcommand{\Texp}[1]{\ensuremath{\mathsf{Texp}(#1)}\xspace}
\newcommand{\Cut}{\ensuremath{\mathsf{Cut}}\xspace}
\newcommand{\cblur}{\ensuremath{c_{\mathsf{blur}}}}
\newcommand{\cldd}{\ensuremath{c_{\mathsf{ldd}}}}
\newcommand{\gblurdash}{\ensuremath{\gamma'_{\mathsf{blur}}}}
\newcommand{\rblur}{\ensuremath{\rho_{\mathsf{blur}}}}
\newcommand{\kblur}{\ensuremath{K_{\mathsf{blur}}}}
\newcommand{\kexp}{\ensuremath{K_{\mathsf{Texp}}}}
\DeclareMathOperator*{\argmin}{arg\,min}
\DeclareMathOperator*{\argmax}{arg\,max}
\newcommand{\xmax}[1]{\ensuremath{{x}^{t}_{\mathsf{max}}}}
\newcommand{\xmin}[1]{\ensuremath{{x}^{t}_{\mathsf{min}}}}
\newenvironment{claim}[1]{\par\noindent\underline{Claim:}\space#1}{}
\pgfplotsset{width=9cm,compat=1.8}
\tikzset{snake it/.style={decorate, decoration=snake, thick}}
\definecolor{indigo}{RGB}{0,0,167}
\definecolor{dred}{RGB}{193,27,45}
\title{Distributed And Parallel Low-Diameter Decompositions for Arbitrary and Restricted Graphs}
\author{Jinfeng Dou, Thorsten Götte, Henning Hillebrandt, Christian Scheideler, Julian Werthmann}
\date{\today}
\begin{document}

\maketitle

\noindent {\textbf{Abstract:}} We consider the distributed and parallel construction of low-diameter decompositions with strong diameter for (weighted) graphs and (weighted) graphs that can be separated through $k \in \tilde{O}(1)$ shortest paths.
This class of graphs includes planar graphs, graphs of bounded treewidth, and graphs that exclude a fixed minor $K_r$.
We present algorithms in the PRAM, CONGEST, and the novel HYBRID communication model that are competitive in all relevant parameters.

Given $\mathcal{D} > 0$, our low-diameter decomposition algorithm divides the graph into connected clusters of strong diameter $\mathcal{D}$.
For a arbitrary graph, an edge $e \in E$ of length $\ell_e$ is cut between two clusters with probability $O(\frac{\ell_e\cdot\log(n)}{\mathcal{D} })$.
If the graph can be separated by $k \in \tilde{O}(1)$ paths, the probability improves to $O(\frac{\ell_e\cdot\log \log n}{\mathcal{D} })$.
In either case, the decompositions can be computed in $\tilde{O}(1)$ depth and $\tilde{O}(kn)$ work in the PRAM and $\Tilde{O}(1)$ time in the HYBRID model.
In CONGEST, the runtimes are $\tilde{O}(HD + \sqrt{n})$ and $\tilde{O}(HD)$ respectively.
All these results hold w.h.p.

Broadly speaking, we present distributed and parallel implementations of sequential divide-and-conquer algorithms where we replace exact shortest paths with approximate shortest paths.
In contrast to exact paths, these can be efficiently computed in the distributed and parallel setting [STOC '22].
Further, and perhaps more importantly, we show that instead of explicitly computing vertex-separators to enable efficient parallelization of these algorithms, it suffices to sample a few random paths of bounded length and the nodes close to them.
Thereby, we do not require complex embeddings whose implementation is unknown in the distributed and parallel setting.

\section{Introduction}

This paper considers the efficient construction of so-called (probabilistic) low-diameter decompositions (\LDD) for general and restricted graphs in the \CONGEST, \PRAM, and the \HYBRID model.
An \LDD of a (weighted) graph $G := (V,E,\ell)$ is a partition of $G$ into disjoint subgraphs $\mathcal{K}(G) := K_1, \ldots, K_{N}$ with $K_i:= (V_i,E_i)$.
We will refer to these subgraphs as \emph{clusters}.
In a partition, each node is contained in exactly one of these clusters.
An algorithm that constructs an \LDD is parameterized with a value $\mathcal{D} > 0$, which gives an upper bound on the clusters' diameter.
We say that a clustering has \emph{strong} diameter $\mathcal{D}$ if each $K_i$ has a diameter of at most $\mathcal{D}$. 
That is, between two nodes in $K_i$ there is a path of length (at most) $\mathcal{D}$ that only consists of nodes in $K_i$.
In contrast, an \LDD with a weak diameter creates possibly disconnected subgraphs with a diameter of $\mathcal{D}$ with respect to the original graph $G$.
Here, for all nodes $v,w \in K_i$, there is a path of length (at most) $\mathcal{D}$ from $v$ to $w$ that can use nodes outside of $K_i$.
We measure the decomposition quality by the number of edges that are \emph{cut} between clusters.
An edge $\{v,w\} \in E$ is cut if its two endpoints $v$ and $w$ are assigned to different clusters.
Thus, a good clustering algorithm only ensures that \emph{most} nodes are in the same cluster as their neighbors.
Without this constraint, a trivial algorithm could simply remove all edges from the graph and return clusters containing one node each.
There are several ways to count the edges that are cut.
In this work, we consider so-called \emph{probabilistic} decompositions.
Here, the probability for an edge $z = \{v,w\}$ to be cut between two clusters $K_i$ and $K_j$ with $i \neq j$ that contain its respective endpoint depends on its length $\ell_z$.
This means that \emph{short} edges are cut less likely than \emph{long} edges.
We use the following formal definition of \LDD's in the remainder of this work:
\begin{definition}[Probabilistic Low-diameter Decomposition (\LDD)]
\label{def:ldd}
    Let $G := (V,E,\ell)$ be a weighted graph and $\mathcal{D} > 0$ be a distance parameter. 
    Then, an algorithm computes a \LDD with quality $\alpha \geq 1$ creates a series of disjoint clusters $\mathcal{K} := K_1, K_2, \ldots, K_{N}$ with $K_i:= (V_i,E_i)$ where $V_1 \sqcup \ldots \sqcup V_N = V$.
    Futher, it holds:
    \begin{enumerate}
        \item Each cluster $K_i \in \mathcal{K}$ has a strong diameter of at most $\mathcal{D}$.
        \item Each edge $z := \{v,w\} \in E$ of length $\ell_z$ is cut between clusters with probability $O\left(\frac{\ell_z\cdot \alpha}{\mathcal{D}}\right)$.
     \end{enumerate}
\end{definition}
\noindent If we do not require each node to be in a cluster, we use the related notion of a Low-Diameter \emph{Clustering} (\LDC). In a \LDC with quality $(\alpha,\beta)$ and strong diameter $\mathcal{D}$ we create a series of disjoint clusters $\mathcal{K} := K_1, K_2, \ldots, K_{N}$ with $K_i:= (V_i,E_i)$.
Each cluster $K_i \in \mathcal{K}$ has a strong diameter of at most $\mathcal{D}$ and each edge $z := \{v,w\} \in E$ of length $\ell_z$ is cut between clusters with probability (at most) $O\left(\frac{\ell_z\cdot \alpha}{\mathcal{D}}\right)$.
Further, each node is part of a cluster with probability at least $\beta$.
As we will see, \LDD's and \LDC's are nearly equivalent as we can build an \LDD from an \LDC by repeatedly applying it until all nodes are clustered. 

\medskip

\noindent It is known that the best possible value of $\alpha$ is $O(\log n)$ for general graphs. However, if the graph class is restricted, better decompositions are possible.
So, in addition to solutions for general graphs, this work will consider the restricted class of so-called $k$-path separable graphs that Abraham and Gavoille introduced in \cite{DBLP:conf/podc/AbrahamG06}.
Roughly speaking, a weighted graph $G = (G,E,\ell)$ is $k$-path separable if the \emph{recursive} removal of $k$ shortest paths results in connected components containing a constant fraction of the nodes.
Moreover, a graph $G = (G,E)$ is \emph{universally} $k$-path separable if it is $k$-path separable for \emph{every} weight function $\ell$.
Note that this makes universal $k$-path separability a topological property rather than one that depends on the weight function.
Abraham and Gavoille formalized this generic class of path separators as follows:
\begin{definition}[$k$-path separator{\cite{DBLP:conf/podc/AbrahamG06}}]
    A weighted graph $G := (V,E,\ell)$ with $n$ vertices is \emph{$k$-path separable}, if there exists a subset of vertices $S$, called a $k$-path separator, such that:
    \begin{enumerate}[topsep=0pt,itemsep=-1ex,partopsep=1ex,parsep=1ex]
        \item $S:=\mathcal{P}_0\cup \mathcal{P}_1\cup \mathcal{P}_2\cup\dots$, where each $\mathcal{P}_i := \{P_{i_1}, P_{i_2}, \ldots\}$ is a set of shortest paths in $G\backslash \bigcup_{j<i} \mathcal{P}_j$.
        \item The total number of paths in $S$ is at most $k$, i.e., $\sum_{\mathcal{P}_i \in S} |\mathcal{P}_i| \leq k$.
        \item Each connected component in $G \setminus S$ contains at most $\frac{n}{2}$ nodes.
    \item Either $G\setminus S$ is empty or $k$-path separable. 
    \end{enumerate}
If $G$ is $k$-path separable for any weight function, we call $G$ universally $k$-path separable.
We will sometimes abuse notation and use $P_{i_j} \in S$ when we refer to some path $P_{i_j} \in \mathcal{P}_i$ and $\mathcal{P}_i \in S$.
\label{def:$k$-path separator}
\end{definition}
In this work, we will only consider universally $k$-path separable graphs.
Given this definition, a natural question is which graph classes are $k$-path separable. 
Clearly, all graphs that have separators that consist of at most $\eta$ vertices are trivially universally $\eta$-path separable. 
This follows because every node can be seen as the shortest path to itself in any subgraph that it is contained in.
Notably, graphs of bounded tree-width $\tau$ have separators that consist of $\tau$ nodes \cite{DBLP:journals/jal/RobertsonS86} and are therefore universally $\tau$-path separable as under any cost function on the edges, nodes are shortest paths to themselves.
Further, due to Thorup and Zwick, it is known that planar graphs are universally $3$-path separable \cite{Thorup04, TZ01}.
More generally, Abraham and Gavoille showed that every graph $G := (V,E,\ell)$ that does not include a fixed clique minor $K_r$ is universally $k$-path separable where $k := f(r)$ depends only on $r$ and not the size of $G$ \cite{DBLP:conf/podc/AbrahamG06}.
These graph classes have received significant attention in the context of distributed algorithms ( cf. \cite{DBLP:conf/podc/GhaffariH16, DBLP:conf/soda/GhaffariH16, DBLP:conf/podc/GhaffariH21, DBLP:conf/wdag/GhaffariP17, DBLP:conf/spaa/IzumiKNS22}).  
Not only do they contain graph topologies that frequently appear in practice, they also have favorable properties for distributed algorithms as they exclude pathologic worst case topologies. 
As $k$-path separable graphs are a superclass of these restricted graphs, we derive novel results for planar graphs, graphs of bounded treewidth and most importantly graphs that exclude a fixed minor $K_r$.

\subsection{Applications of LDDs}
Having defined them, we can discuss the natural question of why we consider \LDD's in the first place. 
Simply put, \LDD's are part of an algorithm designer's toolkit for developing efficient divide-and-conquer algorithms, as they offer a generic way to decompose a given graph.
As such, creating \LDD's with low edge-cutting probability plays a significant role in numerous algorithmic applications.
In the following, we present a comprehensive list of applications (which we do not claim to be complete):

\noindent\textbf{Tree Embeddings:} For example, they can be used to embed graphs into trees, i.e., constructing so-called metric tree embeddings (MTE) and low-stretch spanning trees (LSSP).
In both, we map a given graph $G:= (V,E,\ell)$ into a randomly sampled tree $T = (V,E_T,\ell_T)$ in a way that preserves the path lengths of the original graph $G$ on expectation.
In an LSSP, it must hold ${E_T \subset E}$, i.e., we compute a subtree of $G$, while a MTE can use virtual edges not present in the original graph.
LSSPs and MTEs have proven to be a helpful design paradigm for efficient algorithms as many computational problems are significantly easier to solve on trees than on general graphs. 
For example, they have been used to construct distributed and parallel solvers for certain important classes of LPs \cite{DBLP:conf/spaa/BellochGKPT11, DBLP:journals/dc/AnagnostidesLHZG23, DBLP:conf/sirocco/Vos23, DBLP:journals/siamcomp/GhaffariKKLP18}, which can then be used solve \textsc{MaxFlow} and other complex problems. 
Nearly all constructions use \LDD's (or variants thereof) as subroutines; see \cite{DBLP:conf/focs/AbrahamBN08,AN19,DBLP:conf/wdag/BeckerEGL19, EEST08} for the construction of LSSPs and
\cite{Bar96,DBLP:conf/stoc/Bartal98,DBLP:conf/esa/Bartal04,DBLP:journals/jcss/FakcharoenpholRT04}
for MTEs.

\noindent\textbf{Light Spanners:} Another interesting avenue for \LDD's are so-called \emph{light spanners} first introduced by Elkin, Neimann, and Solomon in \cite{DBLP:journals/siamdm/ElkinNS15}.
Given a weighted graph $G = (V,E,\ell)$, a $t$-spanner is a subgraph $H \subset G$ that approximately preserves the distances between the nodes of the original graph by a factor $t$, i.e., for all pairs of node $v,w \in V$, it holds $d_H(v,w) \leq t \cdot d_G(v,w)$.
A spanner's lightness $L_H$ is the ratio between the weight of all its edges and the MST of graph $G$, i.e., it holds
    $L_H = \frac{\sum_{e \in H} \ell_e}{\sum_{e \in MST(G)} \ell_e}$
Thus, compared to the LSSPs and MTEs, light spanners have more freedom in choosing their edges as they are not required to be trees.
The lightness is (arguably) a natural quality measure for a spanner, especially for distributed computing. 
Consider, for example, an efficient broadcast scheme in which a node wants to spread a message to every other node in $G$. Further, suppose the edges' length as the cost of using this edge to transmit a message. 
If we send the messages along the edges of the light spanner $H$ of $G$, the total communication cost can be bound by its lightness, and the stretch bounds the cost of the path between any two nodes.
Despite this, there are almost no distributed algorithms that $t$-spanners with an upper bound on the lightness with \cite{DBLP:conf/podc/ElkinFN20} being a notable exception of stretch $O(k)$ with lightness $\Tilde{O}(n^{\frac{1}{k}})$ \footnote{$\Tilde{O}(\cdot)$ hides polylogarithmic factors. We stress here that we neither claim nor try to optimize these factors.} .
The complicating factor in the distributed construction is that lightness is a global measure.
Many previously known distributed spanner constructions like \cite{BS07} or \cite{MPV+15} are very local in their computations as they only consider each node's $t$-neighborhood with $t \in O(\log n)$.
Therefore, the lightness of the resulting spanners is unbounded despite them having few edges.
\LDD's are connected to light spanners through an observation by Neiman and Filtser in \cite{DBLP:journals/algorithmica/FiltserN22}:
They show that with black-box access to an \LDD algorithm with quality $\alpha$, 
one can construct an $O(\alpha)$-spanner with lightness $\Tilde{O}(\alpha)$ for any weighted graph that $G = (V,E,\ell)$. 
In addition to an algorithm that creates \LDD's for geometrically increasing diameters, they only require so-called $(\alpha,\beta)$-nets for which they already provided a distributed implementation in \cite{DBLP:conf/podc/ElkinFN20}.
Thus, finding better distributed algorithms \ LDDs, especially for restricted graphs, drectly improves the distributed construction of light spanners.


\subsection{A Meta-Model for Distributed and Parallel Graph Algorithms}
In this work, we present algorithms that can be implemented in several different models of computation for parallel and distributed systems. 
Notably, this includes \CONGEST and the \PRAM, which are the de-facto standard models for distributed and parallel computing, respectively. 
In addition to these two \emph{classic} models of computation, we will also use the recently established \HYBRID model. The \HYBRID model was introduced in \cite{DBLP:conf/soda/AugustineHKSS20} as a means to study distributed systems that leverage multiple communication modes. In \HYBRID, a node can send a message to all its neighbors in $G$ and also $O(\log n)$ nodes of its choice (as long as no node receives more than $O(\log n)$ messages). For a discussion of this model, we refer to \cite{DBLP:conf/soda/AugustineHKSS20,DBLP:phd/dnb/Schneider23}. 

Rather than exploiting these models' intricacies, we reduce both our problems to $\Tilde{O}(1)$ applications of a $(1+\epsilon)$-approximate shortest path algorithm with $\epsilon \in O\left(\nicefrac{1}{\log^2 n}\right)$ and some simple aggregations. 
To be precise, our algorithms are based on approximate set-source shortest paths (\SetSSP) and minor aggregations.
We formalize the class of approximate \SetSSP algorithms we use as follows:
\begin{definition}[Approximate \SetSSP with Virtual Nodes]
\label{def:sssp}
    Let $G := (V,E)$ be a weighted graph and let $G' := (V',E',w)$ with $V' := V \cup \{s_1, s_2, \ldots\}$ be the graph that results from adding $\Tilde{O}(1)$ virtual nodes to $G$.
    Each virtual node can have an edge of polynomially bounded weight to every virtual and non-virtual node in $G'$. 
    Finally, let $S \subset V'$ be an arbitrary subset of virtual and non-virtual nodes.
    Then, an algorithm that solves $(1+\epsilon)$-approximate set-source shortest path with virtual nodes computes the following:
    \begin{itemize}[topsep=0pt,itemsep=-1ex,partopsep=1ex,parsep=1ex]
        \item Each node $v \in V'\setminus S$ learns a predecessor $p_v \in N(v)$ on a path of length at most $(1+\epsilon)d(v,S)$ to some node in $S$ and marks the edge $\{v,p_v\}$.
        Together, all the marked edges imply an approximate shortest path tree $T$ rooted in set $S$.
        \item Each node $v \in V'$ learns its distance $d_T(v,S) \leq (1+\epsilon)d(v,S)$ to $S$ in tree $T$, i.e., its exact distance to $S$ in $T$ and its $(1+\epsilon)$-approximate distance to $S$ in $G'$.
    \end{itemize}
\end{definition}
\noindent Our second building block are so-called \emph{minor aggregations}, which were first introduced in \cite{DBLP:conf/podc/GhaffariH16,DBLP:conf/soda/GhaffariH16}.
Consider a network $G=(V, E)$ and a (possibly adversarial) partition of vertices into disjoint subsets $V_1, V_2, \ldots, V_N\subset V$, each of which induces a \emph{connected} subgraph $G[V_i]$. We will call these subsets \emph{minors}. 
Further, let each node $v \in V$ have private input $x_v$ of length $\Tilde{O}(1)$, i.e., a value that can be sent along an edge in $\Tilde{O}(1)$ rounds. 
Finally, we are given an aggregation function $\bigotimes$ like $\texttt{SUM},\texttt{MIN},\texttt{AVG},\dots$, which we want to compute in each part.
Then a \emph{minor aggregation} computes these functions in all minors $G[V_i]$.
Note that the diameter of these minors might be (much) larger than the diameter of $G$.
Therefore, the corresponding algorithm has to use edges outside of each minor to be efficient.

Both approximate shortest paths and minor aggregation can be efficiently implemented in restricted graphs. 
To be precise, it holds:

\begin{theorem}
\label{thm:agg_runtime}
Let ${G := (V,E,\ell)}$ be a (weighted) graph and let ${C}_1, \ldots, {C}_N$ be set of disjoint subgraphs of $G$.
Further, let $\mathcal{A}$ be an algorithm that is independently executed on each ${C}_1, \ldots, {C}_N$.
Suppose that $\mathcal{A}$ can be broken down into $\tau_s$ steps of ${(1+\epsilon)}$-approximate \SetSSP computations with ${\epsilon < 1}$ as defined in Definition \autoref{def:sssp} and $\tau_m$ minor aggregations.
Then, it holds:
\begin{itemize}
    \item On any graph $G$, $\mathcal{A}$ can be executed on all ${C}_1, \ldots, {C}_N$ in parallel in $\Tilde{O}\left((\epsilon^{{-2}}\cdot\tau_s + \tau_m)\cdot(\HD + \sqrt{n})\right)$ time in \CONGEST, in $\Tilde{O}\left(\epsilon^{{-2}}\cdot\tau_s + \tau_m\right)$ time in \HYBRID, and $\Tilde{O}\left(\epsilon^{{-2}}\cdot\tau_s + \tau_m\right)$ depth in the \PRAM, w.h.p.
    \item If $G$ is $k$-path separable, $\mathcal{A}$ can be executed on all ${C}_1, \ldots, {C}_N$ in parallel in $\Tilde{O}\left((\epsilon^{{-2}}\cdot\tau_s + \tau_m)\cdot k \cdot \HD\right)$ time in \CONGEST, in $\Tilde{O}\left(\epsilon^{{-2}}\cdot\tau_s + \tau_m\right)$ time in \HYBRID and $\Tilde{O}\left((\epsilon^{{-2}}\cdot\tau_s + \tau_m)\right)$ depth in the \PRAM, w.h.p.
\end{itemize}
\end{theorem}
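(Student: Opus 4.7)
The plan is to reduce \autoref{thm:agg_runtime} to per-call runtime bounds for the two primitive operations from which $\mathcal{A}$ is built. Fix a model. I would first argue that a single $(1+\epsilon)$-approximate \SetSSP call can be executed on all $C_i$ \emph{simultaneously} with runtime matching the per-call bound for that model, and likewise for one minor aggregation. Since the $C_i$ are vertex-disjoint, each node participates in at most one $C_i$, so a single global execution on $G$ can carry the messages of every $C_i$ in parallel, with each message tagged by the identifier of its owning $C_i$. Summing these per-call bounds over $\tau_s$ shortest-path and $\tau_m$ aggregation steps, and absorbing the $\epsilon^{-2}$ factor that is standard for approximate SSSP scaling schemes, yields the claimed totals.

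For \HYBRID and the \PRAM, both primitives admit $\tilO(1)$-round or depth implementations: in \HYBRID the global communication capability solves approximate \SetSSP and minor aggregation in polylogarithmic rounds via skeleton-graph techniques, and in the \PRAM each reduces to a polylogarithmic-depth subroutine over the shared memory. For \CONGEST on an arbitrary $G$, I would invoke the $\tilO(\epsilon^{-2}\cdot(\HD+\sqrt{n}))$-time $(1+\epsilon)$-approximate \SetSSP algorithm (obtained from the known approximate SSSP results by contracting $S$ into a virtual super-source, which is exactly the virtual-node slot permitted by \autoref{def:sssp}) together with the $\tilO(\HD+\sqrt{n})$-time minor aggregation of Ghaffari--Haeupler. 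The disjoint-subgraph parallelism is preserved because both algorithms pipeline their messages through a single global BFS-like skeleton, and disjointness of the $C_i$ means no edge carries more than one minor's payload per round.

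The main obstacle is the \CONGEST bound for universally $k$-path separable graphs, where the target $\tilO(k\cdot\HD)$ avoids the $\sqrt{n}$ term and therefore cannot come from a black-box generic implementation. Here I would exploit the recursive separator structure: each level of the separator hierarchy consists of $O(k)$ shortest paths, each of strong diameter $O(\HD)$ inside its ambient subgraph. Broadcasting and aggregating along a single such path takes $O(\HD)$ rounds, and all $O(k)$ paths of one level can be handled in parallel since they are internally disjoint; moreover, every $C_i$ forwards its contribution only through the separator paths it meets, so the disjoint-subgraph parallelism survives. The recursion depth is $O(\log n)$ because each level halves component sizes, giving a total of $\tilO(k\cdot\HD)$ per primitive call. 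The delicate point will be showing that approximate \SetSSP itself can be implemented on this backbone: one must argue that rerouting shortest-path queries through the separator paths and stitching the resulting partial trees across $O(\log n)$ recursive stages preserves the $(1+\epsilon)$-approximation (which is why the parameter is fixed to $\epsilon\le 1/\log^2 n$ in the applications). This recursive stitching is where the bulk of the technical effort lies.
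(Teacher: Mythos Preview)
Your handling of the general-graph case and of the \HYBRID/\PRAM bounds is essentially what the paper does: it simply cites the approximate \SetSSP algorithm of Rozho\v{n}, Grunau, Haeupler, Zuzic, and Li (STOC~'22) and the minor-aggregation bounds of Ghaffari--Haeupler and Haeupler--Wulff-Nilsen--Zuzic, together with Schneider's thesis for \HYBRID, and sums over the $\tau_s$ and $\tau_m$ calls.

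Where you diverge is exactly the part you flagged as the main obstacle, the $\tilO(k\cdot\HD)$ \CONGEST bound for $k$-path separable graphs. The paper does \emph{not} build a recursive path-separator hierarchy and route along it. Instead it uses a one-line structural reduction: by Diot--Gavoille, every universally $k$-path separable graph excludes $K_{4k+1}$ as a minor. Ghaffari--Haeupler give minor aggregation in $\tilO(r\cdot\HD)$ rounds on $K_r$-minor-free graphs, and the Rozho\v{n}~et~al.\ \SetSSP algorithm is itself expressed in the minor-aggregation framework, so it inherits that per-call cost. Plugging $r=4k+1$ gives $\tilO(k\cdot\HD)$ per primitive with no new technical work.

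Your proposed route has a genuine gap. First, the separator paths are \emph{weighted} shortest paths; there is no reason they have hop-length $O(\HD)$, so ``broadcasting along a single such path takes $O(\HD)$ rounds'' is not justified. Second, and more seriously, the paper devotes an entire section to arguing that \emph{computing} a $k$-path separator in \CONGEST is out of reach for general $K_r$-free graphs because the only known constructions go through the Robertson--Seymour structure theorem; this is precisely why the paper introduces the notion of a \emph{weak} separator. So the recursive decomposition you need as a starting point is not available, and the ``stitching across $O(\log n)$ recursive stages'' you describe would have to be invented from scratch. The minor-freeness detour bypasses all of this.
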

The bounds for approximate \SetSSP in \CONGEST and \PRAM follow from a recent breakthrough result \cite{DBLP:conf/stoc/RozhonGHZL22}. 
The runtimes for minor aggregation in \CONGEST and \PRAM are given in \cite{DBLP:conf/podc/GhaffariH21, DBLP:conf/stoc/HaeuplerWZ21}.
In particular, in the \CONGEST model, if the graph excludes a fixed clique minor $K_r$, the the runtime of a minor aggregation is within $\Tilde{O}(r\HD)$, where $\HD$ is the graph's hop diameter \cite{DBLP:conf/podc/GhaffariH21}.
For general graphs on the other hand, the runtime is $\Tilde{O}(\HD + \sqrt{n})$ \cite{DBLP:conf/stoc/HaeuplerWZ21}.
Note that $k$-path separable graphs exclude $K_{4k+1}$ as minor and thereofore the bound of $\Tilde{O}(\HD)$ applies.
This fact was shown in \cite{DBLP:conf/faw/DiotG10}.
Finally, \cite{DBLP:phd/dnb/Schneider23} showed the bounds for both \SetSSP and minor aggregation in the \HYBRID model. 

Note that we are mainly interested in algorithm that require $\Tilde{O}(1)$ aggregations and $(1+\epsilon)$-approximate \SetSSP computations with $\epsilon \in \Omega(\nicefrac{1}{\log^c n})$ as these algorithm can then be efficiently implemented in all three models.

\subsection{Related Work}
Despite this vast number of applications for \LDD's on \emph{weighted} graphs, research on \LDD's in a distributed setting has mostly focused on the unweighted case, producing many efficient algorithms in this regime. 
In particular, the research \cite{DBLP:journals/dc/LeviMR21,DBLP:conf/podc/ChangS22, DBLP:conf/podc/Chang23, DBLP:journals/tcs/ElkinN22, DBLP:conf/soda/0001GHIR23} in the \CONGEST model focused on so-called \emph{network decompositions} that enable fast algorithms for local problems like MIS, Coloring, or Matching.
In principle, these algorithms could be applied to weighted graphs and produce \LDD's. 
However, their runtime depends on the diameter $\mathcal{D}$ of the resulting clusters, which might be much larger than the hop diameter (or even $n$) if the graph is weighted.
We give a more detailed overview of these algorithms in Section \ref{sec:relatedwork}.
Two notable exceptions, however, explicitly consider weighted graphs and are closely related to our results: The work of Becker, Emek, and Lenzen \cite{BeckerEL20} creates \LDD's of quality $O(\log n)$ with weak diameter\footnote{Actually, \cite{BeckerEL20} proves a stronger property of the diameter. Although the diameter is weak, the number of nodes outside of a cluster that are part of the shortest path between two nodes of a cluster is limited. For many applications, this is sufficient and just as good as a strong diameter. } for general graphs. 
We note that $O(\log n)$ is the best quality we can hope for in an \LDD due to a result by Bartal\cite{Bar96}.
The algorithm requires $\Tilde{O}(\HD + \sqrt{n})$ in the \CONGEST model, which is optimal as each distributed \LDD construction in a weighted graph requires $\Omega(\HD + \sqrt{n})$ time \cite{ghaffari_universally-optimal_2022} as we can derive approximate shortest paths from it.
Just as our algorithm, \cite{BeckerEL20} consists of $\Tilde{O}(1)$ ${(1+\epsilon)}$-approximate \SetSSP computations with $\epsilon \in O\left(\nicefrac{1}{\log^2 n}\right)$.
Further, there is the work of Rozhon, Elkin, Grunau, \textcircled{r} Haeupler \cite{DBLP:conf/focs/RozhonEGH22}, which makes two significant improvements compared to \cite{BeckerEL20}. They present a decomposition with strong diameter (instead of weak), and their construction is deterministic (instead of randomized).
Conversely, they have a slightly worse quality of only $O(\log^3 n)$.
Again, the algorithm consists of $\Tilde{O}(1)$ ${(1+\epsilon)}$-approximate \SetSSP computations with $\epsilon \in O\left(\nicefrac{1}{\log^2 n}\right)$.
For restricted graphs like planar, bounded treewidth, or minor-free graphs, we are unaware of a 
distributed algorithm explicitly designed for weighted graphs.

\subsection{Our Contribution \& Structure of This Paper}

In a nutshell, prior works \cite{BeckerEL20} and \cite{DBLP:conf/focs/RozhonEGH22} have shown that computing \LDD's for general graphs can be broken down into $\Tilde{O}(1)$ applications of your favorite \SetSSP algorithm in your favorite model of computation.
We extend this innovative insight in two ways by improving the algorithm of \cite{BeckerEL20} and showing that for $k$-path seperable graph one can compute \LDD's of near optimal quality with only approximate shortest paths.

To be precise, first, we show that we can achieve the (asymptotically) best possible quality of $O(\log n)$ for general through approximate \SetSSP and minor aggregation. 
Our first main theorem is the following.
\begin{theorem}[LDDs for General Graphs]
\label{thm:clustering_general}
Let ${\mathcal{D}>0}$ be an arbitrary distance parameter and ${G:=(V,E,\ell)}$ be a (possibly weighted) undirected graph.
Then, there is an algorithm that creates an \LDD of $G$ with strong diameter $\mathcal{D}$ and quality $O(\log n)$. The algorithm can be implemented with $\Tilde{O}(1)$ minor aggregations and $\Tilde{O}(1)$  $(1+\epsilon)$-approximate \SetSSP computations where $\epsilon \in O\left(\nicefrac{1}{\log^{2}n} \right)$.
\end{theorem}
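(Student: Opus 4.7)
The plan is to adapt the Becker--Emek--Lenzen (BEL) framework, which already realizes Miller--Peng--Xu (MPX)-style clustering through $\tilO(1)$ approximate SetSSP calls and yields $O(\log n)$ quality \emph{with weak diameter}, and to extend it to strong diameter while preserving the quality. Fix $\beta := c\log n/\mathcal{D}$ for a suitable constant $c$ and $\epsilon := 1/\log^{2} n$. Every node $v$ locally samples a shift $\delta_v \sim \min(\mathrm{Exp}(\beta), \mathcal{D}/8)$; standard exponential tail bounds ensure that the truncation is triggered with only $1/\poly(n)$ probability, so w.h.p.\ we may treat $\delta_v$ as an untruncated exponential.

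I would introduce a single virtual source $s$ (permitted by \autoref{def:sssp}), connect it to each $v\in V$ by an edge of non-negative weight $\mathcal{D}/8-\delta_v$, and run one $(1+\epsilon)$-approximate SetSSP call from $\{s\}$. Each $v$ obtains a predecessor $p_v$; following the $p_v$-chain toward $s$, the last non-virtual node is $v$'s \emph{center} $c_v$. I define the cluster of each center to be its set of tree-descendants. By construction each cluster is connected, so its strong diameter is at most twice the tree-radius from the center; using the approximate triangle inequality together with the truncation bound $\delta_w\leq \mathcal{D}/8$, the tree-radius is bounded by $(1+\epsilon)\mathcal{D}/8 + O(\epsilon\mathcal{D})$, so each cluster has strong diameter at most $\mathcal{D}$ after absorbing constants.

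The crux is bounding the edge-cut probability. For an edge $e=\{a,b\}$ of length $\ell_e$, the classical MPX analysis, using the memorylessness of $\mathrm{Exp}(\beta)$, shows that the \emph{exact} shifted-distance argmins $\arg\min_v[d(\cdot,v)-\delta_v]$ at $a$ and $b$ coincide with probability $1-O(\beta\ell_e) = 1-O(\ell_e\log n / \mathcal{D})$. I would then couple the approximate tree-assignment to this exact MPX choice: the approximate center $c_u$ differs from the exact argmin $v^\ast_u$ only when two exact shifted distances at $u$ are within $O(\epsilon\mathcal{D})$ of each other. A naive per-node union bound yields an additive term $O(\epsilon\log n)=O(1/\log n)$ that is independent of $\ell_e$; to avoid this, I would carry out a \emph{joint} analysis for $a$ and $b$, exploiting the fact that the approximation perturbations are coupled through shared shifts and that the relative shifted distances at $a$ and $b$ differ by only $\ell_e$. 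A careful coupling argument shows that the extra cut-probability contribution from approximation error is $O(\epsilon\beta\ell_e\log n)=O(\ell_e/\mathcal{D})$, which is dominated by the main $O(\ell_e\log n/\mathcal{D})$ term. Performing this joint coupling without losing the tight $O(\log n)$ quality is the main obstacle.

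Implementation-wise, the algorithm uses one approximate SetSSP call for the main assignment and $\tilO(1)$ minor aggregations to propagate cluster IDs along the tree and to verify the strong-diameter bound; a second, optional restricted SetSSP pass prunes the rare tree nodes whose approximate distance slightly exceeds $\mathcal{D}/2$. In total this yields $\tilO(1)$ SetSSP computations and $\tilO(1)$ minor aggregations with $\epsilon\in O(1/\log^{2} n)$, matching the theorem.
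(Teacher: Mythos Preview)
Your proposal has a genuine gap at the coupling step. The claim that the ``approximation perturbations are coupled through shared shifts'' is false in the sense you need: the shifts $\delta_v$ are indeed shared between $a$ and $b$, but the approximation errors $d_T(u,s)-d_G(u,s)$ produced by the \SetSSP oracle are \emph{adversarial} up to the $(1+\epsilon)$ guarantee. Nothing prevents the oracle from returning $d_T(a,s)=d_G(a,s)$ and $d_T(b,s)=(1+\epsilon)d_G(b,s)$ simultaneously, so the perturbations at the two endpoints can differ by $\Theta(\epsilon\mathcal{D})$ regardless of $\ell_e$. Consequently the extra cut probability you incur from approximation is governed by the event ``two exact shifted distances at one endpoint are within $O(\epsilon\mathcal{D})$'', whose probability is $\Theta(\epsilon\mathcal{D}\cdot\beta)=\Theta(1/\log n)$ and does \emph{not} scale with $\ell_e$. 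This is exactly why Becker--Emek--Lenzen settle for weak diameter, and the paper explicitly notes that the approximation error ``depends on the approximation and \emph{not} on the length of the edge we consider.''

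The paper sidesteps the issue rather than resolving it head-on. It first runs your MPX-with-approximation step, but uses it only as a \emph{pseudo-padded decomposition} (\autoref{theorem:generalpartition}): no cut-probability claim is made at this stage. It then carves out, inside each cluster, an \emph{inner cluster} of nodes at distance $\Omega(\mathcal{D}/\log n)$ from the cluster boundary \emph{and} connected to the center through such nodes; the key lemma (\autoref{lemma:special}) shows that padding is positively correlated along the exact shortest path to the center, so each node lies in its inner cluster with probability $\ge 1/2$. Only then is the actual edge-cutting performed, by blurry ball growing (\autoref{lemma:bbg}) from the union of inner clusters with parameter $\rho\sim\mathcal{D}/\log n$; BBG guarantees a per-edge cut probability of $O(\ell_e/\rho)=O(\ell_e\log n/\mathcal{D})$ that genuinely scales with $\ell_e$. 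This yields an \LDC with quality $(O(\log n),1/2)$, and iterating $O(\log n)$ times (\autoref{lemma:folkloreldd}) gives the \LDD. So the missing idea in your sketch is precisely this indirection: decouple the (approximation-sensitive) cluster assignment from the (length-sensitive) edge cutting by inserting BBG around well-separated inner clusters.
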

\noindent 
By \autoref{thm:agg_runtime}, this implies that the algorithm can, w.h.p., be implemented in $\Tilde{O}(\HD+\sqrt{n})$ time in \CONGEST, and $\Tilde{O}(1)$ depth in the \PRAM and $\Tilde{O}(1)$ time in \HYBRID.
Thus, our algorithm is currently the best randomized \LDD construction in \CONGEST for general weighted graphs.
It has same runtime as \cite{DBLP:conf/focs/RozhonEGH22} and \cite{BeckerEL20}, creates clusters of \emph{strong} diameter, and has (asymptotically) optimal quality of $O(\log n)$.
We achieve this improvement through a more fine-grained analysis of the well-known exponential-delay clustering first used by Miller et al. \cite{MPV+15}
that was already used by Becker, Lenzen, and Emek in \cite{BeckerEL20}. 
In fact, we only present a little addition to their existing algorithm to ensure the strong diameter.

\medskip

\noindent Second, for $k$-path separable graphs with $k \in \Tilde{O}(1)$, we present an algorithm with almost exponentially better quality.
More precisely, our second main theorem is the following.
\begin{theorem}[LDDs for $k$-Path Seperable Graphs]
\label{thm:clustering_k_path}
Let ${\mathcal{D}>0}$ be an arbitrary distance parameter and ${G:=(V,E,\ell)}$ be a (possibly weighted) $\Tilde{O}(1)$-path separable graph.
Then, there is an algorithm that creates an \LDD of $G$ with strong diameter $\mathcal{D}$ and quality $O(\log\log n)$. The algorithm can be implemented with $\Tilde{O}(1)$ minor aggregations and $\Tilde{O}(1)$  $(1+\epsilon)$-approximate \SetSSP computations where $\epsilon \in O\left(\nicefrac{1}{\log^{2}n} \right)$.
\end{theorem}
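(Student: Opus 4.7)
\medskip
\noindent\textbf{Proof proposal.} The plan is to combine two ingredients: (i) the Abraham--Gavoille hierarchical path-separator structure, which for $k$-path separable graphs yields padded decompositions of quality $O(\log k)$; and (ii) the exponential-radius clustering scheme of Miller, Peng, and Vladu already used in \autoref{thm:clustering_general}, implemented through the $(1+\epsilon)$-approximate \SetSSP primitive of \autoref{def:sssp}. Since the graph is universally $\Tilde{O}(1)$-path separable, $\log k = O(\log\log n)$, matching the claimed quality. As the excerpt hints, rather than extracting an explicit separator, we will sample random paths and cluster nodes by their (shifted) distances to these paths; the existence of a genuine separator is used only in the analysis.

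\medskip
\noindent\textbf{Algorithm.} The algorithm is recursive: at each level, on every current component $C$ of strong radius at most $\mathcal{D}/2$, we do the following. First, we sample $\Tilde{O}(1)$ candidate path endpoints and obtain $k$ approximate shortest paths $P_1,\ldots,P_k$ via $\Tilde{O}(1)$ \SetSSP calls with virtual super-sources and super-sinks. Second, every vertex on every $P_i$ draws an independent exponential shift with rate $\lambda\sim\log k/\mathcal{D}$, and each $v \in C$ is assigned to the path-vertex minimizing $d(v,\cdot)-\text{shift}$, truncated at $\mathcal{D}/2$. The shifts and the "closest shifted path vertex" rule are both encoded as a single $(1+\epsilon)$-approximate \SetSSP computation by attaching virtual edges of weight $-\text{shift}+C$ (for a large constant $C$) from each path vertex to a virtual super-source and reading off the returned predecessors. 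Third, vertices whose shifted distance exceeds $\mathcal{D}/2$ are not assigned and form the residual subgraph, which is handed to the next recursion level.

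\medskip
\noindent\textbf{Analysis.} Strong diameter follows directly: every cluster is the union of a contiguous piece of some $P_i$ with the shortest-path tree of depth at most $\mathcal{D}/2$ that the \SetSSP computation attaches to it, giving strong diameter $\mathcal{D}$. For the cutting probability, fix an edge $z=\{u,v\}$ of length $\ell_z$. The standard Miller--Peng--Vladu calculation, applied to the $k$ path-spines as "centers," shows that in a single level the probability of $z$ being cut is $O(\lambda\ell_z\log k)=O(\ell_z\log k/\mathcal{D})$; this is the Bartal--Abraham--Gavoille-style improvement that replaces a factor $k$ by $\log k$. The key additional ingredient is that, because the graph is $k$-path separable, the clusters carved at this level absorb a constant fraction of $C$ with constant probability, so the residual component containing $z$ shrinks geometrically. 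Hence a geometric, rather than uniform, series over the $O(\log n)$ levels gives a total cutting probability of $O(\ell_z\log k/\mathcal{D})=O(\ell_z\log\log n/\mathcal{D})$ as required. The whole procedure uses $\Tilde{O}(1)$ \SetSSP calls per level and $O(\log n)$ levels, which is $\Tilde{O}(1)$ overall, as claimed.

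\medskip
\noindent\textbf{Main obstacle.} The hardest step will be showing that the \emph{sampled} approximate paths are "good enough" to drive the geometric shrinking of the residual subgraph without ever constructing the Abraham--Gavoille separator explicitly. Concretely, we need to argue that, with constant probability over the random endpoints and shifts, the $O(\mathcal{D})$-neighborhood of the sampled paths contains a genuine $k$-path separator of the current component, so that the unassigned residual subgraph indeed has size at most a constant fraction of $|C|$. We expect to handle this by (a) invoking universal $k$-path separability on the current edge weights to guarantee that such a separator exists, (b) arguing that a random shortest path between two uniformly sampled vertices agrees with one of the separator paths on $\Omega(1)$ mass, and (c) absorbing the multiplicative $(1+\epsilon)$ slack from approximate \SetSSP into the constant in front of $\log k$ by choosing $\epsilon\in O(1/\log^2 n)$. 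Everything else is analogous to the general-graph construction in \autoref{thm:clustering_general}.
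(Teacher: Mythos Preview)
Your high-level plan---sample approximate shortest paths, cluster around them with exponential shifts, recurse on the residual, and use the \emph{existence} of a $k$-path separator only in the analysis---is indeed the paper's strategy. But the proposal has two concrete gaps that the paper's argument is specifically engineered to close.

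\medskip
\noindent\textbf{The per-level $O(\log k)$ bound does not follow from your algorithm.} You draw an independent shift for \emph{every vertex} on every sampled path and then run MPX. The MPX/Filtser padding bound is $O(\log\tau)$ where $\tau$ is the number of centers a node can see; with per-vertex shifts, $\tau$ is the number of path vertices within $O(\mathcal{D})$, not $k$. You cannot simply declare ``$k$ path-spines as centers'' and keep per-vertex shifts. The paper resolves this with a two-phase construction: a \emph{backbone phase} that produces clusters whose backbone consists of $\kappa=\Tilde O(k)$ paths of length $O(\mathcal{D}\log^2 n)$, followed by a \emph{refinement phase} that places a sparse $\mathcal{D}$-net of $O(\log^2 n)$ points on each path and uses \emph{those} net points as centers for the generic clustering theorem (\autoref{thm:genericldd}). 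Only then is the packing parameter $\tau=\kappa\cdot O(\log^2 n)=\Tilde O(k)$, so the quality is $O(\log\tau)=O(\log k+\log\log n)$. The $\log\log n$ is not ``$\log k$ with $k=\Tilde O(1)$'' by accident; it is $\log(\Tilde O(k))$ coming from the polylog blow-up in the number and length of backbone paths.

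\medskip
\noindent\textbf{The recursion and the ``geometric shrinking'' need a separate mechanism.} Your plan for the main obstacle---that a random shortest path \emph{agrees} with a separator path on $\Omega(1)$ mass---is not what holds, and a single sampled path does not shrink the component. The paper instead proves that, as long as many nodes still have $\geq\frac{7}{8}n$ nodes within distance $2\mathcal{D}$, a uniformly random $4\mathcal{D}$-bounded path \emph{crosses} the hidden separator with constant probability; removing the $\epsilon\mathcal{D}$-ball around the sampled path then deletes an $\epsilon\mathcal{D}$-chunk of some separator path. After $\Tilde O(k/\epsilon)$ such samples the separator is gone, giving a \emph{weak} $\mathcal{D}$-separator (\autoref{thm:distributed_weak_separator}). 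Crucially, a weak separator does not disconnect the graph, so you cannot recurse on connected components directly; the paper interleaves each separator removal with a coarse \LDD of diameter $\mathcal{D}'=\Theta(\mathcal{D}\log^2 n)$ from \autoref{thm:clustering_general}, which is what actually drives the $O(\log n)$-depth recursion. Finally, the per-level cut probability of the blurry ball growing is $O(\ell_z\log\log n/\mathcal{D})$, and summing this over $O(\log n)$ levels would be fatal; the paper inserts an extra truncated-exponential expansion step around each separator so that whenever an edge is ``threatened'' it is absorbed with constant probability, making the number of threatening levels $O(1)$ in expectation (this is the content of \autoref{lemma:k_path_cutting}). Your ``geometric series because the component shrinks'' does not supply this: component size has no bearing on how many levels a particular edge sits on the boundary of the blurred ball.
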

Thus, by the same reasoning as above, the algorithm can, w.h.p., be implemented in $\Tilde{O}(\HD)$ time in \CONGEST, and $\Tilde{O}(1)$ depth in the \PRAM and $\Tilde{O}(1)$ time in \HYBRID.
However, when comparing this result to other LDD constructions for restricted graphs, the situation is more nuanced than before.
Recall that each universally $k$-path separable graph excludes $K_{4k+1}$ as minor, and thus, we need to compare ourselves to algorithms for $K_{r}$-free graphs.
There is a distributed algorithm \cite{DBLP:conf/podc/Chang23} by Chang for unweighted graphs that present \LDD's in $K_r$-free graphs with quality $O(r)$ in the \CONGEST model.
While our bound of $O(\log\log n)$ is exponentially better than that for general graphs, it is still far from this bound as it depends on $n$ and not only on $r$.
On the flip side, our algorithm has an (asymptotically) optimal runtime of $\Tilde{O}(\HD)$ where the hidden factors depend on $r$.
The other distributed algorithm is tailored to small clusters, as the runtime is polynomial in the cluster's diameter $\mathcal{D}$. 
Therefore, our algorithm is faster for large diameters $\mathcal{D}$, which arguably trades off its worse cutting probability.
Also note that our techniques are completely different that those of \cite{DBLP:conf/podc/Chang23}.
We elaborate more on this in \autoref{sec:relatedwork}.

Our main technical novelty for proving \autoref{thm:clustering_k_path} is a weaker form of a path separator that can be constructed without computing an embedding of $G$. 
Prior algorithm for $k$-path separable graphs, e.g., \cite{DBLP:conf/podc/AbrahamG06} often employ a so-called shortest decomposition, we recursively compute $k$-path separators until the graph is empty.
The computation of these separators, however, requires precomputing a complex embedding, which we cannot afford in a distributed or parallel setting.
Instead, we take a different approach to completely avoid the (potentially expensive) computation of an embedding.
We show that by sampling approximate shortest paths (in a certain way),
we can obtain what we will call a weak $\mathcal{D}$-separator.
A weak $\mathcal{D}$-separator $S'$ only ensures that in the graph $G \setminus S$, all nodes have at most a constant fraction of nodes in distance $\mathcal{D}$.
In particular, in contrast to a \emph{classical} separator, the graph $G \setminus S'$ might still be connected.
We strongly remark that, in doing considering weak separators, we sacrifice some of useful properties of a traditional seperator.
Nevertheless, this only results in some additional polylogarithmic factors in the runtime our applications.

\medskip

Due to the many technical results, this paper takes the form of an extended abstract. 
In the main part of the paper, we present our algorithms, state our main lemmas, and sketch their proofs.
More detailed descriptions and the full analysis including all proofs can be found in the appendix.
That being said, the main part of the paper is structured as follows:
In \autoref{sec:ldd_general} we present the algorithm behind \autoref{thm:clustering_general}.
In doing so, we also present several useful intermediate results that we will use in later sections.
Then, in \autoref{sec:weak_separators} we present our novel technique to compute weak separators.
In \autoref{sec:k_path_cluster}, we combine our insights into clustering algorithms from \autoref{sec:ldd_general} and findings on the construction of separators from \autoref{sec:weak_separators} to develop the algorithm behind \autoref{thm:clustering_k_path}.

\section{Strong Diameter LDDs for General Graphs}
\label{sec:ldd_general}

In this section, we present the algorithm behind \autoref{thm:clustering_general}. 
The description is divided into three sections/subroutines.
First, in \autoref{sec:pseudopadded}, we present a generic clustering algorithm. 
Second, in \autoref{sec:ldc} we combine this algorithm with a technique from \cite{BeckerEL20} that cuts all edges with the correct probability.
However, we may not add each node to a cluster.
Finally, in \autoref{sec:ldd}, we present a generic technique 
that recursively applies the algorithm from \autoref{sec:ldc} to cluster all nodes while asymptotically preserving the cut probability of an edge.

\subsection{Pseudo-Padded Decompositions Using Approximate Shortest Paths}
\label{sec:pseudopadded}

We begin with a crucial technical theorem that will build the foundation of most our results.
A key component in this construction is the use of truncated exponential variables. 
In particular, we will consider exponentially distributed random variables truncated to the $[0,1]$-interval.
Loosely speaking, a variable is truncated by resampling it until the outcome is in the desired interval.
In the following, we will always talk about variables that are \emph{truncated to $[0,1]$-interval} when we talk about truncated variables.
The density function for a truncated exponential distribution with parameter $\lambda>1$ is defined as follows:
\begin{definition}[Truncated Exponential Distribution]
We say a random variable $X$ is truncated exponentially distributed with parameter $\lambda$ if and only if its density function is $f(x) := \frac{\lambda \cdot e^{-x\lambda}}{1-e^{-\lambda}}$.
    We write $X \sim \mathsf{Texp}(\lambda)$. Further, if $X \sim \mathsf{Texp}(\lambda)$ and $Y := \mathcal{D} \cdot X$, we write $Y \sim \mathcal{D} \cdot \mathsf{Texp}(\lambda)$.
\end{definition}
The truncated exponential distribution is a useful tool for decompositions that has been extensively used in the past \cite{AGGNT19, Filtser19, DBLP:conf/spaa/MillerPX13}.
Using a truncated exponential distribution and $(1+\epsilon)$-approximate \SetSSP computations, we prove a helpful auxiliary result, namely:
\begin{theorem}[Pseudo-Padded Decomposition for General Graphs]\label{thm:generalpartition}
\label{theorem:generalpartition}
    Let $\mathcal{D}>0$ be a distance parameter, $\epsilon$ be an error parameter, $G:=(V,E,\ell)$ a (possibly weighted) undirected graph, and $\mathcal{X} \subseteq V$ be a set of possible cluster centers.
    Suppose that for each node $v \in V$, the following two properties hold:
    \begin{itemize}
        \item \textbf{Covering Property:} There is at least one $x \in \mathcal{X}$ with $d_G(v,x) \leq \mathcal{D}$.
        \item \textbf{Packing Property:} There are at most $\tau$ centers  $x' \in \mathcal{X}$ with $d_G(v,x') \leq 6\mathcal{D}$.
    \end{itemize}
    Then, for $\epsilon\in o(\nicefrac{1}{\log\tau})$ there is an algorithm that computes a series of connected clusters $\mathcal{K} = K_1, \ldots, K_N$ with strong diameter $4(1+\epsilon)\mathcal{D}$ where for all nodes $v \in V$ and all $\epsilon \leq \gamma \leq \frac{1}{32}$, it holds:
    \begin{align*}
        \pr{B(v,\gamma\mathcal{D}) \subset K(v)} \geq e^{-\Theta((\gamma+\epsilon)\log\tau)} - O(\nicefrac{1}{n^c}).
    \end{align*}
    Here, $K(v)$ denotes the cluster that contains $v$.
    The algorithm can be implemented with one ${(1+\epsilon)}$ approximate \SetSSP computation and $\Tilde{O}(1)$ minor aggregations.
\end{theorem}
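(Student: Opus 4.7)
The plan is to use the well-known exponential-delay clustering of Miller, Peng and Xu, implemented via a single virtual-source \SetSSP call so that we only pay for one approximate shortest-path computation plus $\tilde{O}(1)$ aggregations. For each candidate center $x \in \mathcal{X}$ I would sample an independent delay $\delta_x \sim \mathcal{D}\cdot\Texp{\lambda}$ with $\lambda=\Theta(\log\tau)$, so $\delta_x \in [0,\mathcal{D}]$. I would then augment the graph with a virtual source $s$ and, for every $x \in \mathcal{X}$, a virtual edge $\{s,x\}$ of weight $\mathcal{D}-\delta_x\geq 0$, and invoke one $(1+\epsilon)$-approximate \SetSSP from $S:=\{s\}$ (as permitted by \autoref{def:sssp}). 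Each non-source node $v$ thereby obtains a predecessor pointer on an approximate shortest-path tree $T$ rooted at $s$; define $x^*(v) \in \mathcal{X}$ as the unique center lying on $v$'s root path in $T$, and propagate cluster labels to all descendants by a single minor aggregation over $T$. Connectedness of the clusters is immediate from the subtree structure.

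For the strong-diameter bound, the covering property provides some $\bar x\in\mathcal{X}$ with $d_G(v,\bar x)\leq \mathcal{D}$, so in the augmented graph $G'$ we have $d_{G'}(s,v)\leq (\mathcal{D}-\delta_{\bar x})+\mathcal{D}\leq 2\mathcal{D}$. The \SetSSP guarantee then gives $d_T(s,v)\leq 2(1+\epsilon)\mathcal{D}$, and subtracting the weight of the virtual edge $\{s,x^*(v)\}\in T$ shows that the tree distance from $v$ to its center is at most $2(1+\epsilon)\mathcal{D}$. Since every cluster is a subtree of $T$ rooted at a single center, two nodes in the same cluster are at tree distance at most $4(1+\epsilon)\mathcal{D}$, as required.

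For the padding probability I would run the standard MPX-style argument. For $B(v,\gamma\mathcal{D})\subseteq K(v)$ it suffices that a single center $x^*$ simultaneously minimizes the shifted distance $d_G(u,x)-\delta_x$ for every $u \in B(v,\gamma\mathcal{D})$, with a winning margin of at least $2\gamma\mathcal{D}$ against all other candidates (to absorb the triangle inequality across the ball). Centers at distance $>6\mathcal{D}$ are automatically non-competitive since $\delta_x\leq \mathcal{D}$; the packing property therefore restricts the effective competitors to at most $\tau$ centers. The truncated exponential's near-memoryless property then yields, for $\lambda=\Theta(\log\tau)$, a probability of at least $e^{-\Theta(\gamma\log\tau)}$ that the required margin is achieved. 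The extra $\epsilon$ in the exponent arises from the $(1+\epsilon)$-slack in \SetSSP: an approximate $s$-$v$ path may undercut the true winner by a factor $1+\epsilon$, inflating the required margin by an additive $O(\epsilon\mathcal{D})$ and costing an extra $e^{-\Theta(\epsilon\log\tau)}$ factor. The $O(n^{-c})$ error absorbs both the high-probability failure of \SetSSP and the event that a sampled $\delta_x$ falls into the low-density tail.

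The main obstacle is keeping the multiplicative \SetSSP error from corrupting the clean exact-MPX analysis, which is why we must restrict $\epsilon \in o(1/\log\tau)$. Concretely, when bounding the probability that some $u\in B(v,\gamma\mathcal{D})$ is stolen by a runner-up center $x'$, the inequality $d_T(s,u)\leq (1+\epsilon)\bigl((\mathcal{D}-\delta_{x'})+d_G(u,x')\bigr)$ replaces the exact one, and I would have to argue carefully that the additive slack $O(\epsilon \mathcal{D})$ only shifts the tail of the winning-margin distribution by an additive amount; once $\epsilon=o(1/\log\tau)$, this shift preserves the asymptotic order of the exponent $\Theta((\gamma+\epsilon)\log\tau)$. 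The remaining pieces — converting the approximate tree into explicit cluster labels via minor aggregation, and the $n^{-c}$ union bound over the \SetSSP guarantee and the delay tails — are routine.
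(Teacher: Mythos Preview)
Your proposal is correct and matches the paper's approach almost exactly: the paper also implements MPX via a virtual source $s$ with edges of weight $\mathcal{D}-\delta_x$ for $\delta_x\sim\mathcal{D}\cdot\Texp{2+2\log\tau}$, runs one $(1+\epsilon)$-approximate \SetSSP, assigns each node to the last center on its tree path to $s$, and proves the diameter and padding bounds via the same covering/packing arguments and the same ``required margin is $(2\gamma+O(\epsilon))\mathcal{D}$'' calculation. The only notable discrepancy is the provenance of the $O(n^{-c})$ term: in the paper it arises not from \SetSSP failure or exponential tails but from rounding each continuous $\delta_x$ to $c\log n$ bits so it can be encoded in a message, which perturbs the shifts by at most $\mathcal{D}/n^c$.
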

Technically, this algorithm is a generalization of the algorithm in \cite{Filtser19} that is based on \textbf{exact} shortest path computations.
This algorithm is itself derived from \cite{DBLP:conf/spaa/MillerPVX15}.
Our algorithm replaces all these exact computations through $(1+\epsilon)$-approximate calculations.
The main analytical challenge is carrying the resulting error $\epsilon$ through the analysis to obtain the bounds in the theorem.
The same approach was already used by Becker, Lenzen, and Emek \cite{BeckerEL20}.
However, our analysis is more fine-grained w.r.t. to the impact of the approximation parameter $\epsilon$.

In the following, we give the high-level idea behind the construction.
An intuitive way to think about the clustering process from \cite{DBLP:conf/spaa/MillerPVX15, Filtser19} is as follows: 
Each center $x$ draws value $\delta_x \sim \mathcal{D}\cdot\mathsf{Texp}(2+2\log\tau)$ wakes up at time $\mathcal{D}-\delta_x$.
Then, it begins to broadcast its identifier. 
The spread of all centers is done in the same unit tempo. 
A node $v$ joins the cluster of the \emph{first} identifier that reaches it, breaking ties consistently.
If we had $O(\mathcal{D})$ time, we could indeed implement it exactly like this.
However, this approach is infeasible for a general $\mathcal{D} \in \tilde{\Omega}(1)$ in weighted graphs as $\mathcal{D}$ may be arbitrarily large.
Instead, we will model this intuition using a virtual super source $s$ and  shortest path computations.
For each center $x \in \mathcal{X}$, we independently draw a value $\delta_x \sim \mathcal{D}\cdot\mathsf{Texp}(2+2\log\tau)$
from the truncated exponential distribution with parameter $2+2\log(\tau)$.
We assume that $\tau$ (or some upper bound thereof) is known to each node.
Then, we add a virtual source $s$ with a weighted virtual edge $(s,x)$ to each center $x \in \mathcal{X}$ with weight $w_x := (\mathcal{D}- \delta_x)$.
Then, we compute an $1+\epsilon$-approximate \SetSSP from $s$ with $\epsilon \leq \frac{1}{40\log\tau}$. 
Any node joins the cluster of the \emph{last} center on its (approximate) shortest path to $s$.

With exact shortest paths, this construction would preserve intuition.
Any node whose shortest path to $s$ contains center $x$ as its last center on the path to $s$ would have been reached by $x$'s broadcast first. 
The statement is not that simple with approximate distances as the approximation may introduce a \emph{detour}, so the center that minimizes $(\mathcal{D} - \delta_x) + d(x,v)$ may not actually cluster a node $v \in V$.
In particular, two endpoints of a short edge $\{v,w\}$ may end up in different clusters although the same center minimizes the distance to both.
The approximation error for $v$ could be large, while for $w$, it is very low, which can cause undesired behavior.
The nodes will only be added to same cluster, if its center's head start (which is determined by the random distance to the source) is large enough to compensate for the \emph{approximation errors}.
This, however, depends on the approximation and \emph{not} an the length of the edge we consider.
Nevertheless, if the error $\epsilon$ is small enough, we get similar clustering guarantees that are good enough for our purposes. 
In particular, if the difference in distance between the two closest centers is larger than $O(\epsilon\mathcal{D})$, the clustering behaves very similar to its counterpart with exact distances.
In our analysis, we show this by carefully carry the approximation error through the analysis.
We do not introduce any fundamental new techniques, but simply carry the approximation error all the way through the analysis of the corresponding clustering process with exact distances presented in \cite{Filtser19}. 

In addition to \autoref{thm:generalpartition} itself, this also allows us to observe the positive correlation between nodes that are on the shortest path to some cluster center. 
This closer look also allows us to prove the technical fact about the pseudo-padded decompositions:
\begin{lemma}
\label{lemma:special}
    Let $\mathcal{K}$ be a partition computed by the algorithm from \autoref{thm:generalpartition}. Then, there a constant $c \geq 1$ such that for given node $u \in K_i$ with probability $(\nicefrac{1}{2})$ the following holds: There is path $P_u \coloneqq (u = v_1 , \ldots, v_N = x_i)$ of length at most $2(1+\frac{1}{40\log \tau})\mathcal{D}$ to the cluster center $x_i$ of $K_i$ where for each $v_j \in P_u$, it holds $B\left(v_j, \frac{\mathcal{D}}{c\log \tau}\right) \subseteq K_i$.
\end{lemma}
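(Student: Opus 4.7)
My plan is to take $P_u$ to be the path from $u$ to $x_i$ inside the $(1{+}\epsilon)$-approximate shortest-path tree $T$ that defines the clustering (with $\epsilon = 1/(40\log\tau)$), and to extend the single-ball padding analysis of \autoref{thm:generalpartition} to the ``tube'' around $P_u$. Let $x_i$ be the last center on $u$'s SPT branch to the virtual source $s$, and let $P_u = (u = v_1, \ldots, v_N = x_i)$ be the sub-path from $u$ to $x_i$ in $T$. By the construction of the clustering, each $v_j$ also has $x_i$ as the last center on its SPT branch, so $P_u \subseteq K_i$ as a graph path. The Covering Property yields a center $x$ with $d_G(u,x) \leq \mathcal{D}$, so $d_{G'}(u,s) \leq w_x + \mathcal{D} \leq 2\mathcal{D}$, and therefore $|P_u| = d_T(u,s) - w_{x_i} \leq 2(1+\epsilon)\mathcal{D}$, exactly the required bound.

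To control the padding property, I would set $\gamma = 1/(c\log\tau)$ for a constant $c$ to be chosen, and rewrite the desired conclusion as a single event: $R_u := \bigcup_{j} B(v_j, \gamma\mathcal{D}) \subseteq K_i$. Equivalently, no competitor center $x' \neq x_i$ ``steals'' any point of $R_u$, i.e., no $y \in R_u$ and $x' \neq x_i$ satisfy $w_{x'} + (1+\epsilon)d(x',y) < w_{x_i} + d_T(x_i, y)$. I would lower-bound $\Pr[R_u \subseteq K_i]$ by rerunning the density-of-$\delta_{x'}$ integration from the proof of \autoref{thm:generalpartition}, with $R_u$ in place of a single ball. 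The key quantitative input is that for every competitor $x'$, the tube's stealing margin $\min_{y \in R_u}\bigl(d(x',y) - d(x_i,y)\bigr)$ exceeds the single-point margin at $u$ by at most an additive $\gamma\mathcal{D} + O(\epsilon\mathcal{D})$: all $v_j$ lie on an approximate geodesic to $x_i$, so $d(x_i, v_j)$ decreases monotonically along $P_u$ (up to an $\epsilon\mathcal{D}$ slack), and within each $B(v_j, \gamma\mathcal{D})$ both $d(x_i,y)$ and $d(x',y)$ change by at most $\gamma\mathcal{D}$. Combining this with the truncated-exponential density and the Packing Property, as in \autoref{thm:generalpartition}, would yield $\Pr[R_u \subseteq K_i] \geq e^{-\Theta((\gamma+\epsilon)\log\tau)} - O(n^{-c'}) = e^{-\Theta(1/c + 1/40)} - O(n^{-c'})$, which exceeds $\nicefrac{1}{2}$ for a sufficiently large constant $c$.

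The main obstacle is that a naive union bound over the $v_j \in P_u$ fails catastrophically: each individual padding event has only a constant probability of holding while $P_u$ may contain $\mathrm{poly}(n)$ nodes. The rescue is exactly the \emph{positive correlation} along the SPT branch alluded to in the text preceding the lemma: a fixed competitor $x'$ either undercuts $x_i$ near the entire tube $R_u$ or not at all, so the $N$ bad events ``$x'$ steals from $B(v_j, \gamma\mathcal{D})$'' collapse into a single bad event per competitor. Making this collapse precise---showing that the total failure probability summed over all competitors gains only an additive $O(\gamma\log\tau)$ slack relative to the single-ball bound of \autoref{thm:generalpartition}---is the technical heart of the argument. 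It boils down to replaying the density computation with the tube geometry of $R_u$ and using that, for each $x'$, only its projection onto the approximate geodesic $P_u$ matters.
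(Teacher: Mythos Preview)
Your proposal is on the right track and shares the paper's core strategy: collapse the tube-padding event into a single probability computation via monotonicity along the path, then invoke the truncated-exponential analysis from \autoref{thm:generalpartition}. The substantive difference is your choice of $P_u$. You take the approximate SPT branch, which lies in $K_i$ by construction but forces you to carry an $O(\epsilon\mathcal{D})$ slack in the monotonicity (``$d(x_i,v_j)$ decreases along $P_u$ up to $\epsilon\mathcal{D}$''). The paper instead takes the \emph{exact} shortest path in $G$ from $u$ to its cluster center $x_i$. Along this path the random shift $\Upsilon_{v_j}^{(x_i)}$ (defined via exact distances) is \emph{exactly} monotone: $\Upsilon_{v_j}^{(x_i)} \le \Upsilon_u^{(x_i)}$ for every intermediate $v_j$ (\autoref{lemma:upsusmaller}, a short triangle-inequality argument that crucially uses the equality $d_G(x_i,u)=d_G(x_i,v_j)+d_G(v_j,u)$). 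Hence the single deterministic event $\{\Upsilon_u^{(x_i)} < -(2\gamma+10\epsilon)\mathcal{D}\}$ already forces the entire tube $\bigcup_j B(v_j,\gamma\mathcal{D})\subseteq K_i$ (\autoref{lemma:x_cluster_B}), and one only needs to bound the probability of that one event---exactly the density computation you describe, but with no per-competitor slack to track. The exact geodesic is not in $K_i$ a priori, but the padding conclusion itself places all its vertices (and their $\gamma\mathcal{D}$-balls) inside $K_i$, so the lemma's statement is met. Your SPT-path route can be pushed through with the extra $\epsilon\mathcal{D}$ bookkeeping, but the paper's choice of the exact geodesic turns the ``collapse'' into a clean deterministic implication rather than an approximate one.
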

This lemma states that if a single node $v \in V$ is padded, all nodes on a short path to its cluster center are likely padded as well. 
While this initially sounds very technical, it roughly translates to the following: Consider a clustering $\mathcal{K} = K_1, \ldots, K_N$. Then for a suitably chosen $\rho \coloneqq \frac{\mathcal{D}}{c \cdot \log \tau}$, in each cluster $\mathcal{K}_i = (V_i,E_i)$ there is a constant fraction of nodes $V'_i \subseteq V$ in the distance $\rho$ to their closest node in a neighboring cluster $K_j \neq K_i$ \textbf{and} for any two $v,w \in V'_i$ there is a path of length $6\mathcal{D}$ (via the cluster center) that only consists of nodes in $V_i'$.
This insight will be \emph{crucial} for our next algorithm.

\medskip

\noindent A more detailed description and the full proof can be found in \autoref{sec:padded_appendix}.

\subsection{Strong LDCs from Pseudo-Padded Decompositions}
\label{sec:ldc}

In this section, we present 
a generic algorithm that creates an \LDC with strong diameter $\mathcal{D}$ of quality $\left(O(\log \tau), \nicefrac{1}{2}\right)$ if the number of nodes that can be centers of clusters has been sufficiently sparsed out, such that each node can only be in one of $\tau$ clusters.
We show that it holds:
\begin{theorem}[A Generic Clustering Theorem]
\label{thm:genericldd}
    Let $\mathcal{D}>0$ be a distance parameter, $G:=(V,E,\ell)$ a (possibly weighted) undirected graph, and $\mathcal{X} \subseteq V$ be a set of marked nodes.
    Suppose that for each node $v \in V$, the following two properties hold:
    \begin{itemize}
        \item \textbf{Covering Property:} There is at least one $x \in \mathcal{X}$ with $d(v,x) \leq \mathcal{D}$.
        \item \textbf{Packing Property:} There are at most $\tau$ centers  $x' \in \mathcal{X}$ with $d(v,x') \leq 6\mathcal{D}$.
    \end{itemize}
Then, there is an algorithm that creates an \LDC of strong diameter $8\mathcal{D}$ with quality $\left(O\left(\nicefrac{\log\tau}{\mathcal{D}}\right), \nicefrac{1}{2}\right)$.
The algorithm can be implemented with $\Tilde{O}(1)$ minor aggregations and $\Tilde{O}(1)$  $(1+\epsilon)$-approximate \SetSSP computations where $\epsilon \in O\left(\nicefrac{1}{\log^{2}n} \right)$.
\end{theorem}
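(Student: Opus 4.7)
The plan is to construct the \LDC by invoking the pseudo-padded decomposition of \autoref{theorem:generalpartition} and then filtering each resulting cluster down to a connected core identified via \autoref{lemma:special}. First, I apply \autoref{theorem:generalpartition} with $\epsilon \in O(1/\log^2 n)$ to obtain a partition $\mathcal{K} = K_1, \ldots, K_N$ of strong diameter $4(1+\epsilon)\mathcal{D}$. The parameter choice is admissible because $\tau \leq n$, so $\epsilon \in o(1/\log\tau)$. This step costs one $(1+\epsilon)$-approximate \SetSSP computation and $\Tilde{O}(1)$ minor aggregations.

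Set $\rho := \mathcal{D}/(c\log\tau)$, with $c$ the constant of \autoref{lemma:special}. Call a node $v$ \emph{padded} if $B(v,\rho) \subseteq K(v)$; this condition can be tested by one additional minor aggregation that reports, for each $v$, the minimum distance to any neighbor lying in a different cluster. Define the core $K_i^{\text{core}}$ as the set of $u \in K_i$ that reach $x_i$ through a path of length at most $4\mathcal{D}$ using only padded intermediate nodes; this set is computable via a further $(1+\epsilon)$-approximate \SetSSP rooted at $x_i$ on the padded subgraph of $K_i$, executed in parallel across all clusters. The output is $\mathcal{K}^{\text{core}} := K_1^{\text{core}}, \ldots, K_N^{\text{core}}$. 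For the strong diameter, any two $u, v \in K_i^{\text{core}}$ connect through $x_i$ in total length at most $8\mathcal{D}$, and every suffix of a core path is itself a core path, so the connecting path stays in $K_i^{\text{core}}$. The covering bound follows directly from \autoref{lemma:special}: $\Pr[u \in K^{\text{core}}(u)] \geq 1/2$.

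The main obstacle is bounding the per-edge cut probability by $O(\ell_z \log\tau/\mathcal{D})$. The plan is to show that the event ``$\{v,w\}$ is not cut in the LDC'' is implied by the joint event ``$v$ is $(\rho+\ell_z)$-padded \emph{and} $v$ has a padded path to $x_{K(v)}$ of length at most $2(1+\epsilon)\mathcal{D}$''. Two triangle-inequality observations give the implication: $B(v,\rho+\ell_z) \subseteq K(v)$ forces $w \in K(v)$, and $B(w,\rho) \subseteq B(v,\rho+\ell_z) \subseteq K(v)$ shows $w$ is padded; prepending the edge $\{v,w\}$ to $v$'s padded path yields a padded path from $w$ to $x_{K(v)}$ of length $\leq \ell_z + 2(1+\epsilon)\mathcal{D} \leq 4\mathcal{D}$ when $\ell_z \leq \mathcal{D}$ (edges with $\ell_z > \mathcal{D}$ satisfy the target cut bound trivially). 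Hence both endpoints lie in the same core cluster. It remains to lower-bound the probability of the joint event, and this is where the main technical work lies. The plan is to exploit the positive correlation already built into the proof of \autoref{lemma:special} — both the padding event and the padded-path event are monotone in the truncated exponential offsets $\delta_x$ of the centers near $v$ — in order to piggyback on a single application of \autoref{theorem:generalpartition} with $\gamma = (\rho+\ell_z)/\mathcal{D}$, yielding $\Pr[\{v,w\}\text{ cut}] \leq 1 - e^{-\Theta((\ell_z/\mathcal{D} + \epsilon)\log\tau)} \leq O((\ell_z/\mathcal{D} + \epsilon)\log\tau)$. With $\epsilon = O(1/\log^2 n)$ the additive $\epsilon\log\tau$ slack is of lower order and absorbs into the bound, giving quality $\alpha = O(\log\tau)$ as required. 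The hardest step will be formalizing the positive-correlation argument so that the $1/2$ coverage loss from \autoref{lemma:special} does \emph{not} propagate into the cut probability.
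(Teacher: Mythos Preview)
Your first two stages---pseudo-padded decomposition, then restricting each cluster to a connected ``core'' reachable from its center through well-padded nodes---match the paper's construction, and your diameter and coverage arguments are essentially right. The gap is the per-edge cut bound. When you invoke \autoref{theorem:generalpartition} (or the path version underlying \autoref{lemma:special}) with $\gamma = (\rho+\ell_z)/\mathcal{D}$, the lower bound you obtain on the joint event is $e^{-\Theta(((\rho+\ell_z)/\mathcal{D}+\epsilon)\log\tau)}$, not $e^{-\Theta((\ell_z/\mathcal{D}+\epsilon)\log\tau)}$. Since $\rho/\mathcal{D}=1/(c\log\tau)$ contributes a $\Theta(1)$ term in the exponent, your inequality only yields $\Pr[\{v,w\}\text{ cut}]\le 1-e^{-\Theta(1)}$, a constant independent of $\ell_z$. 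The positive-correlation idea does not rescue this: an FKG-type bound gives at best $\Pr[w\in K^{\text{core}}\mid v\in K^{\text{core}}]\ge \Pr[w\in K^{\text{core}}]\ge 1/2$, still a constant. What you would actually need is an \emph{upper} bound on $\Pr[v\in K^{\text{core}},\,w\notin K^{\text{core}}]$, which boils down to bounding the probability that the distance from $v$ to the nearest other-cluster node falls in the narrow window $(\rho,\rho+\ell_z]$; the one-sided padding guarantees of \autoref{theorem:generalpartition} do not give this, and no amount of monotone-coupling of the offsets $\delta_x$ turns a lower bound on a tail into a bound on a thin slice.

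The paper handles the cut probability by a genuinely different mechanism: after forming the inner clusters $K_i'$ (your ``cores''), it applies Blurry Ball Growing (\autoref{lemma:bbg}) with parameter $\rho''=\Theta(\mathcal{D}/\log\tau)$ and outputs $C_i = K_i\cap S(\bigcup_j K_j')$. Because every $K_i'$ sits at distance $>2\rho''$ from any other pseudo-padded cluster $K_j$, an edge with an endpoint in some $C_i$ can only be cut by the BBG step, and \autoref{lemma:bbg} immediately gives $\Pr[\text{cut}]=O(\ell_z/\rho'')=O(\ell_z\log\tau/\mathcal{D})$. The random radius inside BBG is precisely what converts the $\Theta(1)$ coverage loss of \autoref{lemma:special} into a per-edge guarantee that vanishes with $\ell_z$; a deterministic core boundary at a fixed threshold $\rho$ cannot achieve this.
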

The algorithm uses the so-called \emph{blurry ball growing} (BBG) technique.
This is perhaps the key technical result that Becker, Emek, and Lenzen introduced in \cite{BeckerEL20}.
It provides us with the following guarantees:
\begin{restatable}[Blurry Ball Growing (BBG), cf. \cite{BeckerEL20, DBLP:conf/focs/RozhonEGH22}]{lemma}{bbg}
\label{lemma:bbg}
Given a subset $S \subseteq V$ and an arbitrary parameter $\rho > 0$, an algorithm for BBG outputs a superset $S' \supseteq S$ with the following properties
\begin{enumerate}
    \item An edge $\{v,w\} \in E$ of length $\ell_{(v,w)}$ is cut with probability $\pr{v \in S', w \not\in S'} \leq O\left(\frac{\ell_{(v,w)}}{\rho}\right)$.
    \item For each node $v \in S'$, it holds $d(v,S) \leq \frac{\rho}{1-\alpha} \leq 2\rho$ where $\alpha \in O\left(\nicefrac{\log \log n}{\log n}\right)$.
\end{enumerate}
BBG can be implemented using $\Tilde{O}(1)$ $(1+\epsilon)$ approximate \SetSSP computations with $\epsilon \in O\left((\nicefrac{\log\log n}{\log n})^2\right)$.
\end{restatable}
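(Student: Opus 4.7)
The plan is to implement blurry ball growing as a randomized ball-growth process from $S$ driven by a truncated exponential random threshold, following the exponential-shift paradigm of Miller, Peng, Vladu, and Xu \cite{MPV+15} but adapted to $(1+\epsilon)$-approximate distances. The core routine is to invoke one $(1+\epsilon)$-approximate \SetSSP computation from $S$ (as defined in \autoref{def:sssp}) with $\epsilon \in O((\log\log n/\log n)^2)$, obtaining $\tilde d(v,S)$ for every $v \in V$; sample a single random $R$ from an exponential distribution with rate $\lambda = \Theta(1/\rho)$ truncated to the interval $[0, \rho/(1-\alpha)]$ where $\alpha = \Theta(\log\log n/\log n)$; and set $S' := \{v \in V : \tilde d(v,S) \leq R\}$. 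The threshold $R$ can be sampled at a single node and disseminated via one minor aggregation, so the single-round procedure meets the stated resource budget.

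Once the algorithm is fixed, the diameter bound in property (2) is immediate from the truncation of $R$: for every $v \in S'$ we have $d(v,S) \leq \tilde d(v,S) \leq R \leq \rho/(1-\alpha)$, and since $\alpha \leq 1/2$ this is at most $2\rho$.

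For the cut probability in property (1), fix an edge $\{v,w\}$ of length $\ell$ and assume without loss of generality that $\tilde d(v,S) \leq \tilde d(w,S)$ (otherwise the event $\{v \in S', w \notin S'\}$ is impossible). The edge is cut exactly when $R$ falls in the half-open interval $[\tilde d(v,S), \tilde d(w,S))$. Using the multiplicative approximation guarantee together with the triangle inequality one obtains $\tilde d(w,S) - \tilde d(v,S) \leq (1+\epsilon)\ell + \epsilon\,\tilde d(v,S)$. Since the density of the truncated exponential on its support is bounded by $O(\lambda) = O(1/\rho)$ and $\tilde d(v,S) \leq \rho/(1-\alpha) \leq 2\rho$, this gives the single-round bound $\Pr[\text{cut}] \leq O(\ell/\rho + \epsilon)$.

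The main obstacle is the residual additive $O(\epsilon)$ term, which fails to be $O(\ell/\rho)$ for edges with $\ell \ll \epsilon\rho$. This is exactly what the $\tilde O(1)$ \SetSSP calls allowed by the lemma are meant to address: I would iterate the single-round procedure at geometrically finer scales, so that in each iteration the nodes lying in the fuzzy boundary band (those where the additive slack could flip their membership) are re-examined with a fresh approximate SSSP call tuned to that scale. Because $\epsilon \in O(\alpha^2)$, the width of the blur band shrinks by at least a factor of $\alpha$ per iteration; after $\tilde O(1)$ iterations the residual blur width drops below $1/\poly(n)$, eliminating the additive $O(\epsilon)$ term from the final cut probability for every edge. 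The step sizes across rounds form a geometric series summing to at most $\rho/(1-\alpha)$, so the diameter bound survives the iteration, and the per-iteration cut contributions telescope into a single $O(\ell/\rho)$ bound, matching the statement of the lemma.
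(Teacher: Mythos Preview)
The paper does not prove this lemma; it is quoted as a black box from \cite{BeckerEL20} (with a note that \cite{DBLP:conf/focs/RozhonEGH22} sharpens the $\epsilon$-dependence and gives a deterministic variant). Your high-level plan matches that construction: grow a ball around $S$ by a random radius via one approximate \SetSSP, then recurse from the current superset at geometrically decreasing radii $\rho_i=\alpha^i\rho$. The single-round analysis and the diameter bound via $\sum_i\rho_i=\rho/(1-\alpha)$ are both correct. (The truncated-exponential threshold in your one-shot step is harmless but unnecessary; a uniform threshold already gives the $O(1/\rho)$ density bound, which is what \cite{BeckerEL20} uses.)

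The gap is the last sentence. Saying the per-iteration cut contributions ``telescope into a single $O(\ell/\rho)$ bound'' is precisely the nontrivial part of the lemma, and you have not supplied an argument. A plain union bound over rounds gives $\sum_i O(\ell/\rho_i+\epsilon)$, and since $\rho_i=\alpha^i\rho$ the series $\sum_i\ell/\rho_i$ has ratio $1/\alpha>1$ and diverges. Nor does the obvious sandwich $B(S,R/(1+\epsilon))\subseteq S'\subseteq B(S,R)$ with $R=\sum_iR_i$ save you: applying the density bound on $R$ still leaves an additive $O(\epsilon)$ in the cut probability, the very term the recursion was meant to remove. The actual proof in \cite{BeckerEL20} tracks, for each round, the joint probability that the edge is still unresolved and within reach entering that round, is separated there, and is not repaired in later rounds; showing that these contributions sum to $O(\ell/\rho)$ is a genuine cross-round calculation exploiting the independence of the $R_i$, not a telescoping identity. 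Your observation that $\epsilon=O(\alpha^2)$ makes the blur band an $O(\alpha)$-fraction of the next-round radius is the right ingredient, but the probability bookkeeping across rounds still has to be carried out explicitly.
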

This technique allows us to create clusters with a low probability of cutting an edge while only having access to approximate shortest paths. 
Note that in \cite{DBLP:conf/focs/RozhonEGH22} the dependency on $\epsilon$ was improved to $O\left(\nicefrac{1}{\log n}\right)$. 
The paper also presents a deterministic version of blurry ball growing.
However, we will only use the probabilistic version introduced above.

Given this definition, we now give an overview of the algorithm that creates an \LDC:
In the following, define $\rho \coloneqq \frac{\mathcal{D}}{c \cdot \log \tau}$ where $c$ is the constant from \autoref{lemma:special}.
The algorithm first computes a pseudo-padded decomposition $\mathcal{K} = K_1, \ldots, K_N$ of strong diameter $8\mathcal{D}$ using the algorithm of \autoref{theorem:generalpartition}.
To do this, we choose the given set $\mathcal{X}$ of marked nodes as the set of cluster centers.

The following steps are executed in each cluster $K_i$ in parallel.
Within each cluster $K_i$, we determine a so-called \emph{inner cluster} $K'_i \subseteq K_i$ of strong diameter $6\mathcal{D}$ where each node $v \in V'_i$ has distance (at least) $\rho' \coloneqq \rho/2$ to the closest node in different cluster $K_j \neq K_i$. 
These inner clusters can be determined via two (approximate) \SetSSP computations:
First, all nodes calculate the $2$-approximate distance to the closest node in a different cluster.
To do this, we create a virtual supersource $s$, and all nodes $v \in K_i$ with an edge to a node $w \in K_j$ create a virtual edge of length $\ell_{v,w}$ to $s$.
If they have several such neighbors, they choose the shortest edge.
Then, we perform a $2$-approximate \SetSSP from $s$ on the resulting virtual graph.
The edges $\{v,w\}$ between clusters will not be considered in the \SetSSP.
We then mark all nodes where this $2$-approximate distance exceeds $2\rho'$ as \emph{active nodes}.
Next, we consider the graph $G'$ induced by the active nodes. 
Using a $(1+\frac{1}{40\log\tau})$-approximate \SetSSP calculation, the active nodes compute if they have a path of length at most $3\mathcal{D}$ to their cluster center in $G'$.
If so, they add themselves to the inner cluster $K'_i$ of $K_i$.
Recall that their exact distance to the next cluster is at least $\rho'$ as the paths are $2$-approximate.
Further, for any two nodes $v,w \in K'_i$ there is a path of length at most $6\mathcal{D}$ via the cluster center and so the inner cluster has a strong diameter of $6\mathcal{D}$.
Thus, this procedure \emph{always} results in an inner cluster $K'_i$ with the desired properties.
In the third and final stage of the algorithm, the actual clusters are computed.
To this end, the \emph{inner clusters} $\mathcal{K}' \coloneqq \bigcup_{i=1}^N K'_i$ are used as the input set to the blurry ball growing (BBG) procedure from \autoref{lemma:bbg}. 
In particular, we choose the parameter for the BBG to be $\rho'' \coloneqq \rho'/2$ and compute the superset $S(\mathcal{K}') \coloneqq \mathsf{blur}(\mathcal{K}',\rho'')$. 
Now define final clustering $\mathcal{C} = C_1, \ldots, C_N$ where $C_i \coloneqq K_i \cap S(\mathcal{K}')$. 
In other words, the cluster $C_i$ contains the inner cluster $K'_i$ and all nodes from $K_i$ added by the BBG process. 

The resulting clustering fulfills all three properties required by \autoref{thm:genericldd}.
For the cutting probability, note that the distance from an inner cluster to a different cluster is $\rho'$, and BBG only adds nodes in distance smaller than $2\rho'' = \rho'$.
Thus, all edges with an endpoint in cluster $K_i$ are only cut by the blurring process and not by the initial pseudo-padded decomposition.
Thus, the choice of $\rho''$ implies that edges are cut with correct probability of $O(\ell/\rho'') = O(\frac{\ell \cdot \log\tau}{\mathcal{D}})$ by \autoref{lemma:bbg}.
The (strong) diameter of the cluster is at most $8\mathcal{D}$ as each node is in the distance at most $\rho \leq \mathcal{D}$ to a node from an inner cluster, and all nodes in an inner cluster are in the distance $6\mathcal{D}$ to one another.

Therefore, it only remains to show that a nodes is clustered with prob. $\nicefrac{1}{2}$.
We can prove this via \autoref{lemma:special}.
For all nodes $v \in V$ the following holds with probability $(\nicefrac{1}{2})$:
Node $v \in K_i$ \textbf{and} all nodes on the path $P_v$ of length $2(1+\frac{1}{40\log\tau})\mathcal{D}$ to its cluster center $x_i$ are distance at least $\rho$ to the closest node in a different cluster.
Therefore, all these nodes will mark themselves active.
As the $2$-approximate \SetSSP only overestimates, these nodes compute a distance more than $\rho = 2\rho'$, which causes them to be active.
Moreover, this implies that the path $P_v$ is fully contained $G'$ the graph induced by active nodes.
As the $P_u$ is of length $2(1+\frac{1}{40\log\tau})\mathcal{D}$ and ends in the cluster center $x_i$, the $(1+\frac{1}{40\log\tau})$-approximate \SetSSP computation will find a path of length at most $(1+\frac{1}{40\log\tau})2(1+\frac{1}{40\log\tau})\mathcal{D} \leq 3\mathcal{D}$ in $G'$.
Therefore, with a probability of at least $\nicefrac{1}{2}$, node $v$ and all nodes on the path to the cluster center will be added to the inner cluster $K'_i$.
Thus, \autoref{thm:genericldd} follows as half of all nodes are in an inner cluster and are therefore guaranteed to be in some cluster $C_i$.

\medskip

\noindent A more detailed description and the full proof can be found in \autoref{sec:appendix_ldc}.

\subsection{Strong LDDs from Strong LDCs}
\label{sec:ldd}

In this section, we show that we can create an \LDD with quality ${O}(\log n)$ for any graph within $\Tilde{O}(1)$ minor aggregations and $(1+\epsilon)$-approximate \SetSSP's.
Note that, at first glance, the theorem follows almost directly from the \autoref{thm:genericldd}. 
Choose $\mathcal{X} = V$ and $\mathcal{D}' = 8\mathcal{D}$ and apply the theorem to a graph $G$ to see it.
We obtain a clustering with diameter $\mathcal{D}'$ if each node has one node in distance $\mathcal{D}'/8$ and at most $\tau$ nodes in distance $\mathcal{D}'$.
Clearly, for every possible choice of parameter $\mathcal{D}'$, each node has at least one marked node in distance $\mathcal{D}'/8$, namely itself.
Further, for every possible choice of parameter $\mathcal{D}'$, each node has at most $n$ marked node in distance $\mathcal{D}$ because there are only $n$ nodes in total.
Thus, by choosing all nodes as centers, we obtain a clustering with quality $O(\log n)$ and a diameter we can choose freely. 
While the diameter and edge-cutting probability of the resulting clustering are (asymptotically) correct, we can only guarantee that at least half of the nodes are clustered on expectation.
To put it simply, we do not have a partition as required.
However, we can reapply our algorithm to the graph induced by unclustered nodes.
This clusters half of the remaining nodes on expectation.
It is easy to see that, if we continue this for $O(\log n)$ iterations, we obtain the desired partition, w.h.p.
Further, this will not significantly affect the edge-cutting probability, as each edge that is not cut will be added to a cluster with constant probability.
Therefore, there cannot be too many iterations where the edge can actually be cut.
Thus, applying the algorithm until all nodes are clustered intuitively produces the required \LDD of quality $O(\log n)$.
We formalize and prove this intuition in the following lemma.
Not that we state it in a more general fashion to reuse it the next section.
\begin{lemma}
\label{lemma:folkloreldd}
Let $G = (V,E,\ell)$ be a weighted graph.
Let $\mathcal{A}$ be an algorithm that for each subgraph of $G' \subseteq G$ can create clusters $K_1, \ldots, K_N$ where each edge is cut between clusters with probability at most $\alpha$, and each node is added to a cluster with probability at least $\beta$.
Then, recursively applying $\mathcal{A}$ to $G$ until all its nodes are in a cluster creates an \LDD of quality $O(\alpha\beta^{-2})$.
The procedure requires $O(\beta^{-1}\log n)$ applications of $\mathcal{A}$, w.h.p. 
\end{lemma}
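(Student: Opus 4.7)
My plan is to analyze the natural iterative procedure suggested by the preceding discussion: in round $r$, apply $\mathcal{A}$ to the subgraph $G_r$ induced by all vertices still unclustered at the start of round $r$, commit the resulting clusters to the final partition, and repeat until $G_r$ is empty. The strong-diameter property is then automatic, since every output cluster is produced by a single invocation of $\mathcal{A}$ and is never modified afterwards.

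First I would bound the number of rounds. Conditioned on any history up to round $r$, a vertex $v\in G_r$ is clustered by $\mathcal{A}$ in round $r$ with probability at least $\beta$. Therefore the probability that a fixed $v$ is still unclustered after $T = (c+1)\beta^{-1}\ln n$ rounds is at most $(1-\beta)^T \leq e^{-\beta T} \leq n^{-(c+1)}$, and a union bound over $V$ shows the procedure halts within $T = O(\beta^{-1}\log n)$ rounds w.h.p.

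The central task is the cut-probability bound for a fixed edge $\{v,w\}\in E$. Let $A_r$ denote the event that both $v$ and $w$ lie in $G_r$. The key observation is that $\{v,w\}$ is cut in the final partition if and only if, at the first round $r^{*}$ in which at least one of its endpoints gets clustered, the two endpoints are not placed in the same cluster of $\mathcal{A}$: if only one gets clustered in round $r^{*}$, they are automatically separated thereafter; otherwise both get clustered simultaneously but fail to merge. Since $A_{r^{*}}$ holds, both endpoints participate in that round's call to $\mathcal{A}$, and the cut-probability hypothesis yields $\Pr[\text{failure in round }r \mid A_r]\leq \alpha$. Summing over all rounds,
\[
\Pr[\{v,w\}\text{ cut}] \;\leq\; \alpha\cdot\sum_{r\geq 1}\Pr[A_r].
\]
To control the tail sum, I would note that $A_{r+1}$ requires, in particular, that $v$ remain unclustered after round $r$, an event of marginal probability at most $1-\beta$ given $A_r$. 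Hence $\Pr[A_r]\leq (1-\beta)^{r-1}$ and $\sum_{r\geq 1}\Pr[A_r]\leq 1/\beta$, giving $\Pr[\{v,w\}\text{ cut}]\leq \alpha/\beta \leq \alpha\beta^{-2}$ as claimed.

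The step I expect to be the main obstacle is precisely the one where correlations can bite: the events ``$v$ is clustered in round $r$'' and ``$w$ is clustered in round $r$'' may be strongly and adversarially correlated, and the hypothesis gives us no handle on their joint distribution. The proposed argument sidesteps this entirely by (i) bounding $\Pr[A_{r+1}\mid A_r]$ through only the marginal clustering probability of $v$, and (ii) using the single parameter $\alpha$ to bound the per-round failure probability regardless of which combination of the ``not in same cluster'' sub-events occurs. This decoupling is what allows the analysis to go through using only the two blackbox guarantees provided by $\mathcal{A}$.
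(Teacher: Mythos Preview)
Your proof is correct and follows the same overall structure as the paper's argument in \autoref{sec:appendix_ldd}: iterate $\mathcal{A}$ on the residual graph, bound the number of rounds via the per-vertex clustering probability, and bound the cut probability by summing over rounds. Your round bound uses a per-vertex union bound where the paper uses linearity of expectation plus Markov, but both yield $O(\beta^{-1}\log n)$.

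Your cut analysis is in fact tighter than the paper's. The paper conditions on the event $Y_i$ that the edge is \emph{decided} in round $i$ and pays an extra factor $\beta^{-1}$ because $\Pr[\Cut \mid Y_i] \leq \Pr[\Cut_z^i \mid z\in E_i]/\Pr[z\notin E_{i+1}\mid z\in E_i] \leq \alpha/\beta$, ultimately obtaining $O(\alpha\beta^{-2})$. You instead condition only on $A_r$ (both endpoints present), use $\Pr[\text{cut in round }r \mid A_r]\leq \alpha$ directly, and sum $\Pr[A_r]\leq (1-\beta)^{r-1}$ to get $\alpha/\beta$. This avoids the conditioning penalty entirely and yields the stronger bound $O(\alpha\beta^{-1})$, which of course implies the stated $O(\alpha\beta^{-2})$.
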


\noindent The full proof can be found in \autoref{sec:appendix_ldd}. Combining \autoref{lemma:folkloreldd} with \autoref{thm:genericldd} then yields \autoref{thm:clustering_general}.

\section{Weak Separators from Approximate Shortest Paths}
\label{sec:weak_separators}

We now move from arbitrary graphs to $k$-path seperable graph.
In this section, we consider the distributed computation of separators for $k$-path separable graphs.
Recall that a separator $S$ is a \emph{subgraph} of $G$ such that in the induced subgraph $G \setminus S$ that results from removing $S$ from $G$, every connected component only contains a constant fraction of $G$'s nodes.
Separators are a central tool in designing algorithms for restricted graph classes as they give rise to efficient divide-and-conquer algorithms. 
Our algorithms are no exception. 
In particular, we will compute the following noevel type of separators.
The following construct is of the main building blocks of both our routing and our clustering results: 
\begin{restatable}[Weak $\kappa$-Path $(\mathcal{D},\epsilon)$-Separator]{definition}{weaksep}
\label{def:weak_k_path_separator}
    Let $G := (V,E,\ell)$ be a weighted graph, $\mathcal{D} > 1$ be an arbitrary distance parameter, and $\epsilon > 0$ be an approximation parameter.
    Then, we call the set $S(\mathcal{D},\epsilon) := \left(\mathcal{P}_1, \ldots, \mathcal{P}_\kappa\right)$ with $\mathcal{P}_i := (P_i,B_i)$ a weak $\kappa$-path separator, if it holds:
    \begin{enumerate}[topsep=0pt,itemsep=-1ex,partopsep=1ex,parsep=1ex]
        \item Each $P_{i} \in \mathcal{P}_i$ is a (approximate) shortest path in $G\setminus \bigcup_{j=1}^{i-1} \mathcal{P}_j$ of length at most $4\mathcal{D}$.
        \item Each $B_i \subseteq B_G(P_{i},\epsilon\mathcal{D})$ is a set of nodes surrounding path $P_i$.
        \item For all $v \in (V\setminus \bigcup_{j=1}^{\kappa} \mathcal{P}_i)$ it holds $|B_{G\setminus S}(v,\mathcal{D})| \leq (\nicefrac{7}{8})\cdot n$.
    \end{enumerate}
\end{restatable}
Note that this definition is generic enough also to include vanilla $k$-path separators (where each set $B_i$ only contains the path itself and no further nodes) and \emph{traditional} vertex separators (where each path is a single node). 
Thus, for certain $k$-path separable graphs, we can already use existing algorithms to compute these separators in distributed and parallel models\footnote{For planar graphs, we can use the algorithm provided in \cite{DBLP:conf/stoc/LiP19} to compute a separator that consists of $4$ paths.  While for graphs of bounded tree-width $\tau$, we can use  \cite{DBLP:conf/spaa/IzumiKNS22} to obtain a separator that consists of $\tau$ nodes.}.
For general $k$-path separable graphs, however, we need new techniques.

\subsection{Why Computing Separators for $k$-Path Separable Graphs is Hard}
Before, we go into details, let us first review some background on separators that motivates why we will mainly work with the weaker concept.
To this end, we first review prior results and identify possible difficulties. 
Starting with positive results, it \emph{is} possible to construct path separators for planar graphs in a distributed setting. In fact, the state-of-the-art algorithms for computing DFS trees in \cite{DBLP:conf/wdag/GhaffariP17} or computing a graph's (weighted) diameter in \cite{DBLP:conf/stoc/LiP19} in planar graphs construct a path separator for planar graphs.
The algorithm presented in these papers constructs a path separator in $\Tilde{O}(\HD)$ time, w.h.p.
However, while the existence of an efficient algorithm for a planar graph gives hope, it turns out to be very tough for other $k$-path separable graphs.
The constructions in both \cite{DBLP:conf/wdag/GhaffariP17} and \cite{DBLP:conf/stoc/LiP19} heavily exploit the fact that a planar embedding of a (planar) graph can be computed in $\Tilde{O}(\HD)$ time in \CONGEST due to Ghaffari and Haeupler in \cite{DBLP:conf/soda/GhaffariH16}.
Given a suitable embedding, the ideas could be generalized.
Thus, finding a suitable embedding could be a way to extend this algorithm to more general $k$-separable graphs.
A good candidate for such an embedding can be found in \cite{DBLP:conf/podc/AbrahamG06}.
Abraham and Gavoille present an algorithm that computes a $f(r)$-path separator for $K_r$-free graphs \cite{DBLP:conf/podc/AbrahamG06}.
Their algorithm relies on a complex graph embedding algorithm by Seymour and Robertson\cite{DBLP:journals/jal/RobertsonS86}.
On a very high level, the embedding divides $G$ into subgraphs that can almost be embedded on a surface similar to planar graph. 
By \emph{almost}, we mean that in each subgraph, there is only small number of non-embeddable parts (with special properties) that need to be handled separately. 
The concrete number of these parts depends only on $r$.
Given such an embedding, Abraham and Gavoille show that one can efficiently compute $k$-path separator.
Sadly, we have little hope that the algorithm that computes the embedding can be efficiently implemented in a distributed (or even parallel) context.
It already requires $n^{O(r)}$ time in the sequential setting.
Moreover, it requires sophisticated techniques that can not be trivially sped up by running them in parallel. 
Thus, it remains elusive (to the authors at least) how to implement this algorithm in any distributed or even parallel model.

\subsection{Why Computing Weak Separators for $k$-Path Separable Graphs is Easier}
Now that we have established why it is difficult to compute $k$-path separators, we can proceed to the initially promised weak separators.
Our main insight is that instead of computing an embedding in a distributed and parallel setting, we show that we get very similar guarantees using a weaker separator that can be computed without the embedding.
Thus, we take a different approach to completely avoid the (potentially expensive) computation of an embedding.
We strongly remark that, in doing so, we sacrifice some of the separator's useful properties.
However, in our applications, this only results in some additional polylogarithmic factors in the runtime.
To be precise, we prove the following lemma:
\begin{theorem}[Weak Separators for $k$-Path Separable Graphs]
\label{thm:distributed_weak_separator}
Consider a weighted $k$-path separable graph $G := (V,E,\ell)$ with weighted diameter smaller than $\mathcal{D}$. 
For $\epsilon \geq 0$, there is an algorithm that constructs a weak $O(\epsilon^{-1}\cdot k \cdot \log n)$-path $(\mathcal{D},\epsilon)$-separator, w.h.p.
The algorithm uses $O(\epsilon^{-1}\cdot k \cdot \log n)$ minor aggregations and $O(\epsilon^{-1}\cdot k \cdot \log n)$ 
$2$-approximate shortest path computations, w.h.p.
\end{theorem}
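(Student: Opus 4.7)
The plan is to build the separator greedily: I maintain a current set $S \subseteq V$ (initially empty) and, in each iteration, add one $2$-approximate shortest path together with its $\epsilon\mathcal{D}$-blur to $S$, stopping once no ball of radius $\mathcal{D}$ in $G \setminus S$ contains more than $(\nicefrac{7}{8})n$ nodes. I will show that $T = O(\epsilon^{-1} k \log n)$ iterations always suffice w.h.p., each using exactly one minor aggregation and one $2$-approximate \SetSSP call.

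\textbf{A single iteration.} One minor aggregation estimates, for every vertex $v$, the size $|B_{G \setminus S}(v, \mathcal{D})|$. This both (a) decides whether any ball still exceeds the $(\nicefrac{7}{8}) n$ threshold (terminating if not) and (b) identifies a witness heavy vertex $v_t$. Next, I sample a uniformly random $u_t$ from $B_{G \setminus S}(v_t, \mathcal{D})$ using standard randomized aggregation tricks (each vertex tags itself with an independent uniform value and the minimum tag inside the ball is extracted). A single $2$-approximate \SetSSP call in $G \setminus S$ then returns a $v_t$-to-$u_t$ path $P_t$ of length at most $2 \mathcal{D} \leq 4 \mathcal{D}$. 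I set $B_t := B_G(P_t, \epsilon \mathcal{D}) \setminus S$, define $\mathcal{P}_t := (P_t, B_t)$, and absorb $B_t$ into $S$ for the next iteration.

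\textbf{Verifying the weak separator properties.} Properties~(1) and~(2) of \autoref{def:weak_k_path_separator} hold by construction: each $P_t$ is a $2$-approximate shortest path in $G \setminus \bigcup_{j < t}\mathcal{P}_j$ of length at most $4\mathcal{D}$, and $B_t \subseteq B_G(P_t, \epsilon \mathcal{D})$ by design. The only non-trivial statement is property~(3): after $T$ iterations no ball has size larger than $(\nicefrac{7}{8}) n$. Here I will exploit that $k$-path separability is inherited by residual graphs (either directly through the recursion in the definition of $k$-path separator, or, for universally $k$-path separable $G$, through minor-closure via $K_r$-freeness, at the cost of absorbing the constant into $k$). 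Let $S^\star$ denote the unknown $k$ paths of a theoretical separator of the current $G \setminus S$. Any heavy ball $B_{G\setminus S}(v_t, \mathcal{D})$ must be ``pierced'' by $S^\star$, since otherwise the ball would lie entirely inside a connected component of $(G \setminus S) \setminus S^\star$ of size $>(\nicefrac{7}{8})n > \nicefrac{n}{2}$, contradicting the separator property. Hence a uniformly random endpoint $u_t$ lies ``across'' $S^\star$ from $v_t$ with constant probability, and the approximate $v_t$-$u_t$ path comes within $\epsilon\mathcal{D}$ of some $P^\star \in S^\star$ with probability $\Omega(\epsilon/k)$; the blurred set $B_t$ therefore carves off an $\Omega(\epsilon/k)$ fraction of the heavy ball in expectation. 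A potential-function argument combined with a union bound over $O(\log n)$ amplification rounds then yields the claimed iteration count w.h.p.

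\textbf{Main obstacle.} The technical heart will be making the ``random approximate shortest path lies $\epsilon\mathcal{D}$-close to a theoretical separator path'' intuition quantitatively precise. This requires (i) coupling the random sample $u_t$ with the geometry of $S^\star$ even though $S^\star$ is unknown and changes with $S$; (ii) arguing that the $2$-approximation slack of the SSSP does not displace $P_t$ by more than $O(\epsilon \mathcal{D})$ in a way that spoils the blur (this is where a larger blur radius, i.e.\ a smaller $\epsilon$, gives more leeway, explaining the $\epsilon^{-1}$ factor); and (iii) converting per-iteration expected progress into a clean w.h.p. termination bound via concentration. The three factors $\epsilon^{-1}$, $k$, and $\log n$ in the iteration count correspond precisely to these three sources of loss.
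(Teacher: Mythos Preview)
Your proposal has a genuine gap in both the algorithm and the analysis. On the algorithmic side, a single minor aggregation cannot deliver $|B_{G\setminus S}(v,\mathcal{D})|$ for every vertex: minor aggregations compute $\bigotimes_{u\in V_i}x_u$ over a \emph{fixed} partition into connected parts, whereas the balls around all vertices overlap and are not known without an all-pairs distance computation. The paper sidesteps this entirely by making the algorithm oblivious: $v_t$ is chosen uniformly at random from $V_t$ (one aggregation on random tags), and $w_t$ is chosen uniformly among the nodes at $2$-approximate distance at most $4\mathcal{D}$ from $v_t$ (one \SetSSP plus one aggregation). The potential $\Phi_t=|\{v:|B_{G_t}(v,2\mathcal{D})|\ge (7/8)n\}|$ lives only in the analysis; your algorithm tries to compute it, which is as hard as the problem itself.

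On the analysis side, the progress measure ``carves off an $\Omega(\epsilon/k)$ fraction of the heavy ball'' does not hold: a length-$4\mathcal{D}$ path together with its $\epsilon\mathcal{D}$-tube can remove arbitrarily few vertices from a given ball. The paper instead fixes, once and for all in the \emph{original} graph $G$, a boundary $\mathcal{B}$ of at most $k$ paths each of length $\le 32\mathcal{D}$ (\autoref{lemma:boundary_simple}) whose removal leaves every vertex with at most $(3/4)n$ nodes in distance $4\mathcal{D}$. Progress is the number of iterations in which the sampled path $P_t$ \emph{crosses} $\mathcal{B}$: whenever $w_t$ is a cutoff node of $v_t$, any $v_t\!\to\!w_t$ path (approximate or not) must literally contain a vertex of $\mathcal{B}$, so there is no $\epsilon$-closeness subtlety and your obstacle~(ii) is a non-issue. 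Each crossing then deletes an $\Omega(\epsilon\mathcal{D})$-length chunk of some path in $\mathcal{B}$ (because the blur is taken in $G$, not $G_t$); since the total length of $\mathcal{B}$ is $O(k\mathcal{D})$, at most $O(\epsilon^{-1}k)$ crossings are possible. This is the true source of the $\epsilon^{-1}$ factor, not approximation slack. As long as $\Phi_t\ge (7/8)n$ a crossing occurs with probability $\Omega(1)$ (not $\Omega(\epsilon/k)$), so a Chernoff bound yields $O(\epsilon^{-1}k\log n)$ iterations w.h.p.
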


Surprisingly, the algorithm behind the lemma can be stated in just a few sentences.
The core idea is to iteratively sample approximate shortest paths and nodes close to these paths and remove them from the graph until all remaining nodes have few nodes in distance $\mathcal{D}$, w. h.p.
More precisely, we start with a graph $G_0 = G$.
The algorithm proceeds in $T \in O(\epsilon^{-1} \cdot k\cdot \log n)$ steps where in step $t$, we consider graph $(V_t, E_t) \coloneqq G_t \subset G_{t-1}$. 
A single step $t$ works as follows:
First, we pick a node $v \in V_{t}$ uniformly at random.
Then, we compute a $2$-approximate shortest path tree $T_v$ from $w$. 
Let $W_v \subset V_t$ be set of all nodes in distance at most $4\cdot\mathcal{D}$ to $v$ in $T_v$.
We pick a node $w \in W_v$ uniformly at random and consider the corresponding path $P_t = (v, \ldots, w)$.
Using $P_t$ as a set-source, we compute a $2$-approximate shortest path tree $T_{P_t}$ from $P_t$. 
Now denote $B_{P_t}$ to be the nodes in the distance at most $\epsilon\cdot\mathcal{D}$ to $P_t$ in $G$ (and \textbf{not} $G_t$).
Finally, we remove $P_t$ and $B_{P_t}$ from $G_t$ to obtain $G_{t+1}$.
Note that, due to the approximation guarantee, this removes all nodes in distance at least $\nicefrac{\epsilon}{2}\cdot\mathcal{D}$ in $G$.
Further, all steps can be computed using $2$-approximate paths (and few aggregation to pick the random nodes) and are completely devoid of any complex embedding.

\begin{figure}[ht]
     \begin{subfigure}[t]{0.48\textwidth}
         \centering
         \includegraphics[width=\textwidth]{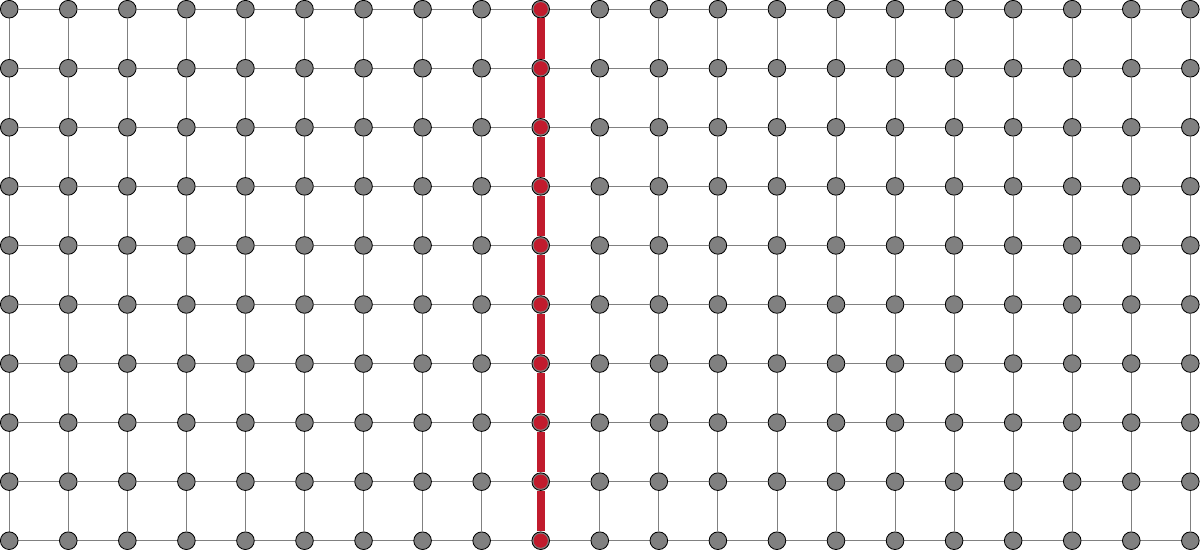}
         \caption{Marked in red is a separator path unknown to the algorithm.}
     \end{subfigure}
     \hfill
     \begin{subfigure}[t]{0.48\textwidth}
         \centering
         \includegraphics[width=\textwidth]{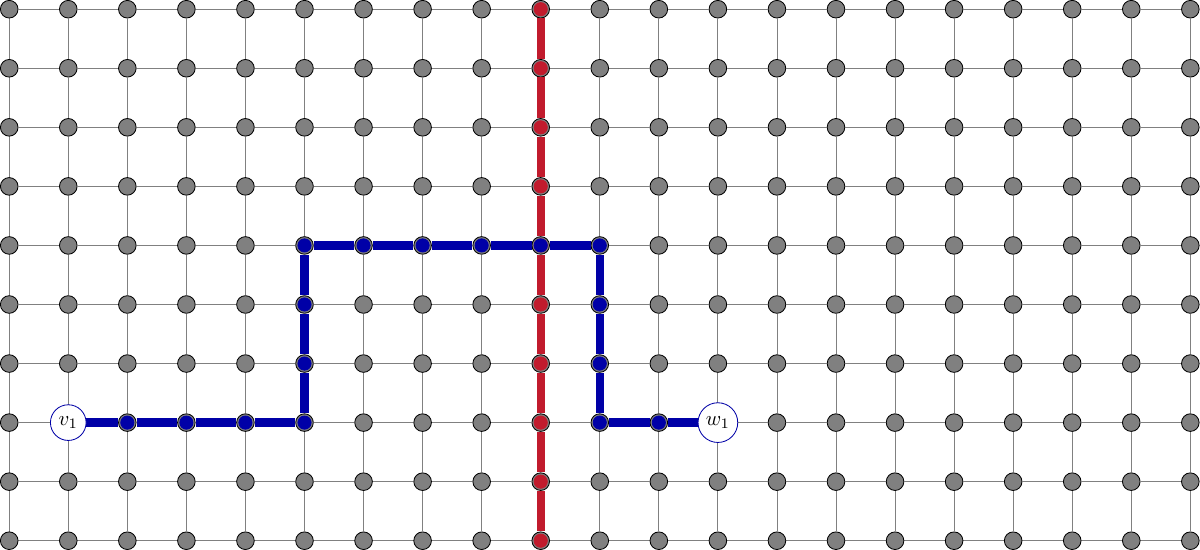}
         \caption{When sampling a random path of bounded length, the probability of crossing the separator is constant.}
     \end{subfigure}
\bigskip
        \begin{subfigure}[t]{0.48\textwidth}
         \centering
         \includegraphics[width=\textwidth]{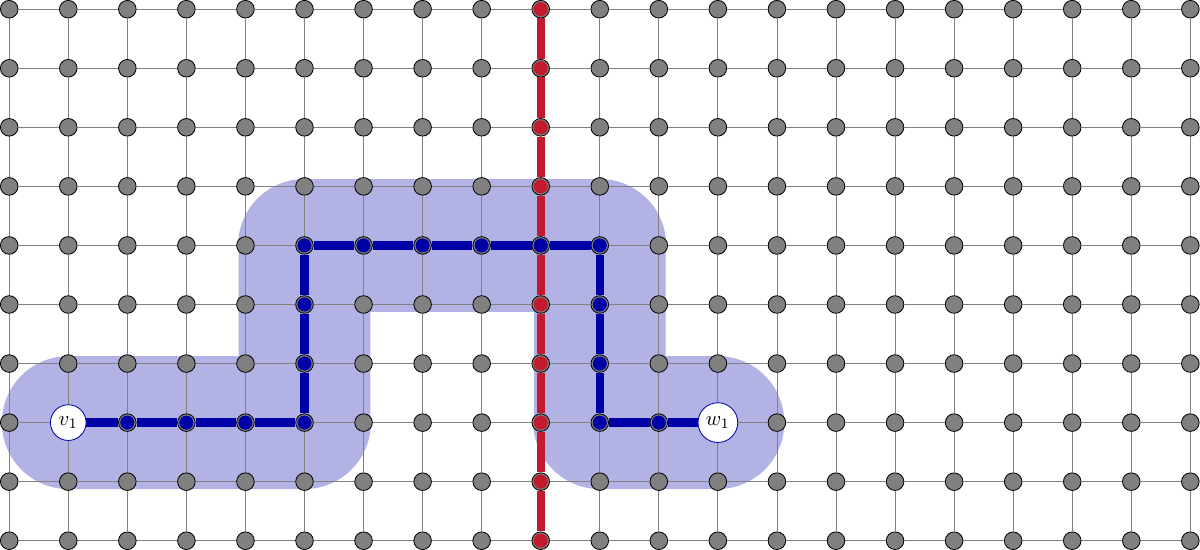}
         \caption{We remove nodes in distance $\epsilon\mathcal{D}$ around the path. 
         This also removes an $\epsilon\mathcal{D}$-fraction of the separator path.}
     \end{subfigure}
    \hfill
     \begin{subfigure}[t]{0.48\textwidth}
         \centering
         \includegraphics[width=\textwidth]{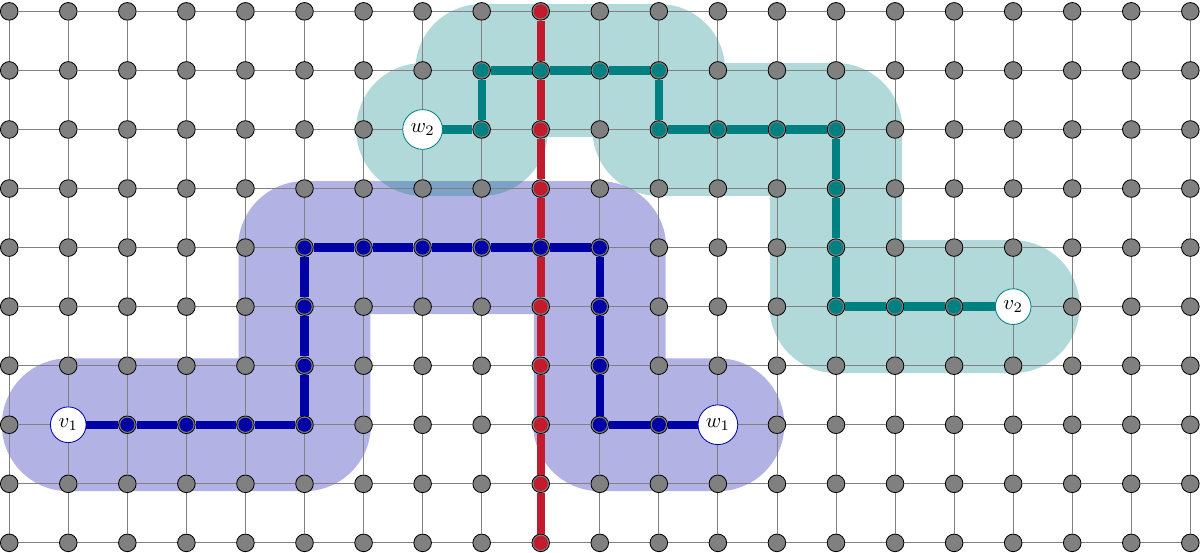}
         \caption{If the residual graph is not well-separated, the probability of crossing (what remains of) the separator is still constant.}
     \end{subfigure}
        \caption{We choose a two-dimensional mesh for illustration. While the operations in Figures \textbf{B} and \textbf{C} are straightforward to prove, the core of our analysis we will be the claim we make in Figure \textbf{D}.}
        \label{fig:sep_algo}
\end{figure}

\noindent Why does this work?
As our main analytical tool, we consider the following potential:
\begin{align*}
    \Phi_t := \left|\left\{ v \in V_t \bigm\vert |B_{G_t}(v,2\mathcal{D})| \geq (\nicefrac{7}{8})\cdot n \right\}\right|
\end{align*}
Further, for brevity, we write $\Phi_t(v) = 1$ if and only if $|B_{G_t}(v,2\mathcal{D})| \geq (\nicefrac{7}{8})\cdot n$ and $\Phi_t(v) = 0$, otherwise. Note that $\Phi_t := \sum_{v \in V} \Phi_t(v)$.

\medskip

\noindent We make the following important observation: 
As soon as this potential drops below $(\nicefrac{7}{8})\cdot n$, we have more than $(\nicefrac{1}{8})\cdot n$ nodes with less than $(\nicefrac{7}{8})\cdot n$ nodes in distance $2\mathcal{D}$.
This implies that no node can have more than $(\nicefrac{7}{8})\cdot n$ nodes in distance $\mathcal{D}$.
Suppose for contradiction that there is a node $v \in V_t$ with more than $(\nicefrac{7}{8})\cdot n$ nodes in distance $\mathcal{D}$.
By the triangle inequality, all these nodes would be in distance $2\mathcal{D}$ to each other.
Thus, the potential would be bigger than $(\nicefrac{7}{8})\cdot n$ as there are more than $(\nicefrac{7}{8})\cdot n$ nodes with more than $(\nicefrac{7}{8})\cdot n$ nodes in distance $\mathcal{D}$. This is a contradiction and therefore the nodes sampled until this point are the desired weak separator. 

Therefore, we want to bound the number of steps until the potential drops.
For this, we will first show the following useful fact about $k$-path separators, namely
\begin{restatable}{lemma}{boundary}
\label{lemma:boundary_simple}
    Let $G := (V, E, w)$ be a weighted {$k$-path separable} graph of $n$ nodes with (weighted) diameter $\mathcal{D}$.
    Then, there exists a set $\mathcal{B} = \{P_1, \ldots, P_\kappa\}$ with $\kappa \leq k$ simple paths of length $32\mathcal{D}$ such that for all $v \in V$, it holds $B_{G \setminus \mathcal{B}}(v,4\mathcal{D}) \leq \left(\nicefrac{3}{4}\right)\cdot n$.    
\end{restatable}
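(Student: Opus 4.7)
The plan is to invoke the definition of $k$-path separability to obtain a $k$-path separator $S = \{P_1, \ldots, P_k\}$ of $G$, and to construct the desired set $\mathcal{B}$ by truncating each $P_i$ to a simple subpath $P_i'$ of length at most $32\mathcal{D}$. The separator guarantees that every component of $G \setminus S$ has size at most $n/2$, so if we could afford to keep $S$ intact, the $(\nicefrac{3}{4}) n$ ball bound would follow trivially. The difficulty is that a path $P_i \in \mathcal{P}_\ell$ is only a shortest path in the sub-graph $G_i := G \setminus \bigcup_{j<\ell} \mathcal{P}_j$, whose diameter can be substantially larger than $\mathcal{D}$; hence $|P_i|$ may exceed $32\mathcal{D}$, and we need a truncation that nonetheless preserves the separating property up to the weaker $(\nicefrac{3}{4}) n$ slack.

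For each $P_i$ of length more than $32\mathcal{D}$, I will choose $P_i'$ as a central subsegment of length exactly $32\mathcal{D}$, positioned so that both tails $P_i \setminus P_i'$ lie at $G_i$-distance at least $16\mathcal{D}$ from the midpoint of $P_i'$. The key observation is that $G \setminus \mathcal{B} \subseteq G_i$ as soon as the truncated versions of the earlier-level paths are already in $\mathcal{B}$, so distances in the residual graph lower-bound those measured along the tails. Since each tail is itself a shortest path in $G_i$, any walk of length at most $4\mathcal{D}$ in $G \setminus \mathcal{B}$ that enters a tail can traverse at most $4\mathcal{D}$ worth of it before exhausting its budget. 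With this, I will analyze the ball by decomposing $B_{G \setminus \mathcal{B}}(v,4\mathcal{D})$ into (a) nodes reachable from $v$ already within $G \setminus S$, contributing at most $|C(v)| \leq n/2$, and (b) extra nodes reached only via some tail of some $P_i$, and then charge each extra node to the component of $G \setminus S$ adjacent to the tail it routed through.

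The main obstacle is bounding part (b) by $n/4$. The argument will need to exploit three ingredients simultaneously: the $G$-diameter bound $\mathcal{D}$, the fact that every $P_i$ is a shortest path in $G_i$ (so sub-paths along it are also shortest paths in $G_i$), and the $4\mathcal{D}$ radius of the ball in $G \setminus \mathcal{B}$. The constant $32$ in the truncation length is calibrated against the $4\mathcal{D}$ budget so that, after a short path leaves a tail into a new component of $G \setminus S$, only a remaining budget of $O(\mathcal{D})$ is available; this limits each tail to ``reach'' only a bounded number of components of $G \setminus S$, and summing across the at most $k$ truncated paths of $\mathcal{B}$ yields the desired $n/4$ bound. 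This charging is where the bulk of the work lies, and where a looser calibration of constants would weaken the final ball fraction or force a longer truncation length.
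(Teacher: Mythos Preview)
Your charging argument for part (b) does not go through. Even granting that a $4\mathcal{D}$-walk can traverse at most $4\mathcal{D}$ of any tail, the nodes you pick up are not the tail vertices themselves but the components of $G\setminus S$ adjacent to that short tail segment; a single such component may already have size $n/2$, so reaching one extra component via one tail is enough to push the ball past $\tfrac{3}{4}n$. Nothing in your setup bounds the \emph{mass} of the components touched, only their number, and there is no reason that mass should be below $n/4$. The inclusion $G\setminus\mathcal{B}\subseteq G_i$ is also backwards: removing truncated paths removes \emph{less} than removing the full earlier-level paths, so $G\setminus\mathcal{B}\supseteq G_i$ and distances in the residual graph are smaller, not larger, than in $G_i$.

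The missing idea is that the truncation must be anchored to the ball you are trying to control, not to an intrinsic ``center'' of each $P_i$. The paper first locates, for $\mathcal{D}'=8\mathcal{D}$, a \emph{critical index} $i^{*}$---the last level after which no node has $\geq\tfrac34 n$ vertices in its $\mathcal{D}'$-ball---and a \emph{terminal node} $z$ witnessing the failure at level $i^{*}$; the core set $\mathcal{Z}=B_{G^{(i^{*})}}(z,\mathcal{D}')$ has size $\geq\tfrac34 n$. Each separator path $P_{i_j}$ is then truncated to its intersection with $B_{G^{(i-1)}}(\mathcal{Z},\mathcal{D}')$: the triangle inequality through $z$ forces this intersection to lie on a subpath of length at most $4\mathcal{D}'=32\mathcal{D}$, and an induction over levels shows that for every $v\in\mathcal{Z}$ removing these truncated pieces has the same effect on $B(v,\mathcal{D}')$ as removing the full critical separator. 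This yields the $\tfrac34 n$ bound only for $v\in\mathcal{Z}$, and only at radius $8\mathcal{D}$. The extension to \emph{all} $v$ at radius $4\mathcal{D}$ is a separate doubling step: if at least $\tfrac14 n$ nodes have $2r$-balls of size $\leq\tfrac34 n$, then every node has an $r$-ball of size $\leq\tfrac34 n$ (otherwise a large $r$-ball would have to contain one of the good nodes and hence sit inside its $2r$-ball). Both the anchoring via the terminal node and this doubling step are absent from your plan, and without them the centrally truncated paths need not separate anything for a general $v$.
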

In other words, a subset of bounded length paths intersects with a constant fraction of all paths our algorithm can potentially sample. 
The proof requires copious use of the triangle inequality and the pigeonhole principle and can be found in \autoref{sec:appendix_separator}.

Now consider the event that the algorithm samples a path $P_t$ that \emph{crosses} some path of the separator $\mathcal{B}$, i.e., the path we sample contains (at least) one node from one of the $k$ paths in $\mathcal{B}$.
In this case, we remove a subpath of length $\epsilon\mathcal{D}$ from this path when we remove the set $B_{P_t}$ around the $P_t$.
This follows because we determine the set $B_t$ with respect to the original graph $G$ and not $G_t$.
Thus, after sampling $O(\epsilon^{-1} \cdot \mathcal{D})$ paths that cross $\mathcal{B}$, we must have completely removed $\mathcal{B}$ from $G$.
If $\mathcal{B}$ is removed, each node has at most $(\nicefrac{3}{4})\cdot n$ nodes in distance $\mathcal{D} \leq 2\mathcal{D}$ per definition and we are done.

Thus, we will bound the time until \textbf{either} the potential drops \textbf{or} $O(\epsilon^{-1} \cdot k)$ paths cross $\mathcal{B}$.
In the following, denote the event that the path from $v_t$ to $w_t$ crosses $\mathcal{B}$ as $v_t \rightsquigarrow_{\mathcal{B}} w_t$.
By definition of $\mathcal{B}$, for each node, the set of nodes reachable via $\mathcal{B}$ is at least size $(\nicefrac{1}{4})\cdot n$.
Thus, for all nodes $v \in V_t$ with $\Phi_t(v) = 1$, at least $(\nicefrac{1}{4}) \cdot n - (\nicefrac{1}{8})\cdot n \geq (\nicefrac{1}{8})\cdot n$ of these nodes must still be in distance $2\mathcal{D}$.
All these nodes might be chosen as the path's endpoint $P_t$ if we pick $v$ as the starting point as the $2$-approximate distance is at most $4\mathcal{D}$. 
So, if we sample one of them, we cross $\mathcal{B}$ as all paths of length at most $4\mathcal{D}$ cross $\mathcal{B}$.
Thus, in a configuration with potential $\Phi_t \geq (\nicefrac{7}{8})\cdot n$, the probability sample a path that crosses $\mathcal{B}$ is at least
\begin{align}
   \pr{v_t \rightsquigarrow_{\mathcal{B}} w_t} \geq \pr{\Phi_t(v_t) = 1}\cdot\pr{v_t \rightsquigarrow_{\mathcal{B}} w_t \mid \Phi_t(v_t) = 1} \geq \frac{1}{n} \cdot \Phi_t \cdot \frac{1}{n} \cdot (\nicefrac{1}{8})\cdot n \geq \frac{7}{64} \geq \frac{1}{16}
\end{align}
Thus, if the potential has not dropped after $T$ steps, the expected number of crossings is $(\nicefrac{1}{16})\cdot T$.
Using standard Chernoff-like tail estimates, we can show that for any $T \in \Omega(\log n)$, this would imply that at least $\frac{T}{32}$ paths crossed $\mathcal{B}$, w.h.p.
Thus, within $T \in O(\epsilon^{-1} \cdot k\cdot \log n)$ steps, we either reached a step with low potential or had sufficiently many crossings to remove $\mathcal{B}$.
In either case, we are done as each node had less than $\nicefrac{7}{8}\cdot n$ nodes in distance $\mathcal{D}$.
As up until this point we sampled $T \in O(\epsilon^{-1} \cdot k\cdot \log n)$ paths of length $4\mathcal{D}$ and all nodes in distance $\epsilon\mathcal{D}$ around these paths, this set must be the desired weak separator.


\medskip 

\noindent A visualization of the algorithm can be found in \autoref{fig:sep_algo}. A more detailed description and the full proof can be found in \autoref{sec:appendix_separator}.

\section{Low-Diameter Decompositions for $k$-path Separable Graphs}
\label{sec:k_path_cluster}

In this section, we prove \autoref{thm:clustering_k_path}.
Our algorithm will combine the insights we gathered for \LDD's and \LDC's in general graphs with our novel separator construction.
Our main technical result is the following proposition:
\begin{proposition}[A Clustering Theorem for Restricted Graphs]
    \label{lemma:kr_clustering}
    Let ${\mathcal{D}>0}$ be an arbitrary distance parameter and ${G:=(V,E,\ell)}$ be a (possibly weighted) $\Tilde{O}(1)$-path separable graph. Then, there is an algorithm that creates a series of disjoint clusters $\mathcal{K} := K_1, K_2, \ldots, K_{N}$ with $K_i:= (V_i,E_i)$.
    Futher, it holds:
    \begin{enumerate}
        \item Each cluster $K_i \in \mathcal{K}$ has a strong diameter of at most $\mathcal{D}$.
        \item Each edge $z := \{v,w\} \in E$ of length $\ell_e$ is cut between clusters with probability $O\left(\frac{\ell_z\cdot (\log k + \log\log n)}{\mathcal{D}}\right)$.
        \item Each node $v \in V$ is added to some cluster $K_i \in \mathcal{K}$ with probability at least $\nicefrac{1}{2}$.
\end{enumerate}
The algorithm can be implemented in $\Tilde{O}(k)$ minor aggregations and $\Tilde{O}(k)$ ${(1+\epsilon)}$-approximate \SetSSP computations with $\epsilon \in O\left(\nicefrac{1}{\log^2 n}\right)$.
\end{proposition}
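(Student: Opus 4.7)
The proof plan is to reduce \autoref{lemma:kr_clustering} to the generic clustering theorem \autoref{thm:genericldd}, applied at scale $\mathcal{D}' := \mathcal{D}/8$. If I can construct a set $\mathcal{X} \subseteq V$ of marked nodes that satisfies the covering property at distance $\mathcal{D}'$ and the packing property with $\tau \in \poly(k,\log n)$ at distance $6\mathcal{D}'$, then \autoref{thm:genericldd} produces an \LDC of strong diameter $8\mathcal{D}' = \mathcal{D}$ with quality $\bigl(O(\log\tau / \mathcal{D}), \nicefrac{1}{2}\bigr) = \bigl(O((\log k + \log\log n) / \mathcal{D}), \nicefrac{1}{2}\bigr)$, yielding all three properties of the proposition simultaneously.

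The construction of $\mathcal{X}$ uses \autoref{thm:distributed_weak_separator} iteratively. Starting with $G_0 := G$, for $i = 0, 1, \ldots, \ell$ with $\ell \in O(\log n)$ I compute a weak $\tilde{O}(k)$-path $(\mathcal{D}', \epsilon)$-separator $S_i$ of $G_i$ for a suitably small constant $\epsilon$, and set $G_{i+1} := G_i \setminus S_i$. On every separator path $P \in S_i$, which has length at most $4\mathcal{D}'$, I place landmarks at regular intervals of $\mathcal{D}'/16$ along $P$. The set $\mathcal{X}$ is taken to be all these path-landmarks together with whichever vertices survive into $G_{\ell+1}$. Iterating the weak-separator guarantee $\ell$ times forces the maximum $\mathcal{D}'$-ball of the residual graph to shrink by a factor of at least $(\nicefrac{7}{8})^\ell$, so that for $\ell \in O(\log n)$ it is bounded by $\poly(k,\log n)$.

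The covering property is then immediate: every $v \in V$ is either contained in some $S_i$, in which case it lies within $\epsilon\mathcal{D}' + \mathcal{D}'/16 \leq \mathcal{D}'$ of a path-landmark on the surrounding path, or else it survives to $V(G_{\ell+1})$ and is itself a landmark. For the packing property, any path of length at most $4\mathcal{D}'$ contributes only $O(1)$ landmarks to any $6\mathcal{D}'$-ball, so the total path-landmark contribution summed over all $O(\log n)$ iterations is bounded by $\tilde{O}(k\log n) \subseteq \poly(k,\log n)$ in any ball. The contribution of the surviving $G_{\ell+1}$-landmarks to any $G$-ball of radius $6\mathcal{D}'$ must then be controlled by the small-ball guarantee that the iteration establishes in the residual metric of $G_{\ell+1}$.

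The main technical obstacle lies exactly in this final counting step: the weak separator bounds balls in the metric of $G_{\ell+1}$, but distances in $G$ can be strictly shorter than distances in $G_{\ell+1}$, so a priori a single $G$-ball could contain many residual landmarks that are pairwise far apart in $G_{\ell+1}$. Resolving this requires exploiting the $k$-path separability of $G$: via \autoref{lemma:boundary_simple}, any shortcut in $G$ between two residual nodes must traverse nodes of $\bigcup_i S_i$, all of which lie near the computed separator paths. A charging argument, deferred to the appendix, pins these excess residual landmarks on the nearby separator paths and absorbs them into the $\poly(k,\log n)$ path-landmark budget. Combined with $O(\log n)$ invocations of \autoref{thm:distributed_weak_separator} (each costing $\tilde{O}(k)$ operations) and one invocation of \autoref{thm:genericldd}, the total cost is the claimed $\tilde{O}(k)$ minor aggregations and $(1+\epsilon)$-approximate \SetSSP computations with $\epsilon \in O(\nicefrac{1}{\log^2 n})$.
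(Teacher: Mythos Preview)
Your plan has two concrete gaps. First, \autoref{thm:distributed_weak_separator} carries the hypothesis that the input graph has weighted diameter smaller than $\mathcal{D}$; you invoke it on $G_0 = G$ and on each residual $G_i$, none of which satisfy this. Without bounded diameter the random length-$4\mathcal{D}'$ paths sampled by the separator routine have no reason to hit any fixed separator, and both \autoref{lemma:boundary_simple} and the potential argument behind \autoref{thm:distributed_weak_separator} collapse. (Relatedly, the claimed geometric $(7/8)^\ell$ shrinking of $\mathcal{D}'$-balls does not follow: the theorem only gives $|B_{G_{i+1}}(v,\mathcal{D}')|\le (7/8)|V(G_i)|$, which does not compound unless each $G_i$ already has diameter $\le \mathcal{D}'$.) Second, the packing bound for the residual landmarks in $G_{\ell+1}$ is not a routine charging argument. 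Even granting that $\mathcal{D}'$-balls in $G_{\ell+1}$ are small, a single $6\mathcal{D}'$-ball in $G$ may meet arbitrarily many pairwise far-apart $G_{\ell+1}$-components: the removed separator regions act as hubs that collapse $G$-distances without bounding how many residual vertices they connect. Your appeal to \autoref{lemma:boundary_simple} does not help, since that lemma also presupposes bounded diameter and in any case only asserts the existence of a short boundary, not a count of residual vertices near it.

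The paper avoids both issues by inserting a partitioning layer rather than seeking a global center set. In the backbone phase (\autoref{lemma:backbone_clustering_k_path}) each recursion level first applies the \LDD of \autoref{thm:clustering_general} at scale $O(\mathcal{D}\log^2 n)$ to enforce bounded diameter, then computes the weak separator inside each piece, then carves a cluster around the separator by growing to a truncated-exponential radius and applying BBG, and recurses on the remainder. The exponential-radius step is essential: it guarantees that each edge is settled (both endpoints clustered, or cut) after $O(1)$ recursion levels in expectation, so the $O(\log n)$ levels do not accumulate a $\log n$ factor in the cut probability. The output is a \emph{partition} of $V$ into backbone clusters, each with $\tilO(k)$ backbone paths of length $O(\mathcal{D}\log^2 n)$ and every vertex within $\mathcal{D}_{BC}$ of its own backbone. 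Only then does the refinement phase (\autoref{lemma:backbone_clustering}) place net points on the backbones and invoke \autoref{thm:genericldd} \emph{separately inside each backbone cluster}. The packing parameter $\tau$ is thus simply the number of net points in one cluster, namely $\kappa\beta \in \poly(k,\log n)$, and no comparison between the $G$-metric and a residual-graph metric is ever needed.
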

Following this proposition, we can create an LDC for $k$-path separable graphs.
Crucially, the proposition is sufficient to prove \autoref{thm:clustering_k_path} for $\Tilde{O}(1)$-separable graphs:
For $k \in \Tilde{O}(1)$ one application of the algorithm creates a clustering with strong diameter $\mathcal{D}$ and cutting probability $O\left(\frac{\ell_z\cdot (\log\log n)}{\mathcal{D}}\right)$.
Now recall that universally $k$-path separable graphs are closed under minor-taking.
Therefore, the graph $G \setminus \mathcal{K}$ is also $k$-path separable, and we can apply the algorithm to $G \setminus \mathcal{K}$ with the same guarantees to cluster at least half of the remaining nodes.
By \autoref{lemma:folkloreldd}, after $O(\log n)$ recursive applications, all nodes are clustered, and we obtain an \LDD with quality $O(\log\log n)$ as required.
Further, for $k \in \Tilde{O}(1)$ one iteration of the algorithm can, w.h.p., be implemented in $\Tilde{O}(1)$ minor aggregations and $\Tilde{O}(1)$ ${(1+\epsilon)}$-approximate \SetSSP computations with $\epsilon \in O\left(\nicefrac{1}{\log^2 n}\right)$. 
Thus, $O(\log n)$ recursive applications are within the required time complexity of \autoref{thm:clustering_k_path}. 

\medskip

Thus, for the remainder, we will focus on \autoref{lemma:kr_clustering}. From a high-level perspective, the algorithm behind \autoref{lemma:kr_clustering} proceeds in two distinct phases: the \emph{backbone phase} and the \emph{refinement phase}.
In the backbone phase, we create a special partition $\mathcal{K}[\mathcal{B}] := K(\mathcal{B}_1), \ldots, K(\mathcal{B}_N)$ of $G$.
Each node $v \in K(\mathcal{B}_i)$ is in distance at most $\mathcal{D}_{BC} < \mathcal{D}$ to some subset $\mathcal{B}_i \subset K_i$ that consist of $\Tilde{O}(k)$ (possibly disjoint) paths of bounded length of $O\left(\mathcal{D}\log^2\right)$.
These paths may not necessarily be short\textbf{est} paths in $G$ or even $K(\mathcal{B}_i)$, only their length is bounded.
We call these subsets $\mathcal{B}_1, \ldots, \mathcal{B}_N$ the \emph{backbones} of the clusters.
In contrast to an \LDD, the clusters do \textbf{not} have simple nodes $v \in V$ as centers. 
As a result, the diameter of the resulting clusters might be much larger than $\mathcal{D}$.
Moreover, a single cluster might not even be connected.
In the second phase, we turn the resulting clustering into a \LDC with connected clusters of strong diameter $\mathcal{D}$.
We do so by computing an \LDC $\mathcal{K}_i = K_{i_1}, \ldots, K_{i_N}$ of strong diameter $\mathcal{D}$ in each cluster $K(\mathcal{B}_i)$.
Here, we exploit the special structure of the backbone cluster and choose a carefully chosen subset of nodes from each backbone path and possible cluster centers such that no more than $\Tilde{O}(k)$ centers can cluster a given node.
These centers will then be used as an input to \autoref{thm:genericldd} to refine each backbone cluster to a set of connected clusters of strong diameter $\mathcal{D}$.

This algorithm is inspired by the works of Abraham, Gavoille, Gupta, Neiman, and Talwar \cite{DBLP:conf/stoc/AbrahamGGNT14,AGGNT19} and the subsequent improvement of their algorithm Filtser\cite{Filtser19}.
Surprisingly, we do not require any fundamental changes to obtain an efficient distributed algorithm.
Both algorithms essentially follow the same two-phase structure sketched above.
They first constructed a partition around paths that are sampled in a certain way, and then they refined each partition individually.
For reference, Abraham, Gavoille, Gupta, Neiman, and Talwar dubbed these subsets around which the partitions are constructed \emph{skeletons} while Filtser used the term $r$-\emph{core}. However, their key properties are (almost) the same. We use the different term \emph{backbones} because we construct them slightly differently.
Our contribution when comparing our algorithm to \cite{DBLP:conf/stoc/AbrahamGGNT14,AGGNT19, Filtser19} is that we show a) that the exact shortest path can be replaced by approximate paths and the blurry ball growing procedure from \autoref{lemma:bbg} and, therefore, can be implemented efficiently and b) that we can our weak separator from \autoref{sec:weak_separators} to parallelize the algorithm and archive a logarithmic recursion depth efficiently. 

In the remainder of this section, we sketch the backbone phase in \autoref{sec:backbone} and the refinement phase in \autoref{sec:refinement}.
Together, the lemmas presented in these sections will prove \autoref{lemma:kr_clustering}.

\subsection{The Backbone Phase}
\label{sec:backbone}
In this section, we sketch how the backbone clustering is computed in the first phase of the algorithm.
Formally, we define the clustering constructed in the first phase as follows:
\begin{definition}[Backbone Clustering]
Let $\mathcal{D}>0$ be a distance parameter and $G:=(V,E,\ell)$ a (possibly weighted) graph.
Then, a $(\alpha, \beta, \kappa)$-backbone clustering is a series of disjoint subgraphs $K(\mathcal{B}_1), \ldots, K(\mathcal{B}_N)$ with $K(\mathcal{B}_i) = (V_i,E_i)$ where $V_1 \sqcup \ldots \sqcup V_N = V$.
Further, the following three conditions hold: 
\begin{enumerate}
        \item Each node $v \in K(\mathcal{B}_i)$ is in distance at most $\mathcal{D}$ to $\mathcal{B}_i$.
        \item  An edge of length $\ell$ is cut between clusters with probability $O\left(\frac{\ell\cdot \alpha}{\mathcal{D}}\right)$.
        \item Each backbone $\mathcal{B}_i$ consists of at most $\kappa$ paths of length $O(\beta\mathcal{D})$.
\end{enumerate}
\end{definition}
\noindent Given this definition, we  show the the following lemma:
\begin{restatable}[Backbone Clustering]{lemma}{backbone}
\label{lemma:backbone_clustering_k_path}
Let $\mathcal{D}>0$ be a distance parameter and $G:=(V,E,\ell)$ a (possibly weighted) $k$-path separable graph.
Then, there is an algorithm that creates an $(\alpha,\beta,\kappa)$-backbone clustering with $\alpha \in O(\log{(k\log n)})$, $\beta \in O(\log^2)$, $\kappa \in O(k\log^2 n)$ and pseudo-diameter $\mathcal{D}$ for $G$.
The algorithm can be implemented with $\Tilde{O}(1)$ minor aggregations and $(1+\epsilon)$-approximate \SetSSP computations where $\epsilon \in O\left(\nicefrac{1}{\log^{2}n} \right)$.
\end{restatable}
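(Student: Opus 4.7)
My plan is to adapt the pseudo-padded decomposition framework of \autoref{theorem:generalpartition} to use the paths of a weak separator (\autoref{thm:distributed_weak_separator}) as cluster centers, and then drive the construction to a full partition by recursion on uncovered nodes, following the recursive template of \cite{DBLP:conf/stoc/AbrahamGGNT14, AGGNT19, Filtser19} but replacing exact separators and exact shortest paths with our weak separator construction and $(1+\epsilon)$-approximate \SetSSP.

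First I would apply \autoref{thm:distributed_weak_separator} to $G$ with parameters $\mathcal{D}$ and $\epsilon \in \Theta(1/\log n)$ to obtain a weak $\kappa$-path $(\mathcal{D},\epsilon)$-separator $\mathcal{S}=\{\mathcal{P}_1,\ldots,\mathcal{P}_\kappa\}$ with $\kappa \in O(k\log^2 n)$ and each underlying $P_i$ of length at most $4\mathcal{D}$. By the weak separator property, every node in $V \setminus \bigcup_i \mathcal{P}_i$ has at most $(7/8)n$ nodes within distance $\mathcal{D}$ in the residual graph, which will drive the termination argument for the recursion below.

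Second I would generalise the pseudo-padded decomposition of \autoref{theorem:generalpartition} from point centers to path centers. For each $P_i$, draw an independent $\delta_i \sim \mathcal{D}\cdot\Texp{2+2\log\tau}$ with $\tau = \kappa$, attach a virtual super-source $s$ with edges of weight $\mathcal{D}-\delta_i$ to every node of $P_i$, and run a single $(1+\epsilon)$-approximate \SetSSP from $s$. Each node joins the path minimising its effective distance. The packing property $\tau = O(k\log^2 n)$ holds trivially since there are only that many paths total, while the covering property weakens to only clustering nodes within distance $\mathcal{D}$ of some $P_i$, which is precisely the pseudo-diameter guarantee required. The padding analysis of \autoref{lemma:special} transfers essentially verbatim, because the "first wake-up" intuition is agnostic to whether a center is a point or a path. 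Chaining with BBG (\autoref{lemma:bbg}) on inner clusters exactly as in \autoref{thm:genericldd} then yields the desired cutting probability $O\!\left(\frac{(\log k + \log\log n)\cdot\ell}{\mathcal{D}}\right)$ and ensures each clustered node is within $\mathcal{D}$ of its backbone path.

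Third, to force a partition, I would recurse on the subgraph induced by nodes that remain uncovered. In that subgraph the weak separator guarantee implies every node has at most $(7/8)n$ nodes within distance $\mathcal{D}$, so the "effective local size" halves per level and, by a potential argument analogous to the one used in the proof of \autoref{thm:distributed_weak_separator}, the recursion terminates within $O(\log n)$ levels w.h.p. Aggregating paths across these levels yields $\kappa \in O(k\log^2 n)$ and $\beta \in O(\log^2 n)$: each level contributes separator paths of length only $O(\mathcal{D})$, and the logarithmic slack in $\beta$ is absorbed by the BBG blow-up and by clusters that are born deep in the recursion tree after several rounds of inherited paths. Since each level uses only $\Tilde{O}(1)$ minor aggregations and $(1+\epsilon)$-approximate \SetSSP calls and there are $\Tilde{O}(1)$ levels, the overall complexity matches the stated bound.

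The main obstacle will be lifting the padding analysis of \autoref{theorem:generalpartition} and \autoref{lemma:special} to path centers while carrying the $\epsilon$-error through without losing the packing bound $\tau$. In particular one must verify that the approximation slack interacts with node-to-path distances exactly as with node-to-point distances, and that the BBG step does not re-introduce an extra union bound over paths. A secondary, quantitatively delicate point is pinning down the precise $\kappa$ and $\beta$ constants across recursion, since a cluster born at depth $d$ inherits path-accounting from all its ancestors; I expect the cleanest resolution is an induction on the recursion tree with a joint potential that tracks both the neighbourhood-size drop and the cumulative backbone path budget.
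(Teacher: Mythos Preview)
There are two genuine gaps, and they share a common root: you invoke \autoref{thm:distributed_weak_separator} on $G$ directly, omitting the preliminary \LDD step that the paper's construction relies on. First, \autoref{thm:distributed_weak_separator} carries the hypothesis that the input graph has weighted diameter at most $\mathcal{D}$; nothing in \autoref{lemma:backbone_clustering_k_path} bounds the diameter of $G$, so the weak separator cannot be built on $G$ as stated. Second, and more seriously, your recursion-termination argument does not work. The weak-separator guarantee says that after removing $S$ every surviving node has at most $(7/8)n$ nodes within distance $\mathcal{D}$, but it says nothing about the \emph{size} of the uncovered set, which can remain close to $n$ (think of a separator living in a thin bottleneck, so that almost no nodes are within $\mathcal{D}$ of it). Shrinking local neighbourhood sizes is not the right invariant for bounding recursion depth. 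The paper fixes both issues simultaneously by first computing an \LDD of diameter $\mathcal{D}'=\Theta(\mathcal{D}\log^2 n)$ via \autoref{thm:clustering_general} and building a weak $(\mathcal{D}',\epsilon)$-separator \emph{inside each \LDD part}: this supplies the diameter hypothesis, explains the stated $\beta\in O(\log^2 n)$ (separator paths have length $O(\mathcal{D}')$), and drives termination because a part of size $m$ with a weak $\mathcal{D}'$-separator removed leaves every node with at most $(7/8)m$ nodes within $\mathcal{D}'$, so the next round's diameter-$\mathcal{D}'$ \LDD parts each have at most $(7/8)m$ nodes.

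Structurally, your proposal also diverges from the paper in how the cluster around the separator is grown, and this matters for the cut probability. The paper does \emph{not} run a pseudo-padded decomposition with path centers in the backbone phase; instead it treats the entire separator $\mathcal{S}_i$ as one set, grows a ball of truncated-exponential random radius $X_i\cdot\mathcal{D}_{BC}$ around it (the $\kexp$ step), and only then applies BBG. The random-radius step is precisely what prevents the $O(\log n)$ recursion levels from accumulating in the cut bound: an edge is threatened by BBG at level $t$ only if the random ball comes close without swallowing it, which with constant probability swallows it outright, so the expected number of threatening levels is $O(1)$. Your pseudo-padded-with-path-centers idea is essentially what the paper defers to the \emph{refinement} phase (\autoref{lemma:backbone_clustering}), after the backbone partition exists; pulling it forward leaves you without a mechanism to control the per-level cut contributions across the recursion, since \autoref{lemma:folkloreldd}'s amortisation requires every node to be clustered with probability $\Omega(1)$, which fails for nodes far from all separator paths.
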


$\mathcal{K}[\mathcal{B}]$ can be constructed by a recursive divide-and-conquer algorithm executed in parallel on all connected components of $G$.
The algorithm works as follows:
Let $G_t \subset G$ be the graph at the beginning of the $t^{th}$ recursive step where initially $G_0 = G$. 
At the beginning of each recursive step, we create an \LDD $C_1, \ldots, C_{N'}$ of diameter ${\mathcal{D}}' \in O(\mathcal{D}\log^2 n)$ of $G_t$.
We use the algorithm from \autoref{thm:clustering_general} for this.
The edges between two clusters $C_i$ and $C_j$ are removed and will not be considered in the following iterations.
We denote the graph that results from removing all these edges as $G_t'$.
Then, in each $C_i$, we compute a weak $\Tilde{O}(k)$-path $({\mathcal{D}'},\epsilon)$-separator $\mathcal{S_i}$ with $\epsilon \in O(\nicefrac{1}{\log^2 n})$ using the algorithm from \autoref{thm:distributed_weak_separator}.
Recall that we defined this separator to be the union of $O(\epsilon^{-1} \cdot k)$ paths of length $4\cdot\mathcal{D}'$ and some nodes in distance $\epsilon\mathcal{D}'$ to these paths.
The $O(\epsilon^{-1} \cdot k)$ paths in $\mathcal{S}_i$ will be the backbone of a cluster.
To add further nodes to the cluster, we proceed in two steps:
First, we draw $X_i \sim \Texp{4 \log(\log n)}$ and compute a $(1+\epsilon)$-approximate \SetSSP from $\mathcal{S}_i$.
Then, we mark all nodes at a distance at most $(1+\epsilon)X_i\cdot\mathcal{D}_{BC}/4$ from $\mathcal{S}_i$.
We denote this set as $\kexp(\mathcal{S}_i)$.
Then apply the BBG procedure of \autoref{lemma:bbg} with parameter $\rblur \coloneqq \frac{\mathcal{D}_{BC}}{\cblur \cdot \log\log n}$ from $\kexp(\mathcal{S}_i)$. 
Here,  $\cblur > 8$ is a suitably chosen constant.
This results in a superset $\kblur(\mathcal{S}_i) \subseteq C_i$ for each subgraph $C_i$.
Finally, we add all sets $\kblur(\mathcal{S}_i)$ from all clusters $C_i$ to our clustering $\mathcal{K}[\mathcal{B}]$ and remove them from the graph.

We repeat this process with the remaining graph $G_{t+1} := G'_t \setminus \bigcup \kblur(\mathcal{S}_i)$ until all subgraphs are empty.
As we remove a weak $\mathcal{D}'$-separator in each step \textbf{and} then create an \LDD with strong diameter $\mathcal{D}'$, 
the size of the largest connected components shrinks by a constant factor in each recursion.
Thus, we require $O(\log n)$ recursions overall until all components are empty.

\medskip

\noindent In each cluster $\kblur(\mathcal{S}_i)$, we define the short paths $P_1, \ldots, P_\kappa$ in $\mathcal{S}_i$ to be the backbones.
To prove that the resulting clustering is a proper backbone clustering, we must analyze its pseudo-diameter and edge-cutting probability.
Recall that for an appropriate choice of $\epsilon \in O(\nicefrac{1}{\log^2 n})$, all nodes in $\mathcal{S}_i$ are in distance $\epsilon\mathcal{D}' \leq \nicefrac{\mathcal{D}_{BC}}{4}$ to any path.
Further als $(1+\epsilon) \leq 2$ and $X_i \cdot \nicefrac{\mathcal{D}_{BC}}{4} \leq \nicefrac{\mathcal{D}_{BC}}{4}$, all nodes in $\kexp(\mathcal{S}_i)$ are in distance $(1+\epsilon) \cdot X_i \cdot \nicefrac{\mathcal{D}_{BC}}{4} \leq \nicefrac{\mathcal{D}_{BC}}{2}$ to $\mathcal{S}_i$.
Finally, the distance of any node in $\kblur(\mathcal{S}_i)$ to any node in $\kexp(\mathcal{S}_i)$ is at most $2\rblur \leq \nicefrac{\mathcal{D}_{BC}}{4}$.
Thus, the distance from each node $u \in \kblur(\mathcal{S}_i)$ to any of the paths in the separator is at most
\begin{align}
    {d}(u, \mathcal{B}_i) \leq {d}(v, \kexp(S_i)) + {\mathbf{max}_{v \in \kexp(S_i)}}  {d}(v, S_i) + \mathbf{max}_{w \in S_i}    d(w,\mathcal{B}_i)
     \leq \nicefrac{\mathcal{D}_{BC}}{4} + 2\cdot\nicefrac{\mathcal{D}_{BC}}{4} + \nicefrac{\mathcal{D}_{BC}}{4}
     = 4\nicefrac{\mathcal{D}_{BC}}{4} = \mathcal{D}_{BC}
\end{align}
Thus, only the edge-cutting probability remains to be shown. 
For each edge $z \in E$, two operations can cut an edge in each iteration.
The LDD from \autoref{thm:clustering_general} and the BBG procedure from \autoref{lemma:bbg}.
First, consider the LDD. As we pick $\mathcal{D}' \in O(\mathcal{D}\log^2 n)$, a single application of \autoref{thm:clustering_general} cuts an edge with probability $O\left(\nicefrac{\ell_z\log n}{\mathcal{D}'}\right) = O\left(\nicefrac{\ell_z}{\mathcal{D}\log n}\right)$.
Thus, as we apply it at most $O(\log n)$ times, the probability sums up to $O\left(\nicefrac{\ell_z}{\mathcal{D}}\right)$ by the union bound.
The other possibility is that $z$ is cut by the BBG.
By our choice of $\rblur$, the probability for the ball cut to $z$ is $O\left(\nicefrac{\ell_z\log(k\log n)}{\mathcal{D}}\right)$.
Therefore, we cannot simply use the union bound to sum up the probabilities over all $O(\log n)$ iterations.
Here, the intermediary step where we construct $\kexp(\mathcal{S}_i)$ comes into play.
We can exploit that $\kexp{(\mathcal{S}_i)}$ must be \emph{close} to either endpoint of $z$, i.e., within distance at most $2\rblur$, in order for $\mathsf{blur}(\kexp{(\mathcal{S}_i)}, \rblur)$ to cut it.
The distance between $\Texp{\mathcal{S}_i}$ and $z$ is determined by the exponentially distributed variable $X_i \cdot \mathcal{D}_{BC}/4$.
Using the properties of the truncated exponential distribution, we can show that the probability of 
$\Texp{\mathcal{S}_i}$ being close to $z$ without adding the endpoints of $z$ to $\Texp{\mathcal{S}_i}$ is constant for our choice of parameters.
Thus, the edge is safe from ever being cut before the blur procedure is even executed.
This follows from the properties of the truncated exponential distribution.
Thus, on expectation, there will only be a constant number of tries before $z$ is either cut or safe.
As we will see in the full analysis, this is sufficient for our probability bound.

Finally, verifying that each algorithm step requires $\Tilde{O}(k)$ approximate \SetSSP computations and minor aggregations is easy. 
We use three subroutines from \autoref{thm:clustering_general}, \autoref{thm:distributed_weak_separator}, and \autoref{lemma:bbg}, which all require at most $\Tilde{O}(k)$ approximate \SetSSP computations and minor aggregations on a $k$-path separable graph. 
In addition, the algorithm only uses one more \SetSSP; thus, the runtime follows.

\noindent  A more detailed description and the full proof can be found in \autoref{sec:appendix_backbone}.

\subsection{The Refinement Phase}
\label{sec:refinement}

For the second phase, we show a generic lemma that allows us to turn any backbone clustering into an \LDC.
To be precise, it holds:
\begin{restatable}{lemma}{refinement}
\label{lemma:backbone_clustering}
Suppose we have an algorithm $\mathcal{A}$ that creates a $(\alpha,\beta,\kappa)$-backbone clustering with pseudo-diameter $\mathcal{D}_{BC}$ of a weighted graph $G$. 
 Then, there is an algorithm that creates a series of disjoint clusters $\mathcal{K} := K_1, K_2, \ldots, K_{N}$ with $K_i:= (V_i,E_i)$.
    Futher, it holds:
    \begin{enumerate}
        \item  Each cluster $K_i \in \mathcal{K}$ has a strong diameter of at most $16\cdot\mathcal{D}_{BC}$.
        \item  Each edge $z := \{v,w\} \in E$ of length $\ell_e$ is cut between clusters with probability $O\left(\frac{\ell_z\cdot (\alpha+\log\kappa\beta)}{\mathcal{D}}\right)$.
        \item Each node $v \in V$ is added to some cluster $K_i \in \mathcal{K}$ with probability at least $\nicefrac{1}{2}$.
\end{enumerate}
The algorithm requires one application of $\mathcal{A}$, $\Tilde{O}(\kappa)$ minor aggregations, and $\Tilde{O(1)}$ approximate \SetSSP computations with $\epsilon \in O(\nicefrac{1}{\log^2 n})$.
\end{restatable}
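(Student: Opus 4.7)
\medskip

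\noindent\textbf{Proof plan for \autoref{lemma:backbone_clustering}.}

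The plan is to run $\mathcal{A}$ once to obtain a backbone clustering $K(\mathcal{B}_1), \ldots, K(\mathcal{B}_N)$, and then, in parallel inside each backbone cluster, invoke the generic clustering theorem \autoref{thm:genericldd} with a carefully chosen center set that lives entirely on the backbone. Since the $K(\mathcal{B}_i)$ are vertex-disjoint, the parallel refinement across all backbone clusters costs only one instance of the subroutines by \autoref{thm:agg_runtime}. The final \LDC is the union of the per-backbone refinements.

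For the refinement, set $\mathcal{D}' := 2\mathcal{D}_{BC}$ so that \autoref{thm:genericldd} outputs clusters of strong diameter $8\mathcal{D}' = 16\mathcal{D}_{BC}$, matching the bound required by the lemma. Inside each backbone $\mathcal{B}_i$, which is a union of at most $\kappa$ paths of length $O(\beta\mathcal{D}_{BC})$, I would construct a $\mathcal{D}_{BC}$-net $\mathcal{X}_i$ along the paths by placing one center every $\mathcal{D}_{BC}$ units of (weighted) path length. This yields $|\mathcal{X}_i| \in O(\kappa\beta)$ centers per backbone, and every point on $\mathcal{B}_i$ lies within $\mathcal{D}_{BC}$ of some center of $\mathcal{X}_i$. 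The covering property of \autoref{thm:genericldd} is then immediate: for $v \in K(\mathcal{B}_i)$ the definition of a backbone clustering gives some $u \in \mathcal{B}_i$ with $d(v,u) \leq \mathcal{D}_{BC}$, and then the triangle inequality yields $d(v, \mathcal{X}_i) \leq 2\mathcal{D}_{BC} = \mathcal{D}'$. The packing property holds with $\tau := |\mathcal{X}_i| \in O(\kappa\beta)$, since the net is small enough that the trivial bound $|\mathcal{X}_i|$ already dominates the count of centers in any ball.

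Applying \autoref{thm:genericldd} therefore produces, inside each $K(\mathcal{B}_i)$, an \LDC of strong diameter $16\mathcal{D}_{BC}$ and quality $\bigl(O(\log(\kappa\beta)/\mathcal{D}_{BC}),\,\nicefrac{1}{2}\bigr)$. Gluing these per-backbone \LDCs gives the final partition: a node $v$ fails to be clustered only if the \LDC inside its own backbone leaves it out, which happens with probability at most $\nicefrac{1}{2}$. For an edge $z = \{v,w\}$ of length $\ell_z$, the only two ways $z$ can be cut are (i) the backbone-clustering step of $\mathcal{A}$ separates $v$ and $w$ into different $K(\mathcal{B}_i)$, which happens with probability $O(\ell_z\alpha/\mathcal{D}_{BC})$ by assumption, or (ii) $v$ and $w$ lie in the same $K(\mathcal{B}_i)$ but are separated by the refinement, which happens with probability $O(\ell_z\log(\kappa\beta)/\mathcal{D}_{BC})$ by \autoref{thm:genericldd}. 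A union bound yields the claimed bound. For the complexity, \autoref{thm:genericldd} uses $\tilde{O}(1)$ aggregations and one approximate \SetSSP per application, executed in parallel across disjoint subgraphs; the dominant cost is the per-path net construction, which I expect to contribute $\tilde{O}(\kappa)$ minor aggregations overall.

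The main technical obstacle is the net construction and the associated packing argument. In the minor-aggregation model, realizing a $\mathcal{D}_{BC}$-net along a (weighted) path $P_j$ requires computing prefix distances along $P_j$ and marking the first node whose cumulative distance exceeds each multiple of $\mathcal{D}_{BC}$; doing this uniformly across $\kappa$ paths per backbone, across all $N$ backbones in parallel, must be shown to fit within $\tilde{O}(\kappa)$ minor aggregations. A secondary subtlety is that the distances in the covering and packing properties are measured within the backbone cluster (or, for the cut-probability analysis of step (ii), within $K(\mathcal{B}_i)$ viewed as a subgraph), which requires a careful statement that \autoref{thm:genericldd} can be applied on an induced subgraph without inflating the cutting probabilities of edges that lie inside $K(\mathcal{B}_i)$.
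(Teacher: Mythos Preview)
Your proposal is correct and follows essentially the same approach as the paper: run $\mathcal{A}$ once, place a $\mathcal{D}_{BC}$-net on each backbone path to obtain $O(\kappa\beta)$ centers per cluster, and invoke \autoref{thm:genericldd} with $\mathcal{D}' = 2\mathcal{D}_{BC}$ so that covering holds via the triangle inequality and packing holds trivially with $\tau = O(\kappa\beta)$. The paper handles the net construction exactly as you anticipate (prefix distances along each path plus marking one node per distance class), and your union-bound argument for the cutting probability and the $\tilde{O}(\kappa)$ aggregation count match the paper's analysis.
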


The algorithm begins a constructing backbone clustering $K(B_1), \ldots, K(B_N)$ using algorith $\mathcal{A}$.
The following steps are executed in parallel in each cluster $K(\mathcal{B}_i)$:
On each path $P \in \mathcal{B}_i$ in the backbone, mark a subset of nodes in distance (at most) $\mathcal{D}_{BC}$ to each other and use them as potential cluster centers.
This can be done with one shortest path computation and two aggregations on each path:
First, we let each node compute the distance to the first node of the path $v_1 \in P$.
A simple \SetSSP computation can do this; as there is only one path, the algorithm returns an exact result.
Then, we let each node locally compute its \emph{distance class} $\lceil \frac{d_P(v,v_1)}{\mathcal{D}_{BC}}\rceil$.
Finally, we mark the first node in each distance class.
A node can determine whether it is the first node by aggregating the distance class of both its neighbors in two aggregations.
Having marked the nodes, we use them as centers $\mathcal{X}_i$ in the algorithm of \autoref{thm:genericldd} and compute an \LDC in $K(\mathcal{B}_i)$.
To determine the gurantees of this clustering, we need to determine the \emph{covering} and \emph{packing} properties of these centers.
Each node $v \in K(\mathcal{B}_i)$ has at least one marked node in distance $2\mathcal{D}_{BC}$ by construction:
It must have a one path $P_j \in \mathcal{B}_i$ in distance $\mathcal{D}_{BC}$ and the next marked nodes on the path can also only be in distance $\mathcal{D}_{BC}$ to the node closest to $v$.
Further, as a path of length $\beta\mathcal{D}_{BC}$ has $\beta$ distance classes, we mark at most $\beta$ nodes per path.
As we limit the number of paths in $\mathcal{B}_i$ to $\kappa$ and the length of each path to $\beta\mathcal{D}_{BC}$, the total number of  centers is limited to $\beta \cdot \kappa$.
Thus, each node in $K(\mathcal{B_i})$ is covered by one center in distance $2\mathcal{D}_{BC}$ and \emph{packed} by at most $\beta \cdot \kappa$ centers in distance $6\cdot(2\mathcal{D}_{BC})$.
Following \autoref{thm:genericldd}, this gives us an LDC with strong diameter $8\cdot (2\mathcal{D}_{BC})$ and quality $O(\log \beta\kappa)$.
Finally recall that an edge can either be cut by the initial construction of the backbone clustering or by this procedure.
By the union bound, an edge is cut with probability $O(\frac{\alpha + \log \beta\kappa}{\mathcal{D}})$ by either stage.
As the algorithm consists of computing a backbone clustering, one \SetSSP and one minor aggregations in each path to mark the nodes, and an application of \autoref{thm:genericldd}, the proclaimed runtime follows.

Combing this result with \autoref{lemma:backbone_clustering_k_path}, this creates the desired \LDC with strong diameter $16\mathcal{D}_{BC}$ of quality $O(\log k + \log\log n)$ in each cluster.
In particular, an edge is cut with probability $O(\frac{\ell\cdot\log k\log n}{\mathcal{D}})$ in either of the two phases.
This proves the \autoref{lemma:kr_clustering} when choosing $\mathcal{D}_{BC} = \mathcal{D}/16$.

\medskip

\noindent  A more detailed description and the full proof can be found in \autoref{sec:appendix_refinement}.

\bibliography{bib}

\newpage

\appendix

\section{More Related Work}
\label{sec:relatedwork}

In the following, we review the related work for low-diameter decompositions in both sequential and distributed models.
First, we note that there are many different notions of decompositions in the literature, and the research is not restricted to the low-diameter decompositions that follow Definition \ref{def:ldd}.
Instead, they come in many shapes and forms with different guarantees for the clusters they produce.
Nevertheless, different types of decompositions are often related to each other and/or their computations use similar techniques.
In order to give a good overview on how our algorithms perform, we will give a short introduction to some commonly used notions of decompositions.
We begin with a generic definition that provides the basis for all forthcoming types of decomposition.
A decomposition of a weighted graph $G := (V,E,\ell)$ with distance parameter $\mathcal{D}$ is a series of subgraphs $K_1,K_2, \ldots$ of $G$, which we will call clusters.
Each node $v \in V$ is contained in (at least) one of these clusters.
Further, each cluster has a diameter of $\mathcal{D}$. 
More precisely, we say that the cluster has a \emph{strong} diameter $\mathcal{D}$ if the distance within the cluster is $\mathcal{D}$.
In this case, the path between the nodes only uses edges where both endpoints are contained in the cluster.
Otherwise, if the distance between two nodes is only $\mathcal{D}$ if we are allowed to consider all edges of $G$, we say that the cluster has a \emph{weak} diameter $\mathcal{D}$. 
Note that a cluster of weak diameter $\mathcal{D}$ is not necessarily connected.
Given this underlying definition, we now present three types of decompositions.
First, there is the deterministic counterpart to our probabilistic \LDD's.
Here, we do not bound the probability that a specific edge of length $\ell$ is cut, but instead --- since it is deterministic --- we count the overall number of edges of length $\ell$ that are cut.
It holds:
\begin{definition}[Deterministic Low-Diameter Decomposition]
A deterministic low-diameter decomposition with diameter $\mathcal{D}$ of a weighted graph $G := (V,E,\ell)$ with $m$ edges is a partition of $G$ into subgraphs $K_1, K_2, \ldots$ of diameter $\mathcal{D}$, where each node is contained in exactly one cluster.
We say that decomposition has quality $\alpha$, the number of edges of length $\ell$ with endpoint in different clusters is bound by $m\cdot\frac{\alpha\cdot\ell}{\mathcal{D}}$.
\end{definition}
Note that $m\cdot\frac{\alpha\cdot\ell}{\mathcal{D}}$ is exactly the \emph{expected} number of cut edges in probabilistic \LDD of quality $\alpha$. 
Thus, with constant probability, the probabilistic version roughly cuts the same number of edges.
However, without further information about the specific random choices made by the corresponding algorithm, a probabilistic \LDD could cut many more edges.
While this can be mitigated with standard probability amplification techniques, i.e., executing the algorithm $O(\log n)$ and picking the iteration that cute the fewest edges, a deterministic \LDD does not have this problem in the first place.
In a deterministic \LDD, we have the guarantee that no matter what happens in the execution of the algorithm, less than $m\cdot\frac{\alpha\cdot\ell}{\mathcal{D}}$ edges of length $\ell$ are cut.
This makes these decompositions very useful when we do not have the time or resources for $O(\log n)$ repetitions.
A typical example is distributed algorithms for local problems, where we aim for sublogarithmic runtimes.
As a final remark before getting to the next type of decomposition, in unweighted graphs, some authors use slight notation when describing \LDD's.
While the definition above focuses on the clusters' diameter, some works emphasize the number of cut edges.
That is, they define it as a decomposition that cuts $\epsilon\cdot m$ edges and creates a cluster of diameter $\mathcal{D} := O(\epsilon^{-1})$.

The next class of decompositions we consider in this section are so-called \emph{padded} decompositions.
While an \LDD only gives guarantees for single edges and node pairs, in a padded decomposition, we have a guarantee that \textbf{all} nodes in a certain distance are contained in the same cluster.
This makes them more versatile in many scenarios.
Formally, they are defined as follows:
\begin{definition}[Padded Decomposition]
A $(\beta,\gamma_{\mathsf{max}})$-padded decomposition with diameter $\mathcal{D}$ of a weighted graph $G := (V,E,\ell)$ is a partition of $G$ into subgraphs $K_1, K_2, \ldots$ of diameter $\mathcal{D}$, where each node is contained in exactly one cluster.
For each $v \in V$, let $K(v)$ denote the cluster that contains $v$.
Then, for each $\gamma \leq \gamma_{\mathsf{max}}$, it holds
\begin{align}
    \pr{B(v,\gamma\mathcal{D}) \subseteq K(v)} \geqslant e^{-\beta\gamma}.
\end{align}
\end{definition}
Typically, the parameter $\gamma_{\mathsf{max}}$ is in the magnitude of $\nicefrac{C}{\beta}$ for some $C \geq 1$, so we get a guarantee for balls of diameter $\nicefrac{C\cdot \mathcal{D}}{\beta}$ or smaller.
While other values of $\gamma_{\mathsf{max}}$ are not forbidden by the definition, we are not aware of any work where $\gamma_{\mathsf{max}}$ is smaller than $O(\nicefrac{1}{\beta})$.
With this in mind, note that the guarantees of a padded decomposition are provably stronger than the guarantees of a \LDD, i.e., a $(\beta, \nicefrac{C}{\beta})$-padded decomposition implies a (probabilistic) \LDD of quality $\alpha$.
To verify this, consider an edge $(v,w) \in E$ of length $\ell = \gamma\mathcal{D}$ and suppose we construct a padded decomposition. 
By definition of the padded decomposition, the ball $B(v,\gamma\mathcal{D})$ is fully contained in $K(v)$ with probability $e^{-\beta\gamma} \approx 1-\beta\gamma = 1-\frac{\beta\cdot\ell}{\mathcal{D}}$.
In this case, the edge is not cut and thus, any padded decomposition is also a low-diameter decomposition. 

The final type of decomposition that we introduce is \emph{so-called} neighborhood or sparse covers (both names appear in the literature). 
As with padded decomposition, the goal is to create clusters, s.t., for every node $v \in V$ there is a cluster that contains the full ball $B(\gamma\mathcal{D})$.
In contrast to the two previous concepts, a node can now be in more than one cluster to achieve  this goal.
However, the number of clusters that contain a given node should be kept small.
Formally, they are defined as follows.
\begin{definition}[Neighborhood/Sparse Cover]
A $(\gamma,s)$-neighborhood cover with diameter $\mathcal{D}$ of a weighted graph $G := (V,E,\ell)$ is a partition of $G$ into subgraphs $K_1, K_2, \ldots$ of diameter $\mathcal{D}$, where each node is contained in at most $s$ clusters.
Then, for each node $v \in V$ there is at least one cluster $K$ that contains $B(v,\gamma\mathcal{D})$.
\end{definition}
The paramter $s$ is sometimes called the \emph{degree} of the cover, while $(\nicefrac{1}{\gamma})$ is called the \emph{diameter blowup}.
It is easy to establish a connection between padded decompositions and neighborhood covers.
Suppose that we have $(\beta,\nicefrac{C}{\beta})$-padded decomposition for parameters $\beta, C \geq 1$.
Then, w.h.p., for every $c \leq C$ we can construct a $(\nicefrac{c}{\beta},O(e^c\cdot\log n))$-neighborhood cover through $O(e^c\cdot\log n))$ executions of the padded decomposition algorithm.
If we choose $\gamma = \nicefrac{c}{\beta}$, the ball $B(v,\gamma\mathcal{D})$ is contained in $K(v)$ with probability at least $e^{-c}$ after each execution.
This follows directly from the definition of a padded decomposition.
Thus, after $s := O(e^c \cdot \log n)$ independent executions of the padded decomposition algorithm, each node is in $s$ clusters and, w.h.p., there is one cluster that contains $B(v,\gamma\mathcal{D})$.
Therefore, the difficulty in studying neighborhood covers is finding a construction beats this simple trick.
In particular, for $K_r$-free graphs, there is an emphasis on finding covers with diameter blowup $O(r)$ and sublogarithmic degree.

\subsection{Decompositions in Sequential Models}

In the following, we provide an overview of seminal contributions and developments in low-diameter decompositions with a focus on minor-free graphs and general graphs. The main insights are summarized in Table \ref{tab:overview_stochastic_de}. 
We begin with the results for general graphs.
\paragraph*{\textbf{General Graphs:}} 
There are several algorithms that construct \LDD with the optimal quality $O(\log n)$ for general graphs.
In an often cited paper, Bartal \cite{Bar96} establishes \LDD's with quality $O(\log n)$ in general graphs with $n$ vertices. 
The techniques used in that paper bear the greatest resemblance to our work:
In each step, an arbitrary unclustered node $v_i$ and a (roughly) exponentially distributed diameter $\mathcal{D}_i$ are chosen.
Then, all nodes in the distance $\mathcal{D}_i$ to $v_i$ are added to a cluster.
More precisely, the paper shows that there exists a constant $c$ such that, for any $\mathcal{D} > 0$ and $k \geq 1$, any $n$-vertex graph can be partitioned into partitions of diameter $\mathcal{D}$, such that pairs of vertices with a distance at most $\frac{\mathcal{D}}{ck}$ are clustered together with a probability of at least $n^{-\frac{1}{k}}$.
Bartal \cite{Bar96} further proves a lower bound of $O(\log n)$ for the approximation of the metric of any graph. This can be extend to show an \LDD for a general graph must have a cutting probability of at least $O(\frac{\log n \ell}{\mathcal{D}})$ for every edge of length $\ell$ and clusters with diameter $\mathcal{D}$. 
Thus, Bartal's construction is optimal for general graphs.
As mentioned earlier, there are other types of clustering for general graphs.
In their seminal paper \cite{AP90}, Awerbuch and Peleg present a neighborhood cover for general graphs with stretch $4k-1$ and degree $2kn^{1/k}$.
That is, each node is in $2kn^{1/k}$ clusters of strong diameter $\Delta$ and every node's $\frac{\mathcal{D}}{4k-1}$-neighborhood is in one of these clusters.
As noted by Fitsler in \cite{Filtser19}, the algorithm in \cite{AP90} creates $O(k\cdot n^{1/k}  )$ partitions with strong diameter $\mathcal{D}$, 
Suppose, we sample a single partition from \cite{AP90} uniformly at random.
Then, any edge $(v,w)$ of length $\frac{\mathcal{D}}{4k-1}$ is preserved with probability at least $\Omega\left(\frac{1}{k} \cdot n^{-1/k}\right)$.
Thus, it also is weaker form of the padded decompositon and also weaker than an \LDD as each edge is node's $\frac{\mathcal{D}}{4k-1}$-neighborhood is cut with the same probability. 
Recently, Fitsler \cite{Filtser19} improved this type of decompoition 

\paragraph*{\textbf{$K_r$-free Graphs:}} Now, we move the results for graphs that exclude a fixed minor $K_r$.
Starting chronologically, Klein, Plotkin, and Rao \cite{KPR93} demonstrate that every minor-free graph of the form $K_r$ admits a weak decomposition scheme with padding parameter ${O}\left(r^3\right)$ for all distances $\Omega\left(1\right)$.
This directly implies a weak diameter \LDD with quality ${O}\left(r^3\right)$.
Later, Fakcharoenphol and Talwar \cite{FT03} improved the padding parameter to ${O}\left(r^2\right)$ with weak diameter. 
Finally, Abraham et al.\cite{AGGNT19} prove that $K_r$  minor free graphs admit weak (${O}\left(r\right)$,$\Omega\left(\frac{1}{r}\right)$)-padded decomposition scheme. 
Considering strong diameters, Abraham et al. \cite{DBLP:journals/mst/AbrahamGMW10} present a strong $2^{O(r)}$ \LDD for $K_r$ minor-free graphs. 
The state-of-the-art algorithm was presented by Filtser by building on the result of Abraham, Gavoille, Gupta, Neiman, and Talwar \cite{AGGNT19}.
Filtser showed that this also holds for strong decompositions for distance values smaller than $O(1/r)$.
The currently best algorithm is an algorithm presented in \cite{Filtser19} that achieves $\alpha \in O(r)$.
For graphs of bounded genus $g$ and treewidth $\tau$, \cite{AGGNT19} presents algorithms with padding parameter $O(\log g)$ and $O(\log \tau)$.
These algorithms are conjectured to be optimal.
However, for general $K_r$-free graphs, an algorithm with padding $O(\log r)$ remains elusive.

\begin{table}[ht]
\onehalfspacing
\footnotesize
\centering
\begin{threeparttable}
\begin{tabular}{@{}llllll@{}}
\toprule
\bfseries Ref. & \bfseries Quality & \bfseries Type & \bfseries Diameter &  \bfseries Comment  \\
\midrule
\cite{KPR93} & ${O}\left(r^3\right)$  & Padded & Weak & $K_r$ minor free\\
\cite{FT03} & ${O}\left(r^2\right)$  & Padded & Weak & $K_r$ minor free\\
\cite{AGGNT19} & ${O}\left(r\right)$ & Padded & Weak & $K_r$ minor free\\
\cite{DBLP:journals/mst/AbrahamGMW10}   &$O(e^r)$ & \LDD & Strong\tnote{d} & $K_r$ minor free\\
\cite{AGGNT19} & ${O}\left(r^2\right)$   & Padded & Strong  & $K_r$ minor free\\
\cite{AGGNT19} & ${O}\left(\log \tau\right)$   & Padded & Strong  & Treewidth $\tau$\\
\cite{AGGNT19} & ${O}\left(\log g\right)$   & Padded & Strong  & Genus $g$\\
\cite{Filtser19} & ${O}\left(r\right)$   & Padded & Strong & $K_r$ minor free\\
\midrule
\cite{Bar96} & ${O}\left(\log n\right)$  & Padded & Strong\tnote{d}& General graphs\\
\bottomrule
\end{tabular} 
\end{threeparttable}
\caption{An overview of the related work on decomposition schemes for various graph families in the sequential model. 
 }
\label{tab:overview_stochastic_de}
\end{table}

\subsection{Decompositions in \CONGEST}

In the following, we provide an overview of seminal contributions and developments in low-diameter decompositions with a focus on minor-free graphs and general graphs. The main insights are summarized in Table \ref{tab:overview_stochastic_de}. 


\paragraph*{\textbf{General Graphs:}} 
The landscape of \LDD algorithms for general \textbf{weighted} graphs is very comprehensible.
As mentioned before, Becker, Emek and Lenzen \cite{BeckerEL20} construct a low-diameter decomposition using blurry ball growing technique we reused for our construction to build a decomposition of quality $O(\log n)$ of the graph. 
The decomposition is weak but has the additional property that each edge is in at most $O(\log n)$ clusters.
This places it between weak and strong decompositions as they show that for many applications, their notion is sufficient.
Further, Rozhon, Elkin, Grunau and Haeupler \cite{DBLP:conf/focs/RozhonEGH22} make two significant improvements compared to \cite{BeckerEL20}. They present a decomposition with strong diameter (instead of weak), and their construction is deterministic (instead of randomized).
Conversely, their quality is only $O(\log^3 n)$, i.e., their algorithms cut more edges than the algorithm of \cite{BeckerEL20}.
Just as our algorithm, both algorithms only rely on approximate \SetSSP computations with $\epsilon \in O(1/\log^3 n)$.
Thus, they can be implemented on the \CONGEST, \PRAM, and \HYBRID models of computation.

\paragraph*{\textbf{$K_r$-free graphs:}}  
We now shift our focus to $K_r$-free graphs.
For the following comparisons, we consider unweighted graphs that exclude $K_r$ as a minor.
Note that our algorithm also works (and was, in fact, designed for) weighted graphs.
Recall that our algorithm always has a complexity of $\tilde{O}(r\cdot \HD)$ for any goal diameter $\mathcal{D}$.
However, in the unweighted case, this could be decreased to $\Tilde{O}(\mathcal{D})$.
As our algorithm works with any SSSP algorithm, we can use a faster one if $\mathcal{D} \leq \HD$.
In the unweighted case, we can simply use the trivial BFS algorithm that computes the distance to all nodes in the distance $r$ in $r$ rounds. For simplicity and easier comparison, we omit the factors that depend on the graph’s degree in the following.

We will now compare our work to three existing \CONGEST algorithms that construct similar clusterings.
First, Levi, Medina, and Ron \cite{DBLP:journals/dc/LeviMR21} design a distributed algorithm for $K_r$-free graphs that computes a \LDD with $\mathcal{D} = \epsilon^{-O(1)}$ that cuts $\epsilon \cdot n$ edges in $\epsilon^{-O(1)} \cdot O(\log n)$ rounds. 
This is not immediately comparable to our bounds, so we must look at the implications of this result.
Recall that each $K_r$-free graph has $O(r\cdot n)$ edges.
Therefore, if we cut $\epsilon \cdot n \in O\left(\frac{\epsilon}{r} \cdot |E| \right)$ edges, we cut a $O(\nicefrac{\epsilon}{r})$-fraction.
The diameter of the cluster is $\mathcal{D} = \epsilon^{-O(1)} = \epsilon^{-c}$.
For simplicity, we assume that $e^{-c} \geq r$ and the constant $c$ hidden in the exponent's $O(1)$ is larger than $3$.
Thus, if measured in terms of $\mathcal{D}$, the term $\frac{\epsilon}{r}$ becomes
\begin{align*}
    \frac{\epsilon}{r} = \frac{\mathcal{D}\epsilon}{\mathcal{D} \cdot r} = \frac{\epsilon^{-c}\epsilon^{c \cdot 1/c}}{\mathcal D \cdot r}= \frac{\epsilon^{-c (1- 1/c)}}{\mathcal D \cdot r} = \frac{\mathcal{D}^{1-1/c}}{\mathcal D \cdot r}  
\end{align*}
By this calculation, we cut a $O\left(\frac{\mathcal{D}^{-1/c}}{r} \cdot |E| \right)$-fraction of the edges.
Now assume that we pick a random edge $z$; the probability of picking an edge that is cut is $O\left(\frac{\mathcal{D}^{-1/c}}{r}\right)$.
Given each edge's length is $\ell_z = 1$, the probability can be expressed as $O\left(\frac{\mathcal{D}^{1-1/c}\ell_z}{\mathcal{D}\cdot r}\right)$.
Thus, we can view the process as \LDD with quality $O(\mathcal{D}^{1-1/c}/r)$, which means that the quality declines with increasing $\mathcal{D}$.
For any $\mathcal{D} \in o(\log\log n)$, this algorithm cuts fewer edges than ours, but for larger values of $\mathcal{D}$, our algorithm prevails.
Similarly, for $\mathcal{D} \in \Tilde{O}(1)$, the algorithm of \cite{DBLP:journals/dc/LeviMR21} is faster, but for larger values, our algorithm is faster.

Improving this result of Levi, Medina, and Ron, Chang and Su \cite{DBLP:conf/podc/ChangS22} show that a low-diameter decomposition with $\mathcal{D} = O(\epsilon^{-1})$ can be computed in $\epsilon^{-O(1)}\log^{O(1)}n$ rounds with high probability and $\epsilon^{-O(1)}2^{O(\sqrt{\log n \log \log n})}$ rounds deterministically in the CONGEST model for any $K_r$-free.
Chang later improved the deterministic runtime to $O(\mathcal{D} \log^*n + \mathcal{D}^5)$.
In a direct comparison, we perform worse as our algorithm would cut a $O(\epsilon\cdot\log\log n)$-fraction of the edges in this scenario. 
However, the comparison of the runtime is more nuanced.
the algorithm of \cite{DBLP:conf/podc/ChangS22} is strictly better for small diameters $\mathcal{D} \in \Tilde{O}(1)$ as it is faster, deterministic, and, as far as we can tell, has no galactic constants. 
On the other hand, our algorithm is asymptotically faster for large diameters (but also cuts more edges). 

The techniques used in \cite{DBLP:conf/podc/ChangS22} differ significantly from ours.
They first perform a so-called expander decomposition to create clusters of high conductance.
To be precise, their clusters have diameter $O(\epsilon^{-1})$ and cut an $\epsilon$-fraction of the edges.
Then, they show that each cluster has a high-degree node that can gather all edges of the cluster.
This node can locally execute a sequential decomposition algorithm and inform the edges if they are cut.
Thus, their quality is good as the best sequential clustering algorithm, which currently stands as $O(r)$ due to \cite{Filtser19}.
Compared to that, the benefit of our approach is that a) it also works on weighted graphs and b) is model-agnostic, so we are not restricted to CONGEST.
This universality is bought, however, by cutting more edges than \cite{DBLP:conf/podc/ChangS22}.


\begin{table}[ht]
\onehalfspacing
\footnotesize
\centering
\begin{threeparttable}
\begin{tabular}{@{}llllcl@{}}
\toprule
\bfseries Ref. & \bfseries Quality & \bfseries Diameter & \bfseries Runtime & \bfseries Weighted&\bfseries Comment  \\
\midrule
\cite{DBLP:journals/dc/LeviMR21} & ${O}\left(\mathcal{D}^{1-1/c}/r\right)$ & Strong& $O\left(\mathcal{D}^{O(1)}\right)$ &$\times$ & $K_r$-free\\
\cite{DBLP:conf/podc/ChangS22} & ${O}\left(r\right)$ & Strong& $O\left(\mathcal{D}^{O(1)}\right)$ &$\times$ & $K_r$-free\\
Thm. \ref{thm:clustering_k_path} & ${O}\left(\log\log n\right)$ & Strong& $\Tilde{O}(r\cdot\HD)$ &$\checkmark$& $K_r$-free\\
\midrule
\cite{BeckerEL20} & ${O}\left(\log n\right)$  & Weak& $\Tilde{O}(\HD + \sqrt{n}) $&$\checkmark$& General graphs\\
\cite{DBLP:conf/focs/RozhonEGH22} & ${O}\left(\log^3 n\right)$  & Strong& $\Tilde{O}(\HD + \sqrt{n}) $&$\checkmark$& General graphs\\
Thm. \ref{thm:clustering_general} & ${O}\left(\log n\right)$ & Strong& $\Tilde{O}(\HD + \sqrt{n})$ &$\checkmark$& General graphs\\
\bottomrule
\end{tabular} 
\end{threeparttable}
\caption{An overview of the related work on \LDD's for various graph families in the \CONGEST model. 
 }
\label{tab:overview_decomp_congest}
\end{table}

\newpage

\section{Full Analysis of \autoref{theorem:generalpartition} }
\label{sec:padded_appendix}

We now describe the algorithm promised by Theorem \ref{theorem:generalpartition} in more detail.
Note that we will express the algorithm in terms of approximate \SetSSP computations and minor aggregations.
The algorithm works in the following $4$ synchronized steps:
\begin{mdframed}
    \paragraph*{\textbf{(Step 1) Draw random distances:}} In the first step, for each center $x \in \mathcal{X}$, we independently draw a value $\delta_x\in [0,\mathcal{D}]$
    from the truncated exponential distribution with parameter $2+2\log(\tau)$.
    We assume that $\tau$ (or some upper bound thereof) is known to each node.
    Formally, we have
\begin{align*}
    \delta_x \sim \mathcal{D}\cdot\mathsf{Texp}(2+2\log\tau)
\end{align*}
We call $\delta_x$ the offset parameter of center $x$. 
Note that $\mathsf{Texp}(2+2\log\tau)$ returns a continuous random variable.
So, we implicitly round the value to the next value that can be encoded in $c \log n$ bits for some $c \geq 1$ in order to send it around in a single message.
Therefore, $\delta_x$ differs from $\mathcal{D}\cdot\mathsf{Texp}(2+2\log\tau)$ by at most $\nicefrac{\mathcal{D}}{n^c}$.
    \paragraph*{\textbf{(Step 2) Add a (virtual) super-source $s$:}} In next step, we add a virtual node $s$ to $G$ and obtain the graph $G_s := (V \cup \{s\}, E \cup E_s, w_s)$.
    The source $s$ has a weighted virtual edge $\{s,x\} \in E_s$ to each center $x \in \mathcal{X}$ with weight $w_x := (\mathcal{D}- \delta_x)$.
    All other weights and edges remain unchanged.
    \paragraph*{\textbf{(Step 3) Construct an approximate shortest-path tree $T_s$:}} 
    Next, we perform an $(1+\epsilon)$-approximate SSSP algorithm in $G_s$ from node $s$.
    Recall that our $\epsilon > 0$ is the desired error parameter for the pseudo-padded decomposition.
    Let $T$ be the (approximate) shortest-path tree computed by the SSSP algorithm, s.t., it holds:
    $$
        d_T(v,s) \leq (1 + \epsilon)d_{G_s}(v,s)
    $$
    \paragraph*{\textbf{(Step 4) Join Cluster of Closest Center:}}
    Finally, a vertex $v$ joins the cluster of the center $x \in \mathcal{X}$, s.t.,
    \begin{align*}
        x := \arg\min_{x' \in \mathcal{X}}\{\mathcal{D}-\delta_{x'}+d_T(x',v) \}
    \end{align*}
    It is easy to verify that this is the first center on the path from $s$ to $v$.
    Thus, it only remains to inform each node about their clusters identifier.
    This can be done via the $\mathsf{AncestorSum}$ primitive, which acts as a broadcast if only the root has an input value.
    Using this primitive, each center $x\in\mathcal{X}$ broadcasts its identifier and its distance to $s$ to each node in its subtree $T(x)$. 
    Finally, note that this construction ensures that for all pairs $v,w \in C_x$, there is a path in $C_x$ that connects them.
    \end{mdframed}

\subsection{Analysis}
\label{sec:general_clustering_proof}

In this section, we will prove Theorem \ref{theorem:generalpartition}.
To this end, we must show that the algorithm a) can be implemented with \emph{one} approximate set-source \SetSSP computation, b) creates clusters of strong diameter $(1+\epsilon)4\mathcal{D}$, and c) fulfills the padding guarantee from Theorem \ref{theorem:generalpartition}.
We begin by proving the proclaimed complexity, arguably easiest to show.
This can be derived directly from the description of the algorithm.
The algorithm requires only a single approximate shortest-path computation as Steps $1, 2$, and $4$ are purely local operations.

Therefore, it only remains to show the other two properties.
We begin with the strong diameter and show that for all vertices within a cluster, there is a path of length at most $(1+\epsilon)4\mathcal{D}$ \emph{within} the cluster.
Formally, we want to show:
\begin{lemma}[Cluster Diameter]
\label{lemma:strong_diameter}
	Every non-empty cluster $C_x$ created by the algorithm has strong diameter at most $(1+\epsilon)4\mathcal{D}$, i.e., for two vertices $v,w \in C_x$, it holds:
	\begin{align*}
	    d_{C_x}(v,w) \leq 4(1+\epsilon)\mathcal{D}
	\end{align*}
\end{lemma}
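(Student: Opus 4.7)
The plan is to exhibit, for any two vertices $v, w \in C_x$, an explicit $v$-to-$w$ path that lies entirely inside $C_x$ and has length at most $4(1+\epsilon)\mathcal{D}$. The natural candidate is the unique $v$-to-$w$ path in the approximate shortest-path tree $T$ that passes through the cluster center $x$. This path exists in $T$ because $v$ and $w$ are both assigned to $x$, which (as noted after Step 4) means $x$ is the first center on the $s$-to-$v$ and on the $s$-to-$w$ paths in $T$.

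First I would verify that every vertex $u$ on the $T$-path from $x$ to $v$ also belongs to $C_x$. Since $T$ is a tree, the $s$-to-$u$ path in $T$ is precisely the prefix of the $s$-to-$v$ path that ends at $u$. The first center on this prefix is the same as the first center on the $s$-to-$v$ path, namely $x$, so $u \in C_x$ by the assignment rule. The analogous statement on the $x$-to-$w$ side completes the containment.

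Next I would bound the length of each half. Since $x$ lies on the $s$-to-$v$ path in $T$, we have $d_T(v, x) = d_T(s, v) - d_T(s, x) \leq d_T(s, v)$. The approximation guarantee supplies $d_T(s, v) \leq (1+\epsilon)\, d_{G_s}(s, v)$, and invoking the covering property we pick $x^* \in \mathcal{X}$ with $d_G(x^*, v) \leq \mathcal{D}$; the concatenation of the virtual edge $(s, x^*)$ of weight $\mathcal{D} - \delta_{x^*} \leq \mathcal{D}$ with a shortest $x^*$-to-$v$ path in $G$ gives $d_{G_s}(s, v) \leq 2\mathcal{D}$. Hence $d_T(v, x) \leq 2(1+\epsilon)\mathcal{D}$, and symmetrically $d_T(w, x) \leq 2(1+\epsilon)\mathcal{D}$. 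Combining these via the triangle inequality along the $v$-$x$-$w$ path in $T$ yields a total length of at most $4(1+\epsilon)\mathcal{D}$, entirely within $C_x$.

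I expect the proof to be routine rather than obstructed; the main conceptual point to be careful about is that the property ``$x$ is the first center on the path from $s$'' is preserved under taking prefixes of tree paths, which is precisely what makes the cluster connected in a strong (rather than merely weak) sense. No delicate manipulation of the truncated exponential offsets or of the approximation slack $\epsilon$ is needed beyond the single global factor $(1+\epsilon)$, so the bound falls out cleanly from the covering property alone.
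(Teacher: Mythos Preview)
Your proposal is correct and follows essentially the same approach as the paper: both argue that the tree path from any $v \in C_x$ to $x$ stays inside $C_x$, then bound $d_T(v,x) \le d_T(s,v) \le (1+\epsilon)\,d_{G_s}(s,v) \le (1+\epsilon)\,2\mathcal{D}$ via the approximation guarantee and the covering property, and finally sum the two halves. Your prefix argument for why intermediate nodes on the $x$-to-$v$ tree path also join $C_x$ is slightly more explicit than the paper's treatment, but the structure and the bounds are identical.
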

For the proof, we first recall how the clusters are built.
Each vertex joins a cluster $C_x$ of some center $x \in \mathcal{X}$ if and only if there is a path in the approximate shortest path tree $T_s$ where $x$ is the last hop before the root $s$.
Thus, each vertex has a path to $x$ that is fully contained in the cluster.
That means, for two vertices $v,w \in C_x$, there is a path from $v$ to $w$ via $x$ that is entirely contained in the cluster as all edges are undirected. 
This implies that:
\begin{align}
    d_{C_x}(v,w) \leq 2\cdot\mathbf{max}_{u \in C_x} d_{T_s}(u,x) \label{eqn:maxx}
\end{align}
Therefore, it remains to bound the maximal distance between a node and its cluster center.
Here, we see that it holds for all $x \in \mathcal{X}$:
\begin{claim}
\label{claim:disttocenter}
    $\mathbf{max}_{u \in C_x} d_{T_s}(u,x) \leq (1+\epsilon)2\mathcal{D}$
\end{claim}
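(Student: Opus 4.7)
The plan is to combine the covering property with the $(1+\epsilon)$-approximation guarantee of the tree $T$. For any $v \in C_x$, I will first upper-bound $d_{T}(v,s)$ by $2(1+\epsilon)\mathcal{D}$, and then subtract the length of the virtual edge $(s,x)$ to isolate $d_T(v,x)$.

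First, I would invoke the covering property: every node $v \in V$ has some center $x' \in \mathcal{X}$ with $d_G(v,x') \leq \mathcal{D}$. Since $\delta_{x'} \in [0,\mathcal{D}]$, the virtual edge $\{s,x'\}$ has weight $\mathcal{D} - \delta_{x'} \leq \mathcal{D}$, so concatenating this edge with a shortest $x'$-to-$v$ path in $G$ yields
\[
d_{G_s}(v,s) \;\leq\; (\mathcal{D} - \delta_{x'}) + d_G(v,x') \;\leq\; \mathcal{D} + \mathcal{D} \;=\; 2\mathcal{D}.
\]
Applying the $(1+\epsilon)$-approximation of $T$ then gives $d_T(v,s) \leq (1+\epsilon)\cdot 2\mathcal{D}$.

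Second, I would exploit the cluster-assignment rule. By construction, $v \in C_x$ means that $x$ is the first center encountered on the $s$-to-$v$ path in $T$, so this path has the form $s \to x \to \cdots \to v$ and uses the direct virtual edge $\{s,x\}$ of weight $\mathcal{D}-\delta_x$. Hence $x$ lies on the tree path from $s$ to $v$, and
\[
d_T(v,s) \;=\; d_T(s,x) + d_T(x,v) \;=\; (\mathcal{D} - \delta_x) + d_T(x,v).
\]
Since $\delta_x \leq \mathcal{D}$, the term $\mathcal{D}-\delta_x$ is nonnegative, so rearranging and dropping it only enlarges the bound:
\[
d_T(x,v) \;=\; d_T(v,s) - (\mathcal{D}-\delta_x) \;\leq\; 2(1+\epsilon)\mathcal{D}.
\]
Taking the maximum over $u \in C_x$ yields Claim~\ref{claim:disttocenter}, and substituting into \eqref{eqn:maxx} finishes \autoref{lemma:strong_diameter}.

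The only subtlety I anticipate is justifying the decomposition $d_T(v,s) = d_T(s,x) + d_T(x,v)$, i.e., that $T$ actually routes to $v$ through $x$ via the \emph{direct} virtual edge and not through some other center. This is immediate from the definition of $C_x$, since $v \in C_x$ specifies $x$ as the first center on the $s$-to-$v$ path in $T$; one just needs to observe that every $s$-to-$v$ path in $T$ begins with exactly one virtual edge, so ``first center'' pins down both the edge used and the fact that $x$ lies on the path. Everything else is a one-line manipulation, so I expect no further difficulty.
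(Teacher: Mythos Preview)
Your proposal is correct and follows essentially the same approach as the paper: both use the covering property to bound $d_{G_s}(v,s)\le 2\mathcal{D}$, apply the $(1+\epsilon)$-approximation to get $d_{T_s}(v,s)\le 2(1+\epsilon)\mathcal{D}$, and then use that $x$ lies on the tree path from $s$ to $v$ with nonnegative virtual-edge weight $\mathcal{D}-\delta_x$ to conclude $d_{T_s}(x,v)\le d_{T_s}(s,v)$. The only difference is the order of presentation; the paper first establishes $d_{T_s}(x,u)\le d_{T_s}(s,u)$ and then bounds the latter, while you bound $d_T(v,s)$ first and then subtract.
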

\begin{proof}
Consider node $u \in V$ that maximizes the distance to $x$ in $T_s$.
As $x$ is the penultimate node on the path to $s$ in $T_s$, the distance to $x$ is upper bounded by the distance to $s$.
It holds:
\begin{align}
    d_{T_s}(s,u) &:= w(s,x) +  d_{T_s}(x,u)
    := (\mathcal{D} - \delta_{x}) +  d_{T_s}(x,u)
    \geq d_{T_s}(x,u) \label{ineq:s_geq_x}
\end{align}
The inequality follows because $\delta_{x} \in [0,\mathcal{D}]$ per construction and therefore $(\mathcal{D} - \delta_{x})$ is non-negative.
Further, by the covering property required in Theorem \ref{theorem:generalpartition}, we know that there is at least one center $x' \in \mathcal{X}$ in the distance at most $\mathcal{D}$ to $v$.
Note that the distance between $x'$ and $s$ is also at most $\mathcal{D}$ as $\delta_{x'} \geq 0$ by definition.
Thus, by the triangle inequality, it holds:
\begin{align*}
    d_G(s,u) &\leq d_G(s,x') +  d_G(x',u)
     := (\mathcal{D} - \delta_{x'}) +  d_G(x',u)
    \leq \mathcal{D} + \mathcal{D} = 2\mathcal{D}. \label{ineq:2Delta}
\end{align*}
Combining these two observations with the approximation guarantee from the underlying SSSP gives us the following upper bound on the distance to the cluster center:
\begin{align*}
    d_{T_s}(x,u) &\leq d_{T_s}(s,u) & \rhd \textit{ By Ineq.}\, \eqref{ineq:s_geq_x}\\
    &\leq  (1+\epsilon)d_G(s,u) & \rhd \textit{ As } d_T(s,u) \leq (1+\epsilon)\cdot d(s,u)\\
    &\leq (1+\epsilon)2\mathcal{D}  & \rhd \textit{ By Ineq.}\, \eqref{ineq:2Delta}
\end{align*}
This concludes the proof!    
\end{proof}
Together with Inequality \eqref{eqn:maxx}, this claim directly implies Lemma \ref{lemma:strong_diameter} and our algorithm is guaranteed to create clusters of strong diameter $4(1+\epsilon)\mathcal{D}$.
We continue with the padding property promised by Theorem \ref{theorem:generalpartition}.
To this end, we need to show that all nodes that are \emph{close} to a given vertex $v \in V$ are in the same cluster with some non-trivial probability.
In fact, we will show a slightly stronger statement that we will reuse in later algorithms.
We show:

\begin{restatable}[Padding Property]{lemma}{padding}
\label{lemma:padding_property}
Consider node $v\in V$ clustered by center $x_v \in \mathcal{X}$ and a parameter $\epsilon \leq \gamma\le\frac18$ . 
Further, let $P_{x_v} = (v_1, \ldots, v_\ell)$ be the \textbf{exact} shortest path from $x_v$ to $v$.
Then, for each $v_i \in P_{x_v}$, the ball $B_G(v_i,\gamma \mathcal{D})$ is fully contained in $C_{x_v}$ with probability at least 
\begin{align*}
    \pr{\bigcup_{v_i \in P_{x_v}}B_G(v_i,\gamma \mathcal{D}) \subset C_{x_v}} \geq e^{-16(\gamma+5\epsilon)\log\tau}-\frac{160\log\tau}{n^c}
\end{align*}
\end{restatable}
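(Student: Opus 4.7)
The plan is to analyze the algorithm in terms of the \emph{arrival times} $\tau_x(u) := (\mathcal{D} - \delta_x) + d_G(x,u)$, which are the exact-distance analogues of $d_{G_s}(s,u)$ through $x$. Enumerate the centers by ascending arrival time at $v$, so that $\tau_{\pi_1}(v) \leq \tau_{\pi_2}(v) \leq \cdots$. I will lower bound the probability of the event $\mathcal{E}$ that \emph{both} $\pi_1 = x_v$ and the gap $\tau_{\pi_2}(v) - \tau_{\pi_1}(v)$ exceeds $\Delta := (2\gamma + c'\epsilon)\mathcal{D}$ for a suitable constant $c' > 2$. On $\mathcal{E}$ I will show that every $u$ in the union of balls around path nodes lies in $C_{x_v}$, and then combine this with a bound on $\pr{\mathcal{E}}$.

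To execute the first half, I would exploit the subpath property of $P_{x_v}$: since $v_i$ lies on the exact shortest path from $x_v$ to $v$, we have $d_G(x_v,v) = d_G(x_v,v_i) + d_G(v_i,v)$, whereas the triangle inequality gives $d_G(x',v_i) \geq d_G(x',v) - d_G(v_i,v)$ for any other center $x'$. Subtracting yields
\begin{align*}
\tau_{x'}(v_i) - \tau_{x_v}(v_i) \;\geq\; \tau_{x'}(v) - \tau_{x_v}(v) \;\geq\; \Delta,
\end{align*}
so the exact arrival-time gap is \emph{monotone} along the path. A further triangle inequality propagates this to any $u \in B_G(v_i, \gamma\mathcal{D})$, losing at most $2\gamma\mathcal{D}$:
\begin{align*}
\tau_{x'}(u) - \tau_{x_v}(u) \;\geq\; \Delta - 2\gamma\mathcal{D} \;=\; c'\epsilon\mathcal{D}.
\end{align*}
To translate this back into the approximate tree $T_s$, I would use the covering property ($\tau_{x_v}(u) \leq 2\mathcal{D}$) together with $d_T(s,u) \leq (1+\epsilon) d_G(s,u)$: any would-be competitor $x'$ that clusters $u$ in $T_s$ satisfies $\tau_{x'}(u) \leq d_T(s,u) \leq (1+\epsilon)\tau_{x_v}(u) \leq \tau_{x_v}(u) + 2\epsilon\mathcal{D}$, contradicting the displayed gap as soon as $c' > 2$. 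Hence every such $u$ is clustered by $x_v$, as desired.

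The remaining and hardest step is to lower bound $\pr{\mathcal{E}}$. The packing property limits the candidates for $\pi_2$ to at most $\tau$, since any $x'$ with $d_G(x',v) > 6\mathcal{D}$ has $\tau_{x'}(v) > 5\mathcal{D} \geq \tau_{x_v}(v) + \mathcal{D}$ and is therefore dominated. Viewing $(\mathcal{D} - \delta_x)$ as a shifted truncated exponential with rate $\lambda = 2 + 2\log\tau$, I would follow the standard computation of \cite{Filtser19, DBLP:conf/spaa/MillerPVX15}: condition on the identity of $\pi_1$, and use a memoryless-like substitution $\delta_{\pi_1} \mapsto \delta_{\pi_1} + \Delta/\mathcal{D}$ to show that the conditional gap stochastically dominates an independent truncated exponential with the same rate. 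This yields $\pr{\mathcal{E}} \geq e^{-\Theta(\lambda \Delta/\mathcal{D})}$, and plugging in $\lambda = 2 + 2\log\tau$ with $\Delta = (2\gamma + c'\epsilon)\mathcal{D}$ produces precisely the claimed factor $e^{-16(\gamma + 5\epsilon)\log\tau}$ once $c'$ is fixed. The additive error $\tfrac{160\log\tau}{n^c}$ then arises from a union bound over the at most $\tau$ relevant centers of the $\mathcal{D}/n^c$ rounding error incurred when each $\delta_x$ is discretized to $c\log n$ bits.

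The principal obstacle is this final probability computation: the truncated exponential does not satisfy memorylessness exactly, and the normalizer $1/(1-e^{-\lambda})$, together with the possibility that the shift pushes $\delta_{\pi_1}$ past the truncation point $1$, must be handled carefully. Following the book-keeping in \cite{Filtser19} this is tractable, but tuning the constants so that the bound lands precisely at $e^{-16(\gamma+5\epsilon)\log\tau}$ rather than a slightly weaker $e^{-\Omega(\gamma\log\tau)}$ requires attention to every factor of two in the analysis, and in particular to making sure the discretization error is absorbed into the $\mathrm{poly}(\log\tau)/n^c$ term rather than the main exponential factor.
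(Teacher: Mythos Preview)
Your proposal is correct and follows essentially the same approach as the paper: both arguments (i) show that a sufficiently large exact arrival-time gap at $v$ (the paper writes this as $\Upsilon_v^{(x)} < -(2\gamma+10\epsilon)\mathcal{D}$) propagates monotonically along the exact shortest path to $x_v$ and dominates the $O(\epsilon\mathcal{D})$ approximation slack, forcing every $u$ in the tube of balls into $C_{x_v}$, and (ii) lower-bound the gap probability via the Filtser/MPX truncated-exponential calculation over the at most $\tau$ centers in $N_v$. Two minor bookkeeping points: the paper uses $c'=10$ rather than $c'>2$ (it bounds $\tau_{x_v}(u)$ by $8\mathcal{D}$, not $2\mathcal{D}$), and the additive $O(\log\tau)/n^c$ term comes from carrying the rounding error $\epsilon_u=1/n^c$ through the exponential estimate (appearing as $e^{2\lambda\epsilon_u}\approx 1+O(\lambda/n^c)$) rather than from a union bound over centers.
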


First, for simpler notation, we introduce the term $d(s,x,v)$, which is a shorthand for the distance between $v$ and $s$ via the center $x \in \mathcal{X}$ in graph $G_s$. 
It holds:
\begin{align*}
    d(s,x,v) := (\mathcal{D}-\delta_x) + d(x,v).
\end{align*}
Note that this definition is \emph{only} based on the properties of the input graph $G$ and the random distances $\mathcal{D}$.
Further, we let $N_v$ be the set of centers $x \in \mathcal{X}$ in the distance at most $6\mathcal{D}$ to node $v \in V$.
Note that these include the centers which can potentially cluster $v$ and for which there is a non-zero probability that $C_{x}$ intersects $B(v, \gamma\mathcal{D})$. 
In other words, these are all the centers of all clusters that (potentially) neighbor the cluster that contains $v$.
This can be verified by the following argument: 
Let $v_i \in P_{x_v}$ an node on the shortest path from $v$ to $x_v$ and let $x \in \mathcal{X}$ be a center that can potentially cluster a node $w \in B(v_i,\gamma\mathcal{D})$.
Any node $v_i$ on the path is in distance at most $(1+\epsilon)2\mathcal{D}$ to $v$;
any node $w \in B(v_i,\gamma\mathcal{D})$ is in distance at most $(\nicefrac{1}{8})\cdot\mathcal{D}$ to $w$ as $\gamma \leq \nicefrac{1}{8}$; and finally every center $x$ that can cover $w$ is in distance $(1+\epsilon)2\mathcal{D}$ to $x$.
Summing all these distances up and using that $\epsilon \leq \gamma \leq \nicefrac{1}{8}$ gives us:
\begin{align*}
   d_G(v,x) &\leq d_G(v,w) + d_G(w,b) + d_G(w,x)\\
   &\leq (1+\epsilon)2\mathcal{D} + \gamma\mathcal{D} + (1+\epsilon)2\mathcal{D}\\
\text{As }\epsilon \leq \gamma \leq \nicefrac{1}{8}:\\
   &\leq (1+\nicefrac{1}{8})2\mathcal{D} + (\nicefrac{1}{8})\mathcal{D} + (1+\nicefrac{1}{8})2\mathcal{D}\\
   &\leq 6\mathcal{D}
\end{align*}
Recall that by the \emph{packing property} in Theorem \ref{theorem:generalpartition}, we have $|N_v|\leq \tau$.
We continue with the definition of the central concept of our proof, the random shift $\Upsilon_i$ for each center $x_i \in N_v$.
It is defined as follows:
\begin{definition}[Random Shift]
\label{def:random_shift}
    Let $v \in V$ be a node and let $N_v := \{x \in \mathcal{X} \mid d(x,v) \leq 6\mathcal{D}\}$.
    For each $x \in N_v$, define the random shift $\Upsilon_v^{(x)}$ as follows:
    \begin{align*}
        \Upsilon_v^{(x)} &:= (\mathcal{D}-\delta_x) + d_{G}(x,v) - \min_{x' \in N_v \setminus\{x\}} \left((\mathcal{D}-\delta_{x'}) + d_{G}(x',v)\right)
    \end{align*}
\end{definition}
These values will help us to quantify how close a center is to node $v \in V$ compared to the other centers.
From the definition of $\Upsilon_v^{(x)}$, it follows that for all $x' \in \mathcal{X}\setminus\{x\}$, it holds:
\begin{claim}
\label{claim:random_shift}
    For $x' \in N_v$, it holds $d_G(v,x,s) \leq d_G(v,x',s) +\Upsilon_v^{(x)}$.
\end{claim}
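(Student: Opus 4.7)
The plan is to observe that this claim is essentially an algebraic rearrangement of the definition of $\Upsilon_v^{(x)}$ combined with the basic fact that a minimum is a lower bound on every term it ranges over. First I would recall that by definition of $d_G(v, x, s)$ we have $d_G(v, x, s) = (\mathcal{D} - \delta_x) + d_G(x, v)$, and similarly for any $x' \in N_v$, $d_G(v, x', s) = (\mathcal{D} - \delta_{x'}) + d_G(x', v)$.

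Next I would rearrange the definition of the random shift,
\begin{align*}
\Upsilon_v^{(x)} = (\mathcal{D}-\delta_x) + d_G(x,v) - \min_{x'' \in N_v \setminus \{x\}} \bigl((\mathcal{D}-\delta_{x''}) + d_G(x'',v)\bigr),
\end{align*}
into the equivalent identity
\begin{align*}
d_G(v,x,s) = \Upsilon_v^{(x)} + \min_{x'' \in N_v \setminus \{x\}} \bigl((\mathcal{D}-\delta_{x''}) + d_G(x'',v)\bigr).
\end{align*}
Since the minimum on the right is, by definition, at most $(\mathcal{D} - \delta_{x'}) + d_G(x', v) = d_G(v, x', s)$ for every $x' \in N_v \setminus \{x\}$, I conclude $d_G(v, x, s) \leq \Upsilon_v^{(x)} + d_G(v, x', s)$, which is exactly the stated inequality.

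The proof is short and essentially just bookkeeping; there is no real obstacle. The only subtle point is the range of $x'$: the inequality is asserted for $x' \in N_v$, but the definition of $\Upsilon_v^{(x)}$ only minimises over $N_v \setminus \{x\}$, and indeed $\Upsilon_v^{(x)}$ may be negative if $x$ itself achieves the minimum. I would therefore read the claim (in line with how $\Upsilon_v^{(x)}$ will be used in the subsequent analysis) as quantifying over the competitor centers $x' \in N_v \setminus \{x\}$, in which case the argument above is complete.
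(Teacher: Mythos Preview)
Your proof is correct and is essentially the same algebraic manipulation as the paper's: the paper adds and subtracts $(\mathcal{D}-\delta_{x'}) + d(x',v)$ and then replaces the subtracted term by the minimum, which is equivalent to your rearrangement of the definition followed by bounding the minimum. Your explicit remark that the argument only works for $x' \in N_v \setminus \{x\}$ is a good observation; the paper makes this restriction in the sentence preceding the claim rather than in the claim statement itself.
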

\begin{proof}
Using the definitions of $d_G(v,x,s)$ and $\Upsilon_v^{(x)}$, we get:
  \begin{align*}
	 d_G(v,x,s) &= (\mathcal{D}-\delta_x) + d(x,v) \\
    &=(\mathcal{D}-\delta_{x'}) + d(x',v) + \left((\mathcal{D}-\delta_x) + d(x,v) - \left((\mathcal{D}-\delta_{x'}) + d(x',v)\right)\right)\\
    &\leq d_G(v,x',s) + \left((\mathcal{D}-\delta_x) + d(x,v) - \min_{x'' \in \mathcal{X}\setminus\{x\} }\left\{ (\mathcal{D}-\delta_{x''}) + d(x'',v)\right\}\right)\\
    &:= d_G(v,x',s) +\Upsilon_v^{(x)}
\end{align*} 
This proves the claim.
\end{proof}
We furthermore need the following lemma.
\begin{lemma}
\label{lemma:upsusmaller}
    Let $v \in V$ be a node and let $P_x$ be the \textbf{exact} shortest path to center $x \in \mathcal{X}$ in $G$.
    Further, let $\mathcal{X}' \subseteq \mathcal{X}$ be the set of centers that can potentially cluster a node $u \in P$.
    Define the values $\Upsilon_v^{(x)}$ and $\Upsilon_u^{(x)}$ as follows:
    \begin{align}
        \Upsilon_v^{(x)} &:= (\mathcal{D}-\delta_x) + d_{G}(x,v) - \min_{x' \in \mathcal{X}'\setminus\{x\}} \left((\mathcal{D}-\delta') + d_{G}(x',v)\right)\\
        \Upsilon_u^{(x)} &:= (\mathcal{D}-\delta_x) + d_{G}(x,u) - \min_{x' \in \mathcal{X}'\setminus\{x\}} \left((\mathcal{D}-\delta') + d_{G}(x',u)\right)
    \end{align}
    Then, it holds $\Upsilon_u^{(x)} \leq \Upsilon_v^{(x)}$.
\end{lemma}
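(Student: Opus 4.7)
The plan is to exploit two facts: first, that $u$ sits on an \emph{exact} shortest $v$–$x$ path, so the distance from $x$ to $v$ decomposes additively through $u$; second, that for the competing centers $x' \in \mathcal{X}' \setminus \{x\}$ the triangle inequality only gives us a sub-additive relation between their distances to $v$ and to $u$. Together, these two facts make moving the base node from $v$ to $u$ shrink the first term of $\Upsilon$ by exactly $d_G(u,v)$ while shrinking the subtracted minimum by at most $d_G(u,v)$, which is precisely the asymmetry we need.

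Concretely, I would first rewrite $d_G(x,v) = d_G(x,u) + d_G(u,v)$, which is justified because $u$ lies on the exact shortest path $P_x$ from $v$ to $x$. Substituting into the definition of $\Upsilon_v^{(x)}$ gives
\begin{align*}
\Upsilon_v^{(x)} = (\mathcal{D}-\delta_x) + d_G(x,u) + d_G(u,v) - \min_{x' \in \mathcal{X}'\setminus\{x\}} \bigl((\mathcal{D}-\delta_{x'}) + d_G(x',v)\bigr).
\end{align*}

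Next, for each competing center $x' \in \mathcal{X}'\setminus\{x\}$, the triangle inequality yields $d_G(x',v) \leq d_G(x',u) + d_G(u,v)$, hence
\begin{align*}
(\mathcal{D}-\delta_{x'}) + d_G(x',v) \leq (\mathcal{D}-\delta_{x'}) + d_G(x',u) + d_G(u,v).
\end{align*}
Taking the minimum over $x' \in \mathcal{X}'\setminus\{x\}$ on both sides and negating reverses the inequality, so
\begin{align*}
-\min_{x'}\bigl((\mathcal{D}-\delta_{x'}) + d_G(x',v)\bigr) \geq -\min_{x'}\bigl((\mathcal{D}-\delta_{x'}) + d_G(x',u)\bigr) - d_G(u,v).
\end{align*}

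Plugging this lower bound into the expression for $\Upsilon_v^{(x)}$, the two $d_G(u,v)$ terms cancel exactly, and what remains is the definition of $\Upsilon_u^{(x)}$, giving $\Upsilon_v^{(x)} \geq \Upsilon_u^{(x)}$ as claimed. I do not expect any real obstacle here: the lemma is essentially a one-line triangle-inequality argument once the additivity along $P_x$ is applied. The only thing to be careful about is that the minimum in both definitions ranges over the \emph{same} set $\mathcal{X}'\setminus\{x\}$, so that the comparison between the two minima is valid term-by-term; this is guaranteed by the statement of the lemma, which fixes $\mathcal{X}'$ to be the set of centers that can cluster some node on $P_x$ (and thus in particular contains every contender relevant to both $u$ and $v$).
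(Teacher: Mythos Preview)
Your proposal is correct and follows essentially the same approach as the paper: both arguments hinge on the exact additivity $d_G(x,v) = d_G(x,u) + d_G(u,v)$ along the shortest path together with the triangle inequality $d_G(x',v) \leq d_G(x',u) + d_G(u,v)$ for competing centers, then compare the minima. The paper presents the computation starting from $\Upsilon_u^{(x)}$ and bounding it above by $\Upsilon_v^{(x)}$, whereas you start from $\Upsilon_v^{(x)}$ and bound it below by $\Upsilon_u^{(x)}$, but the content is identical.
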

\begin{proof}
First, for convenience, we define: 
\begin{align}
    x_u := \argmin_{x' \in \mathcal{X}'\setminus\{x\}}\left((\mathcal{D}-\delta') + d_{G}(x',v)\right).
\end{align}
Next, we add $\left(d_{G}(u,v) -  d_{G}(u,v)\right) = 0$ to $\Upsilon_u^{(x)}$ and obtain:  
\begin{align*}
    \Upsilon_u^{(x)} &= (\mathcal{D}-\delta_i) + d_{G}(x_i,u) - \left((\mathcal{D}-\delta_{u}) + d_{G}(x_u,u)\right)\\
    &= (\mathcal{D}-\delta_i) + d_{G}(x_i,u) - \left((\mathcal{D}-\delta_{u}) + d_{G}(x_u,u)\right) + \left(d_{G}(u,v) -  d_{G}(u,v)\right)\\
    &= (\mathcal{D}-\delta_i) + d_{G}(x_i,u) + d_{G}(u,v) - \left((\mathcal{D}-\delta_{u}) + d_{G}(x_{u},u) + d_{G}(u,v)\right)
\end{align*}
Now, we use the fact that $u$ is an ancestor of $v$ in the \emph{exact} shortest path $P$.
By this fact, it holds:
\begin{align}
    d_G(x_i,v) &= d_G(x_i,v) + d_G(u,v) \label{eqn:triangle_1}
\end{align}
Otherwise, there would be a shorter path from $v$ to $x$.
Therefore,
\begin{align*}
    \Upsilon_u^{(x)} \underset{\eqref{eqn:triangle_1}}{=} (\mathcal{D}-\delta_i) + d_{G}(x_i,v) - \left((\mathcal{D}-\delta_{u}) + d_{G}(x_{u},u) + d_{G}(u,v)\right)
\end{align*}
By the triangle inequality, we therefore obtain the following inequality:
\begin{align}
     d_{G}(x_{u},u) + d_{G}(u,v) \geq d_{G}(x_{u},v) \label{eqn:triangle_2}
\end{align}
Putting this insight back into the formula gives us:
\begin{align*}
    \Upsilon_u^{(x)} \underset{\eqref{eqn:triangle_2}}{\leq} (\mathcal{D}-\delta_i) + d_{G}(x_i,v) - \left((\mathcal{D}-\delta_{u}) + d_{G}(x_{u},v) \right)
\end{align*}
By definition, it holds:
\begin{align}
\label{eqn:mindef}
    \left((\mathcal{D}-\delta_{u}) + d_{G}(x_{u},v) \right) \geq \min_{x' \in \mathcal{X}'\setminus\{x\}} \left((\mathcal{D}-\delta_{u}) + d_{G}(x',v)\right)
\end{align}
Thus, we have:
\begin{align*}
    \Upsilon_u^{(x)} \underset{\eqref{eqn:mindef}}{\leq} (\mathcal{D}-\delta_i) + d_{G}(x_i,v) - \min_{x' \in \mathcal{X}'\setminus\{x\}} \left((\mathcal{D}-\delta_{u}) + d_{G}(x',v)\right) = \Upsilon_v^{(x)}
\end{align*}
This proves the lemma.
\end{proof}

Note that, when using exact distances, the center that minimizes $\Upsilon_v^{(x)}$ will add $v$ to its cluster.
However, this is not necessarily true when using approximate distances, as the approximation error can make another center appear closer.
Nevertheless, we can show that for a big enough difference $\Upsilon_v^{(x)}$, the center $x$ clusters node $v$ and, more importantly, also clusters all nodes close to $v$.
More precisely, it holds:
\begin{lemma}
\label{lemma:x_cluster_B}
Let $x \in N_v$ be a center in $\mathcal{X}$ in the distance at most $6\mathcal{D}$ to $v$.
Further, let $P := (v_1, \ldots, v_\ell)$ with $v_1 = x$ and $v_\ell = v$ be the \textbf{exact} shortest path from $x$ to $v$ in $G$.

\medskip

Consider a node $v_i \in P$ and a node $u \in B(v_i, \gamma\mathcal{D})$.
Suppose it holds 
\begin{align}
    \Upsilon_v^{(x)} < -(2\gamma+10\epsilon)\mathcal{D}.
\end{align}
Then this implies $v_i, u \in C_x$.
In particular, this means that for all $v_i \in P$, it holds:
\begin{align*}
     \pr{B(v_i,\gamma\mathcal{D}) \subset C_{x}} &\geq \pr{\Upsilon_v^{(x)} \leq -(2\gamma+10\epsilon)\cdot\mathcal{D}}
\end{align*}
\end{lemma}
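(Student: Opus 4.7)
The plan is to establish the deterministic implication that whenever $\Upsilon_v^{(x)}\leq -(2\gamma+10\epsilon)\mathcal{D}$ holds, every $v_i\in P$ and every $u\in B(v_i,\gamma\mathcal{D})$ must land in $C_x$; the stated probability bound then follows by containment of events. I will propagate the random-shift hypothesis in three short steps: first along $P$ from $v$ to any intermediate $v_i$, then from $v_i$ to any $u$ in its $\gamma\mathcal{D}$-ball, and finally from the resulting bound on $\Upsilon_u^{(x)}$ to cluster membership via the $(1+\epsilon)$-approximation guarantee of the tree $T$.

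For step one I will invoke \autoref{lemma:upsusmaller} to obtain $\Upsilon_{v_i}^{(x)}\leq \Upsilon_v^{(x)}$; this is precisely where the assumption that $P$ is an \emph{exact} shortest path is essential, since that lemma is stated for the exact case. For step two, the triangle inequality implies that each quantity $(\mathcal{D}-\delta_{x''})+d_G(x'',\cdot)$ changes by at most $\gamma\mathcal{D}$ when moving from $v_i$ to $u$; applying this to both the $x$-term and the minimum over $x''\neq x$ in the definition of the shift yields $\Upsilon_u^{(x)}\leq \Upsilon_{v_i}^{(x)}+2\gamma\mathcal{D}$. Chaining these two bounds with the hypothesis then gives $\Upsilon_u^{(x)}<-10\epsilon\mathcal{D}$.

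The only nontrivial piece is step three: showing that $\Upsilon_u^{(x)}<-10\epsilon\mathcal{D}$ forces $u\in C_x$. I proceed by contradiction. If $u\in C_{x'}$ for some $x'\neq x$, then the tree path from $s$ to $u$ routes through $x'$, and since the segment from $x'$ to $u$ in the tree is a valid $G$-path, we have $d_T(s,u)\geq (\mathcal{D}-\delta_{x'})+d_G(x',u)\geq M_u$, where $M_u:=\min_{x''\neq x}((\mathcal{D}-\delta_{x''})+d_G(x'',u))$. Combined with the approximation guarantee $d_T(s,u)\leq (1+\epsilon)\,d_{G_s}(s,u)$ and the observation that $\Upsilon_u^{(x)}<0$ forces $d_{G_s}(s,u)=A_u$ with $A_u:=(\mathcal{D}-\delta_x)+d_G(x,u)$, this gives $M_u\leq (1+\epsilon)A_u$, i.e., $-\Upsilon_u^{(x)}=M_u-A_u\leq \epsilon\,A_u$. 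The covering property from \autoref{theorem:generalpartition} then bounds $A_u=d_{G_s}(s,u)\leq 2\mathcal{D}$, producing $\Upsilon_u^{(x)}\geq -2\epsilon\mathcal{D}$, which contradicts the assumed bound.

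The slack between the essentially tight $-2\epsilon\mathcal{D}$ and the stated $-10\epsilon\mathcal{D}$ is the main bookkeeping obstacle I expect: it must absorb the $\mathcal{D}/n^c$-sized discretization error incurred when each $\delta_{x''}$ is truncated to a message of $O(\log n)$ bits (summed over the at most $\tau$ relevant centers in $N_v$), together with the low-order $\epsilon$-term one picks up if $A_u$ is bounded via $(1+\epsilon)\,d_{G_s}(s,u)$ rather than directly. Once the bookkeeping is carried out, the probability statement $\Pr[B(v_i,\gamma\mathcal{D})\subset C_{x}]\geq \Pr[\Upsilon_v^{(x)}\leq -(2\gamma+10\epsilon)\mathcal{D}]$ drops out of the deterministic implication, since the entire derivation works simultaneously for every $v_i\in P$ and every $u\in B(v_i,\gamma\mathcal{D})$ on the event where the hypothesis holds.
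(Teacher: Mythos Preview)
Your proposal is correct and follows essentially the same contradiction argument as the paper: both use \autoref{lemma:upsusmaller} to propagate the shift along $P$, the triangle inequality to handle the $\gamma\mathcal{D}$-perturbation to $u$, and the $(1+\epsilon)$-approximation guarantee of $T$ to reach a contradiction with $d_T(s,u)\geq d_{G_s}(s,x',u)$. The one organizational difference is that you bound the error term via the covering property at $u$ (yielding $A_u=d_{G_s}(s,u)\leq 2\mathcal{D}$), whereas the paper bounds it via $x\in N_v$ (yielding $d_{G_s}(s,x,u)\leq 8\mathcal{D}$); your route is in fact tighter, and the slack between your $2\epsilon\mathcal{D}$ and the stated $10\epsilon\mathcal{D}$ comes from the paper's looser $8\epsilon\mathcal{D}$ bound, not from discretization (rounding of the $\delta_{x''}$ is handled separately in \autoref{lemma:random_shift}).
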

\begin{proof}
Let $x_u$ be any center that can potentially cluster node $u$. 
By Claim \ref{claim:disttocenter}, each vertex joins the cluster of a center at a distance at most $(1+\epsilon)2\mathcal{D}$.
Therefore, the distances between $v$ and $v_i$ and $u$ and $x_u$ is at most $(1+\epsilon)2\mathcal{D}$.
As $u \in B(v,\gamma\mathcal{D})$, by the triangle inequality, the distance between $x_u$ and $v$ is at most
\begin{align*}
d_G(v,x_u) &\leq d_G(v,v_i) +  d_G(v_i,u) + d(u,x_u)\\
&\leq (1+\epsilon)2\mathcal{D} + \gamma\mathcal{D} + (1+\epsilon)2\mathcal{D} \\
&\leq (1+\epsilon)4\mathcal{D} + \gamma\mathcal{D} \leq 6\mathcal{D} 
\end{align*}
Therefore, $x_u \in N_v$. 

\medskip

We want to show that for $\Upsilon_v^{(x)} \leq -(2\gamma+10\epsilon)\mathcal{D}$, center $x$ is closer to $u$ than $x_u$ \textbf{even} when using approximate distances.
    The idea behind the proof is, loosely speaking, that for a big enough value of $\Upsilon_v^{(x)}$ the error introduced by the approximate SSSP is \emph{canceled out} in some sense. 
Suppose for contradiction that $x_u$ (and \emph{not} $x$) adds node $u$ to its cluster, i.e., it holds $u \in C_{x_u}$. 
This only happens if and only if $x_u$ is the penultimate node on the approximate shortest path to $s$.
Or, equivalently, is the penultimate node on the exact path in the tree $T_s$
Formally, it holds:
\begin{align*}
    d_{T_s}(s,u) = d(s,x_u) + d_{T_s}(x_u,u) \geq  d(s,x_u) + d_{G_s}(x_u,u) = d(s,x_u,u)
\end{align*}
On the other hand, using approximation guarantee and the triangle inequality, we see that the following holds:
    \begin{align*}
        d_{T_s}(s,u) 
        &\leq (1+\epsilon) \cdot d_{G_s}(s,u) \\
        &\leq d_{G_s}(s,x,u) + \epsilon\cdot d_{G_s}(s,x,u)\\
        &\leq d_{G_s}(s,x,v_i) + d_{G_s}(v_i,u) + \epsilon\cdot d_{G_s}(s,x,u)
    \end{align*}
By Claim \ref{claim:random_shift}, we furthermore get: 
     \begin{align*}     
        d_{T_s}(s,u) &\leq  \left(d_{G_s}(s,x_u,v_i) + \Upsilon_{v_i}^{(x)}\right) + d_{G_s}(u,v_i) + \epsilon\cdot d_{G_s}(s,x,u) \\
        &\leq  \left(d_{G_s}(s,x_u,u) + d_{G_s}(u,v_i) + \Upsilon_{v_i}^{(x)}\right) + d_{G_s}(u,v_i) + \epsilon\cdot d_{G_s}(s,x,u) \\
        &\leq  d_{G_s}(s,x_u,u) + 2d_{G_s}(u,v_i) + \epsilon\cdot d_{G_s}(s,x,u) + \Upsilon_{v_i}^{(x)}
    \end{align*}
As $v_i$ lies on the exact shortest path from $v$ to $x$, it holds by Lemma \ref{lemma:upsusmaller} that $\Upsilon_{v_i}^{(x)} \leq \Upsilon_{v}^{(x)}$ and therefore:
\begin{align*}
    d_{T_s}(s,u)  \leq  d_{G_s}(s,x_u,u) + 2d_{G_s}(u,v_i) + \epsilon\cdot d_{G_s}(s,x,u) + \Upsilon_{v}^{(x)}
\end{align*}
Now, we bound $d_{G_s}(s,x,u)$.
To this end, recall that $d(s,x) = \mathcal{D} - \delta_x \leq \mathcal{D}$.
Further, since $x \in N_v$, we have $d(x,v_i) \leq d(x,v) \leq 6\mathcal{D}$.
Finally, by definition, we have $d(v_i,u) = \gamma\mathcal{D}$.
Thus, combining all these facts, it holds:
\begin{align*}
    d_{G_s}(s,x,u) &:= d_{G_s}(s,x) + d_{G_s}(x,u)\\
    &\leq d_{G_s}(s,x) + d_{G_s}(x,v_i) + d_{G_s}(v_i,u)\\
    &\leq \mathcal{D} + 6\mathcal{D} + \gamma\mathcal{D} \leq 8\mathcal{D}
\end{align*}
Now use our assumptions and see: 
        \begin{align*}
          d_{T_s}(s,u)   &\leq  d_{G_s}(s,x_u,u) + 2d_{G_s}(u,v_i) + \epsilon\cdot d_{G_s}(s,x,u) + \Upsilon_{v}^{(x)}\\
          \text{ As }d_{G_s}(s,x,u) \leq 8\mathcal{D}:\\
          &\leq  d_{G_s}(s,x_u,u) + 2d_{G_s}(u,v) + \epsilon\cdot 8\mathcal{D} + \Upsilon_v^{(x)}\\
        \text{ As }d_{G_s}(v,u) \leq \gamma\mathcal{D}:\\
          &\leq  d_{G_s}(s,x_u,u) + 2\gamma\mathcal{D} + \epsilon\cdot 8\mathcal{D} + \Upsilon_v^{(x)}\\
          \text{ As }\Upsilon_v^{(x)} < -(2\gamma+10\epsilon)\mathcal{D}:\\
            &\leq d_{G_s}(s,x_u,u) + 2\gamma\mathcal{D} + \epsilon\cdot 8\mathcal{D} - 2\gamma\mathcal{D} - \epsilon\cdot 10\mathcal{D}\\
           &<  d_{G_s}(s,x_u,u) 
    \end{align*}
Thus, it follows that:
\begin{align*}
     d_{T_s}(s,u) \underset{\textbf{!}}{<}  d_{G_s}(s,x_u,u) \leq d_{T_s}(s,u)
\end{align*}
This is a contradiction.
As this holds for any possible choice $x_u \in N_v$, none of them add $u$ to their cluster.
Therefore $u$ must be part of $C_x$ as claimed and the lemma follows.
\end{proof}
This lemma tells us that --- if the random shift $\delta_x$ is big enough --- a center $x \in N_v$ will cluster all nodes that are \emph{close} to its shortest path to $v$.
Again, note that this statement is stronger than what we need for a pseudo-padded decomposition, but we show it in this generality to reuse it later in the next section.
In the remainder of this section, we will prove that with non-trivial probability, there is a center $x_i \in N_v$ with a large enough shift.
Concretely, we show the following lemma:
\begin{lemma}[Random Shift with Approximate Distances]
\label{lemma:random_shift}
Suppose that 
\begin{align*}
\gamma &\leq \nicefrac{1}{8}\\
\epsilon &\leq \min \left\{\frac{1}{40}, \frac{1}{20\cdot(2\log\tau+2) }\right\}
\end{align*}
Then, it holds
\begin{align*}
    \pr{\exists x_i \in N_v : \Upsilon_i \leq -(2\gamma+10\epsilon)\cdot\mathcal{D} } \geq e^{-16(\gamma+\epsilon)\log\tau}-\frac{160\log(\tau)}{n^c}
\end{align*}
\end{lemma}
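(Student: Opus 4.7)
The plan is to translate the target event into a statement about the gap between the two smallest ``arrival times'' of the candidate centers and then exploit the (approximate) memoryless property of the truncated exponential distribution. Concretely, for each $x \in N_v$ set $Y_x := (\mathcal{D}-\delta_x)+d_G(x,v)$, write $Y^{(1)}\le Y^{(2)}\le\cdots$ for the order statistics among $\{Y_x\}_{x\in N_v}$, and observe that by Definition~\ref{def:random_shift} the event in the lemma is exactly $\{Y^{(2)}-Y^{(1)}\ge m\}$ with $m:=(2\gamma+10\epsilon)\mathcal{D}$. The covering and packing properties give $1\le |N_v|\le \tau$, and the $\delta_x$'s are independent with density $\lambda e^{-\lambda y}/(1-e^{-\lambda})$ for $y\in[0,\mathcal{D}]$ (after rescaling), where $\lambda := 2+2\log\tau$.

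Next, I would establish a memoryless-like estimate for the truncated exponential: for every $a,b\ge 0$ with $a+b\le\mathcal{D}$,
\[
\Pr[\delta_x \ge a+b \mid \delta_x \ge a] \;\ge\; e^{-\lambda b/\mathcal{D}} - e^{-\lambda},
\]
so the distribution is exponential up to an additive truncation error $\le e^{-\lambda}\le 1/\tau^2$, with a further $\mathcal{D}/n^c$ slack to absorb the rounding of $\delta_x$ to $c\log n$ bits. Applying this bound along an arbitrary reveal order of $N_v$ and tracking the running minimum gives, by a standard coupling argument of the Miller--Peng--Xu type used in \cite{DBLP:conf/spaa/MillerPVX15, Filtser19},
\[
\Pr\!\left[\,Y^{(2)}-Y^{(1)}\ge m\,\right] \;\ge\; e^{-\lambda m/\mathcal{D}} \;-\; O\!\left(\tfrac{\tau\log\tau}{n^c}\right),
\]
where the error is just the sum over the $|N_v|\le\tau$ centers of the per-step truncation and rounding loss.

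Finally, plugging in the parameters yields
\[
\lambda m/\mathcal{D}\;=\;(2+2\log\tau)(2\gamma+10\epsilon)\;\le\;16(\gamma+\epsilon)\log\tau,
\]
where the hypothesis $\epsilon\le 1/(20(2\log\tau+2))$ is precisely what is needed to absorb the constant $10\epsilon$ summand into the $\log\tau$ factor, and $\gamma\le 1/8$ keeps the constants under control; the error term is absorbed into $160\log(\tau)/n^c$ by readjusting $c$ (using $\tau\le n$). This yields the claimed bound.

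The main obstacle, in my view, is not the arithmetic in the last step but a clean Step~3: decoupling \emph{which} center happens to be the winner from \emph{by how much} it wins, uniformly over all conditioning events and while the truncation and discretization corrupt the exact memorylessness. The trick is to run the reveal in an order that is independent of the random $\delta_x$'s, so that the conditional tail-probability estimate from Step~2 applies at each step and the per-step errors accumulate linearly (not exponentially) in $|N_v|$. Everything else reduces to inequalities between elementary functions of $\gamma$, $\epsilon$, and $\log\tau$.
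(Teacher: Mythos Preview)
Your high-level plan---recast the event as a gap $Y^{(2)}-Y^{(1)}\ge m$ between order statistics and exploit the near-memorylessness of the truncated exponential---is exactly the route the paper (following \cite{Filtser19}) takes, just packaged in slightly different language. The paper conditions on which center $x_i$ achieves the minimum (events $\mathcal{F}_i$) and bounds the probability that it wins by less than $m$ (events $\mathcal{C}_i$); your reveal-order picture is the same decomposition in different clothing.

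The real gap is in your Step~3. Your final error term $O(\tau\log\tau/n^c)$ only carries the \emph{rounding} loss; the \emph{truncation} loss $e^{-\lambda}$ per center, summed over $|N_v|\le\tau$ centers, contributes $\tau e^{-\lambda}=\Theta(1/\tau)$ (since $\lambda=2+2\log\tau$), and this cannot be pushed into an $n^{-c}$ term. So the honest bound your argument yields is
\[
\Pr[Y^{(2)}-Y^{(1)}\ge m]\;\ge\; e^{-\lambda m/\mathcal{D}}\;-\;\Theta(1/\tau)\;-\;O(\tau\log\tau/n^c),
\]
and this does \emph{not} match the stated conclusion $e^{-16(\gamma+\epsilon)\log\tau}-160\log(\tau)/n^c$ across the full range. (Also, your conditional tail estimate in Step~2 is false near the boundary $a+b=\mathcal{D}$: the left side is $0$ while the right side can be positive. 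A correct version has to keep the $1/(e^\lambda-1)$ correction rather than a flat $e^{-\lambda}$.)

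What the paper does instead is the one technical trick your sketch omits. After establishing $\Pr[\mathcal{C}_i]\le\bigl(1-e^{-2(\gamma+5\epsilon)\lambda}\bigr)\bigl(\Pr[\mathcal{F}_i]+\tfrac{1}{e^\lambda-1}\bigr)$ (this is the correct memoryless-like estimate) and summing with $\sum_i\Pr[\mathcal{F}_i]=1$, it observes that the hypothesis $\gamma+5\epsilon\le\tfrac14$ together with $\lambda=2+2\log\tau$ forces
\[
\frac{\tau}{e^\lambda-1}\;\le\; e^{-2(\gamma+5\epsilon)\lambda},
\]
so that $(1-x)(1+x)=1-x^2$ converts the additive truncation term into a mere \emph{doubling of the exponent}: one gets $\sum_i\Pr[\mathcal{C}_i]\le 1-e^{-4(\gamma+5\epsilon)\lambda}$ (plus the $n^{-c}$ rounding slop), and hence $\Pr[\text{gap}\ge m]\ge e^{-4(\gamma+5\epsilon)\lambda}\ge e^{-16(\gamma+5\epsilon)\log\tau}$. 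This is precisely why the constant in the lemma is $16$ rather than $8$, and why the assumptions $\gamma\le\tfrac18$ and $\epsilon\le\tfrac{1}{40}$ are used---they guarantee $2(\gamma+5\epsilon)\lambda\le\lambda/2$, which is what makes the $(1-x)(1+x)$ step go through. Without this step (or an equivalent), your argument is incomplete.
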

The full proof is presented below.
It is based on the proof of Theorem 1 in \cite{Filtser19} that proves a similar statement but assumes exact shortest path computations.
The main difficulty is quantifying the effect of approximation parameter $\epsilon$.

\medskip

We conclude this section by proving Lemma \ref{lemma:padding_property} using Lemma \ref{lemma:x_cluster_B} and Lemma \ref{lemma:random_shift}.
Note that, by Lemma \ref{lemma:x_cluster_B}, it holds for all $v_j \in P_{x_v}$ that   
\begin{align*}
     \pr{B(v_j,\gamma\mathcal{D}) \subset C_{x_v}} &\geq \pr{\exists x_i \in N_v : \Upsilon_i \leq -(2\gamma+10\epsilon)\cdot\mathcal{D}} 
\end{align*}
Further, by Lemma \ref{lemma:random_shift}, we have
\begin{align*}
         \pr{\exists x_i \in N_v : \Upsilon_i \leq -(2\gamma+10\epsilon)\cdot\mathcal{D}} &\geq e^{-16(\gamma+5\epsilon)\log\tau}+\frac{160\log \tau}{n^c}
\end{align*}
This proves Lemma \ref{lemma:padding_property}.
Together with the initial observation that the algorithm can be implemented with a \emph{single} approximate shortest path computation on a set-source, this implies Theorem \ref{theorem:generalpartition}. 

\subsubsection{Proof of Lemma \ref{lemma:random_shift}}
\label{sec:padding_proof}

In this section, we will prove Lemma \ref{lemma:random_shift} using techniques and arguments from \cite{Filtser19} where a similar statement was shown for \emph{exact} distance computations.
In fact, we follow the analysis almost verbatim and only adapt the arguments that do not generally hold for approximate distances.
The main technical difficulties are in the facts that a) we need to carry the approximation error $\epsilon$ through the calculations and b) that the triangle inequality does not hold for approximate distances and, therefore, a node may not be added to the cluster of the center on the exact shortest path to virtual source $s$.

Condition on the event that there are $\tau$ centers in $N_v$ and let $N_v=\{x_1,x_2,\dots, x_\tau\}$.
Until now, we have always considered the rounded random variables $\delta_{x_1}, \ldots, \delta_{x_\tau}$ that can be encoded in $c\log n$ bits.
These are the variables that will be used by the approximate shortest path algorithm to compute the clusters.
In this section, we will work with actual continuous values, i.e., the actual values returned by sampling the truncated distribution.
For each center $x_i \in N_v$, we defined the random shift as
\begin{align*}
    \Upsilon'_i &:= (\mathcal{D}-\delta'_i\cdot\mathcal{D}) + d_{G}(x_i,v) - \min_{x_j \in N_v\setminus\{x_i\}} \left((\mathcal{D}-\delta'_j\cdot\mathcal{D}) + d_{G}(x_j,v)\right)
\end{align*}

As we round them up to be encodable in $c \log n$ bits for some $c \geq 1$, it holds for all $x_i \in N_v$ that
\begin{align}
    \delta_{x_i} - \delta'_{x_i}\cdot\mathcal{D} &\leq \frac{1}{n^c}\\
     (\mathcal{D}-\delta'_i\cdot\mathcal{D}) + d_{G}(x_i,v) - (\mathcal{D}-\delta_i) + d_{G}(x_i,v) &\leq \frac{1}{n^c}\\
    |\Upsilon'_i - \Upsilon_i| &\leq \frac{1}{n^c}
\end{align}
Thus, the error introduced by the rounding is negligible.
Nevertheless, we need to carry it through.


Denote by $\mathcal{F}_i$ the event that $x_i$ is the truly closest center to $v$.
In other words, the value $\Upsilon'_i$ is non-positive and if there is another center with the same distance, it has a lower identifier, i.e., we have 
\begin{align}
\label{eq:fvalue}
 \mathcal{F}_i := \left\{ x_i = \argmin_{x_i \in N_v} \Upsilon_i \leq 0 \right\}   
\end{align}
Note that the tie-breaker only comes into effect when there two centers $x_,x_j$ with $\Upsilon_i = \Upsilon_j = 0$.
With exact distance computations, the minimal center will cluster $v$ as it is on the shortest path to supersource $s$.
This is not necessarily true when using approximate distances and rounding.
As we always round down, it holds:
\begin{align}
    \pr{\Upsilon'_i < -\frac{1}{n^c}} \leq \pr{\mathcal{F}_i} \leq \pr{\Upsilon'_i \leq \frac{1}{n^c}}
\end{align}
Further, let $\epsilon_l = \epsilon + \nicefrac{1}{n^c}$ and $\epsilon_u := \nicefrac{1}{n^c}$ denote by $\mathcal{C}_i$ the event that
\begin{align*}
   \mathcal{C}_i := \left\{ \Upsilon'_i \in \left(-2\cdot(\gamma+5\epsilon_l)\cdot\mathcal{D},\epsilon_u\right]\right\}. 
\end{align*}
In other words, a center $x_i$ is the closest center but not by a lot.
This is a \emph{bad} event as it implies that the ball $B(v,\gamma\mathcal{D})$ may not be added to $C_{x_i}$.
Perhaps not even $v$ is added.
On the flip side, if \emph{none} of these \emph{bad} events $\mathcal{C}_1, \ldots, \mathcal{C}_\tau$ happen, there \emph{must} be a center with a big enough shift.
Recall that we want to show that:
\begin{align*}
    \pr{\exists x_i \in N_v : \left(\Upsilon_i \leq -2\cdot(\gamma+5\epsilon)\cdot\mathcal{D}\right) } \geq e^{-16(\gamma+5\epsilon)\log\tau}-\frac{160\log\tau}{n^c}
\end{align*}
It holds:
\begin{align*}
     &\pr{\forall x_i \in N_v : \Upsilon_i \geq -2\cdot(\gamma+5\epsilon)\cdot\mathcal{D}}\\
     \text{As all $\mathcal{F}_i$ disjoint:}\\
     =& \sum_{i=1}^\tau \pr{\mathcal{F}_i} \cdot \pr{\forall x_i \in N_v : \Upsilon_i \geq -2\cdot(\gamma+5\epsilon)\cdot\mathcal{D} \mid \mathcal{F}_i}\\
     \text{As $\mathcal{F}_i$ implies $\Upsilon_i$ is minimal:}\\
     =& \sum_{i=1}^\tau \pr{\mathcal{F}_i} \cdot \pr{\Upsilon_i \geq -2\cdot(\gamma+5\epsilon)\cdot\mathcal{D} \mid \mathcal{F}_i}\\
        =& \sum_{i=1}^\tau \pr{\Upsilon_i \geq -2\cdot(\gamma+5\epsilon)\cdot\mathcal{D} \cap \mathcal{F}_i}\\
    \text{By Ineq. \eqref{eq:fvalue}:}\\
        =& \sum_{i=1}^\tau \pr{\Upsilon_i \geq -2\cdot(\gamma+5\epsilon)\cdot\mathcal{D} \cap \{\Upsilon'_i \leq \nicefrac{1}{n^c}\}}\\
        \text{As $|\Upsilon_i - \Upsilon_i'| \leq \nicefrac{1}{n^c}$:}\\
     \leq& \sum_{i=1}^\tau \pr{\Upsilon'_i \in \big(-2\cdot(\gamma+5\epsilon+\nicefrac{1}{n^c})\cdot\mathcal{D},\nicefrac{1}{n^c}\big]}\\
          \leq& \sum_{i=1}^\tau \pr{\Upsilon'_i \in \big(-2\cdot(\gamma+5\epsilon_l)\cdot\mathcal{D},\epsilon_u\big]}\\
     =& \sum_{i=1}^\tau \pr{\mathcal{C}_i}
\end{align*}
Using this insight, we note that the following equality holds:
\begin{align}
\label{eqn:c_union_bound}
      \pr{\exists x_i \in N_v : \Upsilon_i \leq -2\cdot(\gamma+5\epsilon)\cdot\mathcal{D} } \geq 1 - \sum_{i=1}^\tau \pr{\mathcal{C}_{i}}.
\end{align}
In the following, we will bound $\pr{\mathcal{C}_{i}}$ for a fixed $x_i$ and show the following.
\begin{lemma}
\label{lemma:lower_bound_cz}
Let $\lambda := 2\log \tau + 2$.
Then, for all $i \leq \tau$, it holds:
\begin{align*}
            \pr{\mathcal{C}_i} \leq \left(1-e^{-2(\gamma+5\epsilon_l+\epsilon_u)\cdot\lambda}\right)\cdot e^{2\lambda\epsilon_u}\cdot\left(\pr{\mathcal{F}_i }+\frac{1}{e^{\lambda}-1}\right)
        \end{align*}
\end{lemma}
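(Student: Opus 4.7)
The plan is to condition on all random variables except $\delta'_i$, reducing everything to a single-variable calculation. Fix the values $\{\delta'_{x_j}\}_{j\neq i}$, which determines $L := \min_{x_j\in N_v\setminus\{x_i\}}((\mathcal{D}-\delta'_j\mathcal{D})+d_G(x_j,v))$ and hence the threshold $t := (\mathcal{D}+d_G(x_i,v)-L)/\mathcal{D}$ at which $\Upsilon'_i$ changes sign. With this notation $\Upsilon'_i = (t-\delta'_i)\mathcal{D}$, so $\mathcal{C}_i$ becomes the event $\{\delta'_i\in[t-\epsilon_u,\,t+2(\gamma+5\epsilon_l))\}$, while the sandwich $\pr{\Upsilon'_i<-\epsilon_u}\leq \pr{\mathcal{F}_i}\leq \pr{\Upsilon'_i\leq \epsilon_u}$ yields in particular the lower bound $\pr{\mathcal{F}_i\mid L}\geq \pr{\delta'_i\geq t+\epsilon_u}$.

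Next, I plug in the explicit tail $\pr{\delta'_i\geq s}=(e^{-s\lambda}-e^{-\lambda})/(1-e^{-\lambda})$ for $s\in[0,1]$. The conditional probability $\pr{\mathcal{C}_i\mid L}$ is the difference of two such tails; the $e^{-\lambda}$ cancels and $e^{-t\lambda}$ factors out, giving the compact expression
\begin{align*}
\pr{\mathcal{C}_i\mid L}\;=\;\frac{e^{-t\lambda}\,e^{\epsilon_u\lambda}\bigl(1-e^{-(2(\gamma+5\epsilon_l)+\epsilon_u)\lambda}\bigr)}{1-e^{-\lambda}}.
\end{align*}
The crucial algebraic identity to exploit is $\pr{\delta'_i\geq s}+\tfrac{1}{e^\lambda-1}=\tfrac{e^{-s\lambda}}{1-e^{-\lambda}}$, which recasts the mysterious additive offset $\tfrac{1}{e^\lambda-1}$ appearing in the target inequality as a clean multiplicative quantity. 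Applied at $s=t+\epsilon_u$ it gives
\begin{align*}
\pr{\mathcal{F}_i\mid L}+\tfrac{1}{e^\lambda-1}\;\geq\;\frac{e^{-t\lambda}e^{-\epsilon_u\lambda}}{1-e^{-\lambda}}.
\end{align*}

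Taking the ratio of the two displays cancels the common factor $e^{-t\lambda}/(1-e^{-\lambda})$ and produces exactly the conditional inequality $\pr{\mathcal{C}_i\mid L}\leq e^{2\epsilon_u\lambda}\bigl(1-e^{-(2(\gamma+5\epsilon_l)+\epsilon_u)\lambda}\bigr)\cdot\bigl(\pr{\mathcal{F}_i\mid L}+\tfrac{1}{e^\lambda-1}\bigr)$. Since $1-e^{-x}$ is monotone, this can be weakened to the form in the lemma (replacing $\epsilon_u$ by $2\epsilon_u$ inside the exponent). Removing the conditioning on $L$ by the tower property yields the unconditional statement.

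The main obstacle is the careful $\epsilon_u=1/n^c$ bookkeeping. Two distinct sources of slack each contribute a factor of $e^{\epsilon_u\lambda}$ to the numerator---the gap between $\mathcal{F}_i$ and the cleaner event $\{\delta'_i\geq t+\epsilon_u\}$ introduced by the discretization of $\delta_i$, and the shift of the $\mathcal{C}_i$-interval's left endpoint from $t$ to $t-\epsilon_u$---and they must be combined into the single factor $e^{2\lambda\epsilon_u}$ without being dropped or double-counted. A subordinate subtlety is the dimensional mismatch between $\epsilon_u$ as a pure number and as a value implicitly scaled by $\mathcal{D}$ in the definition of $\mathcal{C}_i$; this has to be tracked once and then absorbed into the weakening from $e^{-(2(\gamma+5\epsilon_l)+\epsilon_u)\lambda}$ to $e^{-(2\gamma+10\epsilon_l+2\epsilon_u)\lambda}$.
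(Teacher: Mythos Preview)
Your proposal is correct and follows essentially the same route as the paper: condition on the other shifts $\{\delta'_{x_j}\}_{j\neq i}$ (the paper calls this realization $\mathcal{Z}$ and your threshold $t$ is their $\rho_{\mathcal{Z}}$), compute the conditional probability of $\mathcal{C}_i$ as an explicit integral of the truncated-exponential density, invoke the identity $\pr{\delta'\geq s}+\tfrac{1}{e^\lambda-1}=\tfrac{e^{-s\lambda}}{1-e^{-\lambda}}$ to relate it to $\pr{\mathcal{F}_i\mid\mathcal{Z}}+\tfrac{1}{e^\lambda-1}$, and then remove the conditioning. The paper packages the single-variable calculation as a separate claim with parameters $\gamma'=2\gamma+10\epsilon_l$, $\epsilon'=\epsilon_u$, but the algebra and the final weakening from $\epsilon_u$ to $2\epsilon_u$ in the exponent are the same as yours; your observation about the dimensional sloppiness in $\epsilon_u$ is also present (and unresolved) in the paper.
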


\begin{proof}
For easier notation, we drop the index and denote $x=x_{i}$ to simplify notation and analogously fix $\mathcal{C}:=\mathcal{C}_{i}, \mathcal{F}:=\mathcal{F}_i$, and $\delta':=\delta'_{x_i}$.
    Consider a subset of $\tau$ centers $\mathcal{X'} := \{x_1, \dots, x_{\tau}\}$ and let $\mathcal{Z} := \{\delta_1, \dots, \delta_\tau\}$ be a realization of their random shifts.
Further, we define the value
	\begin{align*}
	    \rho_{\mathcal{Z}} := \frac{1}{\mathcal{D}}\cdot\left(d_{G}(x,v)+\mathbf{max}_{j<\tau}\left\{\delta'_{x_{j}}\cdot \mathcal{D}-d_{G}(x_{j},v)\right\} \right)    
	\end{align*}
Further, define $\Upsilon'_{\mathcal{Z}}=\mathcal{D}\cdot\rho_{\mathcal{Z}}-\mathcal{D}\cdot\delta'$ as the difference between $x$ and the closest center. 
Note that $\Upsilon'_{\mathcal{Z}}$ is a random variable that only depends on the shift $\delta$ (as everything else is fixed by conditioning on $\mathcal{Z}$).
To prove the Lemma, we will use that the following holds for any value $\alpha \geq 0$:
\begin{align*}
\pr{\Upsilon'_{\mathcal{Z}} \leq -\alpha\mathcal{D} } = \pr{\mathcal{D}(\rho_\mathcal{Z}-\delta') \leq -\alpha\mathcal{D} } = \pr{\delta' \geq \rho_\mathcal{Z} + \alpha }    
\end{align*}
We will now use the law of total probability to prove the lemma.
Denote by $f$ the density function of the distribution over all possible values of $\delta'$.
By the law of total probability, it holds:
    \begin{align*}
         \pr{\mathcal{C} \mid \mathcal{Z}} &:= \int_{y=0}^1 \pr{\mathcal{C}_{\mathcal{Z}} \mid \delta = y} f(y) dy\\
         &\leq \int_{y=\rho-\epsilon_u}^{\min\left\{ 1,\rho+2(\gamma+5\epsilon_l)\right\}} \pr{\mathcal{C}_{\mathcal{Z}} \mid \delta = y} f(y) dy\\ 
         &\leq \int_{y=\rho-\epsilon_u}^{\min\left\{ 1,\rho+2(\gamma+5\epsilon_l)\right\}} f(y) dy\\ 
         &=\int_{\rho-\epsilon_u}^{\min\left\{ 1,\rho+2(\gamma+5\epsilon_l)\right\} }\frac{\lambda\cdot e^{-\lambda y}}{1-e^{-\lambda}}dy
    \end{align*}
We can simplify this statement using some fundamental calculations and the definition of the truncated exponential function.
It holds:
\begin{claim}
\label{lemma:c_bound}
Consider a random variable $\delta' \sim \mathsf{Texp}(\lambda)$ drawn from a truncated exponential distribution with parameter $\lambda>0$.
For values $\rho,\gamma',\epsilon' > 0$, it holds:
\begin{align*}
    \int_{\rho-\epsilon'}^{\min\left\{ 1,\rho+\gamma'\right\} }\frac{\lambda\cdot e^{-\lambda y}}{1-e^{-\lambda}}dy \leq \left(1-e^{-\lambda(\gamma'+\epsilon')}\right)\cdot e^{2\epsilon'\lambda} \cdot\left(\pr{\delta \geq \rho+\epsilon'}+\frac{1}{e^{\lambda}-1}\right)~
\end{align*}
\end{claim}
\begin{proof}
The proof follows directly from the definition of $\delta$.
First, we note that it holds that
	\begin{align}
	     \pr{\delta>\rho+\epsilon'}
=\int_{\rho+\epsilon'}^{1}\frac{\lambda\cdot e^{-\lambda y}}{1-e^{-\lambda}}dy
	=\frac{e^{-(\rho+\gamma')\cdot\lambda}-e^{-\lambda}}{1-e^{-\lambda}}~.
	\end{align}
On the other hand, it holds:
	\begin{align*}
	\pr{\rho-\epsilon' \le\delta\le\rho+\gamma'}&=\int_{\rho-\epsilon'}^{\min\left\{ 1,\rho+\gamma'\right\} }\frac{\lambda\cdot e^{-\lambda y}}{1-e^{-\lambda}}dy\\
	& \le\frac{e^{-(\rho-\epsilon')\cdot\lambda}-e^{-\left(\rho+\gamma'\right)\cdot\lambda}}{1-e^{-\lambda}}\\
 & =\frac{e^{-(\rho-\epsilon')\cdot\lambda}-e^{-\left(\rho - \epsilon' + \epsilon'+\gamma'\right)\cdot\lambda}}{1-e^{-\lambda}}\\
  & =\frac{e^{-(\rho-\epsilon')\cdot\lambda}-e^{-\left(\rho - \epsilon')\cdot\lambda - (\epsilon'+\gamma'\right)\cdot\lambda}}{1-e^{-\lambda}}\\
	 &=\left(1-e^{-(\gamma'+\epsilon')\cdot\lambda}\right)\cdot\frac{e^{-(\rho-\epsilon')\cdot\lambda}}{1-e^{-\lambda}}\\
  &=\left(1-e^{-(\gamma'+\epsilon')\cdot\lambda}\right)\cdot\frac{e^{-(\rho+\epsilon'-2\epsilon')\cdot\lambda}}{1-e^{-\lambda}}\\
    &=\left(1-e^{-(\gamma'+\epsilon')\cdot\lambda}\right)\cdot e^{2\epsilon'\lambda}\cdot\frac{e^{-(\rho+\epsilon')\cdot\lambda}}{1-e^{-\lambda}}\\
      &=  \left(1-e^{-(\gamma'+\epsilon')\cdot\lambda}\right)\cdot e^{2\epsilon'\lambda}\cdot\frac{e^{-(\rho+\epsilon')\cdot\lambda}-e^{-\lambda}+e^{-\lambda}}{1-e^{-\lambda}}\\
    &=  \left(1-e^{-(\gamma'+\epsilon')\cdot\lambda}\right)\cdot e^{2\epsilon'\lambda}\cdot\left(\frac{e^{-(\rho+\epsilon')\cdot\lambda}-e^{-\lambda}}{1-e^{-\lambda}} + \frac{e^{-\lambda}}{1-e^{-\lambda}}\right) \\
    &= \left(1-e^{-(\gamma'+\epsilon')\cdot\lambda}\right)\cdot e^{2\epsilon'\lambda}\cdot \left(\pr{\delta>\rho+\epsilon'} + \frac{e^{\lambda}e^{-\lambda}}{e^{\lambda}(1-e^{-\lambda})}\right)\\
     &= \left(1-e^{-(\gamma'+\epsilon')\cdot\lambda}\right)\cdot e^{2\epsilon'\lambda}\cdot \left(\pr{\delta>\rho+\epsilon'} + \frac{1}{e^{\lambda}-1}\right)
	\end{align*}
Thus, the claim follows.
\end{proof}
Now, we set $\gamma' = 2\gamma+10\epsilon_l$ and $\epsilon' = \epsilon_u$ in our formula.
We get:
    \begin{align}
         \pr{\mathcal{C} \mid \mathcal{Z}} &\leq \left(1-e^{-2\lambda(\gamma+5\epsilon_l + \epsilon_u)}\right)\cdot e^{2\lambda\epsilon_u} \cdot\left(\pr{\delta \geq \rho+\epsilon_u}+\frac{1}{e^{\lambda}-1}\right)~
    \end{align}

To conclude the proof, denote by $f$ the density function of the distribution over all possible values of $\mathcal{Z}$.
 Using the law of total probability, we can bound the probability for event $\mathcal{C}$ as follows:
	\begin{align}
	\pr{\mathcal{C}} &=\int_{\mathcal{Z}}\pr{\mathcal{C}\mid \mathcal{Z}}\cdot f(\mathcal{Z})~d\mathcal{Z}\\
 &\leq \int_{\mathcal{Z}}\left(1-e^{-2\lambda(\gamma+5\epsilon_l + \epsilon_u)}\right)\cdot e^{2\lambda\epsilon_u} \cdot\left(\pr{\delta \geq \rho+\epsilon_u}+\frac{1}{e^{\lambda}-1}\right)\cdot f(\mathcal{Z})~d\mathcal{Z}\\
  &\leq \left(1-e^{-2\lambda(\gamma+5\epsilon_l + \epsilon_u)}\right)\cdot e^{2\lambda\epsilon_u} \cdot \int_{\mathcal{Z}}\left(\pr{\delta \geq \rho+\epsilon_u}+\frac{1}{e^{\lambda}-1}\right)\cdot f(\mathcal{Z})~d\mathcal{Z}\\
	 &\le\left(1 - e^{-2(\gamma+5\epsilon+\epsilon_u)\lambda}\right)\cdot e^{2\epsilon_u\lambda}\cdot\int_{\mathcal{Z}}\left(\pr{\mathcal{F}\mid \mathcal{Z}}+\frac{1}{e^{\lambda}-1}\right)\cdot f(\mathcal{Z})~d\mathcal{Z}\\
	 &=\left(1 - e^{-2(\gamma+5\epsilon+\epsilon_u)\lambda}\right)\cdot e^{2\epsilon_u\lambda}\cdot\left(\pr{\mathcal{F}}+\frac{1}{e^{\lambda}-1}\right)\label{eq:cisf}
	\end{align}
 This proves the lemma.
\end{proof}
Using this lemma, we see that it holds:
\begin{align*}
\sum_{i=1}^{|N_{v}|}\pr{\mathcal{C}_{i}}& \le\left(1 - e^{-2(\gamma+5\epsilon_l + \epsilon_u)\lambda}\right)\cdot e^{2\epsilon_u\lambda}\cdot\sum_{i=1}^{|N_{v}|}\left(\pr{\mathcal{F}_{i}}+\frac{1}{e^{\lambda}-1}\right)\\
& \le\left(1 - e^{-2(\gamma+5\epsilon_l + \epsilon_u)\lambda}\right)\cdot e^{2\epsilon_u\lambda}\cdot\left(1+\frac{\tau}{e^{\lambda}-1}\right)&\text{As $\sum \pr{\mathcal{F}_i}=1$}
\end{align*}
We can finalize the proof by carefully considering the possible values of $\gamma, \epsilon_l$ and $\epsilon_u$.
As $\epsilon_u = \nicefrac{1}{n^c}$ with $c>1$ and $\lambda = 2\log \tau + 2 \leq 2\log n + 2$, we can assume $2\epsilon_u\lambda < \nicefrac{1}{40\lambda} \leq 1.79$ for a large enough $n$.
This allows us to apply the well-known inequality $e^x \leq 1 + x + x^2$ for $x \leq 1.79$ and get
\begin{align}
\label{eqn:e12_bound}
    e^{2\epsilon_u\lambda} \leq 1 + 2\epsilon\lambda + (2\epsilon\lambda)^2 \leq 1 + 4\epsilon\lambda
\end{align}
Further, we picked $\gamma \leq \nicefrac{1}{8}$ and $\epsilon \leq \nicefrac{1}{40}$ small enough such that $\gamma+5\epsilon_l + \epsilon_u \leq \nicefrac{1}{4}$.
Given that observation, we see that it holds:
\begin{align*}
e^{-2(\gamma+5\epsilon_l + \epsilon_u)\lambda}&=\frac{e^{-2(\gamma+5\epsilon_l + \epsilon_u)\lambda}\left(e^{\lambda}-1\right)}{e^{\lambda}-1} \ge\frac{e^{-2(\gamma+5\epsilon_l+\epsilon_u)\lambda}\cdot e^{\lambda-1}}{e^{\lambda}-1}\\
&\ge
\frac{e^{-2(\nicefrac{1}{4})\lambda}\cdot e^{\lambda-1}}{e^{\lambda}-1}
\geq 
\frac{e^{\frac{\lambda}{2}-1}}{e^{\lambda}-1} 
\end{align*}
Here, we used the fact that $e^{x-1} \leq e^{x}-1$ for $x > 2$, which can be easily verified.
By our choice of $\lambda := 2\log(\tau) + 2$, it furthermore holds:
\begin{align}
      e^{-2(\gamma+5\epsilon_l + \epsilon_u)\lambda} \ge\frac{e^{\frac{\lambda}{2}-1}}{e^{\lambda}-1}  \geq \frac{e^{\frac{2\log(\tau)+2}{2}-1}}{e^{\lambda}-1}  = \frac{e^{\log\tau}}{e^{\lambda}-1}   \geq  \frac{\tau}{e^{\lambda}-1}\label{eq:taulambda}
\end{align}   
Therefore, we can simplify our formula as follows:
\begin{align*}
\sum_{i=1}^{|N_{v}|}\pr{\mathcal{C}_{i}} & \le\left(1 - e^{-2(\gamma+5\epsilon_l + \epsilon_u)\lambda}\right)\cdot e^{2\epsilon_u\lambda}\cdot\left(1+\frac{\tau}{e^{\lambda}-1}\right)\\
 \text{ By Ineq. \eqref{eqn:e12_bound}}: \\
& \le \left(1 + 4\lambda\epsilon_u\right)\cdot \left(1 - e^{-2(\gamma+5\epsilon_l + \epsilon_u)\lambda}\right)\left(1+\frac{\tau}{e^{\lambda}-1}\right)\\
\text{ By Ineq.\eqref{eq:taulambda}}:\\
&\le \left(1 + 4\lambda\epsilon_u\right)\cdot \left(1 - 
e^{-2(\gamma+5\epsilon_l + \epsilon_u)\lambda}\right)\left(1+e^{-2(\gamma+5\epsilon_l+\epsilon_u)\cdot\lambda}\right)\\
\text{ As } (1+x)(1-x) &= 1^2-x^2:\\
&=\left(1+4\lambda\epsilon_u\right) \left(1-e^{-4(\gamma+5\epsilon_l + \epsilon_u)\cdot\lambda}\right)\\
&\le 1-e^{-4(\gamma+5\epsilon_l + \epsilon_u)\cdot\lambda} + 4\lambda\epsilon_u \\
&\le 1-e^{-4(\gamma+5\epsilon + 5\epsilon_u + \epsilon_u)\cdot\lambda} + 4\lambda\epsilon_u \\
&= 1-e^{-4(\gamma+5\epsilon)\cdot\lambda}\cdot e^{-24\epsilon_u\cdot\lambda} + 4\lambda\epsilon_u \\
\text{As $e^{-x} \leq 1-x/2$:}\\
&\leq 1-e^{-4(\gamma+5\epsilon)\cdot\lambda}\cdot (1-12\epsilon_u\cdot\lambda) + 4\lambda\epsilon_u \\
&\leq 1-e^{-4(\gamma+5\epsilon)\cdot\lambda} + 12 \epsilon_u\cdot\lambda + 4\lambda\epsilon_u \\
&\leq 1-e^{-4(\gamma+5\epsilon)\cdot\lambda} + 16 \epsilon_u\cdot\lambda
\end{align*}
Putting this bound on $\sum_{i=1}^{|N_{v}|}\pr{\mathcal{C}_{i}}$ back in our initial inequality gives us the following final approximation of our desired probability:
\begin{align*}
    \pr{\exists x_i \in N_v : \Upsilon_i \leq -(2\gamma+10\epsilon)\cdot\mathcal{D} } & \geq 1 - \sum_{i=1}^\tau \pr{\mathcal{C}_{i}} \\
    & \geq 1 - \left(1-e^{-4(\gamma+5\epsilon)\cdot\lambda} + 16\lambda\epsilon\right)\\
    & = e^{-4(\gamma+5\epsilon)\cdot\lambda}  + 16\lambda\epsilon_u\\
    \text{Recalling that }\lambda = 2\log\tau + 2 \leq 4\log\tau:\\
& \geq e^{-4(\gamma+5\epsilon)\cdot(4\log\tau)}  - 16(4\log\tau )\epsilon_u\\
& \geq e^{-16(\gamma+5\epsilon)\cdot(\log\tau)}  - 64\epsilon_u\log\tau\\
& = e^{-16(\gamma+5\epsilon)\cdot(\log\tau)}  - \frac{64\log\tau}{n^c}
\end{align*}
This proves Lemma \ref{lemma:random_shift}.

\newpage

\section{Full Analysis of \autoref{thm:genericldd}}
\label{sec:appendix_ldc}

We now present the algorithm behind Theorem \ref{thm:genericldd} in more detail.
As input, we are given a weighted graph $G=(V,E,\ell)$ with polynomially bounded weights, a diameter bound $\mathcal{D}$, and the centers $\mathcal{X} \subseteq V$. 
On a high level, the algorithm consists of four stages: In the first stage, we compute an initial decomposition by the algorithm from Theorem \ref{theorem:generalpartition}.
We want to use the blurry ball growing from Lemma \ref{lemma:bbg} on these clusters to get the desired cut probabilities.  
However, we cannot directly use thems as input sets because the resulting balls would overlap.
Instead, we have to ensure somehow that the input sets are separated from each other. 
To this end, we shrink the clusters to avoid overlaps. 
This works in two steps:
First, by identifying all nodes in sufficient distance to other clusters, and then shrinking the clusters in a specific way (which differs for the two the diameter guarantees).
Finally, we get the desired decomposition by using a blurry ball growing on each shrunk cluster.
As a parameter for the blurry ball growing, we choose 
\begin{align}
    \rho := \left(1-\alpha\right)\cdot\left(\frac{\mathcal{D}}{\cblur\log(\tau)}\right)
\end{align}
Here, $\alpha \in O\left(\nicefrac{\log \log n}{\log n}\right)$ is the value from Lemma \ref{lemma:bbg} and $\cblur$ is a (large) constant that will be determined in the analysis.
Further, we fix
\begin{align}
    \epsilon := \frac{1}{1024\cdot\log \tau}
\end{align}
as the parameter for approximate \SetSSP computations.

\newpage
In more detail, the four stages are as follows:

\begin{mdframed}
    \paragraph*{\textbf{(Step 1) Compute a Pseudo-Padded Decomposition:}}
First, we create a pseudo-padded decomposition using the algorithm from Theorem \ref{theorem:generalpartition}.
\paragraph*{\textbf{(Step 2) Compute (Approximate) Distance to Boundary:}}
Each node computes the approximate distance to the closest node in a different cluster.
This works in two substeps
\begin{itemize}
    \item {\textbf{(Substep 2a) Identify Nodes on Boundary:}} All the nodes of a cluster that are on the boundary of that cluster, i.e., nodes that are adjacent to another cluster, compute their (approximate) distance to the neighboring cluster.
    A node can determine whether it is on the boundary by checking its neighbors.
    Let $v \in V$ be a node on the boundary and let $N'_v$ be a subset of its neighbors, s.t., each $w \in N'_v$ is a neighbor in a different cluster.
    Then, node $v$ computes:
    \begin{align}
        \omega(v) = \min_{w \in N_v'} d(v,w) 
    \end{align}
    \item {\textbf{(Substep 2b) Compute (Approximate) Distance to Boundary:}} Finally, we create another virtual source $s'$ and all nodes $v$ on the boundary create an edge of length $\omega(v)$.
    Then perform an $(1+\epsilon)$-approximate shortest path from $s'$ and let each node compute $d_T(s',v)$. 
\end{itemize}
\paragraph*{\textbf{(Step 3) Identify Well-Separated Nodes:}} We identify \emph{connected} sets of nodes with large distance to all other connected sets. This works in two substeps:
\begin{itemize}
    \item \textbf{(Substep 3a) Find Nodes With Sufficient Distance to Boundary:} Each nodes examines the distance $d_T(s',v)$ computed in the previous step. Suppose it holds
\begin{align*}
    d_T(s',v) \geq \left(\frac{\rho}{1-\alpha} + \epsilon\mathcal{D} \right)
\end{align*}
we mark $v$ as \emph{active} and set
\begin{align*}
    V_a := \left\{v \in V \mid d_T(v,s') \geq \left(\frac{\rho}{1-\alpha} + \epsilon\mathcal{D} \right) \right\}
\end{align*}
\item \textbf{(Substep 3b) Find Connected Set of Active Nodes:}
Let $G[V_a]$ be the graph induced by the active nodes.
A node can locally decide which edges are part of this graph by checking which neighbors are active.
Let $\mathcal{X}_a = \mathcal{X} \cap V_a$ be the set of active centers.
Perform $(1+\epsilon)$-approximate \SetSSP from $\mathcal{X}_a$ in $G[V_a]$.
Let $T_a$ be the resulting approximate SSSP tree.
Then, if and only if an active node has a path of length $3\mathcal{D}$ to its cluster center in $T_a$, it remains active.
We denote the remaining active nodes as $V'_a$.

\end{itemize}
\paragraph*{\textbf{(Step 4) Execute Blurry Ball Growing:}}
We execute a blurry ball growing on the active set $V'_a$ with parameter $\rho = \left(1-\alpha\right)\cdot\left(\frac{\mathcal{D}}{\cblur\log(\tau)}\right)$.
As per Lemma \ref{lemma:bbg}, we obtain a superset $S(V'_a)$ of $V'_a$.

\paragraph*{\textbf{(Step 5) Create a Spanning Tree:}}
We perform one last approximate set-source SSSP from all centers $x \in \mathcal{X} \cap V'_a$.
Each node $v \in V'_a$ adds itself to the cluster of the closest center.
We add all these the refined clusters to our decomposition.
\end{mdframed}

\medskip

Note that the algorithm of Becker, Emek, and Lenzen only differs in Substep (3b), which is simply absent in their algorithm.
Instead they perform BBG on the active nodes computed in in Substep (3a).
However, this is exactly the additional step which ensures that the resulting clusters are connected and therefore have a strong diameter.

\subsection{Analysis} 
\label{sec:ldc_analysis}

We now prove Theorem \ref{thm:genericldd}. 
The proof is divided into three parts.
First, prove that the cluster's diameter is bounded by $8\mathcal{D}$ in Lemma \ref{lemma:clustering_diameter}.
After that, we analyze the cutting probabilities in Lemma \ref{lemma:clustering_cut} and show that each edge is cut with probability at most $O\left(\frac{\alpha\ell}{\mathcal{D}}\right)$.
Finally, we show the algorithm's complexity in Lemma \ref{lemma:clustering_complexity}.
Together, Lemmas \ref{lemma:clustering_diameter}, \ref{lemma:clustering_cut}, and \ref{lemma:clustering_complexity} prove Theorem \ref{thm:genericldd}.

\paragraph*{\textbf{Cluster Diameter:}}
To start off the analysis, we show that the algorithm creates a clustering of diameter $8\mathcal{D}$.
This follows more of less directly from the construction.

\begin{lemma}[Diameter Guarantee]
\label{lemma:clustering_diameter}
    Each cluster created in the last step has strong diameter at most $8\mathcal{D}$.
\end{lemma}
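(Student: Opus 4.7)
The plan is to establish, for each final cluster $C_i$ and each node $u \in C_i$, a path inside $C_i$ from $u$ to the cluster center $x_i$ of length at most $4\mathcal{D}$. Applying the triangle inequality to two such paths then immediately yields the claimed $8\mathcal{D}$ strong diameter.

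I would split this $4\mathcal{D}$ path into two legs corresponding to the two layers of the construction: the \emph{inner} layer $K'_i := V'_a \cap K_i$ of nodes surviving Substep~3b, and the \emph{halo} $C_i \setminus K'_i$ added by the blurry ball growing of Step~4. For an inner node $u \in K'_i$, the definition of $V'_a$ together with the tree $T_a$ from Substep~3b supplies a path of length at most $3(1+\epsilon)\mathcal{D}$ from $u$ to $x_i$; every intermediate vertex $w$ on this path is itself in $V'_a$, because the tail-subpath from $w$ to $x_i$ has length at most $3\mathcal{D}$ and witnesses $w \in V'_a$. For a halo node $u \in C_i \setminus K'_i$, Lemma~\ref{lemma:bbg} supplies a path of length at most $\rho/(1-\alpha) \leq 2\rho$ from $u$ into $V'_a$ staying within the grown set $S(V'_a)$. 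Concatenating the two legs and plugging in $\rho = (1-\alpha)\mathcal{D}/(\cblur \log\tau)$ with $\cblur$ a sufficiently large constant, together with $\epsilon \leq 1/(1024\log\tau)$, bounds the total length by $2\rho + 3(1+\epsilon)\mathcal{D} \leq 4\mathcal{D}$.

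The main obstacle will be arguing that neither leg crosses into a foreign pseudo-padded cluster $K_j$, as a crossing would take the path outside of $C_i \subseteq K_i$. My approach is to leverage the separation property enforced in Substep~3a: every node in $V_a \cap K_i$ lies at graph-distance at least $\rho/(1-\alpha)$ from every $K_j$ with $j \neq i$. For the short BBG leg, of length at most $2\rho$, this separation directly rules out any crossing, since crossing would require traversing an edge of length at least $\rho/(1-\alpha) > 2\rho$ for the $\alpha = O(\log\log n/\log n)$ of Lemma~\ref{lemma:bbg}. For the longer inner leg, any round-trip detour through a foreign cluster costs at least $2\rho/(1-\alpha)$ in additional length; combining this cost with the existence of a genuinely short, $K_i$-internal path of length at most $2(1+\epsilon)\mathcal{D}$ from the strong-diameter guarantee of the underlying pseudo-padded decomposition (Theorem~\ref{theorem:generalpartition}) rules out any detour once $\cblur$ is chosen large enough. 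Once both legs are confirmed to lie inside $K_i$ and inside their respective grown or tree-based structure, they together lie in $C_i$, and the strong diameter bound follows.
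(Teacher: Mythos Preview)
Your overall two-leg structure --- halo leg via BBG, inner leg via the tree $T_a$ --- matches the paper's approach, but both containment arguments contain errors. For the halo leg, the inequality $\rho/(1-\alpha) > 2\rho$ that you invoke is simply false: with $\alpha \in O(\log\log n/\log n)$ one has $\alpha < 1/2$, whence $\rho/(1-\alpha) < 2\rho$. Moreover, your premise that a cluster-crossing edge on the halo leg must itself have length at least $\rho/(1-\alpha)$ is unjustified, since the Substep~3a separation applies only to \emph{active} nodes, while a crossing could occur between two non-active halo nodes. The paper instead applies the separation at the \emph{active endpoint} $v^\star \in V'_a$ of the halo leg (its Claim): every node within distance $\rho/(1-\alpha)$ of $v^\star$ lies in the same pseudo-padded cluster as $v^\star$, and since the halo leg has length at most $\rho/(1-\alpha)$, the entire leg lies in $K_i$.

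For the inner leg, your detour-cost comparison invokes a $K_i$-internal path of length at most $2(1+\epsilon)\mathcal{D}$ coming from Theorem~\ref{theorem:generalpartition}, but that path may run through \emph{inactive} nodes and hence need not exist in $G[V_a]$, the graph in which $T_a$ is computed; so it cannot serve as a comparison baseline for paths in $T_a$. In addition, making $\cblur$ large makes $\rho$ --- and with it the detour penalty $2\rho/(1-\alpha)$ --- \emph{smaller}, so ``$\cblur$ large enough'' goes the wrong direction for ruling out detours. The paper does not supply a detailed argument for this step either; it simply asserts that the $3\mathcal{D}$ path is ``completely contained in the cluster''. A clean way to make the step rigorous is to bypass the question and use the Step~5 SSSP tree directly: its path from each clustered node to the nearest active center lies inside the final cluster by the standard Voronoi-cell argument, and its length is at most $(1+\epsilon)\bigl(3\mathcal{D}+\rho/(1-\alpha)\bigr) \le 4\mathcal{D}$, which yields the $8\mathcal{D}$ strong diameter immediately.
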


First, we note the diameter guarantee follows \textbf{if} the blurry ball growing process never adds two nodes from different clusters to the same ball.
To prove this, let $A_i$ be the connected set of active nodes in the cluster of $x_i$. 
Note that we enforce that all active nodes have a path of length $3\mathcal{D}$ to a center that is completely contained in the cluster.
Thus, the claim already follows for the nodes in $A_i$.
Further, recall that for all nodes $w$ that are added to the cluster through the blurry ball growing, it holds:
\begin{align}
 d(w,A_i) = d(w,V'_a) \leq \frac{\rho}{1-\alpha} = \frac{\left(1-\alpha\right)\cdot\left(\frac{\mathcal{D}}{\cblur\log(\tau)}\right)}{1-\alpha}= \left(\frac{\mathcal{D}}{\cblur\log(\tau)}\right) \leq \mathcal{D}
\end{align}
Now consider $v,w \in V$ to be two nodes added to cluster $K_i$ of center $x_i \in \mathcal{X}$.
Then, it holds:
\begin{align}
d_{K_i}(v,w) &\leq d_{G}(v,A_i) + \mathbf{max}_{v',w' \in A_i} d_{G}(v,V'_a) + d_{G}(v,A_i)\\
&\leq \mathcal{D} + d_{G}(v',x_i) + d_{G}(w',x_i) + \mathcal{D}\\
&\leq 2(3\mathcal{D}+\mathcal{D}) = 8\mathcal{D}.
\end{align}
Thus, to prove the diameter guarantee, we must argue why we can safely execute blurry balls growing from the active nodes without adding active nodes of a different cluster.
\begin{claim}
    For each active node $v \in V_a$, all nodes in distance $\frac{\rho}{1-\alpha}$ are part of the same cluster. 
\end{claim}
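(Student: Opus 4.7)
The plan is to argue that the activation threshold in Substep (3a) was tuned precisely so that an active node's \emph{exact} distance to the nearest edge leaving its cluster strictly exceeds $\tfrac{\rho}{1-\alpha}$; the claim then follows by one triangle-inequality step. The key structural fact is that $s'$ is attached to every boundary node $a$ by a virtual edge of length $\omega(a)$, the smallest length of any edge from $a$ to a different cluster. Hence, in the auxiliary graph, the true distance $d(s',u)$ from any node $u$ equals the minimum, over paths from $u$ leaving its cluster, of the length of the prefix up to (and including) the first cross-cluster edge.

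The first step will be to convert the approximate lower bound encoded in the definition of $V_a$ into an exact one. Since $v \in V_a$ satisfies $d_T(s',v) \geq \tfrac{\rho}{1-\alpha} + \epsilon\mathcal{D}$ and $T$ is a $(1+\epsilon)$-approximate SSSP,
\begin{align*}
d(s',v) \;\geq\; \frac{d_T(s',v)}{1+\epsilon} \;\geq\; \frac{\tfrac{\rho}{1-\alpha} + \epsilon\mathcal{D}}{1+\epsilon}.
\end{align*}
Substituting $\rho = (1-\alpha)\cdot\tfrac{\mathcal{D}}{\cblur \log \tau}$, the numerator minus $(1+\epsilon)\cdot\tfrac{\rho}{1-\alpha}$ simplifies to $\epsilon\bigl(\mathcal{D} - \tfrac{\mathcal{D}}{\cblur \log \tau}\bigr)$, which is strictly positive because $\cblur > 8$ and $\log\tau \geq 1$. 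This yields the strict inequality $d(s',v) > \tfrac{\rho}{1-\alpha}$.

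The second step rules out a nearby cross-cluster node. Suppose, for contradiction, that some $u$ with $d(v,u) \leq \tfrac{\rho}{1-\alpha}$ lies in a different cluster than $v$. Along a shortest $v$--$u$ path in $G$, let $a$ be the last node still in $v$'s cluster and let $b$ be its successor. Then $a$ is a boundary node, $\omega(a) \leq \ell_{\{a,b\}}$ by definition of $\omega$, and concatenating the virtual edge $(s',a)$ with the $v$-to-$a$ prefix gives a $v$--$s'$ path of length at most $d(v,a) + \omega(a) \leq d(v,u) \leq \tfrac{\rho}{1-\alpha}$, contradicting Step~1.

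The main obstacle is purely arithmetic: one has to verify that the additive slack $\epsilon\mathcal{D}$ in the activation threshold strictly dominates the multiplicative error from the approximate SSSP, i.e.\ that $\epsilon\mathcal{D} > \epsilon \cdot \tfrac{\rho}{1-\alpha}$. This is precisely why the slack is $\epsilon\mathcal{D}$ (and not, say, $\epsilon\rho$) in Substep (3a), and why $\cblur$ is chosen with room to spare; once this comparison is made, the remainder of the argument is a routine application of the triangle inequality in the auxiliary graph.
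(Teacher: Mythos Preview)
Your proof is correct and follows essentially the same approach as the paper: both exploit that the virtual edge from $s'$ to a boundary node $a$ has length $\omega(a)\le \ell_{\{a,b\}}$, so a nearby cross-cluster node would force $d(s',v)\le \tfrac{\rho}{1-\alpha}$, and both then use $\tfrac{\rho}{1-\alpha}\le\mathcal{D}$ to show that the activation threshold $\tfrac{\rho}{1-\alpha}+\epsilon\mathcal{D}$ compensates for the $(1+\epsilon)$ approximation error. The only cosmetic difference is that you derive a strict inequality $d(s',v)>\tfrac{\rho}{1-\alpha}$ first and then contradict it, whereas the paper starts from the assumed cross-cluster node and derives the non-strict bound $d_{G_s}(s,v)\ge\tfrac{\rho}{1-\alpha}$ at the end.
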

\begin{proof}
    For contradiction, assume there is node $u \in V$ distance smaller than $\frac{\rho}{1-\alpha}$ to $v$ that is in a different cluster.
    In particular, let $u$ be the closest such node.
    Clearly, if we prove a contradiction for $u$, it holds for all other nodes, too.
    
    We begin with the following observation:
    As there can be no other node in a different cluster on the exact shortest path from $u$ to $v$ in $G$ (as this node would closer to $v$ than $u$),
    there must be boundary node $w \in V$, s.t., the following two conditions hold:
    \begin{enumerate}
        \item $w$ is on the path from $u$ to $v$. Therefore, it holds $d(u,v) = d(u,w) + d(w,v)$.
        \item $w$ adds a virtual edge of length $\omega(w) \leq d(u,w)$ to the virtual source $s$.
    \end{enumerate}
    Together, these two facts imply:
    \begin{align}
        d_{G_s}(s,v) \leq d_{G}(u,v) \label{eqn:s_closer_u}
    \end{align}
    Here, $G_s$ is the graph we obtain by adding the (virtual) supernode to $G$.
    Therefore, it holds:
    \begin{align*}
        \frac{\rho}{1-\alpha} &> d_G(u,v) & \rhd \textit{ Per Assumption }\\
        &\geq d_{G_s}(s,v) & \rhd \textit{ By Ineq. }\eqref{eqn:s_closer_u}
\end{align*}
Now recall that $v$ is an active node.
Otherwise, it would not be part of the set that is blurred.
Thus, it holds
\begin{align}
    d_T(s,v) \geq \frac{\rho}{1-\alpha} + \epsilon\mathcal{D}
\end{align}
by the definition of active nodes.
Therefore, it holds
\begin{align*}
 d_{G_s}(s,v) &\geq \frac{d_T(s,v)}{1+\epsilon} & \rhd \textit{ As } d_T(s,v) \leq (1+\epsilon)d_{G_s}(s,v)\\
        &\geq \frac{\left(\frac{\rho}{1-\alpha} + \epsilon\mathcal{D} \right)}{1+\epsilon}  & \rhd \textit{ As $v$ is active} \\
        &\geq \frac{\left(1 + \epsilon \right)\frac{\rho}{1-\alpha}}{1+\epsilon}& \rhd \textit{ As } \frac{\rho}{1-\alpha} \leq \mathcal{D} \\
        &\geq \frac{\rho}{1-\alpha}
    \end{align*}
This is a contradiction as this implies:
\begin{align*}
    \frac{\rho}{1-\alpha} >  d_{G_s}(s,v) \geq \frac{\rho}{1-\alpha}
\end{align*}
So, node $u$ cannot exist, which proves the lemma.
\end{proof}
Therefore, it is safe to execute the ball growing on a subset of active nodes without risking overlapping with other clusters.
Thus, all balls must have a diameter of $8\mathcal{D}$

\paragraph{\textbf{Edge-Cutting Probability}}
Next, we consider the probability of cutting edges.
Altogther, we want to show that the following holds:
\begin{lemma}
\label{lemma:clustering_cut}
An edge of length $\ell$ is cut with probability at most $O\left(\frac{\ell\cdot\log \tau}{\mathcal{D}}\right)$.
\end{lemma}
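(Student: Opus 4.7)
The plan is to decompose the event that $\{v,w\}$ is cut in the final clustering into two disjoint mechanisms and bound each by $O(\ell\log\tau/\mathcal{D})$ separately. The first mechanism is that the blurry ball growing of Step~4 separates the edge, i.e., exactly one of $v,w$ ends up in $S(V'_a)$. The second is that both endpoints survive the BBG but the initial pseudo-padded decomposition has already placed them in different clusters $K_i\neq K_j$. By the proof of \autoref{lemma:clustering_diameter}, the BBG never reaches outside the pseudo-padded cluster of any active node, so whenever a node lies in $S(V'_a)\cap K_i$ it is assigned to the final cluster inside $K_i$; hence these two mechanisms exhaust every way the edge can be cut.

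For the first mechanism I would apply \autoref{lemma:bbg} directly to the BBG run of Step~4, performed with $\rho = (1-\alpha)\mathcal{D}/(\cblur\log\tau)$. The lemma immediately yields $\pr{\text{exactly one of } v,w\in S(V'_a)} \leq O(\ell/\rho) = O(\ell\log\tau/\mathcal{D})$.

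For the second mechanism I would first establish a short geometric claim: this event is impossible unless $\ell = \Omega(\epsilon\mathcal{D})$. The reason is that $v\in S(V'_a)\cap K_i$ forces the existence of some active $u\in V'_a$ with $d(v,u)\leq \rho/(1-\alpha)$, and the diameter claim below \autoref{lemma:clustering_diameter} places $u$ inside $K_i$. Since $w\in K_j$ lies outside $K_i$, the active-node threshold of Step~3a forces $d(u,w)\geq \rho/(1-\alpha)+\Omega(\epsilon\mathcal{D})$ (after accounting for the $(1+\epsilon)$ factor of the approximate SSSP in Substep~2b), and the triangle inequality $d(u,w)\leq d(u,v)+\ell$ delivers the lower bound on $\ell$. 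For edges with $\ell$ below that threshold the second mechanism contributes nothing. For edges with $\ell = \Omega(\epsilon\mathcal{D})$, I instantiate \autoref{lemma:padding_property} with $\gamma := \ell/\mathcal{D}$: since $\gamma = \Omega(\epsilon)$, the $5\epsilon$ inside the exponent is absorbed into $\gamma$, yielding $\pr{K(v)\neq K(w)} \leq 1-e^{-16(\gamma+5\epsilon)\log\tau} = O(\gamma\log\tau) = O(\ell\log\tau/\mathcal{D})$. For $\gamma > 1/8$ the bound is trivial because the right-hand side is $\Omega(\log\tau)$ and any probability is at most $1$.

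The main obstacle I anticipate is the bookkeeping around the constants: the slack $\epsilon\mathcal{D}$ built into the active-node threshold of Step~3a must be large enough, after being eroded by the $(1+\epsilon)$ factor of the approximate SSSP, to yield $\ell \geq c\epsilon\mathcal{D}$ with a constant $c$ compatible with the hypothesis $\gamma \geq \epsilon$ in \autoref{lemma:padding_property}. Once these constants are aligned, a union bound over the two mechanisms closes the proof.
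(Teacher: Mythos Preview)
Your proof is correct, but it does more work than the paper's, and the extra work can be shortened.

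The paper's argument is a single sentence: it asserts that blurry ball growing is the only operation that cuts edges and applies \autoref{lemma:bbg} with the chosen $\rho$, yielding $O(\ell/\rho)=O(\ell\log\tau/\mathcal{D})$. It never discusses your ``second mechanism'' (both endpoints in $S(V'_a)$ but in different pseudo-padded clusters $K_i\neq K_j$) at all.

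Your identification of the second mechanism is a genuine clarification of a point the paper leaves implicit. However, the padding-property route you take to bound it is more than is needed. Your own geometric claim already shows the second mechanism forces $\ell=\Omega(\epsilon\mathcal{D})$; since the algorithm fixes $\epsilon=\Theta(1/\log\tau)$, this gives $\ell\log\tau/\mathcal{D}=\Omega(\epsilon\log\tau)=\Omega(1)$, and any probability is then trivially $O(\ell\log\tau/\mathcal{D})$. You do not need \autoref{lemma:padding_property} here at all. This is presumably the paper's (unstated) justification for collapsing the whole argument to the BBG bound.

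On the constant bookkeeping you flag as the main obstacle: the slack survives. With $A=\rho/(1-\alpha)$ one has, for the active node $u$,
\[
d_{G_{s'}}(s',u)\;\ge\;\frac{A+\epsilon\mathcal{D}}{1+\epsilon}\;=\;A+\frac{\epsilon(\mathcal{D}-A)}{1+\epsilon}\;\ge\;A+\tfrac{\epsilon\mathcal{D}}{4},
\]
since $A\le \mathcal{D}/(\cblur\log\tau)\le \mathcal{D}/2$. Combined with $d(u,v)\le A$ from BBG this gives $\ell\ge \epsilon\mathcal{D}/4$, which is enough for the trivial bound above (and, should you still want to invoke \autoref{lemma:padding_property}, you can apply it with $\gamma'=\max(\epsilon,\ell/\mathcal{D})$ and absorb the factor~$4$).
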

Recall that the only operation that cuts edges is the BBG procedure, and so at first glance, Lemma \ref{lemma:bbg} should immediately yield the result. 
However, this only holds true for a single iteration.
The main difficulty of this proof is that the algorithm runs multiple iterations until all nodes are clustered, and therefore, the BBG has many \emph{chances} to cut a given edge.
To handle this, we will exploit that each edge is cut or added to a cluster after at most $2$ iterations on expectation.

More precisely, the proof is structured as follows: 
First, we consider only a single iteration and show that each edge is cut with probability at most $O\left(\frac{\ell\cdot\log \tau}{\mathcal{D}}\right)$.
Further, we show that a node is added to a cluster with constant probability, and so all its edges must be gone after $O(\log n)$ iterations, w.h.p.
Then, we can use a (more or less) standard argument from the study of \LDD's. It roughly goes as follows: Any clustering algorithm $\mathcal{A}$ that cuts edges between cluster with some probability $\alpha$ and places a constant fraction of nodes into clusters, can be turned into an \LDD with quality $O(a)$.
The idea is simple and straightforward. Recursively apply the algorithm  until all nodes are clustered.
For the sake of completeness, we prove this general statement and use it finish our analysis.

We begin with the cutting probability of a single iteration.
Here, it holds:
\begin{lemma}
\label{lemma:clustering_cut_fixed_iteration}
\textbf{In a fixed iteration}, an edge of length $\ell$ is cut with probability at most $O\left(\frac{\ell\cdot\log \tau}{\mathcal{D}}\right)$.
\end{lemma}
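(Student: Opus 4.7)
The plan is to trace every random operation in one iteration of the algorithm that can contribute to cutting $\{v,w\}$ and argue that only the blurry ball growing (BBG) of Step~4 does so. I would reduce the LDC-cut event to a one-sided BBG cut and then directly invoke \autoref{lemma:bbg}. Throughout, it suffices to consider the case $v \in K_i$, $w \in K_j$ with $i \neq j$, since if both endpoints land in the same initial cluster the two are automatically placed in the same final cluster (or at least one fails to join any cluster, which is not counted as a cut in the LDC).

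First, I would use the definition of $V'_a$ to localize the BBG output inside the initial clusters. Substep~3a marks $v$ as active only when its approximate distance to the boundary exceeds $\rho/(1-\alpha) + \epsilon\mathcal{D}$; by the $(1+\epsilon)$-approximation guarantee the true distance then satisfies $d_G(v, V \setminus K_i) \geq \rho/(1-\alpha)$ whenever $v \in V'_a \cap K_i$. Since BBG only adds nodes within distance $\rho/(1-\alpha)$ of its input (\autoref{lemma:bbg}), every node in $S(V'_a) \cap K_i$ is reached from $A_i := V'_a \cap K_i$, and no node in any $K_j \neq K_i$ is ever reached from $A_i$. Consequently, $S(V'_a) \cap K_i$ coincides with the output $\text{BBG}(A_i)$ of running BBG on the restricted input set $A_i$ alone.

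Second, I would convert an LDC cut into a one-sided BBG cut. An LDC cut of $\{v,w\}$ forces $v \in S(V'_a) \cap K_i$ and $w \in S(V'_a) \cap K_j$ with $i \neq j$; by the previous paragraph this is equivalent to $v \in \text{BBG}(A_i)$ and $w \notin \text{BBG}(A_i)$, since $w \in K_j$ can never lie in the BBG output of $A_i \subseteq K_i$. Conditioning on the (random) outcome of the pseudo-padded decomposition, \autoref{lemma:bbg} applied with $S := A_i$ bounds $\Pr[\,v \in \text{BBG}(A_i),\ w \notin \text{BBG}(A_i)\,]$ by $O(\ell/\rho)$. Summing over the choice of $(i,j)$ weighted by $\Pr[v \in K_i, w \in K_j]$ and plugging in $\rho = (1-\alpha)\mathcal{D}/(\cblur\log\tau)$ with $1-\alpha = 1 - o(1)$ yields $\Pr[\text{LDC-cut}] = O(\ell\log\tau/\mathcal{D})$, as required.

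The step I expect to require the most care is the $\epsilon\mathcal{D}$-slack argument of Substep~3a: the additive slack must genuinely absorb the multiplicative $(1+\epsilon)$ error in the approximate distance-to-boundary computation, so that the clean true-metric separation $d_G(v, V \setminus K_i) \geq \rho/(1-\alpha)$ holds for every $v \in V'_a \cap K_i$. This reduces to a short triangle-inequality computation analogous to the one in the strong-diameter lemma, and it is the only point at which the shortest-path approximation enters the cut-probability analysis.
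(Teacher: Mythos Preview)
Your localization strategy --- showing $S(V'_a)\cap K_i=\text{BBG}(A_i)$ via the distance-to-boundary slack and then invoking \autoref{lemma:bbg} with $S=A_i$ --- is sound and, once the case analysis is corrected, does bound every cut event. But your case analysis has a genuine error. You write that if both endpoints land in the same initial cluster $K_i$, then either they end up in the same final cluster ``or at least one fails to join any cluster, which is not counted as a cut in the LDC.'' That parenthetical is false in this paper: the cut notion used downstream (see the proof of \autoref{lemma:folkloreldd}) explicitly counts ``one endpoint clustered, the other unclustered'' as a cut. Hence the same-initial-cluster case with $v\in S(V'_a)$ and $w\notin S(V'_a)$ \emph{is} a cut and cannot be discarded. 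Similarly, in the different-initial-cluster case a cut does \emph{not} force both endpoints into $S(V'_a)$; it only forces at least one.

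The fix is immediate and does not require new ideas: your event $\{v\in\text{BBG}(A_{i}),\ w\notin\text{BBG}(A_{i})\}$ (together with its symmetric version) already contains the full cut event in \emph{both} cases. If $v,w\in K_i$ this is just the BBG one-sided cut restricted to $K_i$; if $v\in K_i$, $w\in K_j$ with $i\neq j$, then $w\notin\text{BBG}(A_i)\subseteq K_i$ automatically, regardless of whether $w\in S(V'_a)$. So drop the incorrect exclusion of the same-cluster case and state the containment uniformly; the $O(\ell/\rho)=O(\ell\log\tau/\mathcal{D})$ bound follows exactly as you wrote.

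For comparison, the paper's own proof is a one-liner: it asserts that blurry ball growing is the only operation that removes edges and applies \autoref{lemma:bbg} directly with $S=V'_a$, then plugs in $\rho$. Your per-cluster localization to $\text{BBG}(A_i)$ is the more careful unpacking of that assertion --- in particular, it is what is needed to handle the corner case where both endpoints land in $S(V'_a)$ but in different initial clusters, which a naive global application of the BBG bound to $S=V'_a$ does not cover.
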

\begin{proof}
Note that the blurry ball growing is the only operation that removes edges.
Following Lemma \ref{lemma:bbg}, the probability that an edge is cut by this process is $\ell\cdot\rho$. 
By our choice of $\rho$, the short edges are cut with probability
\begin{align*}
    \frac{\ell}{\rho} &= \ell\cdot\left(\frac{\mathcal{D}}{\cblur\log \tau}\left(1-\alpha\right) \right)^{-1} := \ell\cdot\left(\frac{\mathcal{D}}{\cblur\log \tau}\left(1-O\left(\frac{\log\log n}{\log n}\right)\right) \right)^{-1} \\
    &\leq \ell\cdot\left(\frac{\mathcal{D}}{\cblur\log \tau}\left(1-\frac{1}{2}\right) \right)^{-1} \leq \ell\cdot\left(\frac{\mathcal{D}}{2\cblur\log \tau}\right)^{-1} \in O\left(\frac{\ell\cdot\log \tau}{\mathcal{D}}\right)
\end{align*}
This proves the desired bound on the probability.
\end{proof}

\paragraph{\textbf{Clustering Probability}} Further, we lower bound how many nodes are actually added to a cluster.
Here, the analysis differs from Becker, Lenzen, and Emek and we use our tighter probability bounds on the behaviour of the Pseudo-Padded Decomposition.
Whereas they show that an \emph{individual} node is active with constant probability in Substep (3a), we show that this also implies that (with constant probability) all nodes on the path its cluster center are active as well.
For this, we use a technical observation from the analysis of Theorem \ref{theorem:generalpartition}.
We show that the following holds:
\begin{lemma}[Clustering Probability]
\label{lemma:clustering_prob}
    In a fixed iteration, each node $v \in V$ is clustered with probability at least $\beta \geq \nicefrac{1}{2}$.
\end{lemma}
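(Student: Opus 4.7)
\medskip

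\noindent\textbf{Proof plan for Lemma \ref{lemma:clustering_prob}.} The plan is to apply Lemma \ref{lemma:special} to the pseudo-padded decomposition constructed in Step $1$ and to argue that, whenever the event guaranteed by that lemma occurs, the node $v$ survives Steps $3\textrm{a}$, $3\textrm{b}$ and is therefore included in $V'_a$. Since $V'_a \subseteq S(V'_a)$ and every node of $S(V'_a)$ is placed into some cluster in Step $5$, this will prove that $v$ is clustered with probability at least $1/2$. Concretely, Lemma \ref{lemma:special} guarantees that, with probability at least $1/2$, there is a path $P_v = (v_1=v,\ldots,v_N=x_i)$ of length at most $2(1+\tfrac{1}{40\log\tau})\mathcal{D}$ to the center $x_i$ of $K_i$ such that for every node $v_j \in P_v$ we have $B(v_j,\rho^\star) \subseteq K_i$ with $\rho^\star := \mathcal{D}/(c\log \tau)$. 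We condition on this event for the rest of the argument.

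For the second step, I will show that every $v_j \in P_v$ is declared active in Substep $3\textrm{a}$. Since $B(v_j,\rho^\star)\subseteq K_i$, the exact distance from $v_j$ to the closest node outside $K_i$ is at least $\rho^\star$. The virtual edges attached to $s'$ in Substep $2\textrm{b}$ represent exactly these inter-cluster distances, hence $d_{G_{s'}}(s',v_j)\ge \rho^\star$; because the approximate \SetSSP only overestimates, we also have $d_T(s',v_j) \ge \rho^\star$. The activation threshold is $\tfrac{\rho}{1-\alpha}+\epsilon\mathcal{D} = \tfrac{\mathcal{D}}{\cblur\log\tau}+\epsilon\mathcal{D}$, and by choosing $\cblur$ sufficiently large relative to $c$ (which is possible because $\epsilon = 1/(1024\log\tau)$ so that $\epsilon\mathcal{D}\log\tau$ remains a small constant fraction of $\rho^\star\log\tau$), we obtain $\rho^\star \ge \tfrac{\mathcal{D}}{\cblur\log\tau}+\epsilon\mathcal{D}$, so every $v_j$ passes the test and $P_v\subseteq G[V_a]$. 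In particular the center $x_i = v_N$ lies in $\mathcal{X}_a$.

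In the third step I will check that $v$ is not discarded in Substep $3\textrm{b}$. Because the entire path $P_v$ lives in $G[V_a]$ and has length at most $2(1+\tfrac{1}{40\log\tau})\mathcal{D}$, and because $x_i\in\mathcal{X}_a$ is a source of the $(1+\epsilon)$-approximate \SetSSP computed on $G[V_a]$, the approximate tree $T_a$ contains a path from $v$ to $x_i$ of length at most
\begin{equation*}
(1+\epsilon)\cdot 2\bigl(1+\tfrac{1}{40\log\tau}\bigr)\mathcal{D}\;\le\;3\mathcal{D},
\end{equation*}
using $\epsilon\le 1/(1024\log\tau)$. Hence $v\in V'_a$, which is trivially contained in $S(V'_a)$, and Step $5$ assigns $v$ to some cluster. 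Combined with the probability-$1/2$ guarantee from Lemma \ref{lemma:special}, this yields $\beta\ge 1/2$ as required.

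The only delicate point I anticipate is the constant bookkeeping in the second step: one must confirm that the choice of $\cblur$ (used in defining $\rho$) and of $\epsilon$ is consistent with the padding radius $\rho^\star$ produced by Lemma \ref{lemma:special} (whose constant $c$ is fixed by Theorem \ref{theorem:generalpartition}). Concretely, $\cblur$ must be chosen large enough that $\tfrac{1}{\cblur}+\tfrac{1}{1024}\le \tfrac{1}{c}$, which is always possible since $c$ is a fixed absolute constant. All other steps are straightforward consequences of (i) the one-sided approximation guarantee of the $(1+\epsilon)$-approximate \SetSSP, (ii) the fact that $P_v$ is an \emph{exact} shortest path and therefore short enough to be approximated within $3\mathcal{D}$, and (iii) the definition of the BBG input set and the last \SetSSP step.
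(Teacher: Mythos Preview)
Your proposal is correct and follows essentially the same approach as the paper. The only cosmetic difference is that you invoke Lemma~\ref{lemma:special} (which already packages the probability-$\tfrac{1}{2}$ guarantee and the padding radius $\rho^\star=\mathcal{D}/(c\log\tau)$), whereas the paper's appendix proof unpacks this by applying Lemma~\ref{lemma:padding_property} directly with $\gamma=\tfrac{1}{\cblur\log\tau}+\epsilon$ and then computing explicitly that the resulting bound $e^{-16(\gamma+5\epsilon)\log\tau}-O(n^{-c})$ is at least $\tfrac{1}{2}$ for $\cblur\ge 128$ and $\epsilon\le\tfrac{1}{1024\log\tau}$; the constant bookkeeping you flag is exactly what the paper carries out there.
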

\begin{proof}
Recall that each \emph{active} node in the set $V'_a$ is added to a cluster no matter what.
To this end, we will analyse how many nodes are active.
For the analysis, define $\gamma = (\frac{1}{\cblur\log\tau} + \epsilon)$.
Further, let $x_v$ now be the center clusters $v$ in the pseudo-padded decomposition. 
Note that a node $v \in V$ is definitely active if the ball $B(v,\gamma\mathcal{D})$ is fully contained in the same cluster as $v$.
In this case, the boundary is in distance at least $\gamma\mathcal{D}$. 
Recall our approximate shortest path algorithm only \emph{overestimates} and therefore $v$ passes the check and is added to $V_a$.

This alone is not sufficient to be also added to $V_a'$. 
We must additionally show that a path of length at most $\frac{3\mathcal{D}}{(1+\epsilon)}$ to $x_v$ also remains active.
To this end, let $P_{x_v} = (v_1, \ldots, v_\ell)$ be the \textbf{exact} shortest path from $x_v$ to $v$ in $G$.
We will show that with constant probability, \emph{all} nodes on this path (including $x_v$) are active.
Note that this path is at most of length $(1+\epsilon)2\mathcal{D}$ as per \ref{claim:disttocenter} from the analysis in the previous section.
This implies that there is path of active nodes of length $(1+\epsilon)2\mathcal{D}$.
For small enough $\epsilon$, it holds $(1+\epsilon)2\mathcal{D} \leq \frac{3\mathcal{D}}{(1+\epsilon)}$.
Thus, our approximate shortest path algorithm must find a path of length at most $(1+\epsilon)\frac{3\mathcal{D}}{(1+\epsilon)} \leq 3\mathcal{D}$ in the induced active graph $G[V_a]$.
Therefore, $v$ will remain active, is added to $V'_a$, and therefore will be in a cluster.

Now, recall the analysis of Theorem \ref{theorem:generalpartition}. 
There, we proved that:

\padding*

This allows us to prove the lemma.
Recall $\cblur\geq {128}$ and we execute a pseudo padded decomposition with padding parameter $\lambda = 2\log\tau + 2$, diameter parameter $\mathcal{D}$, and $\epsilon \leq \frac{1}{1024\log\tau}$. 
Let $\gamma = (\frac{1}{\cblur\log\tau} + \epsilon)$.
Note that $\gamma \leq \frac{1}{8}$ for a our choice of $\nicefrac{1}{\cblur} \leq \frac{1}{128}$ and $\epsilon \leq \frac{1}{1024\log\tau}$.
Finally, define:
\begin{align*}
    \mathcal{F}_v := \bigcup_{w \in P_{x_v}} \left\{ B_G\left(w, \gamma\mathcal{D}\right) \subset C_{x_v} \right\}
\end{align*}
Then, by Lemma \ref{lemma:padding_property}, it holds:
\begin{align*}
           \pr{\mathcal{F}_v} &\geq e^{\left(-16 \cdot \log\tau \cdot \left(\frac{1}{\cblur'\log\tau}+6\epsilon\right) \right)} - \frac{160{\log\tau}}{n^c}\\
\text{Using that } e^x \geq 1+x:\\
           &\geq 1 - 16 \cdot \log\tau \cdot \left(\frac{1}{\cblur'\log\tau}+6\epsilon\right) - \frac{160{\log\tau}}{n^c} \\
           \text{Using that } \nicefrac{1}{n^c} \leq \epsilon:\\
           &= 1 - \frac{16\log\tau}{\cblur\log\tau} - 256\cdot\epsilon{\log\tau}\\
           &= 1 - \frac{16}{\cblur} - 256\cdot\epsilon{\log\tau}\\
\text{Using that } \cblur \geq 128 = 4\cdot16:\\
           &\geq 1 - \frac{1}{4} - 256\cdot\epsilon{\log\tau}\\
\text{Using that } \epsilon \leq \nicefrac{1}{1024\log\tau} = \nicefrac{1}{4\cdot 256\cdot\log\tau}:\\
&\geq 1 - \frac{1}{4} - \frac{1}{4} = \frac{1}{2}
\end{align*}
Therefore, with constant probability, a node is active all nodes on the shortest path to its cluster center are also active.
This means, there is a path of length $3\mathcal{D}$ to the cluster center that only consists of active nodes.
Thus $v \in V'_a$ with probability at least $\nicefrac{1}{2}$ as needed.
\end{proof}

\paragraph{Complexity}
Finally, we consider the algorithm's runtime.
The follows directly from the algorithm description as it only uses minor aggregations and approximate shortest-path computations.

\begin{lemma}[Complexity]
\label{lemma:clustering_complexity}
    The algorithm can be implemented with $\Tilde{O}(1)$ approximate \SetSSP computations and minor aggregations, w.h.p.
\end{lemma}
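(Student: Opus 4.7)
The plan is to walk through each of the five stages of the algorithm from Section \ref{sec:ldc} and argue that each stage can be implemented with a constant (or polylogarithmic) number of minor aggregations and $(1+\epsilon)$-approximate \SetSSP computations, then sum over the stages.

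First, for Step 1, the pseudo-padded decomposition is computed by invoking the algorithm of \autoref{theorem:generalpartition}, which by that theorem's complexity claim uses one $(1+\epsilon)$-approximate \SetSSP computation and $\Tilde{O}(1)$ minor aggregations. Step 2 needs only a minor aggregation to compute, for each boundary node $v$, the value $\omega(v) = \min_{w \in N'_v} d(v,w)$ (this is a local minimum over a node's incident edges, so it requires no communication beyond what is already encoded in $G$), followed by a single $(1+\epsilon)$-approximate \SetSSP from the virtual source $s'$ whose edge weights are the $\omega(v)$ values. Step 3 consists of a purely local marking (substep 3a) and one further $(1+\epsilon)$-approximate \SetSSP on the induced subgraph $G[V_a]$ (substep 3b); a node can locally identify which of its incident edges belong to $G[V_a]$ from the active flags of its neighbors, so no extra aggregation is needed beyond the \SetSSP itself.

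Step 4 invokes blurry ball growing, which by \autoref{lemma:bbg} is implementable with $\Tilde{O}(1)$ $(1+\epsilon)$-approximate \SetSSP computations (with $\epsilon \in O((\log\log n/\log n)^2)$, which is stronger than what we require). Finally, Step 5 is a single $(1+\epsilon)$-approximate \SetSSP from the set of surviving centers $\mathcal{X} \cap V'_a$, after which each node joins the cluster of its closest center; the resulting tree structure gives a strongly connected spanning tree per cluster.

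Summing across all five stages yields $\Tilde{O}(1)$ \SetSSP calls and $\Tilde{O}(1)$ minor aggregations for a \emph{single} iteration of the clustering procedure. I would then note that the only source of randomness in the complexity comes from the blurry ball growing in Step 4 and the pseudo-padded decomposition in Step 1, both of which meet their stated bounds w.h.p.; a union bound over the $\Tilde{O}(1)$ invocations preserves the w.h.p. guarantee. The main (mild) subtlety to be careful with is that the \SetSSP in substep 3b runs on the induced graph $G[V_a]$ rather than on $G$ itself, but since membership in $V_a$ is already known locally at each endpoint of every edge, this reduces to a standard \SetSSP call on $G$ where edges to inactive nodes are simply ignored, and so falls under \autoref{def:sssp} without additional overhead.
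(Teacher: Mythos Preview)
Your proposal is correct and follows essentially the same step-by-step accounting as the paper's proof: walk through Steps~1--5, observe that each uses a constant number of \SetSSP calls plus $\Tilde{O}(1)$ minor aggregations (with BBG contributing $\Tilde{O}(1)$ calls by \autoref{lemma:bbg}), and sum. The one small difference is that the paper's proof opens by noting there are at most $O(\log n)$ iterations w.h.p.\ before analyzing a single iteration, whereas you analyze a single iteration and attribute the ``w.h.p.'' to the randomized subroutines; since $O(\log n)\cdot\Tilde{O}(1)=\Tilde{O}(1)$, both framings yield the same bound.
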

\begin{proof}
As there can be at mots $O(\log n)$ iterations, w.h.p, we focus on a single iteration.
Each iteration begins with one execution of the pseudo-padded decomposition; this requires only \SetSSP computations with approximation parameter $O(\nicefrac{1}{\log n})$. This follows directly from Theorem \ref{theorem:generalpartition}.

Next, we require one aggregation to let the boundary nodes compute the distances to nodes in neighboring clusters. 
To be precise, each node $v \in V$ is its own minor.
We perform no aggregation within the minor and jump directly to the last step where we perform a aggregation on the edges.
In other words, we skip the consensus step and directly proceed to the aggragation step.
For each incident edge $\{v,w\} \in E$, node $v \in V$ chooses the input $(x_v, \ell_{(v,w)})$.
Here, $x_v \in \mathcal{X}$ is the (identifier of the) center that clustered $v$.
Then, we aggregate the minimum of all values with $x_w \neq x_v$ (by considering inputs containing $x_v$ as $\infty$).
Thus, each node learns the distance closest \emph{adjacent} node in a different cluster with \emph{one} minor aggregation.
To conclude the second step, we perform one approximate SSSP on graph $G_{s'}$ that contains one virtual node $s'$.
The approximation parameter is $\epsilon \in O(\nicefrac{1}{\log n})$.

In the third step, we need another local aggregation to determine which nodes are active. 
Similar to the step before, each node performs an aggregation on all its edges.
Then, we perform an approximate SSSP on the graph induced by the active nodes.
The approximation parameter is again $\epsilon \in O(\nicefrac{1}{\log n})$

Finally, we perform one execution of the blurry ball growing, which only uses $\Tilde{O}(1)$ approximate \SetSSP computations with parameter $\alpha \in O(\nicefrac{\log\log n}{\log n})$ as per Lemma \ref{lemma:bbg}.
A final SSSP to determine the connected components conludes the step.

Thus, each step only performs $\Tilde{O}(1)$ approximate \SetSSP computations or aggregations.
In all cases, the approximation factor is within $\Omega(\nicefrac{1}{\log^2})$.
Altogether, this proves the runtime bounds from Theorem \ref{thm:genericldd}.
\end{proof}

\newpage

\section{Proof of \autoref{lemma:folkloreldd}}
\label{sec:appendix_ldd}

Let $G_0 := G$ and define $G_i:= (V_i, E_i)$ with $i \geq 1$ the remaining graph after we apply the algorithm $\mathcal{A}$ on $G_{i-1}$.
On expectation, in each application of $\mathcal{A}$, a $\beta$-fraction of the nodes is added to some cluster.
This follows from the definition of $\beta$ and the linearity of expectation:
For each node $v \in V$, let $X^i_v$ be the binary random variable that $v$ gets clustered in the $i^{th}$ application of the algorithm, i.e., whether $v$ will be in $G_{i+1}$ or not. 
It holds $\pr{X_i \mid v \in V_i} \geq \beta$ by definition.
By the linearity of expectation, it therefore holds:
\begin{align*}
    \E{|V_{i+1}|} &= \sum_{v \in V} \pr{v \in V_i} \left(1 - \pr{X_i \mid v \in V_i}\right)\\
    & \leq (1-\beta)\sum_{v \in V} \pr{v \in V_i}\\
    &= (1-\beta)\cdot \E{|V_{i}|}
\end{align*}
Thus, a simple induction reveals that after $t$ applications, it holds:
\begin{align*}
    \E{|V_{t}|} \leq \E{|V_{t}|}\cdot(1-\beta)^t = n\cdot(1-\beta)^t 
\end{align*}
After $t = (c+1) \cdot\beta\cdot \log n$ recursive steps, the expected number of nodes is less than 
\begin{align*}
    \E{|V_{t}|} &\leq  n\cdot(1-\beta)^t\\
    &= n\cdot(1-\beta)^{(c+1) \cdot\beta\cdot \log n}\\
    &\leq n \cdot e^{-(c+1)\log n} = n^{-c}
\end{align*}
Here, we used that $(1-\frac{1}{x})^x \approx \frac{1}{e}$.
Therefore, using Markov's inequality, we have
\begin{align*}
    \pr{|V_t| \geq 1} &= \pr{|V_t| \geq n^{-c}\cdot n^c}\\
    &\leq \pr{|V_t| \geq n^{c}\cdot \E{|V_t|}}\\
    &\leq \nicefrac{1}{n^c} 
\end{align*}
Thus, there are no nodes left, w.h.p, proves the runtime of $O(\beta^{-1}\log n)$ application of $\mathcal{A}$.

\medskip

Next, we consider the cutting probability.
Fix an edge $z \in E$ and denote by $\Cut_z$ the event that this edge is cut in any of the $O(\beta^{-1}\log n)$ iterations.
We need to show that
\begin{align*}
    \pr{\Cut_z} \leq O\left(\frac{\alpha \cdot \ell_z}{\beta^{2} \cdot \mathcal{D}}\right).
\end{align*}
Recall that \emph{being cut} means that one endpoint of $z$ is added to a cluster, but the other is added to a different cluster (or not at all). 
In the following, we say that an edge gets \emph{decided} if either endpoint is added to some cluster.
If this happens, the other endpoint is added to the same cluster or not.
In the former case, the edge is saved as it cannot be cut in any future iteration. In the latter case, the edge is irreversibly cut.
For a fixed edge $e \in E$, let $Y_i$ be the event that the edge is decided in recursive step $i$.
Recall that either endpoint of an edge $z \in E_i$ that is still fully contained in the graph is clustered with probability at least $\beta$. 
Thus, the probability that \emph{any} of the two endpoints is added to some cluster is also at least $\beta$.
Therefore, the probability that no endpoint is added to a cluster in the first $i-1$ iterations is at most:
\begin{align}
\label{eq:decide}
    \pr{Y_i} \leq (1-\beta)^{i-1}
\end{align}
Conditioned on the event $Y_i$, the edge can \emph{only} be cut in the $i^{th}$ iteration.
In particular, the probability of a cut is $0$ in the first $i-1$ iterations, as the event $Y_i$ implies that both endpoints survive these iterations.
Further, it cannot be cut later as it will not exist in subsequent graphs.
Seeking formalization of these two observations, let now $\Cut_z^i$ be the indicator that $z$ is cut in the $i^{th}$ iteration.
Then, the probability that an edge is cut \textbf{under the condition that one endpoint is added to a cluster in $G_i$} is:
\begin{align*}
    \pr{\Cut \mid Y_i} &= \pr{\Cut \mid  \{z \in E_{i}\} \wedge \{z \not\in E_{i+1}\}}\\
    &= \lim_{T \to \infty} \sum_{i=1}^T \pr{\Cut_z^i \mid  \{z \in E_{i}\} \wedge \{z \not\in E_{i+1}\}}\\
    & = \pr{\Cut_z^i \mid \{z \in E_{i}\} \wedge \{z \not\in E_{i+1}\}} 
    \end{align*}
Recall that, in a fixed iteration, the edge $z$ is cut with probability at most $O\left(\frac{\ell_z\cdot \alpha}{\mathcal{D}}\right)$. 
However, we need to condition this probability on the fact that the edge gets decided in the $i^{th}$ iteration.
Using the law of total probability, we see:
\begin{align*}
    \pr{\Cut_z^i \mid z \in E_{i}} =& \pr{ \{z \not\in E_{i+1} \mid z \in E_{i} } \cdot \pr{\Cut_z^i \mid \{z \in E_{i}\} \wedge \{z \not\in E_{i+1}\}}\\
    &+ \pr{ \{z \in E_{i+1} \mid z \in E_{i} } \cdot \pr{\Cut_z^i \mid \{z \in E_{i}\} \wedge \{z \in E_{i+1}\}}\\
    =& \pr{ \{z \not\in E_{i+1} \mid z \in E_{i} } \cdot \pr{\Cut_z^i \mid \{z \in E_{i}\} \wedge \{z \not\in E_{i+1}\}}
\end{align*}
Therefore, we conclude:
\begin{align}
\label{eq:cutcond}
    \pr{\Cut \mid Y_i} &\leq \frac{\pr{\Cut_z^i \mid z \in E_i}}{\pr{z \not\in E_{i+1} \mid z \in E_i}} \leq O\left(\frac{\alpha\cdot\ell_z}{\beta \cdot \mathcal{D}}\right)
\end{align}
Plugging together our two bounds, the law of total probability now gives us:
\begin{align*}
    \pr{\Cut} &\leq \lim_{T \to \infty} \sum_{i=1}^T \pr{Y_i} \cdot \pr{\Cut \mid Y_i} & \rhd \textit{Law of Total Prob.}\\
    &\leq \lim_{T \to \infty} \sum_{i=1}^T \pr{Y_i} \cdot O\left(\frac{\alpha\cdot\ell_z}{\beta \cdot \mathcal{D}}\right) & \rhd \textit{By Ineq. \eqref{eq:cutcond}} \\
    &\leq O\left(\frac{\alpha\cdot\ell_z}{\beta \cdot \mathcal{D}}\right) \cdot \lim_{T \to \infty} \sum_{i=1}^T \pr{Y_i}  \\
     &\leq O\left(\frac{\alpha\cdot\ell_z}{\beta \cdot \mathcal{D}}\right) \cdot \lim_{T \to \infty} \sum_{i=1}^T (1-\beta)^i & \rhd \textit{By Ineq. \eqref{eq:decide}} \\
     &\leq O\left(\frac{\alpha\cdot\ell_z}{\beta \cdot \mathcal{D}}\right) \cdot \frac{1}{\beta} \leq O\left(\frac{\alpha\cdot\ell_z}{\beta^2 \cdot \mathcal{D}}\right)
\end{align*}
In the last line, we used that for each $x$ with $\frac{|x-1|}{|x|} \leq 1$, it holds:
\begin{align*}
    \lim_{T \to \infty} \sum_{i=1}^T (1-\nicefrac{1}{x})^{i-1} = \lim_{T \to \infty} \sum_{i=0}^T (1-\nicefrac{1}{x})^{i} = x
\end{align*}
This is clearly the case for $\nicefrac{1}{x} := \beta \leq 1$.
Thus, the proclaimed cutting probability of $O\left(\frac{\alpha\cdot\ell_z}{\beta^2 \cdot \mathcal{D}}\right)$ follows, which concludes the proof.

\section{Full Analysis for \autoref{thm:distributed_weak_separator}}
\label{sec:appendix_separator}

In this section, we present our technical contribution that enables the distributed and parallel construction of routing schemes and clustering in $k$-path separable graphs.
More precisely, we present the algorithm behind \autoref{thm:distributed_weak_separator}.
Namely, for a given $k$-path separable graph $G := (V,E,\ell)$, we give an algorithm constructs a weak $O(\epsilon^{-1}\cdot k\cdot \log n)$-path $(\mathcal{D},\epsilon)$-separator for parameters $\mathcal{D}$ and $\epsilon$.
The algorithm is essentially a ball carving process, that draws a path of length at most $4\cdot \mathcal{D}$ and removes balls of diameter at most $\epsilon\mathcal{D}$ around the paths.
As we will see, it can be implemented by using approximate shortest-path computations and minor aggregations.

\subsection{Algorithm Description}
The algorithm consists of $T \in O(\epsilon^{-1}\cdot k\log n)$ sequential steps.
In each step $t \leq T$, we construct a path $P_t$ and ball $K_t \supset P_t$ around that path.
Further, we define $G_{t} := (V_t,E_t,w)$ to be the graph induced by all nodes that have not yet been added to any set $B_1, \ldots, B_{t}$.
Finally, we fix the parameter  $\epsilon' := \nicefrac{\epsilon}{2}$.
Given these definitions, a single step of the algorithm consists of the following five operations:

    \paragraph*{\textbf{(Step 1) Choose a random node $v_t \in V_{t-1}$:}}
First, we pick a node $v \in V_{t-1}$ uniformly at random.
This can be done by letting each node $v \in V_{t-1}$ draw a random number $r_v \in [1,n^c]$ for a constant $c > 2$ uniformly and independently at random.
All nodes draw unique numbers for a large enough $c$, w.h.p.
Given that there are no duplicate numbers, we aggregate the maximum of all these numbers and pick the node with the highest number as $v_t:= \argmax_{v \in V} r_v$.
\paragraph*{  \textbf{(Step 2) Construct a $2$-approximate SSSP tree $T_{v_t}$ for $v_t$:}}
We perform a $2$-approximate SSSP from node $v_t$.
Afterwards, all nodes $w \in V$ know a value $d_{T_{v_t}}(w,v_t)$, which is its $2$-approximate distance to $v_t$ in graph $G_{t-1}$.
In particular, it holds for all $w \in V_{t-1}$:
\begin{align*}
    d_{G_{t-1}}(v_t,w) \leq d_{T_{v_t}}(v_t,w) \leq 2d_{G_{t-1}}(v_t,w)
\end{align*}
\paragraph*{ \textbf{ (Step 3) Choose a random path $P_t$:}}
Given the distances computed in the previous step, each node can check whether it is in the distance at most $4\mathcal{D}$ to $v_t$.
In particular, we can define the set 
\begin{align*}
   N_{v_t} := \left\{w \in V_{t-1} \mid d_{T_{v_t}}(v_t,w) \leq 4\mathcal{D} \right\}
\end{align*}
We now want to sample one of these nodes uniformly at random.
This can again be done by letting each node $w \in N_{v_t}$ draw a random number uniformly and independently at random and then aggregating the maximum. 

Let now $P_t := (v_t, \ldots, w_t)$ be the path from $v_t$ to $w_t$ in $T_{v_t}$.
All nodes on this path need to learn that they are on this path.
For this, we use tree operations from Lemma \ref{lemma:tree_operations}, in particular, that all nodes can compute aggregate functions on all their descendants with $\Tilde{O}(1)$ aggregations.
Node $w_t$ chooses $1$ as its private input; all others choose $0$.
Now, each node computes the sum of all its descendants' inputs.
It is easy to verify that this sum is $1$ for all nodes on $P_t$ and $0$ for all others.
For the latter, we exploit that $v_t$ is the tree's root (otherwise, it would not be true, and we would require a second aggregation).

\paragraph*{ \textbf{ (Step 4) Compute an Approximation of $B_G(P_t, \epsilon \cdot \mathcal{D})$:}}
Next, we compute a $2$-approximate shortest path tree for $\{P_t\}$ in $G$.
    Thereby, we obtain a $2$-approximate SSSP tree $T_{P_t}$ and all nodes $w \in V_{t-1}$ know a value $d_{T_{P_t}}(w,P_t)$, which is its $2$-approximate distance to $P_t$ in graph $G$.
    After that, every node knows whether it is in the set
    \begin{align*}
        K_t := \left\{w \in V_t \mid d_{T_{P_t}}(w,P_t) \leq 2\epsilon'\mathcal{D} \right\}
    \end{align*}
As we chose $\epsilon' = \nicefrac{\epsilon}{2}$, the set $K_t$ is a strict subset of $B_G(P_t,\epsilon\cdot\mathcal{D})$ and a superset of $B_G(P_t,\epsilon/2\cdot\mathcal{D})$.

\paragraph*{ \textbf{(Step 5) Remove $K_t$ from $G_{t-1}$:}}
All nodes use their previously computed distances to determine whether they are in $K_t$.
If so, they remove themselves from the graph, i.e., they will not be considered for picking the path to the next iteration. However, they will be considered for computing the balls around the paths.

A visualization of our algorithm's main ideas can be found in Figure \ref{fig:sep_algo}.
Clearly, the algorithm produces paths and sets that fit the description of Definition \ref{def:weak_k_path_separator}.
Therefore, to prove Lemma \ref{thm:distributed_weak_separator}, it remains to show that (a) removing the sets $K_1, \ldots, K_T$ weakly separates the graph and (b) the algorithm can be implemented in $\Tilde{O}(T)$ minor aggregations and shortest path computations.



\subsection{Analysis}

We divide the analysis into two steps.
First, we will prove the algorithm's correctness and show that it indeed produces a separator with the desired properties within $T \in O(\epsilon^{-1}\cdot k \cdot \log n)$ iterations.
To be precise, we prove the following lemma:
\begin{lemma}
\label{thm:weak_divandconquer}
Let $G_\tau$ be the graph after the $\tau^{\mathsf{th}}$ step of our algorithm.
Then, for $\tau_c \in O\left(c \cdot \epsilon^{-1} \cdot k\right)$ with probability at least $1-O(e^{-c})$, it holds for all $v \in V_{\tau_c}$:
\begin{align*}
    B_{G_{\tau_c}}\left(v, \mathcal{D}\right) \leq (\nicefrac{7}{8})\cdot n
\end{align*}
\end{lemma}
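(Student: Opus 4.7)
The plan is to track a potential function on residual ball sizes and show it must drop within $O(c \cdot \epsilon^{-1} \cdot k)$ steps, except with probability $O(e^{-c})$. Define $\Phi_t(v) = 1$ if $|B_{G_t}(v, 2\mathcal{D})| \geq (\nicefrac{7}{8})n$ and $0$ otherwise, and set $\Phi_t := \sum_{v \in V_t} \Phi_t(v)$. Since $G_t$ only shrinks, each $|B_{G_t}(v, 2\mathcal{D})|$ is non-increasing, so $\Phi_t$ is non-increasing in $t$. The first observation is that as soon as $\Phi_t < (\nicefrac{7}{8})n$ the lemma's conclusion is automatic: otherwise some $v \in V_t$ would have $|B_{G_t}(v, \mathcal{D})| > (\nicefrac{7}{8})n$, all such nodes would be pairwise within distance $2\mathcal{D}$ by the triangle inequality, and so they would each contribute a $1$ to $\Phi_t$, yielding $\Phi_t \geq (\nicefrac{7}{8})n$, a contradiction.

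Next, I would apply \autoref{lemma:boundary_simple} to fix a reference set $\mathcal{B} = \{P_1, \ldots, P_\kappa\}$ of $\kappa \leq k$ simple paths of length at most $32\mathcal{D}$ such that in $G \setminus \mathcal{B}$ every node has at most $(\nicefrac{3}{4})n$ nodes within distance $4\mathcal{D}$. Call step $t$ a \emph{crossing}, written $v_t \rightsquigarrow_{\mathcal{B}} w_t$, if the sampled path $P_t$ intersects $\mathcal{B}$. In that case the removed ball $K_t \supseteq B_G(P_t, (\epsilon/2)\mathcal{D})$ is measured in the \emph{original} graph $G$, so it contains an honest subpath of $\mathcal{B}$ of length $\Omega(\epsilon\mathcal{D})$. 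Since $\mathcal{B}$ has total length $O(k\mathcal{D})$, after $M \in O(\epsilon^{-1}k)$ crossings the set $\mathcal{B}$ is entirely absorbed into $\bigcup_t K_t$, and every remaining node has at most $(\nicefrac{3}{4})n < (\nicefrac{7}{8})n$ nodes within $\mathcal{D}$ in the residual graph.

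The heart of the argument is a per-step crossing bound: while $\Phi_t \geq (\nicefrac{7}{8})n$, I claim $\Pr[v_t \rightsquigarrow_{\mathcal{B}} w_t \mid \mathcal{F}_{t-1}] \geq \nicefrac{1}{16}$. With probability $\Phi_t/n \geq \nicefrac{7}{8}$, the uniformly chosen $v_t$ satisfies $\Phi_t(v_t) = 1$, so $v_t$ has at least $(\nicefrac{7}{8})n$ nodes in $B_{G_t}(v_t, 2\mathcal{D})$; of these, at most $(\nicefrac{3}{4})n$ are reachable in $G \setminus \mathcal{B}$ within $4\mathcal{D}$, leaving at least $(\nicefrac{1}{8})n$ nodes in $N_{v_t}$ whose $2$-approximate $4\mathcal{D}$-paths from $v_t$ must cross $\mathcal{B}$. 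Because $w_t$ is drawn uniformly from $N_{v_t}$ (of size at most $n$), picking such a node has probability at least $\nicefrac{1}{8}$, combining to $(\nicefrac{7}{8})(\nicefrac{1}{8}) \geq \nicefrac{1}{16}$.

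Finally, I would run the algorithm for $\tau_c = C\cdot c \cdot \epsilon^{-1} k$ steps with $C$ chosen large enough that $C/32 \geq M/(\epsilon^{-1}k)$ and that Chernoff gives the desired tail. Let $X_t$ indicate a crossing. Suppose for contradiction that $\Phi_{\tau_c} \geq (\nicefrac{7}{8})n$; by monotonicity $\Phi_t \geq (\nicefrac{7}{8})n$ for every $t \leq \tau_c$, so the $X_t$ dominate (in the standard coupling, pointwise) independent $\mathrm{Bernoulli}(\nicefrac{1}{16})$ variables $Y_t$. A Chernoff bound gives $\Pr[\sum_t Y_t < \tau_c/32] \leq e^{-\Omega(\tau_c)} = e^{-\Omega(c \cdot \epsilon^{-1} k)} \leq e^{-c}$, so outside an $e^{-c}$ event we have $\sum X_t \geq \tau_c/32 \geq M$ crossings, forcing $\mathcal{B}$ to be fully removed and contradicting $\Phi_{\tau_c} \geq (\nicefrac{7}{8})n$. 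The main subtlety I expect is making this coupling clean despite the fact that the conditional lower bound on $\Pr[X_t = 1 \mid \mathcal{F}_{t-1}]$ is tied to the high-potential event; the tidy way is to stop the coupling at the first step where $\Phi_t$ drops, noting that this only weakens the bound on $\sum X_t$ and strengthens the contradiction.
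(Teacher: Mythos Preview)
Your proposal is correct and follows essentially the same approach as the paper: the same potential $\Phi_t$, the same reduction via \autoref{lemma:boundary_simple} to a bounded-length separator $\mathcal{B}$, the same per-step crossing probability of $\geq \nicefrac{1}{16}$ while $\Phi_t \geq (\nicefrac{7}{8})n$, and the same chunk-counting to cap the total number of crossings at $O(\epsilon^{-1}k)$. One small difference worth noting: you exploit the monotonicity of $\Phi_t$ to reduce to an i.i.d.\ Bernoulli Chernoff bound, whereas the paper instead defines a ``good step'' indicator $Y_t = \mathds{1}\{\Phi_t\text{ small}\} \vee \mathds{1}\{\text{crossing}\}$ and invokes a generalized Chernoff bound for dependent variables with uniform conditional lower bounds; your route is slightly cleaner but equivalent in substance.
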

Then, we show that each iteration of the algorithm can be implemented with few minor aggregations and approximate path computations.
Formally, we show that
\begin{lemma}
\label{lemma:separator_runtime}
    An iteration of the algorithm can be implemented with $\Tilde{O}(1)$ minor aggregations and two $(1+\epsilon)$-approximate shortest path computations.
\end{lemma}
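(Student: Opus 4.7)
The plan is to go through the five steps of a single iteration and count, for each, the number of $(1+\epsilon)$-approximate \SetSSP computations and minor aggregations it uses. Since each step either is purely local or reduces to one of the two primitives, the count will be immediate; the bulk of the work is verifying that the ``sampling'' operations in Steps 1 and 3 can indeed be carried out by a constant number of minor aggregations, and that the ``identify the path'' operation in Step 3 can be implemented with tree primitives.

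First I would handle Steps 2 and 4, which are by construction single invocations of a $2$-approximate \SetSSP (in Step 2 with the singleton source $\{v_t\}$ in $G_{t-1}$, and in Step 4 with the set source $P_t$ in $G$). These contribute the two approximate shortest path computations in the statement. Since $2 \leq 1+\epsilon$ only formally requires a mild choice of $\epsilon$, I would point out that \SetSSP as defined in \autoref{def:sssp} already supports the needed approximation factor. Steps 1 and 3 (the random sampling parts) reduce to the following textbook reduction: to pick a uniformly random element from a distributed set $U \subseteq V$, each $v \in U$ draws $r_v \in [1,n^c]$ i.i.d.; with high probability the $r_v$'s are distinct, and computing $\operatorname{argmax} r_v$ is a single minor aggregation, where the trivial minor is one vertex per part and the aggregation is applied over the (implicit) entire-graph minor. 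Thus Step 1 uses one aggregation, and the sampling substep of Step 3 uses another.

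The only slightly less trivial piece is the ``mark nodes on the path $P_t$ from $v_t$ to $w_t$'' substep of Step 3. Here I would invoke the tree primitives referenced via Lemma \ref{lemma:tree_operations}: since $T_{v_t}$ is rooted at $v_t$, if $w_t$ sets its private input to $1$ and every other node sets it to $0$, then the subtree-sum at a node $u$ equals $1$ iff $u$ lies on the $v_t$--$w_t$ path. Computing all subtree sums on a rooted tree is known to take $\Tilde{O}(1)$ minor aggregations (each edge of the tree defines a minor; this is the standard reduction), so this substep also fits within the $\Tilde{O}(1)$ aggregation budget. Finally, Step 5 is purely local: each node already knows $d_{T_{P_t}}(w, P_t)$ from Step 4 and decides membership in $K_t$ without any communication.

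The main obstacle, if any, is the bookkeeping to argue that Step 3's path-identification really reduces to subtree aggregations on $T_{v_t}$ viewed as a minor of $G_{t-1}$; I would state this carefully by noting that $T_{v_t}$ is a connected spanning subgraph of $G_{t-1}$ where each tree edge is itself a valid minor edge, so any aggregation running on $T_{v_t}$ can be simulated within the minor-aggregation framework of \autoref{thm:agg_runtime}. Summing across all five steps gives two $(1+\epsilon)$-approximate \SetSSP computations plus a constant number of minor aggregations per iteration, matching the claim of Lemma \ref{lemma:separator_runtime}; combining this with the $T \in O(\epsilon^{-1}\cdot k\cdot \log n)$ bound from Lemma \ref{thm:weak_divandconquer} then yields the overall complexity asserted in \autoref{thm:distributed_weak_separator}.
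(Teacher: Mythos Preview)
Your proposal is correct and follows essentially the same approach as the paper: a step-by-step walk-through of the five operations, noting that Steps~2 and~4 are the two approximate \SetSSP computations, Steps~1 and~3 reduce to max-aggregations plus the subtree/descendant-sum primitive of Lemma~\ref{lemma:tree_operations}, and Step~5 is local. The only cosmetic slip is the remark ``$2 \leq 1+\epsilon$'', which is backwards; you presumably mean that a $2$-approximate \SetSSP is an instance of a $(1+\epsilon)$-approximate \SetSSP with $\epsilon=1$, which is how the paper treats it as well.
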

Together, these two results imply Theorem \ref{thm:distributed_weak_separator}.
For $T \in O(\epsilon^{-1}\cdot k \cdot \log n)$, Lemma \ref{thm:weak_divandconquer} implies that we sample a separator, w.h.p., and Lemma \ref{lemma:separator_runtime} implies that the $T$ can be implemented efficiently in \HYBRID and \CONGEST.

\paragraph{\textbf{Correctness (Proof of Lemma \ref{thm:weak_divandconquer}):}}
Before we get to the formal proof, let us first clarify the intuition.
On a high level, we want to show that the sets $K_1, \ldots, K_T$ we sample quickly cover the graph's $k$-path separator $S = (\mathcal{P}_1, \mathcal{P}_2, \ldots)$.  
As many paths must cross the separator, we will likely pick such a path through the random sampling.
Recall that we sample paths of length at most $4\mathcal{D}$ with an area of at least $\nicefrac{\epsilon}{2}\mathcal{D}$ around them.
Thus, we cover a subpath of length $\nicefrac{\epsilon}{2}\mathcal{D}$ whenever we pick a path that intersects with the separator.
Say that a path in the separator $S$ is of (weighted) length $\mathcal{D}'$.
Then, we require at least $\Omega(\nicefrac{\mathcal{D}'}{\epsilon\mathcal{D}})$ crossings to cover it. 
Thus, we need to bound the length of the paths in the separator.
A major factor that complicates this is the fact that the paths in a $k$-path separator are divided into subsets $(\mathcal{P}_1,\mathcal{P}_2,\ldots)$.
The paths in $\mathcal{P}_i$ with $i> 1$ are only the shortest paths if the paths from previous subsets are removed.
Thus, if the (weighted) diameter of $G$ is $\mathcal{D}$, this only gives us a bound for the paths in $\mathcal{P}_1$.
This does not help us when considering the lengths of the paths in the induced graph $G\setminus \mathcal{P}_1$.
The diameter of $G\setminus \mathcal{P}_1$ and, therefore, the length of the paths in $\mathcal{P}_2$ are unbounded.

Now, recall that we do \textbf{not} need to cover the whole separator as it is sufficient that each node only has a constant fraction of nodes in distance $\mathcal{D}$.
Our main trick is only considering a carefully picked subset of the separator paths that intersect with paths until length $O(\mathcal{D})$.
This subset only consists of paths of bounded length that separate not all, but still a \emph{large enough} fraction of the nodes.
We begin our search for this set of paths with the following lemma:
\begin{lemma}
\label{lemma:helper_separator}
    Let $G := (V, E, w)$ be a weighted {$k$-path separable} graph of $n$ nodes with (weighted) diameter $\mathcal{D}$.
    Then, there exists a set $\mathcal{Z} \subseteq V$ with $|\mathcal{Z}| \geq (\nicefrac{3}{4})\cdot n$ and a set $\mathcal{B} = \{P_1, \ldots, P_\kappa\}$ of at most $k$ simple path of length $32\mathcal{D}$ such that for all $v \in \mathcal{Z}$, it holds $B_{G \setminus \mathcal{B}}(v,8\mathcal{D}) \leq \left(\nicefrac{3}{4}\right)\cdot n$.
\end{lemma}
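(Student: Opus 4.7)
The plan is as follows. First, if fewer than $n/4$ vertices $v \in V$ are ``bad'' in the sense that $|B_G(v,8\mathcal{D})|>(3/4)n$, I simply set $\mathcal{B}=\emptyset$ and take $\mathcal{Z}$ to be the complement of the bad vertices; this already gives $|\mathcal{Z}|\geq(3/4)n$ and the ball condition holds trivially for every $v\in\mathcal{Z}$. So I may assume that there are more than $n/4$ bad vertices. A short triangle-inequality argument on the intersection of the $8\mathcal{D}$-balls of two bad vertices (each of size $>(3/4)n$, so their union cannot be all of $V$) shows that any two bad vertices are at $G$-distance at most $16\mathcal{D}$. Fixing any bad vertex $v^\star$, this forces every bad vertex to lie in $W := B_G(v^\star,16\mathcal{D})$, so it suffices to shrink the $8\mathcal{D}$-ball around $v^\star$ together with those of its nearby bad vertices.

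Second, I would apply $k$-path separability to $G$ to obtain a separator $S = \mathcal{P}_0\cup \mathcal{P}_1\cup\dots$ of at most $k$ paths. Since every component of $G\setminus S$ has size at most $n/2$, we have $|B_{G\setminus S}(v^\star,8\mathcal{D})|\leq n/2\leq(3/4)n$; moreover, any walk of weighted length $\leq 8\mathcal{D}$ from $v^\star$ stays inside $B_G(v^\star,8\mathcal{D})$, so only the nodes of $S\cap B_G(v^\star,8\mathcal{D})$ are actually needed to realize this bound. I would build $\mathcal{B}$ by replacing each path $P\in S$ with a single subpath $P^{\mathrm{trunc}}$ of length $\leq 32\mathcal{D}$ that lies inside $W$: paths from $\mathcal{P}_0$ are shortest in $G$ and so have length $\leq\mathcal{D}$ with no truncation needed, while for $P\in\mathcal{P}_i$ with $i\geq 1$ I take $P^{\mathrm{trunc}}$ to be the ``dominant'' in-ball segment of $P$ with respect to $W$, chosen via pigeonhole among the (possibly many) entry/exit segments of $P$ into $W$. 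Finally, $\mathcal{Z}$ is defined as the union of the already-good vertices together with $v^\star$ and its vicinity, whose $8\mathcal{D}$-balls shrink once $\mathcal{B}$ is removed; a counting argument then yields $|\mathcal{Z}|\geq(3/4)n$.

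The main obstacle is bounding the length of $P^{\mathrm{trunc}}$ for higher-level paths and proving that the truncated paths still suffice to separate the $8\mathcal{D}$-ball around $v^\star$. The difficulty is that $P\in\mathcal{P}_i$ is only guaranteed to be a shortest path in $G\setminus\mathcal{P}_{<i}$, so distances along $P$ can far exceed the corresponding $G$-distances, and $P$ may enter and leave $W$ many times, making $P\cap W$ a disjoint union of several subpaths rather than a single subpath. The technical core will be a pigeonhole argument over these entry/exit segments: using the triangle inequality in $G$, consecutive entry/exit points of $P$ with $W$ lie at $G$-distance at most $32\mathcal{D}$, and the sub-path property of shortest paths in $G\setminus\mathcal{P}_{<i}$ shows that at most one such segment can contribute substantially to separating $B_{G\setminus S}(v^\star,8\mathcal{D})$; that segment, being contained in $W$, automatically has length at most $32\mathcal{D}$. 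Combining this truncation with the fact that the separator property of $S$ is preserved locally around $v^\star$, and absorbing the vertices far from $v^\star$ for which the argument does not directly apply into the $n/4$ slack of $|\mathcal{Z}|$, then completes the proof.
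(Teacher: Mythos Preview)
Your overall strategy --- find a large cluster of ``bad'' vertices, localize the separator around it, and truncate the separator paths to short subpaths --- is the right shape, but the truncation step has a real gap. For a path $P\in\mathcal{P}_i$ with $i\ge1$, $P$ is a shortest path only in $G\setminus\mathcal{P}_{<i}$, not in $G$. Hence the assertion ``that segment, being contained in $W$, automatically has length at most $32\mathcal{D}$'' is false in general: a subpath of $P$ whose endpoints lie at $G$-distance $\le 32\mathcal{D}$ can have arbitrarily large length along $P$, because the shortcut in $G$ may run through vertices of $\mathcal{P}_{<i}$ that are not available in $G\setminus\mathcal{P}_{<i}$. The accompanying ``dominant segment / pigeonhole'' argument is also unjustified: several maximal segments of $P\cap W$ can all be essential for separation, and nothing rules out that each is individually long in the $P$-metric. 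Measuring everything with $G$-distances is precisely what breaks.

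The paper fixes this by measuring distances in the \emph{right intermediate graph}. It introduces a critical index $i_{\mathcal{D}'}$ (the last level at which some vertex still has $>(3/4)n$ vertices within $\mathcal{D}'=8\mathcal{D}$ in $G^{(i_{\mathcal{D}'})}$), picks a terminal node $z$ witnessing this, and sets $\mathcal{Z}:=B_{G^{(i_{\mathcal{D}'})}}(z,\mathcal{D}')$. Since $G^{(i_{\mathcal{D}'})}\subseteq G^{(i-1)}$ for every $i\le i_{\mathcal{D}'}$, the set $\mathcal{Z}$ has $G^{(i-1)}$-diameter at most $2\mathcal{D}'$ for \emph{every} such $i$. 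For each $P_{i_j}\in\mathcal{P}_i$ one then takes the subpath between its first and last intersection with $B_{G^{(i-1)}}(\mathcal{Z},\mathcal{D}')$; because $P_{i_j}$ is a shortest path in $G^{(i-1)}$ and both endpoints lie at $G^{(i-1)}$-distance $\le\mathcal{D}'$ from $\mathcal{Z}$, this subpath has length $\le \mathcal{D}'+2\mathcal{D}'+\mathcal{D}'=4\mathcal{D}'=32\mathcal{D}$. The key point you are missing is that both the ball defining the truncation and the shortest-path property of $P_{i_j}$ must live in the \emph{same} graph $G^{(i-1)}$; working with $W=B_G(v^\star,16\mathcal{D})$ puts them in different graphs and the length bound collapses. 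A separate short induction then shows that removing only these truncated subpaths already shrinks $B(v,\mathcal{D}')$ for every $v\in\mathcal{Z}$, without any pigeonhole over segments.
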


\begin{proof}
Define $\mathcal{D}' = 8\mathcal{D}$.
To construct set $\mathcal{Z}$, we will first establish some helpful properties of $k$-path separable graphs.
Our first important concept is the \emph{critical index} of a $k$-path separator. 
We define the \textbf{critical index} $i_{\mathcal{D}'}$ to be the highest index, $1 \leq i_{\mathcal{D}'} \leq k'$, for which it holds:
    \begin{align*}
        \exists v \in V: |B_{G^{(i_{\mathcal{D}'})}}(v,\mathcal{D}')| > (\nicefrac{3}{4}) \cdot n
    \end{align*}
    and 
     \begin{align*}
        \not\exists v \in V: |B_{G^{(i_{\mathcal{D}'}+1)}}(v,\mathcal{D}')| > (\nicefrac{3}{4}) \cdot n 
    \end{align*}
One can easily verify that there is a well-defined $i_{\mathcal{D}'}$ for each $\mathcal{D}' \geq \mathcal{D}$.
Note that $\mathcal{D}'$ is greater than the graph's diameter and therefore:
\begin{align*}
    \forall v \in V: |B_{G}(v,\mathcal{D}')| =  n
\end{align*}
Recall that by definition of $k$-path separators by Abraham and Gavoille in \cite{DBLP:conf/podc/AbrahamG06}, each connected component in $G \setminus S$ is of size at most $\nicefrac{n}{2}$.
Thus, \emph{any} ball of \emph{any} size around \emph{any} node (that has not been removed as part of the separator) can contain at most $\nicefrac{n}{2}$ nodes once we removed $S$.
Thus, if there were no critical index for a distance $\mathcal{D}'$, it would hold: 
   \begin{align*}
        \exists v \in V: |B_{G\setminus S}(v,\mathcal{D}')| > (\nicefrac{3}{4}) \cdot n
    \end{align*}
This is a contradiction as all connected components in $G\setminus S$ are of size at most $\nicefrac{n}{2}$.
From now on, we are interested in these first $i_{\mathcal{D}'}$ (sets of) paths of the separator $S$, which we denote as
\begin{align*}
    \mathcal{S}' := (\mathcal{P}_1, \ldots, \mathcal{P}_{i_{\mathcal{D}'}}). 
\end{align*}
Note that it holds for all $v \in V\setminus \mathcal{S}'$:
\begin{align*}
    B_{G \setminus \mathcal{S}'}(v,\mathcal{D}') \leq (\nicefrac{3}{4}) \cdot n
\end{align*}
Thus, the set $\mathcal{S}'$ is a weak $\mathcal{D}'$-separator.
We will refer to it as the critical separator.
Further, we call the last \emph{unseparated} subgraph $G^{(i_{\mathcal{D}'})}$ the critical graph.
Next, we introduce another central element of our proof, the \emph{terminal node}, which is a node that still has more than $(\nicefrac{3}{4})\cdot n$ nodes in the distance $\mathcal{D}'$ in $G^{(i_{\mathcal{D}'}-1)}$.
As there can be more than one node with this property, we pick the node with the highest identifier.
However, any other method to break ties would also work.
Formally, we say a node $z_{\mathcal{D}'} \in V$ is a terminal node if and only if it holds: 
    \begin{align*}
         z_{\mathcal{D}'} := \argmax_{z \in V} \left\{z \in V \mid \forall j \leq i_{\mathcal{D}'}: |B_{G^{(j)}}(z,\mathcal{D}')| \ge (\nicefrac{3}{4})\cdot n \right\}
    \end{align*}
Given the concept of terminal nodes and critical indices, we can now define our set $\mathcal{Z}$ as
\begin{align*}
\mathcal{Z} &:= B_{G^{(i_{\mathcal{D}'})}}( z_{\mathcal{D}'} ,\mathcal{D}') 
\end{align*}  
Clearly, $\mathcal{Z}$ contains at least $(\nicefrac{3}{4}) \cdot n$ nodes by the definition of the terminal node. 
Thus, we already have the correct size required by the lemma.
Therefore, it remains to prove the existence of the set $\mathcal{B}$.
We define it as the intersections of all balls $B_{G^{(i)}}(z,\mathcal{D}')$ of all nodes $z \in \mathcal{Z}$ with all paths of $\mathcal{S}'$ from the respective graphs.
Formally, the boundary is defined as follows:
\begin{definition}[Boundary]
\label{def:boundary}
Let $\mathcal{S}(\mathcal{D}') := (\mathcal{P}_1, \ldots, \mathcal{P}_{i_{\mathcal{D}'}})$ with $\mathcal{P}_i := \{P_{i_1}, P_{i_2}, \ldots \}$ be the critical separator and let $\mathcal{Z}$ be the core set.
Then, we define the boundary $\mathcal{B} :=  \{ \mathcal{P}'_{1}, \mathcal{P}'_{2}, \ldots \} $ as follows:
    \begin{align*}
        P'_{i_j} &:= P_{i_j} \cap B_{G^{(i-1)}}(\mathcal{Z},\mathcal{D}')\\
        \mathcal{P}'_{i} &:= \{P'_{i_1}, P'_{i_2}, \ldots\}\\
        \mathcal{B} &:= \{ \mathcal{P}'_{1}, \mathcal{P}'_{2}, \ldots \} 
    \end{align*}
\end{definition}
\begin{figure}
    \centering
    \begin{tikzpicture}
\tikzset{opacity=1/.style={decorate, decoration=snake}}
    \coordinate (A) at (0,0);   
    \draw[name path=c1, fill=gray, opacity=0.3] (A) circle [radius=5cm];
    \draw[name path=c2, fill=gray, opacity=0.1] (A) circle [radius=3cm];
    \node[] at (0,-2) {\Large $\mathcal{Z}(\mathcal{D}')$};
    \node[] at (0,-4) {\Large $B_{G^{(i-1)}}(\mathcal{Z}(\mathcal{D}'), \mathcal{D}')$};

    \coordinate (B) at (-2,1);
    \coordinate (C) at (-4.5,-4.5);
    \path[name path=BC] (B) -- (C);
    \path [name intersections={of=BC and c1,by=BxC}];
    \draw [color=red, very thick, opacity=1] (B) -- (BxC);
    \draw [very thick, opacity=1](BxC) -- (C);
    
    \coordinate (D) at (5.5, -2.5);
    \path[name path=BD] (B) -- (D);
    \path [name intersections={of=BD and c1,by=BxD}];
    \draw [color=red, very thick, opacity=1] (B) -- (BxD);
    \draw [very thick, opacity=1](BxD) -- (D);

    \coordinate (E) at (-4, 4);
    \path[name path=EB] (E) -- (B);
    \path [name intersections={of=EB and c1,by=ExB}];
    \draw [color=red, very thick, opacity=1] (B) -- (ExB);
    \draw [very thick, opacity=1](ExB) -- (E);
    \node[] at (-4.3, 4.3) {\Large $\mathcal{S}(\mathcal{D}')$};

    \coordinate (G) at (5.5, -0.5);
    \path[name path=EG] (C) -- (G);
    \path [name intersections={of=EG and BD, by=H}];
    \path[name path=GH] (G) -- (H);
    \path [name intersections={of=GH and c1,by=GxH}];
    \draw [color=red, very thick, opacity=1] (H) -- (GxH);
    \draw [very thick, opacity=1](GxH) -- (G);

    \coordinate (I) at (-2.5, 5);
    \coordinate (J) at (5.3, 2);
    \coordinate (K) at (1, 2);
    \path [name path=IK] (I) -- (K);
    \path [name path=KJ] (K) -- (J);
    \path [name intersections={of=IK and c1, by=IxK}];
    \draw [color=red, very thick, opacity=1] (IxK) -- (K);
    \draw [very thick, opacity=1](I) -- (IxK);
    
    \path [name intersections={of=KJ and c1, by=KxJ}];
    \draw [color=red, very thick, opacity=1] (K) -- (KxJ);
    \draw [very thick, opacity=1](KxJ) -- (J);
    
\end{tikzpicture}
    \caption{An illustration of the boundary $\mathcal{B}(\mathcal{Z}(\mathcal{D}'))$. The inner circle denotes the core set $\mathcal{Z}(\mathcal{D}')$ and the outer circle denotes the ball $B_{G^{(i-1)}}(\mathcal{Z}(\mathcal{D}'), \mathcal{D}')$. $\mathcal{S}(\mathcal{D}')$ is the critical separator, consisting of several shortest paths. The red parts of $\mathcal{S}(\mathcal{D}')$ make up the boundary $\mathcal{B}(\mathcal{Z}(\mathcal{D}'))$. It suffices to remove the boundary s.t. a constant fraction of nodes have a distance greater than $\mathcal{D}'$ from the core set.}
    \label{fig:boundary}
\end{figure}
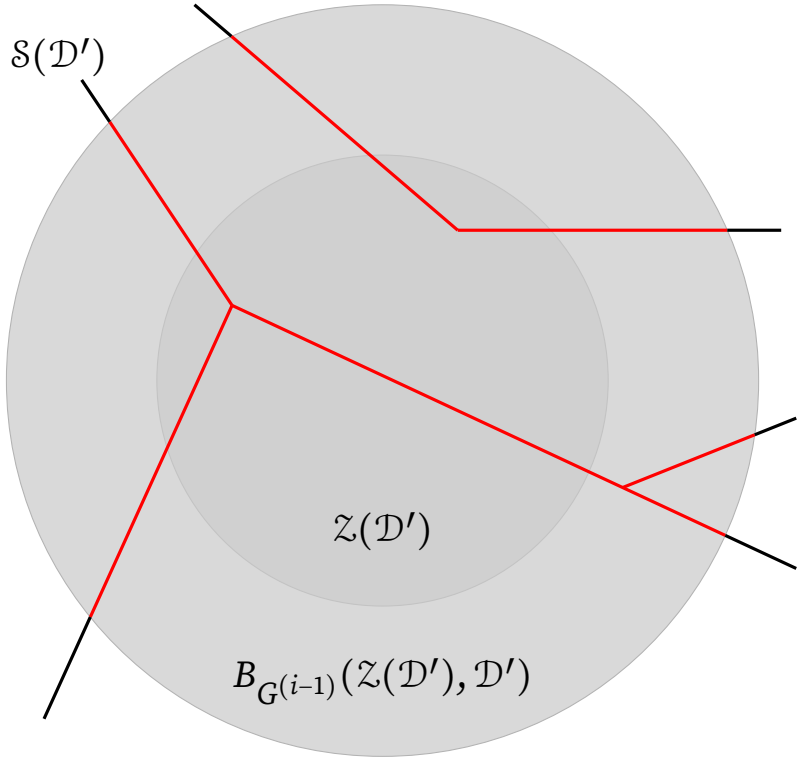
Figure \ref{fig:boundary} depicts the intuition behind this definition. 
Note that the boundary is a strict subset of a $k$-path separator.
Therefore, there can be at most $k$ sets $P'_{i_j}$, one for each path $P_{i_j}$ in the separator. 
However, the sets $P'_{i_j}$ themselves are not necessarily connected paths.
They are merely subsets of paths that do not necessarily consist of consecutive nodes of that path.
We will show that for each $P'_{i_j}$ there is a path of length at most $4\mathcal{D}'$ that covers all its nodes.
More formally, it holds:
\begin{claim}
\label{lemma:boundary_weight}
Let $G := (V,E,\ell)$ be a weighted $k$-path separable graph let $\mathcal{B}(\mathcal{Z}(\mathcal{D}')) := \{\mathcal{P}'_1, \mathcal{P}'_2, \ldots\}$ with $\mathcal{P}'_i = \{P'_{i_1}, P'_{i_2}, \ldots\}$ be as defined in Definition \ref{def:boundary}.
Then, for all $P'_{i_j}$, there is a simple path $\Tilde{P}_{i_j}$, s.t., it holds: 
    \begin{align*}
         \ell(\Tilde{P}_{i_j}) \leq 4\mathcal{D}' 
    \end{align*}
\end{claim}
\begin{proof}
Let $P_{i_j}$ be a path from the critical separator $\mathcal{S}$.
   In the following, we denote $v_{(1)}$ as the first node of $P_{i_j}$ that intersects with $B_{G^{(i-1)}}(\mathcal{Z},\mathcal{D}')$.
    Likewise, let $v_{(\ell')}$ be the last node of $P_{i_j}$ that intersects with $B_{G^{(i-1)}}(\mathcal{Z},\mathcal{D}')$.
    Thus, all nodes of $P_{i_j}$ that intersect with $B_{G^{(i-1)}}(\mathcal{Z},\mathcal{D}')$ lie on the subpath $\Tilde{P}_{i_j} := (v_{(1)},\ldots,v_{(\ell')})$.
    As it covers all nodes, it remains bound to its length.
    Recall that $P_{i_j}$ is a shortest path in $G^{(i-1)}$ and therefore, it holds that: 
    \begin{align*}
        d_{P_{i_j}}(v_{(1)},v_{(\ell')}) = d_{G^{(i-1)}}(v_{(1)},v_{(\ell')}).
    \end{align*}
    For both $v_{(1)}$ and $v_{(\ell')}$ there must be (not necessarily distinct) nodes $v',w' \in \mathcal{Z}$, s.t., it holds $v_{(1)} \in B_{G^{(i-1)}}(v',\mathcal{D}')$ and $v_{(\ell')} \in B_{G^{(i-1)}}(w',\mathcal{D}')$.
    By definition, both $v'$ and $w'$ are in distance $\mathcal{D}'$ to terminal node $z_{\mathcal{D}'}$ in the critical graph ${G^{(i_{\mathcal{D}'})}}$.
    Since we only remove nodes from $G$, it holds:
    \begin{align*}
        B_{G^{(1)}}(z_{\mathcal{D}'},\mathcal{D}') \supseteq B_{G^{(2)}}(z_{\mathcal{D}'},\mathcal{D}') \supseteq \ldots \supseteq B_{G^{(i_{\mathcal{D}'})}}(z_{\mathcal{D}'},\mathcal{D}')
    \end{align*}
    Thus, we can upper bound the distance between $v'$ and $w'$ in $G^{(i-1)}$ by their distance in $G^{(i_{\mathcal{D}'})}$.
    By the triangle inequality, it therefore holds for all $v',w' \in \mathcal{Z}(\mathcal{D}')$:
\begin{align*}
         d_{G^{(i-1)}}(v',w') \leq d_{G^{(i_{\mathcal{D}'-1})}}(v',w') \leq d_{G^{(i_{\mathcal{D}'-1})}}(v',z_{\mathcal{D}'}) + d_{G^{(i_{\mathcal{D}'}-1)}}(z_{\mathcal{D}'},w') \leq 2\mathcal{D}'
\end{align*}
Thus, both $v'$ and $w'$ are in the distance at most $2\mathcal{D}'$ in graph $G^{(i-1)}$ where $P_{i_j}$ is a shortest path. 
Combining these facts with the triangle inequality gives us:
    \begin{align*}
        d_{G^{(i-1)}}(v_{(1)},v_{(\ell')}) &\leq d_{G^{(i-1)}}(v_{(1)},v') + d_{G^{(i-1)}}(v',w')+ d_{G^{(i-1)}}(w',v_{(\ell')})\\
        &\leq \mathcal{D}' + d_{G^{(i-1)}}(v',w')+ \mathcal{D}'\\
        &\leq 2\mathcal{D}' + 2\mathcal{D}'
        \leq 4\mathcal{D}'
    \end{align*}
    Thus, the claim follows.
\end{proof}
Next, we claim that this boundary must intersect with many paths of length $\mathcal{D}'$ as removing the boundary causes all nodes in $\mathcal{Z}$ to lose the connection to a constant fraction of nodes.
\begin{claim}
    For all $z \in \mathcal{Z} \setminus \mathcal{B}$, it holds $B_{G\setminus\mathcal{B}}(z,\mathcal{D}') \leq (\nicefrac{3}{4})\cdot n$.
\end{claim}
\begin{proof}
Recall that $\mathcal{S}'$ is weak $\mathcal{D}'$-separator and therefore, it holds for all nodes $v \in V\setminus \mathcal{S}'$ by definition:
\begin{align*}
     |B_{G \setminus \mathcal{S}'}(v,\mathcal{D}')| \leq (\nicefrac{3}{4})\cdot n
\end{align*}
The boundary definition contains all nodes of $\mathcal{S}'$ that intersects with a path of length $\mathcal{D}'$ from any node in $\mathcal{Z}$.
Therefore, for all nodes $z \in \mathcal{Z}$, it holds:
\begin{align*}
     |B_{G \setminus \mathcal{S}'}(v,\mathcal{D}')| = |B_{G \setminus \mathcal{B}}(v,\mathcal{D}')|
\end{align*}
This can be asserted by a simple induction over the sets $\mathcal{P}'_1, \dots, \mathcal{P}'_{i_{\mathcal{D}'}}$ and $\mathcal{P}_1, \dots, \mathcal{P}_{i_{\mathcal{D}'}}$ that make up $\mathcal{B}$ and $\mathcal{S}$ respectively.  For easier notation, define:
    \begin{align*}
        G_{i} := G\setminus \bigcup_{j<i}\mathcal{P}_j\\
        G'_{i} := G\setminus \bigcup_{j<i}\mathcal{P}'_j
    \end{align*}
\begin{itemize}
    \item For the induction beginning, we suppose that no paths are removed. In this case, the statement holds trivially as $G'_1 = G_1$.
    \item For the induction steps, we assume that for all $z \in \mathcal{Z}$, it holds:
\begin{align*}
    B_{G_i}(z,\mathcal{D}') = B_{G'_{i}}(z,\mathcal{D}')
\end{align*}
    We now show that $B_{G_{i+1}}(z,\mathcal{D}') = B_{G'_{i+1}}(z,\mathcal{D}')$  for all $z \in \mathcal{Z}$.
    As each $P'_{i_j}$ is a subset of $P_{i_j}$, it holds that $B_{G_{i+1}}(v,\mathcal{D}') \subseteq B_{G'_{i+1}}(v,\mathcal{D}')$ as we can only reach more nodes from $z$ we remove fewer nodes.
    For contradiction, assume there is a node $u \in B_{G'_{i+1}}(z,\mathcal{D}')$ that is not in $B_{G_{i+1}}(z,\mathcal{D}')$.    
    As $u \not\in B_{G_{i+1}}(z,\mathcal{D}')$, a node on \textbf{every} path from $z$ to $u$ of length at most $\mathcal{D}'$ is removed.
    In other words, every such path intersects with some path $P_{i_j} \in \mathcal{P}_i$.
    Let $p$ be such an intersecting node.
    Clearly, $p \in B_{G_i}(z,\mathcal{D}')$ as $p$ is closer to $z$ than $u$ because it is a predecessor of $u$ on a path of length $\mathcal{D}'$.
    Therefore, any such $p$ also part of $P'_{i_j} := B_{G'_i}(z,\mathcal{D}') \cap P_{i_j}$ as, per the induction's hypothesis, it holds $B_{G'_i}(z,\mathcal{D}') = B_{G_i}(z,\mathcal{D}')$.
    Thus, all possible nodes $p$ are removed from $G'_i$ and $u$ cannot not reachable in $G'_{i+1}$ from $z$ within radius $\mathcal{D}'$.
    This is a contradiction as we assumed $u \in B_{G'_{i+1}}(z,\mathcal{D}')$.
    Therefore, it holds $B_{G_{i+1}}(v,\mathcal{D}') \supseteq B_{G'_{i+1}}(v,\mathcal{D}')$ and together with our initial observation that $B_{G_{i+1}}(v,\mathcal{D}') \subseteq B_{G'_{i+1}}(v,\mathcal{D}')$, we have $B_{G_{i+1}}(v,\mathcal{D}') = B_{G'_{i+1}}(v,\mathcal{D}')$ as claimed.
\end{itemize}
This concludes the proof of the claim as
\begin{align*}
     B_{G\setminus\mathcal{S}'}(z,\mathcal{D}') = B_{G_{i_{\mathcal{D}'}}}(z,\mathcal{D}') = B_{G'_{i_{\mathcal{D}'}}}(z,\mathcal{D}') =  B_{G \setminus \mathcal{B}}(z,\mathcal{D}')
\end{align*}
\end{proof}
Thus, the sets $\mathcal{Z}$ and $\mathcal{B}$ fulfill the properties required by the lemma. As $\mathcal{D}' = 8\mathcal{D}$, $\mathcal{B}$ consists of (at most) $k$ paths of length $32\mathcal{D}$. Further, as for each $z \in Z$ it holds
\begin{align*}
     B_{G \setminus \mathcal{B}}(z,8\mathcal{D}) =  B_{G\setminus\mathcal{S}'}(z,8\mathcal{D}) \leq (\nicefrac{3}{4})\cdot n, 
\end{align*}
the lemma follows.
\end{proof}

Next, we require the following helpful observation that is true for any distance metric on graphs
\begin{lemma}
\label{lemma:2delta}
    Let $G := (V,E,\ell)$ be a weighted graph and let $\mathcal{D}'$ be some distance parameter.
    Suppose there is a set of nodes $\mathcal{Z} \subseteq V$ of size $|\mathcal{Z}| \geq (\nicefrac{1}{c})\cdot n$ for some $c > 1$, s.t., it holds for all $v \in \mathcal{Z}$ that $|B_G(v,2\mathcal{D}')| \leq (1-\nicefrac{1}{c})\cdot n$.
    Then, \textbf{for all nodes} $v \in V$, it holds: 
    \begin{align*}
        |B_G(v,\mathcal{D}')| \leq (1-\nicefrac{1}{c})\cdot n
    \end{align*}
\end{lemma}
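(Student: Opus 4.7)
The plan is to prove Lemma \ref{lemma:2delta} by contradiction using a pigeonhole argument together with the triangle inequality, which is the standard move whenever one wants to upgrade a bound that holds only on a dense subset to a bound that holds everywhere.

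Concretely, I would suppose for contradiction that there exists some $u \in V$ with $|B_G(u,\mathcal{D}')| > (1-1/c)\cdot n$. The first step is to observe that this forces the complement $V \setminus B_G(u,\mathcal{D}')$ to have strictly fewer than $(1/c)\cdot n$ nodes. Since by hypothesis $|\mathcal{Z}| \geq (1/c)\cdot n$, a simple counting argument (pigeonhole) yields that $\mathcal{Z} \cap B_G(u,\mathcal{D}') \neq \emptyset$, so we can pick some witness $z \in \mathcal{Z}$ with $d_G(z,u) \leq \mathcal{D}'$.

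The second step is to apply the triangle inequality to transfer the large ball around $u$ into a large ball around $z$. Namely, for every $w \in B_G(u,\mathcal{D}')$, we have
\begin{align*}
d_G(z,w) \leq d_G(z,u) + d_G(u,w) \leq \mathcal{D}' + \mathcal{D}' = 2\mathcal{D}',
\end{align*}
so $B_G(u,\mathcal{D}') \subseteq B_G(z,2\mathcal{D}')$. Hence $|B_G(z,2\mathcal{D}')| \geq |B_G(u,\mathcal{D}')| > (1-1/c)\cdot n$, which directly contradicts the assumption on $\mathcal{Z}$ that every $z \in \mathcal{Z}$ satisfies $|B_G(z,2\mathcal{D}')| \leq (1-1/c)\cdot n$.

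There is no real obstacle here; the only thing to be slightly careful about is the direction of the strict vs. non-strict inequalities so that the pigeonhole step genuinely produces a point of $\mathcal{Z}$ inside the ball, and that the contradiction at the end is sharp (strict $>$ against non-strict $\leq$). Everything else is one application of the triangle inequality, so the proof will be only a few lines. Nothing in this argument uses the weighting of $G$ beyond the fact that $d_G$ satisfies the triangle inequality, which is immediate for shortest-path metrics.
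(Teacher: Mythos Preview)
Your proposal is correct and matches the paper's own proof essentially line for line: assume a bad node, use pigeonhole to find a point of $\mathcal{Z}$ inside its $\mathcal{D}'$-ball, then apply the triangle inequality to obtain $B_G(u,\mathcal{D}')\subseteq B_G(z,2\mathcal{D}')$ and derive the contradiction. There is nothing to add or change.
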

\begin{proof}
The lemma clearly holds for all nodes in $\mathcal{Z}$, so it suffices to analyze the remaining nodes.
Assume for contradiction that there is a node $w \in V\setminus \mathcal{Z}$ with $|B(w,\mathcal{D}')| > (1-\nicefrac{1}{c}) \cdot n$.
Then, $B(w,\mathcal{D}')$ must contain at least one node $v \in \mathcal{Z}$, which --- by the very definition of $B(w,\mathcal{D}')$ --- implies $d(v,w) \leq \mathcal{D}'$.
By the triangle inequality, it then holds for every other node $u \in B(w,\mathcal{D}')$ that
    $d(v,u) \leq d(v,w) + d(w,u) \leq 2\mathcal{D}'$.
Therefore, it follows $B(w,\mathcal{D}') \subseteq B(v,2\mathcal{D}')$
As we have a clear bound on the size of $B(v,2\mathcal{D}')$, this implies that: 
\begin{align*}
     (1-\nicefrac{1}{c}) \cdot n < |B(w,\mathcal{D}')| \leq |B(v,2\mathcal{D}')|  \leq (1-\nicefrac{1}{c}) \cdot n
\end{align*}
This is a contradiction.
\end{proof} 


With this insight, we can now show Lemma \ref{thm:weak_divandconquer}.
As our main analytical tool, we consider the following potential:
\begin{align*}
    \Phi_t := \left|\left\{ v \in V_t \bigm\vert |B_{G_t}(v,2\mathcal{D})| \geq (\nicefrac{7}{8})\cdot n \right\}\right|
\end{align*}
As soon as this potential drops below $(\nicefrac{7}{8})\cdot n$, we have more than $(\nicefrac{1}{8})\cdot n$ nodes with less than $(\nicefrac{7}{8})\cdot n$ nodes in distance $2\mathcal{D}$.
Therefore, by Lemma \ref{lemma:2delta}, each node has at most $(\nicefrac{1}{8})\cdot n$ nodes in distance $\mathcal{D}$ as desired.
Thus, we want to bound the time until the potential drops.

Let sets $\mathcal{Z} \subseteq V$ and $\mathcal{B}:= \{P_1, P_2, \ldots\}$ be as defined in Lemma \ref{lemma:helper_separator}.
Note that for all $z \in \mathcal{Z}$, it holds:
\begin{align}
    B_{G\setminus \mathcal{B}}(z,8\mathcal{D}) \leq (\nicefrac{3}{4})\cdot n
\end{align}
Lemma \ref{lemma:2delta} now implies that:
\begin{lemma}
    Let $G := (V, E, w)$ be a weighted {$k$-path separable} graph of $n$ nodes with (weighted) diameter $\mathcal{D}$.
    Then, there exists a set $\mathcal{B} = \{P_1, \ldots, P_\kappa\}$ of at most $k$ simple paths of length $32\mathcal{D}$ such that for all $v \in V$, it holds $B_{G \setminus \mathcal{B}}(v,4\mathcal{D}) \leq \left(\nicefrac{3}{4}\right)\cdot n$.
\end{lemma}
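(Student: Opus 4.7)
The plan is to combine the two lemmas immediately preceding the statement: \autoref{lemma:helper_separator} (which produces a core set $\mathcal{Z}$ together with a bounding set of short paths $\mathcal{B}$) and \autoref{lemma:2delta} (which propagates a ball-size bound from a large subset of vertices to all vertices via the triangle inequality). The statement we want is exactly what pops out once these two are composed with the right parameters, so the proof is essentially a parameter-matching exercise and does not require any new combinatorial idea.

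Concretely, I would first invoke \autoref{lemma:helper_separator} on $G$ to obtain a set $\mathcal{Z}\subseteq V$ with $|\mathcal{Z}|\ge (\nicefrac34)\cdot n$ and a set $\mathcal{B}=\{P_1,\dots,P_\kappa\}$ of $\kappa\le k$ simple paths of (weighted) length at most $32\mathcal{D}$ such that
\begin{equation*}
    \forall z\in\mathcal{Z}:\quad |B_{G\setminus\mathcal{B}}(z,8\mathcal{D})|\le (\nicefrac34)\cdot n.
\end{equation*}
This already matches the conclusion of the target lemma on the subset $\mathcal{Z}$, but only at radius $8\mathcal{D}$ and only for nodes in $\mathcal{Z}$, so we must bridge the gap to all of $V$ at radius $4\mathcal{D}$.

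This is precisely the role of \autoref{lemma:2delta}. I would apply it to the graph $H := G\setminus\mathcal{B}$ with parameters $\mathcal{D}':=4\mathcal{D}$ and $c:=4$. The hypothesis $|\mathcal{Z}|\ge (\nicefrac{1}{c})\cdot n$ is satisfied since $|\mathcal{Z}|\ge (\nicefrac34)\cdot n\ge (\nicefrac14)\cdot n$, and the required ball bound $|B_H(z,2\mathcal{D}')|\le (1-\nicefrac{1}{c})\cdot n = (\nicefrac34)\cdot n$ for every $z\in\mathcal{Z}$ is exactly the guarantee delivered by \autoref{lemma:helper_separator}, because $2\mathcal{D}'=8\mathcal{D}$. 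The conclusion of \autoref{lemma:2delta} then upgrades this to all vertices: for every $v\in V$ (with the convention that balls are taken in $H=G\setminus\mathcal{B}$, so vertices of $\mathcal{B}$ trivially have empty balls) we obtain
\begin{equation*}
    |B_{G\setminus\mathcal{B}}(v,4\mathcal{D})|\le (\nicefrac34)\cdot n,
\end{equation*}
which is exactly the desired inequality, while $\mathcal{B}$ inherits the path-count bound $\kappa\le k$ and the length bound $32\mathcal{D}$ from \autoref{lemma:boundary_weight}.

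There is no substantial obstacle here: both ingredient lemmas are already established, and the only thing to verify is that the constants line up ($c=4$, $\mathcal{D}'=4\mathcal{D}$, so $2\mathcal{D}'=8\mathcal{D}$, and $(1-\nicefrac14)=\nicefrac34$). The only mild care needed is to handle the vertices that lie on the paths $\mathcal{B}$ themselves, but since they are deleted from $H$ their ball in $H$ is empty (or vacuously bounded), so the statement "for all $v\in V$" is correctly interpreted in $G\setminus\mathcal{B}$ and holds without any further argument.
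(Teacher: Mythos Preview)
Your proposal is correct and matches the paper's own argument essentially verbatim: the paper also obtains $\mathcal{Z}$ and $\mathcal{B}$ from \autoref{lemma:helper_separator} and then applies \autoref{lemma:2delta} with $\mathcal{D}'=4\mathcal{D}$ (so $2\mathcal{D}'=8\mathcal{D}$) and $c=4$ to upgrade the bound from $\mathcal{Z}$ to all of $V$. The only minor remark is that the length bound $32\mathcal{D}$ is part of the statement of \autoref{lemma:helper_separator} itself (which internally uses Claim~\ref{lemma:boundary_weight}), so you need not cite \autoref{lemma:boundary_weight} separately.
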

Therefore, for each node $v \in V$ there must be $(\nicefrac{1}{4})\cdot n$ nodes $w \in V$ such that all paths of length at most $4\mathcal{D}$ from $v$ to $w$ cross $\mathcal{B}$.
For each node $v \in V$, we denote these nodes as the cutoff nodes $\mathcal{C}(v) \subseteq B(v,4\mathcal{D})$.
It holds $\mathcal{C}(v) := V \setminus B_{G\setminus \mathcal{B}}(v,4\mathcal{D})$.
We want to show that --- as long as the potential is high --- there is a good chance to sample a path between a node $v$ and one of its cutoff nodes.
To this end, not that every node considered in the potential must have $(\nicefrac{1}{8})\cdot n$ cutoff nodes in the distance $2\mathcal{D}$.
This follows from the definition of the potential:
There are at most $(\nicefrac{1}{8})\cdot\mathcal{D}$ in distance greater than $2\mathcal{D}$. 
Even if all these nodes are cutoff nodes, there must still be $(\nicefrac{1}{8})\cdot n$ cutoff nodes in the distance $2\mathcal{D}$.
As we compute $2$-approximate shortest paths to all nodes, the approximate distance to these nodes is at most $4\cdot\mathcal{D}$.
Therefore, all cutoff nodes of $v$ are the potential endpoints of the approximate shortest path we sample.
We may not choose the shortest path (of length less than $2\mathcal{D}$), but the length of the path we pick is at most $4\mathcal{D}$ and (by definition) crosses $\mathcal{B}$.
Thus, as long as $\Phi_t \geq p\cdot n$, the probability that we sample a path that crosses $\mathcal{B}$ is at least $\nicefrac{p}{8}$.
Intuitively, this means that after carving $\tau \in \Omega(\log n)$ paths, \textbf{either} the potential has dropped, \textbf{or} there are $\Theta(\tau)$ paths that intersect with $\mathcal{B}$, w.h.p. 
Note that every time the latter happens, we remove a chunk of length $(\nicefrac{\epsilon}{2})\mathcal{D}$ from some path in $\mathcal{B}$.
However, this cannot happen too often due to these paths' bounded length of $32\mathcal{D}$; therefore, the potential must drop quickly.

\medskip

We now want to formalize this intuition.
To this end, fix a path $P := (v_1, v_2, \ldots)$ in graph $G$ and denote $\mathcal{K}_t({P})$ as the number of sets from $K_1, \ldots, K_t$ that intersect with ${P}$.
 Formally, it holds:
\begin{align*}
\mathcal{K}_t(P) := \sum_{i \leq t} \mathds{1}_{K_i \cap P \not= \emptyset}
\end{align*}
Now define the number of clusters that intersect with $\mathcal{B}$ as
\begin{align*}
    \mathcal{K}_t := \sum_{P_i \in \mathcal{B}} \mathcal{K}_t(P_i)
\end{align*}   
More formally, it holds:
\begin{lemma}
\label{lemma:sekt_oder_selters}
After carving $\tau \geq c \cdot 768$ sets $K_1, \ldots, K_\tau$, it holds:
\begin{align*}
    \pr{\left\{\Phi_\tau \geq (\nicefrac{7}{8})\cdot n \right\} \cap \left\{\mathcal{K}_\tau \leq \nicefrac{\tau}{256}\right\}} \leq e^{-c}
\end{align*}
\end{lemma}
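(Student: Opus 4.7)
The plan is to reduce the statement to a standard Chernoff bound via a coupling that handles the conditioning cleanly. For each step $t \leq \tau$, let $X_t := \mathds{1}\{P_t \cap \mathcal{B} \neq \emptyset\}$ be the indicator that the random path sampled in step $t$ hits some separator path, and write $S_\tau := \sum_{t=1}^\tau X_t$. Two easy observations come first. \emph{Monotonicity of the potential:} since $G_t \subseteq G_{t-1}$ and $V_t \subseteq V_{t-1}$, balls can only shrink and the set of surviving nodes can only shrink, so $\Phi_t \leq \Phi_{t-1}$ for all $t$; in particular, on the event $\{\Phi_\tau \geq (\nicefrac{7}{8}) n\}$ we have $\Phi_{t-1} \geq (\nicefrac{7}{8}) n$ for every $t \leq \tau$. \emph{Lower bound on $\mathcal{K}_\tau$:} whenever $X_t = 1$, step $t$ contributes at least one to some $\mathcal{K}_\tau(P)$, hence $\mathcal{K}_\tau \geq S_\tau$.

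Next, I reuse the calculation sketched right before the lemma in the main text to establish a conditional per-step bound: letting $\mathcal{F}_{t-1}$ denote the sigma-algebra generated by the first $t-1$ steps, on the event $\{\Phi_{t-1} \geq (\nicefrac{7}{8}) n\}$ we have $\Pr[X_t = 1 \mid \mathcal{F}_{t-1}] \geq \nicefrac{1}{16}$. Indeed, $v_t$ is uniform in $V_{t-1}$, so with probability at least $\Phi_{t-1}/n \geq \nicefrac{7}{8}$ its potential $\Phi_{t-1}(v_t)$ equals $1$; conditionally on this, at least $(\nicefrac{1}{8}) n$ cutoff-nodes of $v_t$ lie within true distance $2\mathcal{D}$ and hence within $2$-approximate distance $4\mathcal{D}$, so the uniformly chosen endpoint $w_t \in N_{v_t}$ lands on one of them with probability at least $\nicefrac{1}{8}$, forcing $P_t$ to cross $\mathcal{B}$.

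To turn this conditional statement into a genuine tail bound, I couple: let $\sigma := \min\{t : \Phi_t < (\nicefrac{7}{8}) n\}$ and define $Y_t := X_t$ for $t \leq \sigma$ and $Y_t$ an independent $\mathrm{Bernoulli}(\nicefrac{1}{16})$ variable for $t > \sigma$. By the previous paragraph, $\Pr[Y_t = 1 \mid \mathcal{F}_{t-1}] \geq \nicefrac{1}{16}$ for every $t$ and every history, so $\sum_{t \leq \tau} Y_t$ stochastically dominates a $\mathrm{Bin}(\tau, \nicefrac{1}{16})$ random variable with mean $\mu = \tau/16$. A standard multiplicative Chernoff bound gives
\[
\Pr\!\left[\sum_{t=1}^\tau Y_t \leq \frac{\mu}{2}\right] \;\leq\; \exp\!\left(-\frac{\mu}{8}\right) \;=\; \exp\!\left(-\frac{\tau}{128}\right).
\]
For $\tau \geq 768 c \geq 128 c$ this is at most $e^{-c}$, and since $\tau/256 \leq \tau/32 = \mu/2$, the same bound controls the event $\{\sum Y_t \leq \tau/256\}$. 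Finally, the event $\{\Phi_\tau \geq (\nicefrac{7}{8}) n\}$ is contained in $\{\sigma > \tau\}$, on which $\sum_{t \leq \tau} Y_t = S_\tau \leq \mathcal{K}_\tau$, so
\[
\Pr\!\left[\{\Phi_\tau \geq \tfrac{7}{8} n\} \cap \{\mathcal{K}_\tau \leq \tau/256\}\right] \;\leq\; \Pr\!\left[\sum_{t=1}^\tau Y_t \leq \tau/256\right] \;\leq\; e^{-c}.
\]

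The only subtle point in this plan is the coupling step: one must be careful that the conditional bound $\Pr[X_t = 1 \mid \mathcal{F}_{t-1}] \geq \nicefrac{1}{16}$ holds on an event measurable with respect to $\mathcal{F}_{t-1}$ (which is exactly $\{\sigma \geq t\}$, since $\Phi_{t-1}$ is determined by the first $t-1$ steps), so that the auxiliary Bernoullis after $\sigma$ genuinely make $(Y_t)$ dominated below by an i.i.d. sequence. Once this is set up, the Chernoff bound is routine and no tighter estimate is needed — the generous constant $768$ in $\tau \geq 768 c$ leaves ample slack.
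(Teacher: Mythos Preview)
Your proof is correct and follows essentially the same strategy as the paper: establish a constant per-step conditional probability of crossing $\mathcal{B}$ whenever the potential is still high, then apply a concentration inequality. The only cosmetic difference is that you handle the dependence via a stopping-time coupling with i.i.d.\ Bernoullis and the standard Chernoff bound, whereas the paper folds the disjunction ``potential dropped or $\mathcal{K}$ increased'' into a single indicator $Y_t$ and invokes a generalized Chernoff bound for dependent variables; both routes are standard and yield the same conclusion with ample slack in the constants.
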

\begin{proof}
First, we denote $\delta\mathcal{K}_t := \mathcal{K}_t- \mathcal{K}_{t-1}$
as the difference in the number of intersections from step $t-1$ to step $t$.
Note that this difference is always non-negative as the number of balls intersecting with the separator can not decrease.
Equipped with this definition, we want to count the number of rounds in which we either increase $\mathcal{K}_t$, or $\Phi_t$ falls below $\nicefrac{n}{8}$.
We refer to these rounds as \emph{good} steps.
Clearly, after at most $\rho\tau$ goods steps (with $\rho \leq 1$) either $\rho\tau$ balls intersect with the boundary, or it holds $\Phi_t \leq \nicefrac{n}{8}$.
To bound the number of good steps, we define the following indicator variable for good steps:
\begin{align*}
    Y_t := \mathds{1}_{\{\Phi_t \leq \nicefrac{n}{8}\} \vee \{\delta \mathcal{K}_t \geq 1\}}
\end{align*}
Further, let $\mathcal{Y}_\tau := \sum_{i=1}^\tau Y_i$ the total number of good steps until some step $\tau>0$.
Next, we will bound the probability that a given step is good to a small constant, namely:
\begin{claim}
\label{claim:y_prob}
Let $H_{t-1}$ denote all choices made by the algorithm in steps $1, \dots, t-1$.
Then, it holds: 
\begin{align*}
\E{Y_t \mid H_{t-1}} \geq \nicefrac{1}{64}    
\end{align*}
\end{claim}
\begin{proof}
To simplify notation, define $\Phi'_{t-1}$ to be the number of critical nodes at the beginning of step $t$ given $H_{t-1}$, i.e., it holds:
\begin{align*}
    \pr{\Phi'_{t-1} = x } = \pr{\Phi_{t-1} = x \mid H_{t-1} }
\end{align*}
Analogously, $\Phi'_{t}$ will be the number of critical nodes in step $t$ conditioned on $H_{t-1}$.
By the law of total expectation, it holds:
\begin{align*}
    \E{Y_t \mid H_{t-1}} =& \pr{\Phi'_{t-1} < \nicefrac{n}{8}}\cdot\E{Y_t \mid \Phi'_{t-1} < \nicefrac{n}{8}} + \\
    &\pr{\Phi'_{t-1} \geq \nicefrac{n}{8}}\cdot\E{Y_t \mid \Phi'_{t-1} \geq \nicefrac{n}{8}}
\end{align*}
This implies:
\begin{align*}
\E{Y_t \mid \Phi'_{t-1} < \nicefrac{n}{8}} = 1 \geq \E{Y_t \mid \Phi'_{t-1} < \nicefrac{n}{8}}    
\end{align*}
The inequality follows from the fact that $Y_t$ is a binary random variable.
Therefore, it holds:
\begin{align}
   \E{Y_t\mid H_{t-1}} &= \pr{\Phi'_{t-1} < \frac{n}{8}} \cdot 1 + \pr{\Phi'_{t-1}\geq \frac{n}{8}}\E{Y_t \mid \Phi'_{t} \geq \frac{n}{8}}\\
   &\geq \pr{\Phi'_{t-1} < \frac{n}{8}}  \E{Y_t \mid \Phi_{t} \geq \frac{n}{8}} + \pr{\Phi'_{t-1}\geq \frac{n}{8}}\E{Y_t \mid \Phi_{t} \geq \frac{n}{8}}\\
   &\geq \left(\pr{\Phi'_{t-1} < \frac{n}{8}}+\pr{\Phi'_{t-1}\geq \frac{n}{8}}\right) \cdot \E{Y_t \mid \Phi_{t} \geq \frac{n}{8}}\\
   &= \E{Y_t \mid \Phi'_{t-1} \geq \frac{n}{8}}\\
   &\geq  \E{\delta\mathcal{K}_t \geq 1 \mid \Phi'_{t-1} \geq \frac{n}{8} }\label{eq:ytok}
\end{align}
Let $P_t := (v_t, \ldots, w_t)$ be the path sampled in step $t$ and let $v_t$ and $w_t$ be its respective start and endpoint.
As the algorithm only removes nodes and edges, the path $P$ must have also existed in the original graph $G$.
Therefore, the path must cross $\mathcal{B}$ if $v_t$ is a critical node and $w_t$ is one of its cutoff nodes.
Hence $\mathcal{K}_t$ increases by one in this case.
Denote this event as $\{v_t \underset{\mathcal{B}}{\leadsto} w_t\}$.
Further, denote 
\begin{align*}
    C_t = \left\{ v \in V_t \bigm\vert |\Cutoff{v} \cap B_{G_t}(v,2\mathcal{D})| \geq \nicefrac{n}{8} \right\}
\end{align*}
as the critical nodes and note that it holds:
\begin{align*}
    \E{\delta\mathcal{K}_t \geq 1 \mid \Phi'_{t-1}} &= \E{v_t \underset{\mathcal{B}}{\leadsto} w_t \mid \Phi'_{t-1}} \\
    &\geq \sum_{v \in V} \pr{v \in C_t \mid \Phi'_{t-1} } \cdot \pr{v_t = v} \cdot \pr{w_t \in \Cutoff{v} \mid v_t = v}\\
    &\geq \frac{1}{n} \cdot\sum_{v \in V} \pr{v \in C_t \mid \Phi'_{t-1}}  \cdot \pr{w_t \in \Cutoff{v} \mid v_t = v}
\end{align*}
Recall that by definition, each critical node has at least $\nicefrac{n}{8}$ cutoff nodes in distance $2\mathcal{D}$.
By the properties of the approximate shortest path algorithm, the approximate distance to all these nodes is $4\mathcal{D}$ and, thus, all these nodes are considered in the sampling.
Further, as we use approximate shortest paths that can only overestimate the actual distance, we sample from a subset of all nodes in the distance $4\mathcal{D}$.
Note that there can obviously be at most $n$ nodes in distance $(1+\epsilon)2\mathcal{D}$. 
Therefore, the probability that one $v_t$'s cutoff nodes are picked as the path's endpoint $w_t$ is:
    \begin{align*}
        \pr{w_t \in \Cutoff{v} \mid v_t = v} &:= \frac{\left|\Cutoff{v} \cap B_{G_{t-1}}(v,2\mathcal{D})\right|}{\left|B_{G_{t-1}}(v,4\mathcal{D})\right|} \geq \frac{n}{8n} = \frac{1}{8}  
    \end{align*}
    Putting this back in the previous formula, we get:
     \begin{align}
         \E{v_t \underset{\mathcal{B}}{\leadsto} w_t \mid \Phi'_{t-1}}  &\geq \frac{1}{n} \cdot\sum_{v \in V} \pr{v \in C_t \mid \Phi'_{t-1}}  \cdot \pr{w_t \in \Cutoff{v} \mid v_t = v}\\
        &\geq  \frac{1}{n} \cdot \sum_{v \in V} \pr{v \in C_t \mid \Phi'_{t-1}} \cdot \frac{1}{8}
        = \frac{\Phi'_{t-1}}{8n}\label{eq:exp_change}
    \end{align}
This proves the lemma as it implies that 
\begin{align*}
     \E{\delta\mathcal{K}_t  \mid \Phi'_{t-1}  \geq \frac{n}{16} }&\geq \sum_{\eta=\nicefrac{n}{8}}^n \pr{\Phi'_{t-1} = \eta \mid \Phi'_{t-1} \geq \nicefrac{n}{8}}   \E{v_t \underset{\mathcal{B}}{\leadsto} w_t \mid \Phi'_{t-1}=\eta}\\ 
     &\geq \sum_{\eta=\nicefrac{n}{8}}^n \frac{\pr{\{\Phi'_{t-1} = \eta\} \cap \{\Phi'_{t-1} \geq \nicefrac{n}{8}\}}}{\pr{ \Phi'_{t-1} \geq \nicefrac{n}{8}}}   \E{v_t \underset{\mathcal{B}}{\leadsto} w_t \mid \Phi'_{t-1}=\eta}\\ 
               &\geq \sum_{\eta=\nicefrac{n}{8}}^n \frac{\pr{\Phi'_{t-1} = \eta}}{\pr{ \Phi'_{t-1} \geq \nicefrac{n}{8}}} \E{v_t \underset{\mathcal{B}}{\leadsto} w_t \mid \Phi'_{t-1}=\eta}\\ 
    \text{By Ineq. }\eqref{eq:exp_change}:\\
          &\geq \sum_{\eta=\nicefrac{n}{8}}^n \frac{\pr{\Phi'_{t-1} = \eta}}{\pr{ \Phi'_{t-1} \geq \nicefrac{n}{8}}} \frac{\eta}{8n}\\ 
     \text{Using }\eta \geq \nicefrac{n}{8}:\\
     &\geq \sum_{\eta=\nicefrac{n}{8}}^n \frac{\pr{\Phi'_{t-1} = \eta}}{\pr{ \Phi'_{t-1} \geq \nicefrac{n}{8}}} \frac{n}{8\cdot 8 \cdot n}\\
     &\geq\frac{1}{64} \cdot \sum_{\eta=\nicefrac{n}{8}}^n \frac{\pr{\Phi'_{t-1} = \eta}}{\pr{ \Phi'_{t-1} \geq \nicefrac{n}{8}}}\\ 
     \text{As }\pr{ \Phi'_{t-1} \geq \nicefrac{n}{8}} &= \sum_{\eta=\nicefrac{n}{8}}^n  \pr{\Phi'_{t-1} = \eta}:\\
     &\geq \frac{1}{64}
\end{align*}
Thus, by Inequality \eqref{eq:ytok}, it holds:
\begin{align*}
     \E{Y_t\mid H_{t-1}} \geq \E{\delta\mathcal{K}_t \geq 1 \mid \Phi'_{t-1} \geq \frac{n}{8} } \geq \frac{1}{64}
\end{align*}
This proves the claim.
\end{proof}
Note that this probability bound holds regardless of all events before step $t$.
We need the following slightly generalized version of the Chernoff inequality to finalize the proof.
\begin{lemma}[Generalized Chernoff Bound, cf. Theorem 3.52 in \cite{ScheidelerHabil}]   
\label{lemma:general_chernoff}
    Let $X_1, \ldots, X_n \in \{0,1\}$ be a series of (not necessarily independent) binary random variables and define $X := \sum_{i=1}^n X_i$.
    Suppose, there is a value $\rho \in [0,1]$, s.t., for all subsets $X_{i_1}, \ldots, X_{i_j}$ with $1 \leq j \leq n$, it holds:
    \begin{align*}
        \E{\prod_{k = i_1, \ldots, i_j} X_k} \geq \rho^j.
    \end{align*}
    Then, for any $\delta \leq 1$, it holds:
    \begin{align*}
        \pr{X < (1-\delta)\rho n} \leq e^{-{\frac{\delta\rho n}{3}}}
    \end{align*}
\end{lemma}
Let now $i_1, \ldots, i_j$ be some subset of the $\tau$ steps that we make.
Note that by the fact that all $Y_t$ are binary random variables and the chain rule of conditional expectations, it holds:
\begin{align*}
    \E{\prod_{k = i_1}^{i_j} Y_k} &:= \pr{\bigcap_{k = i_1}^{ i_j} Y_k = 1} = \prod_{k = i_1}^{i_j}\pr{Y_k = 1 \mid \bigcap_{k' = i_1}^{k-1} Y_{k'} = 1}\\
    &= \prod_{k = i_1}^{i_j}\E{Y_k \mid \bigcap_{k' = i_1}^{k-1} Y_{k'} = 1} \geq \prod_{k = i_1}^{i_j}\frac{1}{64} = \left(\frac{1}{64}\right)^j
\end{align*}
Here, we exploited that $\left\{\cap_{k' = i_1}^{k-1} Y_{k'} = 1\right\}$ only depends on events prior to step $k$ and used the bound of Claim \ref{claim:y_prob}.
Thus, the conditions of Lemma \ref{lemma:general_chernoff} apply to $\mathcal{Y}_\tau$ by choosing $\rho = \nicefrac{1}{64}$.
Therefore, it holds for all $\tau \geq 1$ and $\delta \leq 1$ that:
\begin{align*}
    \pr{\mathcal{Y}_\tau \leq (1-\delta)\rho\tau}
    \leq e^{-{\frac{\delta \rho \tau}{3}}}
\end{align*}
For $\tau \geq c\cdot 768$, we get the following bound:
\begin{align*}
    \pr{\mathcal{Y}_\tau \leq \frac{\tau}{256}}&= \pr{\mathcal{Y}_\tau \leq \left(1-\frac{1}{2}\right)\frac{\tau}{64}}
    = \pr{\mathcal{Y}_\tau \leq \left(1-\frac{1}{2}\right)\rho\tau}\\
    &\leq e^{-{\frac{(\nicefrac{1}{2}) \rho \tau}{3}}} = e^{-{\frac{\rho \tau}{6}}}
    = e^{-{\frac{\tau}{768}}} \leq e^{-c}
\end{align*}
This proves the lemma.
\end{proof}
This verifies our intuition that we sample many paths that cross $\mathcal{B}$ unless the potential drops down by a constant factor.
However, to prove Lemma, we need more than that.
Further, we need an upper bound on the number of intersections with $\mathcal{B}$. For this, it holds:
\begin{lemma}
\label{lemma:path_limit}
Consider any path $P$ of length $32\mathcal{D}$ in $G$. Then for any $t \geq 0$, it holds $\mathcal{K}_t(P) \leq O(\epsilon^{-1})$.
\end{lemma}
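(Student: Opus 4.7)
The approach is a volume/disjointness argument. First I would observe that the balls $K_1,\dots,K_t$ are pairwise disjoint: since Step 5 of iteration $i$ removes $K_i$ from the graph, any $w\in K_i$ satisfies $w\notin V_j$ for every $j>i$ and hence $w\notin K_j$. Consequently, if we can show that every iteration $i$ with $K_i\cap P\neq\emptyset$ ``consumes'' a subpath of $P$ of (weighted) length $\Omega(\epsilon\mathcal{D})$, then the disjointness of the consumed subpaths together with $|P|=32\mathcal{D}$ yields $\mathcal{K}_t(P)\le O(\epsilon^{-1})$.

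The consumed-length bound rests on two ingredients. From the construction of $K_i$ in Step 4 with threshold $2\epsilon'\mathcal{D}=\epsilon\mathcal{D}$ together with the 2-approximation guarantee of $T_{P_i}$, we obtain the inclusion $B_G(P_i,\epsilon\mathcal{D}/2)\cap V_i\subseteq K_i$: any $w$ with $d_G(w,P_i)\le\epsilon\mathcal{D}/2$ has $d_{T_{P_i}}(w,P_i)\le 2\,d_G(w,P_i)\le\epsilon\mathcal{D}$. In addition, in the application of the lemma (via Lemma~\ref{lemma:sekt_oder_selters}) the paths $P$ are the simple paths of $\mathcal{B}$ from Lemma~\ref{lemma:helper_separator}, which are subpaths of shortest paths in an induced subgraph of $G$, so $d_G(u,v)\le d_P(u,v)$ for all $u,v$ on $P$. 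Together, whenever $P_i$ literally crosses $P$ at a node $v$, every $u\in P$ with $d_P(u,v)\le\epsilon\mathcal{D}/2$ satisfies $d_G(u,P_i)\le d_G(u,v)\le d_P(u,v)\le\epsilon\mathcal{D}/2$ and hence lies in $K_i$ (or was already removed in an earlier $K_j$, which is fine by disjointness). In either case, the subpath of $P$-length $\epsilon\mathcal{D}$ around $v$ is permanently removed from $P$ by step $i$ and contributes the desired $\Omega(\epsilon\mathcal{D})$ to the disjoint budget.

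The main obstacle is handling iterations where $K_i\cap P\neq\emptyset$ but $P_i\cap P=\emptyset$, i.e., the intersection is ``via the $\epsilon\mathcal{D}$-tube around $P_i$'' rather than a literal crossing, so the clean $d_G\le d_P$ argument above does not immediately go through. Here I plan to use that across iterations the paths $P_i$ are pairwise at $G$-distance strictly greater than $\epsilon\mathcal{D}/2$, which is an immediate consequence of the inclusion $B_G(P_i,\epsilon\mathcal{D}/2)\cap V_i\subseteq K_i$ together with the fact that $P_j$ for $j>i$ lives in $V_j\subseteq V\setminus K_i$. A packing argument along $P$ then shows that only $O(|P|/(\epsilon\mathcal{D}))=O(\epsilon^{-1})$ such $P_i$'s can be close enough to $P$ to have $K_i\cap P\neq\emptyset$ while respecting this pairwise $\epsilon\mathcal{D}/2$-separation, which recovers the same $O(\epsilon^{-1})$ bound in the non-literal case and completes the proof.
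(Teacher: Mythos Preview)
Your Case~1 (literal crossings) is exactly the paper's chunk argument: whenever $P_i$ actually meets $P$ at a node $v$, the $\epsilon\mathcal D/2$-tube around $P_i$ (computed in $G$) swallows a fixed-length subpath of $P$ around $v$, so at most $O(\epsilon^{-1})$ such crossings can occur. Two small remarks: the inequality $d_G(u,v)\le d_P(u,v)$ holds for \emph{any} path $P$ in $G$ (it is just ``a path witnesses an upper bound on the shortest path''), so you do not need to invoke that the paths of $\mathcal B$ are shortest in some subgraph.

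Your Case~2 packing, however, does not go through. The separation you establish is a $G$-distance separation of the $P_i$'s, not a separation along $P$, and these are not interchangeable. Concretely, take a single node $w\in P$ and attach to it many leaves $b_1,\dots,b_r$, each at $G$-distance $\epsilon\mathcal D$ from $w$. The leaves are pairwise at $G$-distance $2\epsilon\mathcal D>\epsilon\mathcal D/2$, so the $\epsilon\mathcal D/2$-separation constraint allows the sampled paths $P_{t_1},\dots,P_{t_r}$ to pass through $b_1,\dots,b_r$ respectively, with $r$ unbounded; yet each $K_{t_j}$ can intersect $P$ only at (or near) $w$. So $G$-separation of the $P_i$'s does not pack them along $P$, and the bound $|P|/(\epsilon\mathcal D)$ does not follow.

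The good news is that Case~2 is not needed. The paper's own proof only argues the crossing case as well, and if you trace the application (Lemma~\ref{lemma:sekt_oder_selters} and the final contradiction), the counter that matters is the number of iterations with $P_t\cap\mathcal B\neq\emptyset$: the lower bound there is proved via ``$v_t$ critical and $w_t$ a cutoff node implies $P_t$ crosses $\mathcal B$'', and the upper bound is your Case~1. So the clean fix is to read $\mathcal K_t(P)$ as $\sum_{i\le t}\mathds 1_{P_i\cap P\neq\emptyset}$ (which is also how the main text in Section~\ref{sec:weak_separators} phrases it). With that reading, your Case~1 alone is a complete proof and coincides with the paper's argument.
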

\begin{proof}
    To this end, we introduce the notion of \emph{chunks}.
A chunk is a non-empty subpath of the path, s.t., the distance between the first node of two consecutive chunks is at least $\epsilon\mathcal{D}$.
Formally, we can recursively define them as follows.
\begin{definition}[Chunk]
    Let $P := (v_1, \ldots, v_m)$ be a path in graph $G$ and let $\epsilon > 0$ be a distance parameter.
    Then, the chunks $C^{(\epsilon)}(P) := \left(C^{(\epsilon)}_1, \ldots, C^{(\epsilon)}_{m´}\right)$ of path $P$ are partition of $P$ into connected subpaths.
    The chunks are recursively defined as follows:
    \begin{enumerate}
        \item The first chunk $C_1^{(\epsilon)}$ begins with node $v_1$.
        \item The first node of chunk $C_{i+1}$ is the first node in distance (greater than) $\epsilon\mathcal{D}$ to the first node of chunk $C_i$. 
        \item A chunk $C_i$ contains all nodes between its first node and the first node of $C_{i+1}$. The last chunk contains its first node until the end of the path.
    \end{enumerate}
\end{definition}
\begin{figure}[ht]
    \centering
    \begin{tikzpicture}
        \graph[grow right=2cm] {
        "$v_1$"[circle, draw, thick] --[thick] "$v_2$"[circle, draw, thick] --[dashed] "$v_3$"[circle, draw, thick] --[thick] "$v_4$"[circle, draw, thick]  --[ thick] "$v_5$"[circle, draw, thick]  --[dashed] "$v_6$"[circle, draw, thick] --[ thick] "$v_7$"[circle, draw, thick] --[dashed] "$v_8$"[circle, draw, thick];
        "$v_1$" ->[bend left = 45, "$\geq \epsilon\mathcal{D}$"] "$v_3$";
        "$v_3$" ->[bend left, "$\geq \epsilon\mathcal{D}$"] "$v_6$";
        "$v_6$" ->[bend left = 45, "$\geq \epsilon\mathcal{D}$"] "$v_8$";
        };
    \draw (0,-1) -- node[below] {$C_1^{(\epsilon)}$} (2,-1);
    \draw (0,-1.1) -- (0,-0.9);
    \draw (2,-1.1) -- (2,-0.9);
    
    \draw (4,-1) -- node[below] {$C_2^{(\epsilon)}$} (8,-1);
    \draw (4,-1.1) -- (4,-0.9);
    \draw (8,-1.1) -- (8,-0.9);
    
    \draw (10,-1) -- node[below] {$C_3^{(\epsilon)}$} (12,-1);
    \draw (10,-1.1) -- (10,-0.9);
    \draw (12,-1.1) -- (12,-0.9);
    
    \draw (13.5,-1) -- node[below] {$C_4^{(\epsilon)}$} (14.5,-1);
    \draw (13.5,-1.1) -- (13.5,-0.9);
    \draw (14.5,-1.1) -- (14.5,-0.9);
    
\end{tikzpicture}
    \caption{An example for the chunks $C^{(\epsilon)}(P)$ of distance $\epsilon$ of a path $P := v_1, \ldots, v_8$.
    The solid lines denote edges between two nodes of the same chunk, the dashed lines denote edges between chunks.}
    \label{fig:enter-label}
\end{figure}
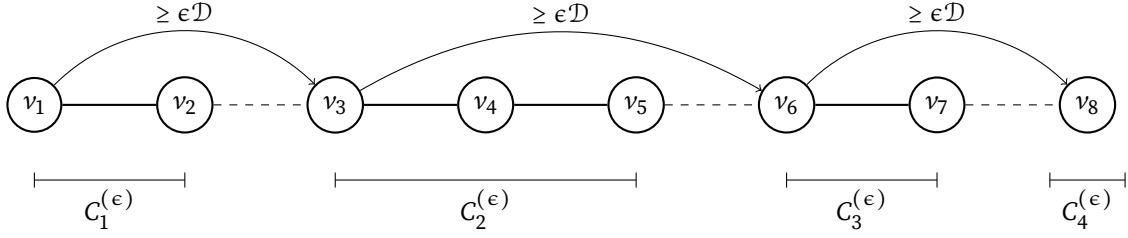
Further, note that every time a random path $P_t$ crosses $P$, we remove all nodes in the chunk it crosses.
In particular, it does not matter if some nodes \emph{between} two nodes of a chunk have already been removed as we always consider the distance with regard to $G$.
Thus, the number of chunks is an upper bound on the number of clusters that intersect with the separator in \textbf{any} weak ball carving process.

Clearly, a path $P$ of length $32\mathcal{D}$ consists of $128\epsilon^{-1}$ chunks of length $\nicefrac{\epsilon}{4}\mathcal{D}$.
This can verified as follows: 
Denote by $C(P)$ the chunks of length $\nicefrac{\epsilon}{4}\mathcal{D}$ in $P$.
For the sake of creating a contradiction, assume that there are more than $\nicefrac{128}{\epsilon}$ chunks.
     Denote this number as 
     \begin{align*}
        c := \left| C(P)\right|
     \end{align*}  
The chunks $C(P)$ can be ordered $C_{(1)}, \ldots, C_{(c)}$ based on their position in path $P$, s.t., chunks with higher order are closer to the endpoint of the path.
We denote the first node of chunk $C_{(i)}$ that is part of the boundary as $v_{(i)}$.
Recall that the distance between the first nodes of two neighboring chunks is at least $\nicefrac{\epsilon}{4}\mathcal{D}$.
If the chunks are not neighboring, the distance can only be greater.
Thus, it holds:
     \begin{align*}
         d_{P}(v_{(i)},v_{(i+1)}) \geq \nicefrac{\epsilon}{4}\mathcal{D}
    \end{align*}
Let now $v$ be the last node in the first chunk $C_{(1)}$ and $w := v_{(c)}$ be the first node in the last chunk $C_{(c)}$ that intersect with the boundary.
By the definition of chunks, the distance between $v$ and $w$ on path $P_{i_j}$ is at least:
     \begin{align*}
        d_{P}(v,w) &= d_{P_{i_j}}(v,v_{(2)}) + \sum_{i=2}^{c} d_{P_{i_j}}(v_{(i)},v_{(i+1)})\\
        &\geq d_{P}(v,v_{(2)}) + \sum_{i=2}^c \nicefrac{\epsilon}{4}\mathcal{D}\\
        & = d_{P}(v,v_{(2)}) + (c-1) \nicefrac{\epsilon}{4}\mathcal{D}\\
        & \geq  d_{P}(v,v_{(2)}) + \left(\nicefrac{128}{\epsilon}+1-1\right) \epsilon\mathcal{D}\\
        & = d_{P_{i}}(v,v_{(2)}) + 32\mathcal{D}\\
        & >32\mathcal{D}
    \end{align*}
This is a contradiction as $P$ is of length $32\mathcal{D}$.
\end{proof}

Finally, suppose for contradiction that after $O(\epsilon^{-1}\cdot k \cdot \log n)$ iterations, the potential is still above $(\nicefrac{7}{8})\cdot n$.
Then, $\mathcal{K}_\tau \in \Omega(\epsilon^{-1}\cdot k \cdot \log n)$ paths must have crossed $\mathcal{B}$, w.h.p.
However, by Lemma \ref{lemma:path_limit}, $\mathcal{K}_\tau$ is bounded by $O(\epsilon^{-1} \cdot k)$.
A contradiction.
Thus, after $O(\epsilon^{-1}\cdot k \cdot \log n)$ iterations, the potential must be low, w.h.p., which proves the lemma.
\medskip

\paragraph{\textbf{Complexity (Proof of Lemma \ref{lemma:separator_runtime}):}} All five steps of the algorithm clearly contain either $\Tilde{O}(1)$ minor aggregations or a $2$-approximate shortest path computation.
We go through them one by one.
Step $1$ is a simple aggregation of the maximum of a set of random values. 
Step $2$ is a $2$-approximate shortest path computation.
Step $3$ consists of another maximum aggregation and $\Tilde{O}(1)$ aggregations to implement the descendant aggregation from Lemma \ref{lemma:tree_operations}.
Step $4$ is another $2$-approximate shortest path computation from a set.
Step $5$ is purely local operation.
Therefore, a step of the algorithm can be executed within $\Tilde{O}(1)$ rounds of minor aggregation and approximate shortest path computations as claimed.
This proves Lemma \ref{lemma:separator_runtime}.

\section{Full Analysis of \autoref{lemma:backbone_clustering_k_path}}
\label{sec:appendix_backbone}

In this section, we prove the following technical lemma which that that can can efficiently constuct backbone clusters using approximate shortest paths and minor aggregations.

\backbone*

The algorithm behind this lemma is a divide-and-conquer algorithm executed in parallel on all connected components of $G$.
For the \emph{divide} step, we use the \LDD promised by Theorem \ref{thm:clustering_general} to create disjoint subgraphs of diameter ${\mathcal{D}}' \in O(\mathcal{D}\log^2 n)$. 
The \emph{conquer} step works in five synchronized phases that compute and remove a weak $({\mathcal{D}'},\epsilon)$-separator from each of these subgraphs.
Further, we create a backbone cluster from each separator.
This works in two steps, by first adding all nodes at a random distance and then applying the blurry ball growing procedure from Lemma \ref{lemma:bbg}.
We repeat this process until all subgraphs are empty.
As we remove a $\mathcal{D}'$-separator in each step and create clusters of diameter $\mathcal{D}'$, this requires $O(\log n)$ iterations.

After this intuition, we move to the detailed description of the algorithm.
For this, we define some useful notations/constants, namely
\begin{align*}
    \mathcal{D}' &= 100 \cdot \mathcal{D} \cdot \log^2 n \\
    \epsilon &= \frac{1}{10000\log^2 n}\\
    \mathcal{D}_{BC} &= \frac{\mathcal{D}}{4}\\
    \rblur &:= \frac{\mathcal{D}}{\cblur\cdot\log\log n} \leq \frac{\mathcal{D}}{8}
\end{align*}
Here, $\cblur > 8$ is a large constant that will be determined in the analysis.
In the following, we will consider a single recursive step $t \in [1, O(\log n)]$ of the algorithm.
We will slightly abuse notation and denote the set of all nodes that have already been added to some cluster as $\mathcal{K}_t$
Further, we call a node $v \in V \setminus \mathcal{K}_t$, which is not yet part of a cluster, an \emph{uncharted} node.
A single recursive \emph{conquer} step works on the graph $G_t = G \setminus \mathcal{K}_t$ of uncharted nodes works as follows:

    \paragraph*{\textbf{(Step 1) Create Partitions:}} Let $C_1, \ldots, C_N$ be the connected components of $G_t$. 
    Compute an \LDD with diameter $\mathcal{D}'$ in each subgraph $C_i$ using the algorithm from Theorem \ref{thm:clustering_general}.
    The resulting partitions are connected subgraphs $P_1, \dots, P_{N'}$ with diameter $\mathcal{D}'$.
    \paragraph*{\textbf{(Step 2) Create Weak Separators in All Partitions:}}  In each partition $P_i$, we compute a weak  $(\mathcal{D}',\epsilon)$-separator $S_i$ using the algorithm from Theorem \ref{thm:distributed_weak_separator}.
    As the distance parameter for the separator, we choose ${\mathcal{D}'}$, and for the approximation parameter, we pick $\epsilon$.
\paragraph*{\textbf{(Step 3) Create Random Ball Around Separator:}}  We carve a ball with random diameter $O(X_{i}\cdot\mathcal{D}_{BC})$ where $X_{i} \sim \mathsf{Texp}(4\cdot\log\log n)$.
To this end, we perform a $(1+\epsilon)$-approximate \SetSSP for $S_i$.
    Thereby, we obtain an a $(1+\epsilon)$-approximate \SetSSP tree $T_{i}$ and all nodes $w \in P_{i}$ know a value $d_{T_i}(w,S_i)$, which is its $(1+\epsilon)$-approximate distance to $S_i$ in $P_{i}$.
    After that, we mark every node in the set:
    \begin{align*}
        \kexp(S_i) := \left\{w \in P_{i} \mid d_{T_i}(w,S_i) \leq (1+\epsilon) \cdot X_{i} \cdot \mathcal{D}_{BC}\right\}
    \end{align*}
\paragraph*{\textbf{(Step 4) Blur the Ball:}} Finally, we apply the BBG procedure of Lemma \ref{lemma:bbg} to the ball $\kexp(S_i)$ in each partition.
As the distance parameter for this procedure, we choose $\rblur$.
Thus, for each set $\kexp(S_i)$ we obtain the superset $\kblur(S_i) := \textsf{blur}\left(\kexp(S_i),\rblur\right)$. Here, $\textsf{blur}(S,\rho)$ is an application of BBG from Lemma of \ref{lemma:bbg} to a set $S$ with parameter $\rho$.
    \paragraph*{ \textbf{(Step 5) Prepare Next Recursion:}} We choose each $K_i = \kblur(S_i)$ as a backbone cluster and add We $K_i$ to $\mathcal{K}$. Each node in $\kblur(S_i)$ marks itself as inactive and removes itself from future iterations.
All edges between active nodes and nodes in $\kblur(S_i)$ are marked as \emph{cut}. 

\bigskip

We will now present the proof of Lemma \ref{lemma:backbone_clustering_k_path}.
Recall we want to construct an $\left(O(\log{(k\log n)}), O(\log^2), O(k\log^2 n)\right)$-backbone clustering with pseudo-diameter $\mathcal{D}_{BC} = (\nicefrac{4}{100})\cdot\mathcal{D}$ for $G$.
The proof is divided into four parts.
We begin by showing that the algorithm produces a clustering that fulfills all conditions of a backbone clustering, namley the backbone property (Lemma \ref{lemma:k_path_paths}), the pseudo-diameter (Lemma \ref{lemma:k_path_diameter}), and the cutting probability (Lemma \ref{lemma:k_path_cutting}). 
Finally, we prove the time complexity in Lemma \ref{claim:k_path_complexity}.
Thus, altogther Lemma \ref{claim:k_path_complexity} and Lemmas \ref{lemma:k_path_paths}, \ref{lemma:k_path_diameter}, and \ref{lemma:k_path_cutting} imply Lemma \ref{lemma:backbone_clustering_k_path}.

\paragraph{Backbone Property} First, we note that, indeed, each cluster has a backbone that consists of a few short paths.
However, it follows directly from the construction.
For the sake of completeness, we note that it holds:
\begin{lemma}[Backbone Property]
\label{lemma:k_path_paths}
   Each cluster contains a backbone $\mathcal{B}_i$ consists of at most $O(k\log^3 n)$ paths of length $O(\mathcal{D}\log^2 n)$, w.h.p.
\end{lemma}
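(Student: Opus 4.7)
The plan is to simply read off the backbone's properties from the construction in Step 2 and the separator guarantee of \autoref{thm:distributed_weak_separator}. In each recursive step, we defined a backbone $\mathcal{B}_i$ to be the collection of short paths contained in the weak separator $S_i$ produced inside partition $P_i$. Thus the claim reduces to bounding the number and the length of the paths that \autoref{thm:distributed_weak_separator} outputs when invoked with the parameters chosen in the algorithm, namely $\mathcal{D}'=100\cdot\mathcal{D}\cdot\log^2 n$ and $\epsilon = \tfrac{1}{10000\log^2 n}$.

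Concretely, I would first invoke \autoref{thm:distributed_weak_separator} on the weighted subgraph $P_i$: because $P_i$ is a subgraph of the $k$-path separable graph $G$ and $k$-path separability is closed under taking subgraphs, $P_i$ is itself $k$-path separable. Moreover, by construction $P_i$ is a cluster in an \LDD of strong diameter $\mathcal{D}'$, so its (weighted) diameter is bounded by $\mathcal{D}'$. The theorem therefore guarantees, w.h.p., a weak $\kappa$-path $(\mathcal{D}',\epsilon)$-separator $S_i$ with
\[
\kappa \in O\!\left(\epsilon^{-1}\cdot k\cdot \log n\right) \subseteq O(k\log^3 n).
\]
By \autoref{def:weak_k_path_separator}, each of the paths in $S_i$ is an (approximate) shortest path of length at most $4\mathcal{D}' = 400\cdot\mathcal{D}\cdot\log^2 n \in O(\mathcal{D}\log^2 n)$, which yields the required length bound.

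Finally, I would take a union bound over all $O(\log n)$ recursion levels and all partitions produced in Step 1: since there are at most $n$ partitions overall and each call of the separator procedure of \autoref{thm:distributed_weak_separator} fails with probability at most $1/n^c$ for an arbitrarily large constant $c$, the bounds on $\kappa$ and on the path length hold simultaneously for every backbone $\mathcal{B}_i$ created during the execution, w.h.p. There is no real obstacle here, other than keeping track of the polylog factors: the only place where the number of paths balloons from $k\log n$ to $k\log^3 n$ is the factor $\epsilon^{-1}\in O(\log^2 n)$, which the construction deliberately pays in order to later afford a backbone‐based padded refinement in \autoref{sec:refinement}.
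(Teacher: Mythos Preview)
Your proposal is correct and mirrors the paper's own argument almost verbatim: both simply read off the number of paths $O(\epsilon^{-1}k\log n)=O(k\log^3 n)$ and their length $4\mathcal{D}'=O(\mathcal{D}\log^2 n)$ from \autoref{thm:distributed_weak_separator} applied with the chosen parameters, after noting that each partition $P_i$ has diameter at most $\mathcal{D}'$ and is itself $k$-path separable. Your added union-bound over all partitions and recursion levels is a small extra detail the paper leaves implicit.
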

This follows directly from the guarantees of our weak separator.
Note that the \LDD between each step ensures that each subgraph has a diameter $\mathcal{D}'$.
Recall that by Theorem, we can construct a weak $O(\epsilon^{-1}\cdot k \cdot \log n)$-path $(\mathcal{D},\epsilon)$-separator, for a weighted $k$-path separable graph $G := (V,E,\ell)$ with weighted diameter smaller than $\mathcal{D}'$. 
For $\epsilon \in \Theta(\nicefrac{1}{\log^2 n})$, this separator has $O(k \log^3 n)$ paths of length $O(\mathcal{D}\log^2 n)$, w.h.p.
We choose these paths as the backbone, which proves the lemma.
\paragraph{Pseudo Diameter} Second, we observe the \emph{diameter} of each cluster, i.e., the distance of each node to the closest backbone is $\mathcal{D}$. 
It holds:
\begin{lemma}[Pseudo-Diameter]
\label{lemma:k_path_diameter}
    Each node $v \in K(\mathcal{B}_i)$ is in distance at most $\mathcal{D}$ to $\mathcal{B}_i$.
\end{lemma}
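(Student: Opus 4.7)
The plan is to show directly from the construction that the triangle inequality, applied along the three consecutive constructions in Steps 2--4 of the algorithm, bounds the distance from any $u \in K(\mathcal{B}_i)$ to $\mathcal{B}_i$ by $\mathcal{D}_{BC} \leq \mathcal{D}$. Each of the three pieces is controlled by one of the constructions: the blurry ball growing, the random shell around the separator, and the $\epsilon$-thickening of the separator.

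Fix $u \in K(\mathcal{B}_i) = \kblur(S_i)$. The first step is to apply the second guarantee of Lemma~\ref{lemma:bbg} (invoked in Step~4 with parameter $\rblur$) to obtain some $v \in \kexp(S_i)$ with $d_G(u,v) \leq \rblur/(1-\alpha) \leq 2\rblur$. By the choice $\rblur = \mathcal{D}/(\cblur\log\log n)$ with $\cblur$ a sufficiently large constant, this can be made at most $\mathcal{D}_{BC}/4$.

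Next, since $v \in \kexp(S_i)$, the definition of $\kexp(S_i)$ in Step~3 together with the $(1+\epsilon)$-approximate \SetSSP computation gives a node $w \in S_i$ with $d_G(v,w) \leq (1+\epsilon)\cdot X_i \cdot \mathcal{D}_{BC}/4$. Because $X_i \sim \mathsf{Texp}(4\log\log n)$ is truncated to $[0,1]$ and $\epsilon \leq 1$, this is at most $\mathcal{D}_{BC}/2$. Finally, by the second clause of Definition~\ref{def:weak_k_path_separator} applied to the separator $S_i$ computed in Step~2, each node of $S_i$ lies in $B_G(P, \epsilon\mathcal{D}')$ for some path $P \in \mathcal{B}_i$; hence there is $x \in P \subseteq \mathcal{B}_i$ with $d_G(w,x) \leq \epsilon\mathcal{D}'$. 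The parameter choices $\mathcal{D}' = 100\mathcal{D}\log^2 n$ and $\epsilon = 1/(10000\log^2 n)$ give $\epsilon\mathcal{D}' \leq \mathcal{D}_{BC}/4$.

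Combining the three bounds by the triangle inequality yields
\begin{equation*}
d_G(u, \mathcal{B}_i) \;\leq\; d_G(u,v) + d_G(v,w) + d_G(w,x) \;\leq\; \tfrac{\mathcal{D}_{BC}}{4} + \tfrac{\mathcal{D}_{BC}}{2} + \tfrac{\mathcal{D}_{BC}}{4} \;=\; \mathcal{D}_{BC} \;\leq\; \mathcal{D},
\end{equation*}
which is the claim. The argument is essentially bookkeeping, so the only genuine point of care is to make sure that every one of the three distances is measured in the correct ambient graph. The blurry ball growing distance and the final $\epsilon$-thickening of the separator are measured in $G$ (not in $P_i$ or in $G_t$), which is exactly why Steps~3 and~4 compute $\kexp(S_i)$ and $\kblur(S_i)$ via an SSSP on the whole partition rather than on the residual subgraph; since $G$ is the ambient graph used throughout, the three triangle-inequality applications compose without any subtlety, and the only real difficulty is ensuring the constants $\cblur$, $\mathcal{D}'$, and $\epsilon$ are chosen consistently with the statement.
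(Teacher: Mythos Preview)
Your proof is correct and follows essentially the same approach as the paper: both chain the three construction steps (BBG, the random $\kexp$ shell, and the $\epsilon$-thickening of the separator) via the triangle inequality and check that the resulting sum stays below the target. The only cosmetic difference is in the bookkeeping of constants (you follow the Section~\ref{sec:backbone} description with $\mathcal{D}_{BC}/4$, while the appendix version works directly with $\mathcal{D}_{BC} = \mathcal{D}/4$), but the argument is identical.
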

\begin{proof}
For the second statement, consider the $O(k\log^3 n)$ paths in the separator and denote this set as $\mathcal{B}_i$.
Each node in separator $S_i$ is in the distance at most $\epsilon\mathcal{D}'$ to one of these paths.
This follows from the definition of a weak $(\mathcal{D}',\epsilon)$-separator. 
Our choice of $\epsilon$ and $\mathcal{D}'$ yields:
\begin{align}
    \epsilon\mathcal{D}' \leq \frac{100 \mathcal{D} \log^2 n}{10000 \cdot \log^2 n} = \frac{\mathcal{D}}{100} \leq \frac{\mathcal{D}}{4} = \mathcal{D}_{BC}
\end{align}
Further, each node in $\kexp(S_i)$ is in the distance at most $(1+\epsilon) \cdot X_t \cdot \mathcal{D}_{BC} \leq 2\mathcal{D}_{BC}$ as to $S_i$.
This follows because $X_t \in [0,1]$ and $\epsilon \leq 1$.
Finally, each node $v \in \kblur(S_i)$ is in the distance at most $\mathcal{D}_{BC}$ to $\kexp(S_i)$ by our choice of $\rho$ and guarantees of Lemma \ref{lemma:bbg}. 
Following Lemma \ref{lemma:bbg}, the process adds nodes in distance $\frac{\rho}{1-\alpha}$ where $\alpha \in O(\nicefrac{\log\log n}{\log n})$.
For a large enough $n$ and $\cblur \geq 4$, we have
\begin{align}
    \frac{\rho}{1-\alpha} \leq \frac{\mathcal{D}}{\cblur\cdot\log\log n(1-O(\nicefrac{\log\log n}{\log n}))} \leq \frac{\mathcal{D}}{4} = \mathcal{D}_{BC}
\end{align}
Summing up, let $\mathcal{B}_i$ the set of paths in $S_i$, then it holds:
\begin{align}
    d(u, \mathcal{B}_i) &\leq d(u, \kexp(S_i)) + \mathbf{max}_{v \in \kexp(S_i)} d(v, \mathcal{B}_i)\\ 
     &\leq d(v, \kexp(S_i)) + \mathbf{max}_{v \in \kexp(S_i)} d(v, S_i) + \mathbf{max}_{w \in S_i} d(w,\mathcal{B}_i)\\
     &\leq \mathcal{D}_{BC}+ 2\mathcal{D}_{BC} + \mathcal{D}_{BC}\\
     &= \frac{4\mathcal{D}}{4} = \mathcal{D}
\end{align}
Therefore, the total distance from a node in $\kblur(S_i)$ to a path in $S_i$ is at most $\mathcal{D}$.
\end{proof}

\paragraph{Cut Probability} Next, we show the cut probability, which perhaps has the most challenging proof of this chapter.
We prove that the following holds:
\begin{lemma}
\label{lemma:k_path_cutting}
    An edge $z \in E$ is cut with probability at most $O\left(\frac{\alpha\ell_z}{\mathcal{D}}\right)$.
\end{lemma}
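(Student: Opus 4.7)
The plan is to decompose the total cut probability across the two genuinely randomised sources in one recursive step: the strong-diameter \LDD of Step 1 (instantiated by \autoref{thm:clustering_general}) and the blurry ball growing of Step 4 (instantiated by \autoref{lemma:bbg}). The weak separator of Step 2 is treated as adversarial---it removes vertex sets but contributes no probabilistic edge cuts---and Steps 3 and 5 are purely local. Since the algorithm runs $T \in O(\log n)$ recursive levels w.h.p., we split $\Pr[\Cut_z]$ into the event $\Cut_z^{\mathrm{LDD}}$ that $z$ is cut by some invocation of the partitioning in Step 1, and the event $\Cut_z^{\mathrm{BBG}}$ that $z$ is cut by some invocation of blurry ball growing in Step 4, and bound each separately.

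The \LDD contribution is straightforward. A single invocation cuts $z$ with probability $O(\ell_z \log n/\mathcal{D}') = O(\ell_z/(\mathcal{D}\log n))$ by \autoref{thm:clustering_general} and our choice $\mathcal{D}' = \Theta(\mathcal{D}\log^2 n)$. A union bound over the $O(\log n)$ recursive levels then gives $\Pr[\Cut_z^{\mathrm{LDD}}] \in O(\ell_z/\mathcal{D})$. This is the only place where the $O(\log n)$ recursion depth is absorbed via a plain union bound.

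The main obstacle is $\Cut_z^{\mathrm{BBG}}$: a naive union bound costs an extra $\log n$ factor because a single BBG invocation cuts $z$ with probability $O(\ell_z/\rblur) = O(\ell_z\log\log n/\mathcal{D})$, which is far from the target. The key is the truncated exponential $X_i \sim \Texp{4\log\log n}$ governing the seed set $\kexp(S_i)$. Let $u,v$ be the endpoints of $z$ in a fixed iteration with separator $S_i$, let $d_u := d_G(u,S_i)$ and $d_v := d_G(v,S_i)$, and assume w.l.o.g.\ $d_u \leq d_v$; by the triangle inequality $d_v - d_u \leq \ell_z$. Up to the $(1+\epsilon)$-slack in the approximate \SetSSP, a vertex $w$ lies in $\kexp(S_i)$ iff $X_i \geq d_G(w,S_i)/\bigl((1+\epsilon)\mathcal{D}_{BC}\bigr)$, so conditional on $\kexp(S_i)$ reaching $u$ (i.e.\ $X_i \geq d_u/\bigl((1+\epsilon)\mathcal{D}_{BC}\bigr)$), the event ``$\kexp(S_i)$ fails to engulf $v$'' has probability at most
\begin{align*}
\Pr\!\left[X_i \in \left[\tfrac{d_u}{(1+\epsilon)\mathcal{D}_{BC}},\;\tfrac{d_u}{(1+\epsilon)\mathcal{D}_{BC}}+\tfrac{\ell_z+O(\epsilon\mathcal{D})}{\mathcal{D}_{BC}}\right]\right] \;\leq\; \bigl(1-e^{-\lambda\cdot O(\ell_z/\mathcal{D}_{BC})}\bigr) \cdot \Pr\!\left[X_i \geq \tfrac{d_u}{(1+\epsilon)\mathcal{D}_{BC}}\right]
\end{align*}
with $\lambda = 4\log\log n$. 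This is an instance of the same calculation used in the proof of \autoref{lemma:random_shift} (see Claim~\ref{lemma:c_bound}), and evaluates to a factor $O(\ell_z\log\log n/\mathcal{D})$ on top of $\Pr[\kexp(S_i)$ touches $z]$. In other words, on those iterations where $\kexp(S_i)$ reaches $z$ at all, it absorbs both endpoints (making $z$ immune in all future iterations, since $\kexp(S_i) \subseteq \kblur(S_i)$ is then contained in one cluster) with constant conditional probability.

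To finish, adopt the potential/coupling argument of \autoref{lemma:folkloreldd}: let $T_z$ be the first recursive level at which some $\kexp(S_i)$ contains at least one endpoint of $z$. Before level $T_z$ no BBG can cut $z$. At level $T_z$, condition on the value of the relevant $d_u$; by the display above the conditional ``one endpoint in $\kexp$, other outside'' event has probability $O(\ell_z\log\log n/\mathcal{D})$, and conditional on that event the subsequent BBG cuts $z$ with probability at most $O(\ell_z/\rblur) = O(\ell_z\log\log n/\mathcal{D})$; conditional on the complementary ``both endpoints in $\kexp$'' event $z$ is safe. Marginalising and using $\Pr[T_z = t] \leq (1-\beta)^{t-1}$ for the constant absorption-or-decision probability $\beta$ exactly as in the proof of \autoref{lemma:folkloreldd} gives $\Pr[\Cut_z^{\mathrm{BBG}}] \in O(\ell_z\log\log n/\mathcal{D})$. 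Combining with the \LDD bound and using $\log\log n = O(\log(k\log n))$ yields $\Pr[\Cut_z] = O\!\left(\frac{\alpha\,\ell_z}{\mathcal{D}}\right)$ with $\alpha \in O(\log(k\log n))$, as required. The principal difficulty in executing this plan will be a clean book-keeping of the $(1+\epsilon)$-approximation slack of the \SetSSP used to define $\kexp(S_i)$ and the fact that, across different recursive levels, $u$ and $v$ may end up in different partitions of Step 1---an event that is however already charged to $\Cut_z^{\mathrm{LDD}}$ above.
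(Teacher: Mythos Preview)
Your treatment of the \LDD contribution (Step~1) is correct and matches the paper exactly: a single invocation cuts $z$ with probability $O(\ell_z/(\mathcal{D}\log n))$ by the choice $\mathcal{D}'\in\Theta(\mathcal{D}\log^2 n)$, and the union bound over $O(\log n)$ levels gives $O(\ell_z/\mathcal{D})$.

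The BBG part, however, has a real gap. Your stopping time $T_z$ is defined as the first level at which $\kexp(S_i)$ contains an endpoint of $z$, and you assert that ``before level $T_z$ no BBG can cut $z$''. This is false. The blurred set $\kblur(S_i)=\mathsf{blur}(\kexp(S_i),\rblur)$ may contain nodes at distance up to $2\rblur$ from $\kexp(S_i)$ (\autoref{lemma:bbg}), so at a level where $\kexp(S_i)$ comes within distance $2\rblur$ of $u$ \emph{without} containing $u$, the subsequent BBG may still place $u\in\kblur(S_i)$ and $v\notin\kblur(S_i)$, cutting $z$. These ``near--miss'' iterations occur strictly before your $T_z$ and are never charged in your argument. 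A second, related problem is your invocation of \autoref{lemma:folkloreldd}: that lemma needs a per--iteration lower bound $\beta$ on the probability that an endpoint is clustered, but here the backbone algorithm offers no such guarantee (the $O(\log n)$ bound on the recursion depth comes from component sizes shrinking, not from any per--node absorption probability), so the claimed geometric decay $\Pr[T_z=t]\le(1-\beta)^{t-1}$ is unjustified.

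The paper repairs both issues simultaneously by tracking not $T_z$ but the ``threatened'' event
\[
\mathcal{C}_t \;=\; \bigl\{\, X_t\cdot\mathcal{D}_{BC}-\ell_z \;\le\; d_{G_t}(z,S_t) \;\le\; X_t\cdot\mathcal{D}_{BC}+\tfrac{\mathcal{D}_{BC}}{2\cblur\lambda} \,\bigr\},
\]
which is exactly the range where $\kblur$ can reach $z$ without both endpoints being forced into $\kexp$; this encompasses your ``one in, one out'' case \emph{and} the near--miss case. The same truncated--exponential computation you display (Claim~\ref{lemma:c_bound}) then yields $\Pr[\mathcal{C}_t]\le \zeta\bigl(\Pr[\mathcal{F}_t]+\tfrac{1}{e^\lambda-1}\bigr)$ with $\zeta\in O(1)$, where $\mathcal{F}_t$ is the event that $X_t$ is large enough to absorb an endpoint. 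Since $\sum_t\Pr[\mathcal{F}_t]\le 1$ (an endpoint is absorbed at most once) and the number of iterations in which $S_t$ is close enough to threaten $z$ at all is at most $O(\log n)\le e^\lambda-1$ for $\lambda=4\log\log n$, one gets $\sum_t\Pr[\mathcal{C}_t]=O(1)$ directly---no geometric tail is needed. Multiplying by the per--threat cut probability $O(\ell_z\lambda/\mathcal{D})$ then gives the claimed bound. Your argument is salvageable along these lines, but as written it omits precisely the mechanism (widening the ``dangerous'' window to include the $\rblur$ overshoot, and replacing the geometric decay by $\sum_t\Pr[\mathcal{F}_t]\le 1$) that makes it go through.
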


For each edge $z \in E$, two operations can cut an edge in each iteration.
The decomposition from Theorem \ref{thm:clustering_general} and the ball $\kblur(S_i)$, which is created through \emph{blurry ball growing}.
The other possibility is that $z$ is cut by $\kblur(S_i)$.
Again, there are at most $O(\log n)$ sets $S_i$ whose ball $\kblur(S_i)$ can potentially cut it. 
By our choice of $\rho$, the probability for the ball cut to $z$ is $O\left(\nicefrac{\ell_z\log\log n}{\mathcal{D}}\right)$.
Therefore, we cannot simply use the union bound to sum up the probabilities over all iterations.
However, we can exploit that $S_i$ must be \emph{close}, i.e., within distance $O(\rho)$, to either endpoint of $z$ if $\kblur(S_i)$ can possibly cut it.
In such a case, both endpoints $z$ are already added $\kexp(S_i)$ with constant probability, i.e., they are safe before the blur procedure is even executed.
This follows from the properties of the truncated exponential distribution.
Thus, on expectation, there will only be a constant number of tries before $z$ is either cut or safe.
This is sufficient for our probability bound.

Having established the rough idea, let us now prove the lemma in detail.
We define the so-called $(\lambda, \cblur, \cldd)$-bounded Blurry Ball Carving Processes that resembles our algorithm:
\begin{definition}[$(\lambda, \cblur, \cldd)$-bounded Blurry Ball Carving Process]
\label{def:ball_carving}
Let $G := (V,E, w)$ a weighted graph and let $\mathcal{D}_{BC}$ be a distance parameter.
Then, a $(\lambda, \cblur, \cldd)$-bounded blurry ball carving process with parameters $\lambda \geq 1$ and $\cblur, \cldd \geq 0$ creates a series of subgraphs $G := G_0, G_1, \ldots$, vertex sets $A_1, A_2, \ldots$, and clusters $K_1, K_2, \ldots$, s.t., it holds: 
\begin{enumerate}
    \item $K_t \supseteq A_t$ is a superset of the ball  $B_{G_{t-1}}(A_t, X_t\cdot\mathcal{D}_{BC})$ with $X_t \sim \mathsf{Texp}(\lambda)$, i.e., it holds:
    \begin{align}
        K_t \supseteq B_{G_{t-1}}(A_t, X_t\cdot\mathcal{D}_{BC}) \label{bbc_property_1a}\tag{\textbf{Property 1.a}}
    \end{align}
    Further, it holds that:
    \begin{align}
        K_t &\subseteq B_{G_{t-1}}\left(A_t, \left(X_t+\frac{1}{2\cdot\cblur\cdot\lambda}\right) \cdot \mathcal{D}_{BC} \right) \label{bbc_property_1b}\tag{\textbf{Property 1.b}}
    \end{align}
    For all edges $(v,w) := z \in E_t$ of length $\ell$, it holds:
    \begin{align}
        \pr{v \in K_t, w \not\in K_t} \in O\left(\frac{\cblur\cdot\lambda\cdot \ell}{\mathcal{D}_{BC}}\right) \label{bbc_property_1c}\tag{\textbf{Property 1.c}}
    \end{align}
    \item $G_t \subseteq G_{t-1} \setminus K_t$ is a (random) decomposition of $G_{t-1} \setminus K_t$ that only removes edges.
    Further, for each step $t$ there is value $\alpha(G_{t-1})$ that depends only on $G_{t-1}$ and $t$, s.t., it holds:
    \begin{align}
        \pr{e \not\in E(G_t) \mid e \in E(G_{t-1} \setminus K_t)} &\leq \alpha(G_{t-1}) \cdot \frac{\ell_e}{\mathcal{D}_{BC}} \label{bbc_property_2a}\tag{\textbf{Property 2.a}}
    \end{align}
    Finally, let $\Omega_t$ be the set of all sequences of graphs that can be created by the process until step $t$, then
    \begin{align}
        \mathbf{max}_{T < n} \mathbf{max}_{G_1, \dots, G_T \in \Omega_T} \sum_{t=1}^T {\alpha(G_t)} &\leq c_{\mathsf{ldd}}\cdot \lambda \label{bbc_property_2b}\tag{\textbf{Property 2.b}}
    \end{align}
\end{enumerate}
If we choose $K_t$ based on the (approximate) distances in $G$ and not in $G_{t-1}$
we call it a \textbf{weak} ball carving process.
Note that the set $A_t \subseteq V_{t-1}$ can be freely chosen from the unclustered nodes. 
\end{definition}

\begin{lemma}
The algorithm is a $(\lambda, \cblur, \cldd)$-bounded blurry ball carving process with $\lambda = 4\log\log n$, $\cblur = 250$, and $\cldd \in O(1)$.
\end{lemma}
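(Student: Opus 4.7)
The plan is to verify the five conditions of Definition \ref{def:ball_carving} by identifying each step $t$ of the process with a single ball-carving operation on one partition $P_i$ in one recursive iteration of the algorithm. Under this identification, $A_t := S_i$ (the weak separator), $K_t := \kblur(S_i)$ (the blurred expansion), and $X_t := X_i \sim \Texp{4\log\log n}$, so $\lambda = 4\log\log n$. The graph $G_{t-1}$ is the uncharted graph seen by the algorithm at the start of step $t$, after all LDD cuts of the current iteration have been applied. Edge removal by the LDD of Theorem \ref{thm:clustering_general} is attributed to the transition between the last step of an iteration and the first step of the next.

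I would verify Properties 1.a and 1.b using the sandwich $d(v,S_i) \le d_{T_i}(v,S_i) \le (1+\eps)\,d(v,S_i)$ from the $(1+\eps)$-approximate \SetSSP together with the distance bound $d(v,\kexp(S_i)) \le 2\rblur$ from Lemma \ref{lemma:bbg}. Every $v$ with $d_{G_{t-1}}(v, S_i) \le X_i \mathcal{D}_{BC}$ lands in $\kexp(S_i) \subseteq K_t$, establishing Property 1.a. Conversely, every $v \in K_t$ satisfies $d_{G_{t-1}}(v, S_i) \le (1+\eps) X_i \mathcal{D}_{BC} + 2\rblur \le \bigl(X_i + \eps + 2/(\cblur\log\log n)\bigr)\mathcal{D}_{BC}$; by choosing the algorithm's internal constant $\cblur$ sufficiently large (with $\eps \le 1/(10000\log^2 n)$), the right-hand slack fits within $\mathcal{D}_{BC}/(2\cdot 250\cdot 4\log\log n)$, matching Property 1.b with process parameters $\cblur=250$ and $\lambda=4\log\log n$. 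Property 1.c is immediate from Lemma \ref{lemma:bbg}: the BBG cut probability is $O(\ell/\rblur) = O\bigl(\ell\cdot\cblur\log\log n/\mathcal{D}_{BC}\bigr) = O(\ell\cdot\cblur\cdot\lambda/\mathcal{D}_{BC})$.

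For Property 2.a, the only edge removal between $G_{t-1} \setminus K_t$ and $G_t$ is the LDD of Theorem \ref{thm:clustering_general} applied at the start of the next iteration, which cuts each edge with probability at most $O(\ell \log n / \mathcal{D}') = O(\ell / (\mathcal{D}_{BC} \log n))$ by the choice $\mathcal{D}' = 100\mathcal{D}\log^2 n$ and $\mathcal{D}_{BC} = \mathcal{D}/4$. This gives $\alpha(G_{t-1}) = O(1/\log n)$ on steps that straddle an iteration boundary and $\alpha(G_{t-1}) = 0$ otherwise; both values depend only on the index $t$ within the recursion, as Definition \ref{def:ball_carving} requires. For Property 2.b, the weak separator from Theorem \ref{thm:distributed_weak_separator} combined with the LDD ensures that the size of the largest remaining connected component shrinks by a constant factor per iteration, so there are $O(\log n)$ iterations w.h.p. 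Summing $\alpha(G_t)$ (only $O(\log n)$ of which are nonzero, each contributing $O(1/\log n)$) yields $O(1) \le \cldd \cdot \lambda$ for a suitable absolute constant $\cldd$, giving $\cldd \in O(1)$.

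The main obstacle is Property 1.b, where the slack $\mathcal{D}_{BC}/(2\cdot 250\cdot 4\log\log n)$ is narrow and must simultaneously absorb the $(1+\eps)$ overshoot of the approximate \SetSSP and the $2\rblur$ extension from BBG. This forces a careful numerical calibration of the algorithm's internal $\cblur$ against the process's $\cblur$ and $\lambda$, and in particular explains why the algorithm description insists on $\cblur > 8$ while the proof actually needs $\cblur$ to be a much larger (but still absolute) constant. A secondary subtlety is that Property 2.b demands a uniform bound over \emph{all} possible random evolutions $G_1,\ldots,G_T$ of the process, which requires the $O(\log n)$ iteration bound from the separator analysis to hold with high probability; this is exactly what Theorem \ref{thm:distributed_weak_separator} provides.
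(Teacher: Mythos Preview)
Your proposal is correct and follows essentially the same route as the paper: identify $A_t$ with the separator $S_i$, $K_t$ with $\kblur(S_i)$, read off Properties 1.a--1.c from the $(1+\epsilon)$-\SetSSP sandwich and Lemma~\ref{lemma:bbg}, and read off Properties 2.a--2.b from the LDD guarantee of Theorem~\ref{thm:clustering_general} together with the $O(\log n)$ bound on the recursion depth. Your explicit attention to the numerical calibration in Property~1.b (that the algorithm's internal $\cblur$ must be chosen large enough so that $\epsilon + 2\rblur/\mathcal{D}_{BC}$ fits under $1/(2\cdot 250\cdot\lambda)$) is in fact more careful than the paper's own write-up, which hides the constants somewhat loosely.
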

\begin{proof}
We prove that it fulfills all $5$ properties.
The sets $A_1, A_2, \ldots$ are the separators that we sample in each recursive step.
    \begin{itemize}
        \item \ref{bbc_property_1a}: This property is fulfilled because $\kexp$ adds all nodes in approximate distance $ X_{i} \cdot \frac{\mathcal{D}}{100}$.
        Let $w$ a node in distance at most $ X_{i} \cdot \frac{\mathcal{D}}{100}$ to $S_i$.
        By the guarantees of our approximation algorithm, it holds:
        \begin{align*}
               d_{P_i}(w,A_t) \leq (1+\epsilon)\cdot d_{T_i}(w,S_i) \leq  (1+\epsilon) \cdot X_{i} \cdot \mathcal{D}_{BC}
        \end{align*}
        \item \ref{bbc_property_1b}: This property is fulfilled due to the use of blurry ball growing with parameter $\rho \in O\left(\frac{\cblur\lambda}{\mathcal{D}}\right)$.
        It holds:
    \begin{align*}
         d(w',A_t) &\leq d(w',\kexp(A_t)) + \mathbf{max}_{w \in \kexp(A_t)} d(w,A_t)\\
         \text{As we overestimate:}\\
         &\leq d(w',\kexp(A_t)) + (1+\epsilon) \cdot X_{i} \cdot \mathcal{D}_{BC}\\
         \text{By Lemma \ref{lemma:bbg}:}\\
         &\leq \left(1-O\left(\frac{\log\log n}{\log n}\right)^{-1}\right) \cdot \frac{\mathcal{D}_{BC}}{250\cblur\lambda}\\  &+   (1+\epsilon) \cdot X_{i} \cdot \mathcal{D}_{BC}\\
         \text{For $O\left(\frac{\log\log n}{\log n}\right) \leq \frac{1}{2}$: }\\
         &\leq \frac{\mathcal{D}_{BC}}{125\cblur\lambda}  +   (1+\epsilon) \cdot X_{i} \cdot \mathcal{D}_{BC}\\
         \text{As ${X_t \leq 1}$:}\\
          &\leq \frac{\mathcal{D}_{BC}}{125\cblur\lambda}  +  (X_t+\epsilon) \cdot \mathcal{D}_{BC}\\
          &\leq \left(X_{i} + \frac{1}{125\cblur\lambda}  + \epsilon\right) \cdot \mathcal{D}_{BC}\\
            \text{For $\epsilon \leq \frac{\mathcal{D}_{BC}}{125\cblur\lambda}$: }\\
             &\leq \left(X_{i} + \frac{1}{60\cblur\lambda}\right) \cdot\mathcal{D}_{BC}\\
    \end{align*}
        \item \ref{bbc_property_1c}: This property is also fulfilled due to the use of blurry ball growing with parameter $\rho \in O\left(\frac{\cblur\lambda}{\mathcal{D}}\right)$ (cf. Lemma \ref{lemma:bbg}).
        \item \ref{bbc_property_2a}: As we pick $\mathcal{D}' \in O(\mathcal{D}\log^2 n)$, a single application of Theorem \ref{thm:clustering_general} cuts an edge with probability $O\left(\nicefrac{\ell_z\log n}{\mathcal{D}'}\right) = O\left(\nicefrac{\ell_z}{\mathcal{D}\log n}\right)$.
        Therefore, we gave $\alpha(G_{t-1}) \in O(\nicefrac{1}{\log n})$ in each recursive step
        \item \ref{bbc_property_2b}: Recall that we perform at most $O(\log n)$ recursive step until all nodes are clustered.
        Further, we have $\alpha(G_{t-1}) \in O(\nicefrac{1}{\log n})$ for each graph, no matter the topology or size.
        Thus, as we apply the decomposition $O(\log n)$ times, the parameters sum up to $O(1)$.
    \end{itemize}
    This proves the lemma.
\end{proof}

Now fix an edge $z \in E$.
During this proof, we will only consider the connected component that contains $z$.
Thus, when we talk about a step $A_t$ or a cluster $K_t$, we mean the set or cluster in $z$'s component.
For convenience, we further define the following terms:
\begin{align}
    \lambda &:= 4\log\log n\\
    \gamma_z &:= \frac{\ell_z}{\mathcal{D}_{BC}}\\
    \gblurdash &= \frac{1}{2\cblur\lambda}
\end{align}
First, we note that only because a cluster around a center $A_t$ could \textbf{potentially} cut $z$, it does not automatically mean that it will happen.
If an edge is not cut, there are two possibilities.
Either both endpoints remain active (and will be clustered in a future iteration), or
both endpoints of the edge could be added to the cluster $K_t$. 
In the latter case, the edge would be \emph{safe} as it can never be cut in future steps. 
Our goal is now to bound the probability that an edge is safe.
We begin this part of our analysis with a useful observation that quantifies under which circumstances a node is added to a cluster and when it is \emph{safe} from being clustered.
Recall that by \eqref{bbc_property_1a}, the probability that both endpoints of $z$ is added to $K_t$ if 
\begin{align}
    d_{{G_t}}(z,A_t) < X_t \cdot \mathcal{D}_{BC} - \gamma_z\cdot\mathcal{D}_{BC}.
\end{align}
Further, by \eqref{bbc_property_1b}, for any node $v \in V$ added added to $K_t$, it must hold:
\begin{align}
     d_{G_t}(v,A_t) \leq  X_t \cdot \mathcal{D}_{BC} + \gblurdash\cdot\mathcal{D}_{BC}
\end{align}
Going on, we say that an edge is \emph{threatened} by some set $A_t$ if 
\begin{align*}
    \mathcal{C}_t := \underbrace{\left\{d_{{G_t}}(z,A_t) \geq X_t \cdot \mathcal{D}_{BC} - \gamma_z\cdot\mathcal{D}_{BC} \right\}}_{z \textbf{ is not fully added to the cluster.}} \cap \underbrace{\left\{ d_{G_t}(z,A_t) \leq  X_t \cdot \mathcal{D}_{BC} + \gblurdash\cdot\mathcal{D}_{BC} \right\}}_{z \textbf{ may be added.}}
\end{align*}
In this case, the edge is \textbf{potentially} affected by the blurry growing procedure and may be cut. 
Note that it holds for every edge $z \in E$ by \ref{bbc_property_1c} and \ref{bbc_property_2a} that:
\begin{align}
   \pr{\Cut_z^t \mid \mathcal{C}_t} \leq \underbrace{O\left(\frac{\cblur\cdot\lambda\cdot\ell_z}{\mathcal{D}_{BC}}\right)}_{z \textbf{ cut by } K_t.} + \underbrace{\alpha(G_{t-1}) \cdot \frac{\ell_z}{\mathcal{D}_{BC}}}_{z \textbf{ cut by decomposition.}} \label{eqn:to_c}
\end{align}
Otherwise, it can only be cut by the decomposition, so by \ref{bbc_property_2a} we only have
\begin{align}
   \pr{\Cut_z^t \mid \overline{\mathcal{C}}_t} \leq \underbrace{\alpha(G_{t-1}) \cdot \frac{\ell_z}{\mathcal{D}_{BC}}}_{z \textbf{ cut by decomposition.}} \label{eqn:not_to_c}
\end{align}
We define the set of \emph{threateners} as:
  \begin{align}
        \mathcal{I}_z := \{A_t \mid d_{G_t}(z,A_t) \leq (1+\gblurdash)\cdot\mathcal{D}_{BC} \} 
\end{align}

We will show that we can bound the probability of an edge being cut based on the expected number of threateners.
Note that this is slightly more general than we need it to be.
It holds:
\begin{lemma}
\label{lemma:blurry_cut}
Consider a $(\lambda,\cblur,\cldd)$-bounded blurry ball growing process and assume that for some universal constants $c_1,c_2$ that are independent of $\lambda$, it holds:
\begin{align}
    \E{|\mathcal{I}_z|} \leq c_1 \cdot e^{\left(\lambda + c_2\right)\cdot \left(1 + \gblurdash\right) }
\end{align}
Then, for $\lambda_z \in O\left(\nicefrac{\mathcal{D}_{BC}}{\lambda}\right)$, it holds:
\begin{align}
   \pr{\Cut} \leq O\left(\frac{(\cblur+\cldd)\lambda \cdot \ell_z}{\mathcal{D}_{BC}}\right)
\end{align}
\end{lemma}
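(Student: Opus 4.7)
The plan is to decompose $\Cut_z$ by its cause: in each step $t$ either the blurry ball growing constructs $K_t$ and severs $z$ (governed by \ref{bbc_property_1a}--\ref{bbc_property_1c}), or the subsequent decomposition from $G_{t-1}\setminus K_t$ to $G_t$ cuts $z$ (governed by \ref{bbc_property_2a}--\ref{bbc_property_2b}). Since these events are disjoint within a step, a union bound gives
\begin{equation*}
\pr{\Cut_z} \;\leq\; \sum_t \pr{\text{LDD cuts } z \text{ at } t} \;+\; \sum_t \pr{\text{BBG cuts } z \text{ at } t}.
\end{equation*}
The LDD side is immediate: conditionally on $G_{t-1}$, \ref{bbc_property_2a} gives a per-step bound of $\alpha(G_{t-1})\gamma_z$, and summing across $t$ with \ref{bbc_property_2b} applied worst-case yields a total of $O(\cldd\lambda\gamma_z)$.

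For the BBG side, \ref{bbc_property_1a}--\ref{bbc_property_1b} imply that a cut at step $t$ can only occur when $\mathcal{C}_t$ holds, and \ref{bbc_property_1c} conditional on $\mathcal{C}_t$ yields a cut probability of $O(\cblur\lambda\gamma_z)$; therefore the BBG contribution is at most $O(\cblur\lambda\gamma_z)\cdot\sum_t\pr{\mathcal{C}_t}$, and the lemma reduces to proving $\sum_t\pr{\mathcal{C}_t}= O(1)$. Conditionally on $d_t := d_{G_t}(z,A_t)$, the event $\mathcal{C}_t$ requires $X_t\sim\mathsf{Texp}(\lambda)$ to fall in a window of width $\gamma_z+\gblurdash$ about $d_t/\mathcal{D}_{BC}$, while the event $\mathcal{F}_t$ that $z$ is fully absorbed into $K_t$ requires by \ref{bbc_property_1a} only that $X_t \geq d_t/\mathcal{D}_{BC} + \gamma_z$. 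A direct calculation with the truncated-exponential density shows that in the ``bulk'' regime $d_t/\mathcal{D}_{BC} + \gamma_z \leq 1 - 1/\lambda$, one has $\pr{\mathcal{C}_t \mid d_t} \leq O(\lambda(\gamma_z+\gblurdash))\cdot \pr{\mathcal{F}_t \mid d_t}$; since $z$ can be absorbed at most once across the whole process, the events $\mathcal{F}_t$ are pairwise disjoint so $\sum_t\pr{\mathcal{F}_t} \leq 1$, and the bulk contribution is at most $O(\lambda(\gamma_z+\gblurdash)) = O(1)$ under the assumptions $\ell_z\in O(\mathcal{D}_{BC}/\lambda)$ and $\gblurdash = 1/(2\cblur\lambda)$.

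The main obstacle is the boundary regime $d_t/\mathcal{D}_{BC} \in (1-1/\lambda,\,1+\gblurdash]$, where the ratio above blows up because the truncation $X_t\leq 1$ makes absorption effectively impossible. This is exactly where the hypothesis on $\E{|\mathcal{I}_z|}$ enters: evaluating the truncated-exponential tail near~$1$ gives $\pr{\mathcal{C}_t\mid d_t} \leq O(\lambda(\gamma_z+\gblurdash))\cdot e^{-\lambda}$ for $d_t$ in this regime, so the boundary contribution is at most
\begin{equation*}
O(\lambda(\gamma_z+\gblurdash))\cdot e^{-\lambda}\cdot\E{|\mathcal{I}_z|} \;\leq\; O(\lambda(\gamma_z+\gblurdash))\cdot c_1\, e^{(\lambda+c_2)(1+\gblurdash)-\lambda} \;=\; O(1),
\end{equation*}
using $\lambda\gblurdash = 1/(2\cblur)$ so the exponential factor coming from the hypothesis cancels precisely with the tail decay. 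Combining the two regimes yields $\sum_t\pr{\mathcal{C}_t} = O(1)$, and substituting back gives $\pr{\Cut_z} = O((\cblur+\cldd)\lambda\gamma_z)$ as required. The hard part is thus entirely this boundary case: without the hypothesis on $\E{|\mathcal{I}_z|}$ the residual near $d_t\approx\mathcal{D}_{BC}$ cannot be controlled, and the rest of the proof is a careful accounting of the truncated exponential together with the once-only nature of absorption.
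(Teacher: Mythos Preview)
Your proposal is correct and follows essentially the same approach as the paper: decompose into LDD cuts (handled by \ref{bbc_property_2a}--\ref{bbc_property_2b}) and BBG cuts, reduce the latter to $\sum_t\pr{\mathcal{C}_t}=O(1)$, and establish this by comparing $\mathcal{C}_t$ to the at-most-once absorption event $\mathcal{F}_t$ via the truncated-exponential density. The only presentational difference is that the paper handles the truncation issue uniformly through an additive correction term $\tfrac{1}{e^\lambda-1}$ in the inequality $\pr{\mathcal{C}_t\mid\rho_t}\le\zeta\bigl(\pr{\mathcal{F}_t\mid\rho_t}+\tfrac{1}{e^\lambda-1}\bigr)$, which after summing becomes $\E{|\mathcal{I}_z|}/(e^\lambda-1)$, whereas you make the bulk/boundary split explicit; the two packagings are equivalent and your version arguably makes clearer exactly where the hypothesis on $\E{|\mathcal{I}_z|}$ is consumed.
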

Note that in the algorithm at most $O(\log n)$ sets can threaten an edge $z$ on expectation. 
This follows, as we have at most $O(\log n)$ iterations, w.h.p., and there is at most one threatener per iteration.
Further, recall that we choose $\lambda = 4\log\log n$.
Therefore, we have:
\begin{align*}
     c_1 \cdot e^{\left(\lambda + c_2\right)\cdot \left(1 + \gblurdash\right) } \geq e^\lambda &= e^{4\log\log n} = \log^4(n) \geq O(\log n)  
\end{align*}
Thus, Lemma \ref{lemma:blurry_cut} proves Claim \ref{lemma:k_path_cutting}.

We will show that if an edge is endangered, the probability that both endpoints of the edge itself are added to the cluster is constant.
Thus, with a constant probability, the edge is safe from being cut in any future iteration.
Intuitively, this follows due to the exponential distribution.
If the random diameter is big enough to cluster a node close to $z$, it will likely also cluster $z$.
The proof is of course more difficult as we need to account for the fact that we truncate the distribution and the imprecision introduced by the approximate shortest paths.
We show that probability for this event depends on the \textbf{expected} number of centers that get close to $z$ and - rather surprisingly - not on the exact number. 
It holds:
\begin{lemma}
\label{lem:prob_fixed_step_clustered_comp}
  The probability that an edge is endangered in any of the $\tau$ steps we consider is:
  \begin{align}
      \sum_{t=0}^{\tau} \pr{\mathcal{C}_t} \leq  \left(1-e^{-\lambda(\gamma_z+\gblurdash)}\right)\cdot e^{\lambda\gblurdash}\cdot\left(1+\frac{\E{|\mathcal{I}_z|}}{e^{\lambda} - 1}\right)
  \end{align}
\end{lemma}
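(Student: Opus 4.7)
The plan is to mirror the structure of Lemma~\ref{lemma:lower_bound_cz} and Claim~\ref{lemma:c_bound}, but adapted from the pseudo-padded decomposition setting to the bounded blurry ball carving process. I would factor the sum over $t$ via the tower property and linearity of expectation, handling the individual probabilities via the density of the truncated exponential and then combining everything through a telescoping argument on the ``absorption'' events.

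First, I would fix a step $t$ and condition on all randomness prior to drawing $X_t$. This determines the set $A_t$ and, provided $z \in G_{t-1}$, the value $\rho_t := d_{G_{t-1}}(z, A_t)/\mathcal{D}_{BC}$; when $z$ has already left the graph we trivially have $\pr{\mathcal{C}_t \mid \text{history}} = 0$. Inspection of $\mathcal{C}_t$ then forces $\rho_t - \gblurdash \le X_t \le 1$, which is possible only when $\rho_t \le 1 + \gblurdash$, i.e., exactly when $A_t \in \mathcal{I}_z$. Hence the conditional probability vanishes outside of $\{A_t \in \mathcal{I}_z\}$. Writing this conditional probability as an integral of the $\mathsf{Texp}(\lambda)$ density over $[\max(0, \rho_t - \gblurdash), \min(1, \rho_t + \gamma_z)]$ and applying Claim~\ref{lemma:c_bound} with $\gamma' = \gamma_z$ and $\epsilon' = \gblurdash$ — using the variant that expresses the bound through $\pr{\delta \ge \rho}$ rather than $\pr{\delta \ge \rho + \epsilon'}$, which trades the factor $e^{2\lambda \gblurdash}$ for the tighter $e^{\lambda \gblurdash}$ required by the target — yields
\begin{align*}
\pr{\mathcal{C}_t \mid \text{history}} \;\le\; \bigl(1 - e^{-\lambda(\gamma_z + \gblurdash)}\bigr)\,e^{\lambda \gblurdash}\left(\pr{X_t \ge \rho_t \mid \text{history}} + \tfrac{1}{e^\lambda - 1}\right)\mathbf{1}[A_t \in \mathcal{I}_z].
\end{align*}

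The crux of the proof is matching $\pr{X_t \ge \rho_t}$ to a mutually exclusive event whose total probability is at most one. Here I would argue that whenever $X_t \ge \rho_t$ and $z \in G_{t-1}$, property~\ref{bbc_property_1a} guarantees that the closer endpoint of $z$ lies in $B_{G_{t-1}}(A_t, X_t \mathcal{D}_{BC}) \subseteq K_t$, and therefore $z$ is removed from $G_t$ and cannot satisfy the analogous condition at any later step. Consequently $\sum_t \pr{X_t \ge \rho_t,\; z \in G_{t-1}} \le 1$, while $\sum_t \pr{A_t \in \mathcal{I}_z} = \E{|\mathcal{I}_z|}$ by linearity of expectation. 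Taking expectations of the per-step bound and summing then yields exactly the asserted inequality.

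The main obstacle will be the mutual-exclusivity argument: one has to carefully interleave the indicator $\mathbf{1}[A_t \in \mathcal{I}_z]$ with the conditional probability $\pr{X_t \ge \rho_t \mid \text{history}}$ inside the expectation, and confirm that once $z$ disappears from the graph both quantities vanish in all subsequent steps, so the telescoping is clean. A minor technical subtlety is the boundary case $\rho_t - \gblurdash < 0$, where the integration domain in Claim~\ref{lemma:c_bound} must be taken as $[0, \min(1, \rho_t + \gamma_z)]$; a quick check of the algebra shows the same upper bound still holds, so this case introduces no additional loss.
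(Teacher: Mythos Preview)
Your proposal is correct and follows essentially the same approach as the paper: the paper likewise conditions on the history to fix $\rho_t$, proves the same truncated-exponential inequality (as a fresh claim rather than citing Claim~\ref{lemma:c_bound}, but with exactly the variant you describe that yields $e^{\lambda\gblurdash}$ and $\pr{X_t \ge \rho_t}$), introduces $\mathcal{F}_t$ for the event that $K_t$ contains an endpoint of $z$, and then sums using $\sum_t \pr{\mathcal{F}_t} \le 1$ together with $\sum_t \pr{A_t \in \mathcal{I}_z} = \E{|\mathcal{I}_z|}$. Your mutual-exclusivity argument via Property~1.a is precisely the paper's justification for $\sum_t \pr{\mathcal{F}_t} \le 1$.
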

\begin{proof}
    Consider a step $t$.
    Let $A_t$ be the center around which a ball is carved and let $X_t$ be the exponentially distributed random distance picked by $A_t$.
    Further denote
    \begin{align}
        \rho_t := \frac{d_{G_t}(A_t,z)}{\mathcal{D}_{BC}}
    \end{align}
     as the normalized distance between $A_t$ and $z$ in $G_t$.
    Note that $G_t$ and therefore also this distance depends on the random decisions of all previous rounds, so $\rho_t$ is a random variable.
    In the case that $z$ is already clustered by step $t$, we define the distance to be infinite.
    We now show that:
\begin{claim}
\label{lemma:c_bound_2}
It holds:
\begin{align*}
    &\pr{\rho_t- \gblurdash \leq X_t \leq \rho_t+\gamma_z \mid \rho_t} \\
    \leq& \left(1-e^{-\lambda(\gblurdash+\gamma_z)}\right)\cdot e^{\lambda\gblurdash}\cdot\left(\pr{X_t \geq \rho_t\mid \rho_t}+\frac{1}{e^{\lambda}-1}\right)~
\end{align*}
\end{claim}
\begin{proof}
The proof follows directly from the definition of $X_t$.
First, we note that it holds that
	\begin{align}
	     \pr{X_t>\rho_t-\gblurdash}
&=\int_{\rho_t-\gblurdash}^{1}\frac{\lambda\cdot e^{-\lambda y}}{1-e^{-\lambda}}dy
	=\frac{e^{-(\rho_t-\gblurdash)\cdot\lambda}-e^{-\lambda}}{1-e^{-\lambda}}\\
 &= \frac{e^{-(\rho_t-\gblurdash)\cdot\lambda}}{1-e^{-\lambda}} - \frac{e^{-\lambda}}{1-e^{-\lambda}}\\
 &= \frac{e^{-(\rho_t-\gblurdash)\cdot\lambda}}{1-e^{-\lambda}} - \frac{1}{e^{\lambda}-1}
	\end{align}
On the other hand, it holds:
	\begin{align*}
	&\pr{\rho_t-\gblurdash \le X_t\le\rho_t+\gamma_z \mid \rho_t}\\
 &=\int_{\rho_t-\gblurdash}^{\min\left\{ 1,\rho_t+\gamma_z\right\}}\frac{\lambda\cdot e^{-\lambda y}}{1-e^{-\lambda}}dy\\
	& \leq\frac{e^{-(\rho_t-\gblurdash)\cdot\lambda}-e^{-\left(\rho_t+\gamma_z\right)\cdot\lambda}}{1-e^{-\lambda}}\\
\textit{Substitute } \rho' = \rho_t-\gblurdash\\
 & \leq\frac{e^{-\rho'\cdot\lambda}-e^{-\left(\rho'+\gblurdash+\gamma_z\right)\cdot\lambda}}{1-e^{-\lambda}}\\
 & \leq \left(1 - e^{-\lambda(\gblurdash+\gamma_z)}\right) \frac{e^{-\lambda\rho'}}{1-e^{-\lambda}}\\
\textit{Substitute back}\\
  & \leq \left(1 - e^{-\lambda(\gblurdash+\gamma_z)}\right) \frac{e^{-\lambda(\rho_t-\gblurdash)}}{1-e^{-\lambda}}\\
    & \leq \left(1- e^{-\lambda(\gblurdash+\gamma_z)}\right) \frac{e^{-\lambda(\rho_t-\gblurdash)}}{1-e^{-\lambda}} -  \frac{1}{e^{\lambda}-1} + \frac{1}{e^{\lambda}-1}\\
    & \leq \left(1- e^{-\lambda(\gblurdash+\gamma_z)}\right) \cdot e^{\lambda\gblurdash}\cdot\frac{e^{-\lambda\rho_t}}{1-e^{-\lambda}} -  \frac{1}{e^{\lambda}-1} + \frac{1}{e^{\lambda}-1}\\
    & \leq \left(1 - e^{-\lambda(\gblurdash+\gamma_z)}\right) \cdot e^{\lambda\gblurdash}\cdot \left(\pr{X_t>\rho_t \mid \rho_t }  + \frac{1}{e^{\lambda}-1}\right)
	\end{align*}
Thus, the claim follows.
\end{proof}
     Next, Let $\mathcal{F}_t$ be the event that $K_t$ contains (at least) one endpoint of edge $z$.
    This is clearly the case if the random variable $X_t$ is at least $\rho_t$ and we have
    \begin{align*}
        \pr{\mathcal{F}_t} \geq \pr{X_t \geq \rho_t\mid \rho_t} 
    \end{align*}
    Further, we define the following helper variables:
    \begin{align*}
        \alpha &=e^{-\lambda(\gblurdash+\gamma_z)}\\
        \beta &=e^{\lambda\gblurdash}\\
        \zeta &= (1-\alpha)\beta
    \end{align*}
    Note that by Lemma \ref{lemma:c_bound_2}, we have the following relationship between $\mathcal{F}_t$ and $\mathcal{C}_t$:
    \begin{align*}
         \pr{\mathcal{C}_t \mid \rho_t} &\leq \pr{\rho_t- \gblurdash \leq X_t \leq \rho_t+\gamma_z \mid \rho_t} & \textit{By Definition if $\mathcal{C}_t.$}\\
         &\leq \left(1-e^{-\lambda(\gblurdash+\gamma_z)}\right)\cdot e^{\lambda\gblurdash}\cdot\left(\pr{X_t \geq \rho_t\mid \rho_t}+\frac{1}{e^{\lambda}-1}\right) & \textit{By Claim \ref{lemma:c_bound_2}.}\\
         &:= \zeta\cdot\left(\pr{X_t \geq \rho_t\mid \rho_t}+\frac{1}{e^{\lambda}-1}\right) & \textit{By Definition of $\zeta.$}
    \end{align*}
    Thus, by definition of $\mathcal{F}_t$, it holds:
    \begin{align}
          \pr{\mathcal{C}_t \mid \rho_t} &\leq  \zeta\cdot\left( \pr{\mathcal{F}_t\mid \rho_t}  +\frac{1}{e^{\lambda}-1}\right)   \label{eq:FleqC}
    \end{align}
    In the following, let $f(\cdot)$ be the probability density function of $\rho_t$.
    By the law of total probability, we have:
    \begin{align*}
        \pr{\mathcal{C}_t \mid A_t \in \mathcal{I}_z} &:= \pr{\mathcal{C}_t \mid \rho_t \leq 1+\gblurdash}&\textit{By Definition of $A_t$.}\\
        &:= \int_0^{1+\gblurdash} \pr{\mathcal{C}_t  \mid \rho_t = x}f(x)dx&\textit{By Law of Tot. Prob.}\\
        &\leq \zeta \cdot \int_0^{1+\gblurdash}\left(\pr{\mathcal{F}_t\mid  \rho_t = x} + \frac{1}{e^{\lambda}-1}\right)f(x)dx& \textit{By Inequality \eqref{eq:FleqC}.}\\
        &= \zeta\cdot\left(\pr{\mathcal{F}_t \mid \rho_t \leq 1+\gblurdash } + \frac{1}{e^{\lambda}-1}\right)&\textit{By Law of Tot. Prob.}\\
        &= \zeta\cdot\left(\pr{\mathcal{F}_t \mid A_t \in \mathcal{I}_z } + \frac{1}{e^{\lambda}-1}\right)&\textit{By Definition of $A_t$.}
    \end{align*}
    Recall that we observe our ball carving algorithm for $\tau$ steps.
    Note that it holds:
    \begin{align*}
        \E{|\mathcal{I}_z|} = \E{\sum_{t=1}^\tau \mathds{1}_{\{A_t \in \mathcal{I}_z\}}} =  \sum_{t=1}^\tau \E{\mathds{1}_{\{A_t \in \mathcal{I}_z\}}} = \sum_{t=1}^\tau \pr{A_t \in \mathcal{I}_z}
    \end{align*}
    Further, as each endpoint of $z$ will added to some cluster at most once, it holds:
     \begin{align*}
        \sum_{t=1}^\tau \pr{A_t\in \mathcal{I}_z}\cdot  \pr{\mathcal{F}_{i} \mid A_t\in \mathcal{I}_z} = \sum_{t=1}^\tau \pr{\mathcal{F}_{i}} \leq 1
    \end{align*}
    Now, we compute the probability that the edge is cut in any of the $\tau$ steps.
    It holds:
\begin{align*}
\sum_{t=1}^{\tau}\pr{\mathcal{C}_{t}} &= \sum_{t=1}^{\tau}\pr{A_t \in \mathcal{I}_z}\cdot\pr{\mathcal{C}_{t} \mid A_t \in \mathcal{I}_z}\\
& \le\zeta \cdot \sum_{t=1}^{\tau} \pr{A_t\in \mathcal{I}_z} \cdot \left( \pr{\mathcal{F}_{t} \mid A_t \in \mathcal{I}_z}+\frac{1}{e^{\lambda}-1}\right)\\
& \le\zeta \cdot \left(\sum_{t=1}^{\tau} \pr{A_t\in \mathcal{I}_z} \cdot  \pr{\mathcal{F}_{t} \mid A_t \in \mathcal{I}_z}+\sum_{t=1}^{\tau} \frac{\pr{A_t\in \mathcal{I}_z}}{e^{\lambda}-1}\right)\\
& \le\zeta \cdot \left(1+\frac{\E{|\mathcal{I}_z|}}{e^{\lambda}-1}\right)
\end{align*}
Now we can wrap up the proof by considering the value of $\zeta$.
\end{proof}

Now, we can plug our technical results together and see that each edge (that is sufficiently short) has only a constant probability to be endangered until step $\tau$.
Or, to put it differently, with constant probability, an edge is added to $K_t$ if the center is close enough.
It holds:
\begin{lemma}
\label{lemma:sum_c_bound}
Let $\lambda$ be the parameter of the truncated exponential distribution and assume that for some universal constants $c_1,c_2$ that are independent of $\lambda$, it holds:
\begin{align}
    \E{|\mathcal{I}_z|} \leq c_1 \cdot e^{\left(\lambda + c_2\right)\cdot \left(1 + \gblurdash\right) }
\end{align}
Then, for $\gamma_z \leq \gblurdash$, it holds:
\begin{align}
   \sum_{t=0}^{\tau}\pr{ \mathcal{C}_t}  \in O\left(1\right)
\end{align}
\end{lemma}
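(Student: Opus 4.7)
\medskip

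\noindent\textbf{Proof plan for \autoref{lemma:sum_c_bound}.} The plan is to simply plug the hypothesis on $\E{|\mathcal{I}_z|}$ into the bound of \autoref{lem:prob_fixed_step_clustered_comp} and observe that, under the assumption $\gamma_z\le\gblurdash$ together with the definition $\gblurdash=\frac{1}{2\cblur\lambda}$, all three factors on the right-hand side are absolute constants. First I would record the key arithmetic identity $\lambda\gblurdash=\frac{1}{2\cblur}$, which is independent of $\lambda$. From this, using $\gamma_z\le\gblurdash$, I get $\lambda(\gamma_z+\gblurdash)\le 2\lambda\gblurdash=\frac{1}{\cblur}=O(1)$, and also $\lambda\gblurdash\le\frac{1}{2\cblur}=O(1)$.

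Next, I would bound the three factors separately. Using $1-e^{-x}\le x$ for $x\ge 0$, the first factor satisfies
\begin{align*}
1-e^{-\lambda(\gamma_z+\gblurdash)}\;\le\;\lambda(\gamma_z+\gblurdash)\;\le\;2\lambda\gblurdash\;=\;\tfrac{1}{\cblur}.
\end{align*}
The second factor is $e^{\lambda\gblurdash}=e^{1/(2\cblur)}$, which is an absolute constant. For the third factor, I expand the exponent in the hypothesis as $(\lambda+c_2)(1+\gblurdash)=\lambda+c_2+\lambda\gblurdash+c_2\gblurdash\le\lambda+c_2+\frac{1}{2\cblur}+c_2$, so
\begin{align*}
\frac{\E{|\mathcal{I}_z|}}{e^\lambda-1}\;\le\;\frac{c_1\,e^{\lambda+c_2+1/(2\cblur)+c_2}}{e^\lambda-1}\;=\;c_1\,e^{2c_2+1/(2\cblur)}\cdot\frac{e^\lambda}{e^\lambda-1}.
\end{align*}
Since $\lambda\ge 1$ (we have $\lambda=4\log\log n$ in our application, but even $\lambda\ge 1$ is enough), the ratio $e^\lambda/(e^\lambda-1)$ is bounded by $2$, so the whole quantity is bounded by an absolute constant depending only on $c_1,c_2,\cblur$.

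Finally I would multiply the three bounds together and invoke \autoref{lem:prob_fixed_step_clustered_comp} to conclude
\begin{align*}
\sum_{t=0}^{\tau}\pr{\mathcal{C}_t}\;\le\;\tfrac{1}{\cblur}\cdot e^{1/(2\cblur)}\cdot\Bigl(1+2c_1\,e^{2c_2+1/(2\cblur)}\Bigr)\;\in\;O(1),
\end{align*}
as required. There is no real obstacle here: the whole point of the lemma is that the calibration $\gblurdash=\Theta(1/\lambda)$ was chosen precisely so that $\lambda\gblurdash$ collapses to a constant, which in turn makes both the exponential $e^{\lambda\gblurdash}$ and the ``extra'' factor $e^{(\lambda+c_2)\gblurdash}$ vanish into the constants, and the cancellation between $e^\lambda$ in the numerator of $\E{|\mathcal{I}_z|}$ and $e^\lambda-1$ in the denominator takes care of the growth in $\lambda$. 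The only mild care needed is to remember that $\gamma_z\le\gblurdash$ is what allows us to replace $\gamma_z+\gblurdash$ by $2\gblurdash$ and thus absorb both terms of the exponent into the same $\Theta(1/\cblur)$ budget.
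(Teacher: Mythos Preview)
Your proposal is correct and follows essentially the same route as the paper: both start from \autoref{lem:prob_fixed_step_clustered_comp}, use $e^\lambda/(e^\lambda-1)\le 2$ to absorb the $\E{|\mathcal{I}_z|}$ term into a constant multiple of $e^{\lambda\gblurdash}$, and then exploit the calibration $\lambda\gblurdash=1/(2\cblur)=O(1)$. The only difference is cosmetic: the paper packages the final multiplication into a separate claim showing $e^x(1-e^{-2x})(1+ce^x)\le 10cx$ for $x\le 1/2$, whereas you bound the three factors individually via $1-e^{-x}\le x$, which is slightly more direct.
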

\begin{proof}
By Lemma \ref{lem:prob_fixed_step_clustered_comp}, it holds
\begin{align}
    \sum_{t=0}^\tau\pr{\mathcal{C}_t} \leq  \left(1-e^{-\lambda(\gblurdash+\gamma_z)}\right)e^{\lambda\gblurdash}\left(1+\frac{\E{|\mathcal{I}_z|}}{e^{\lambda} - 1}\right)
\end{align}
Focusing on the last term, we see that it holds:
\begin{align}
    \left(1+\frac{\E{|\mathcal{I}_z|}}{e^{\lambda}-1}\right) &\leq 1+\frac{c_1 \cdot e^{(\lambda+c_2)(1+\gblurdash)}}{e^{\lambda}-1} =  1+\frac{c_1 \cdot e^{\lambda(1+\gblurdash) + c_2 + c_2\gblurdash}}{e^{\lambda}-1}
\end{align}
Recall that it holds $\gblurdash \leq 1$ and we can further simplify the formula as follows:
\begin{align}   
    \left(1+\frac{\E{|\mathcal{I}_z|}}{e^{\lambda}-1}\right)&\leq 1 +\frac{e^{\lambda} c_1 e^{3c_2}}{e^{\lambda}-1}e^{\lambda\gblurdash}
\end{align}
One can easily verify that that it holds $\frac{e^\lambda}{e^\lambda-1} \leq 2$ for all $\lambda \geq 1$.
To wrap things up, we define $c_3 := 2\cdot c_1\cdot e^{3c_2}$ and see that it holds:
\begin{align}
\label{eqn:helper_e1}
    \left(1+\frac{\E{|\mathcal{I}_z|}}{e^{\lambda}-1}\right)&\leq 1 +2 c_1 e^{3c_2}e^{\lambda(\gblurdash)} \leq 1+ c_3 e^{\lambda(\gblurdash)}
\end{align}
Note that $c_3$ only depends on $c_1$ and $c_2$ and is independent of $\gblurdash \leq 1$ and $\lambda \geq 1$ as long as they are in their given bounds.
Moving on, we need some more algebra to properly bound our expression.
To this end, we prove the following claim:
\begin{claim}
\label{claim:helper_e2}
For each constant $c\geq 0$ and $x \in [0,\nicefrac{1}{2}]$, it holds:
    \begin{align}
        e^x(1-e^{-2x})(1+c\cdot e^x) \leq c\cdot 10 \cdot x
    \end{align}
\end{claim}
\begin{proof}
We first reorder the l.h.s of the formula and see:
\begin{align}
    e^x(1-e^{-2x})(1+c\cdot e^x) &= e^x(1 + ce^{x} - e^{-2x} - ce^{-x})\\
    &= e^x(\underbrace{1 - e^{-2x}}_{(*)} + c\underbrace{(e^{x}  - e^{-x})}_{(**)})
\end{align}     
One can easily verify that for $x \leq 1$, it holds $e^{-x} \geq 1-x $ as this is a well-known inequality.
Thus, for the first part of the formula, it holds
\begin{align}
   (*) &= 1 - e^{-2x} \leq 1 - (1-2x) = 2x
\end{align}
Further, it holds $e^{x} \leq 1+x+x^2$ for $x \leq 1$ and thus, we have:
\begin{align}
    (**) &= e^x - e^{-x} \leq (1+x+x^2) - (1-x) = 2x + x^2 \leq 3x
\end{align}
Plugging both inequalities back into our formula gives us the following:
\begin{align}
     e^x(1-e^{-2x})(1+c\cdot e^x) &\leq (1+x+x^2)(2x + 3cx) = (1+x+x^2)(x(3c+2))\\
     &\leq (1+x+x^2)5cx = 5cx + 5cx^2 + 5cx^3\\
     &\leq 10cx
\end{align}
We used that $x \leq \frac{1}{2}$ in the last inequality.
This proves the claim.
\end{proof}
Now, we can finalize the proof.
By choosing both $\gamma,\epsilon \in O\left(\frac{1}{\lambda}\right)$, for a small enough hidden constant, we can ensure that the exponent $2\lambda(\gblurdash)$ is always smaller than $\frac{1}{2}$.
Using our claim, we get that 
\begin{align*}
    \sum_{t=0}^\tau\pr{\mathcal{C}_t} &\leq  \left(1-e^{-\lambda(\gblurdash+\gamma_z)}\right)e^{\lambda\gblurdash}\left(1+\frac{\E{|\mathcal{I}_z|}}{e^{\lambda} - 1}\right)\\
\text{As $\gamma_z \leq \gblurdash$:}\\
&\leq  \left(1-e^{-\lambda(2\gamma_z)}\right)e^{\lambda\gblurdash}\left(1+\frac{\E{|\mathcal{I}_z|}}{e^{\lambda} - 1}\right)\\
\text{By Inequality \ref{eqn:helper_e1}:}\\
     &\leq  e^{\lambda\gblurdash}\left(1-e^{-2\lambda\gblurdash} \right)(1 + c_3 e^{\lambda\gblurdash})\\
     \text{By Claim \ref{claim:helper_e2}:}\\
&\leq 10c_3\lambda\gblurdash \\
\text{As $\gblurdash \leq \frac{1}{2\cdot\cblur\cdot\lambda}$:}\\
&\leq \frac{5c_3}{\cblur}
\end{align*}
Thus, the lemma follows as $\cblur$ is a constant larger than one.
\end{proof}

Finally, we can prove Lemma \ref{lemma:blurry_cut}.
\begin{proof}[Proof of Lemma \ref{lemma:blurry_cut}]
Let $\Cut^{(\tau)}_z$ be the event that edge $z$ by some cluster $K_t$ is cut \textbf{until} step $\tau$.
Further, let $\Cut^{t}_z$ be the event that edge $z$ is cut \textbf{in} step $t$.
Using all of our arguments and insights, we conclude:
\begin{align*}
    \pr{\Cut^{(\tau)}_z} &= \pr{\bigcup_{t=0}^\tau \Cut^{t}_z}
    \leq \sum_{t=0}^\tau \pr{\Cut^{t}_z} &\textit{By Union Bound.}\\
    &= \sum_{t=0}^\tau \pr{\mathcal{C}_t} \cdot \pr{\Cut^{t}_z \mid \mathcal{C}_t} + \pr{\overline{\mathcal{C}_t}} \cdot \pr{\Cut^{t}_z \mid \overline{\mathcal{C}_t}}  &\textit{By Law of Tot. Prob.}\\
    &\leq \sum_{t=0}^\tau \pr{\mathcal{C}_t} \cdot O\left(\frac{\cblur\cdot\lambda\cdot\ell_z}{\mathcal{D}}\right) + \sum_{t=0}^\tau \alpha(G_{t-1}) \cdot \frac{\ell_z}{\mathcal{D}_{BC}} &\textit{By Ineq. \eqref{eqn:to_c} and \eqref{eqn:not_to_c}.} \\
    &\leq   O\left(1\right) \cdot  O\left(\frac{\cblur\cdot\lambda\cdot\ell_z}{\mathcal{D}}\right)  + \sum_{t=0}^\tau \alpha(G_{t-1}) \cdot \frac{\ell_z}{\mathcal{D}_{BC}} &\textit{By Lemma \ref{lemma:sum_c_bound}.}\\
      &\leq O\left(\frac{\cblur\cdot\lambda\cdot\ell_z}{\mathcal{D}}\right)  + \frac{\cldd\cdot\lambda\cdot\ell_z}{\mathcal{D}_{BC}} &\textit{By \eqref{bbc_property_2b}.}\\
    &= O\left(\frac{(\cblur+\cldd)\cdot\lambda\cdot\ell_z}{\mathcal{D}}\right)
\end{align*}
This was to be shown.
\end{proof}

\paragraph{Complexity} 
Finally, we analyse the time complexity, which is straightforward.
\begin{lemma}[Complexity]
\label{claim:k_path_complexity}
        Each iteration of the algorithm can be implemented with $\Tilde{O}(k)$ approximate \SetSSP computations with parameter $\epsilon \in \Omega(\nicefrac{1}{\log^2 n})$ and minor aggregations, w.h.p.
\end{lemma}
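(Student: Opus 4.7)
}

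The plan is to walk through the five steps of a single iteration of the backbone algorithm and verify, step by step, that each subroutine fits within the stated budget of $\Tilde{O}(k)$ approximate \SetSSP computations and minor aggregations with $\epsilon \in \Omega(1/\log^2 n)$. Since the bound has to hold with high probability, I would fix a single iteration, and later argue that a union bound over the $O(\log n)$ iterations (the depth of the recursion established in the earlier analysis) preserves the ``w.h.p.'' qualifier.

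First, for \textbf{Step 1}, the algorithm invokes the \LDD of \autoref{thm:clustering_general} with diameter $\mathcal{D}' = 100 \cdot \mathcal{D} \cdot \log^2 n$ on each connected component. By \autoref{thm:clustering_general} this costs $\Tilde{O}(1)$ minor aggregations and $\Tilde{O}(1)$ $(1+\epsilon_1)$-approximate \SetSSP computations with $\epsilon_1 \in O(1/\log^2 n)$. For \textbf{Step 2}, the weak-separator construction of \autoref{thm:distributed_weak_separator} is applied independently to each partition, with distance parameter $\mathcal{D}'$ and approximation parameter $\epsilon = 1/(10000\log^2 n)$. The theorem gives a cost of $O(\epsilon^{-1}\cdot k \cdot \log n)$ minor aggregations and the same number of $2$-approximate SSSP computations, which is $\Tilde{O}(k)$ and absorbs the (constant) approximation parameter into polylogarithmic factors. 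Since the partitions are node-disjoint, all these calls can be scheduled in parallel.

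For \textbf{Step 3}, the random shifts $X_i \sim \Texp{4\log\log n}$ are drawn locally; each partition then performs a single $(1+\epsilon)$-approximate \SetSSP from $S_i$ to compute $\kexp(S_i)$, contributing just one additional SSSP per partition. For \textbf{Step 4}, the blurry ball growing of \autoref{lemma:bbg} is applied once per partition with parameter $\rblur$; by \autoref{lemma:bbg} this amounts to $\Tilde{O}(1)$ $(1+\alpha)$-approximate \SetSSP computations with $\alpha \in O((\log\log n/\log n)^2) \subset \Omega(1/\log^2 n)$, which fits the budget. \textbf{Step 5} is a purely local marking plus one minor aggregation to update component-identifiers for the next iteration.

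Summing these five contributions, a single iteration uses $\Tilde{O}(k)$ minor aggregations and $\Tilde{O}(k)$ approximate \SetSSP calls, all with approximation parameter at least $\Omega(1/\log^2 n)$. The only non-routine part of the argument is making sure the various approximation parameters chosen in each subroutine are all bounded below by $\Omega(1/\log^2 n)$, which is where the choice of $\mathcal{D}' = \Theta(\mathcal{D}\log^2 n)$ and $\epsilon = \Theta(1/\log^2 n)$ matters: these were selected precisely so that the weakest requirement across all subroutines is $\Omega(1/\log^2 n)$. With the per-iteration bound in hand, the ``w.h.p.'' in the statement follows directly from the high-probability guarantees already established for the invoked subroutines, via a standard union bound over the $O(\log n)$ recursive iterations.
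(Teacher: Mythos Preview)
Your proposal is correct and follows essentially the same approach as the paper: both walk through the five steps of one iteration and tally the cost of each subroutine (\autoref{thm:clustering_general}, \autoref{thm:distributed_weak_separator}, a single \SetSSP, \autoref{lemma:bbg}, and local work), arriving at $\Tilde{O}(k)$ operations per iteration. Your version is slightly more careful in explicitly checking that all approximation parameters are bounded below by $\Omega(1/\log^2 n)$ and in justifying the parallel scheduling across disjoint partitions, but the core argument is identical.
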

\begin{proof}
For the complexity, consider a fixed iteration. 
In the following, we summarize minor aggregations and approximate SSSP with parameter $\epsilon \in O(\nicefrac{1}{\log^2 n})$ simply as \emph{operations}.
In Step $1$, we compute an \LDD using the algorithm from Theorem. It requires $\Tilde{O}(1)$ operations.
 In Step $2$, we compute a weak separator using the algorithm from Theorem. It requires $\Tilde{O}(k)$ operations.
 In Step $3$, we execute a \SetSSP from the separator. This clearly requires one operation.
 In Step $4$, we execute blurry ball growing from Lemma \ref{lemma:bbg}. It requires $\Tilde{O}(1)$ operations.
 Finally, Step $5$ is purely local.
This proves the lemma as there are at most $O(\log n)$ iterations, w.h.p.
\end{proof}

\section{Full Analysis of \autoref{lemma:backbone_clustering}}
\label{sec:appendix_refinement}

In this section, we prove Lemma and show that a backbone clustering can easily be extended to clustering with very favorable properties.
To be precise, we show the following lemma:

\refinement*

We begin by proving a helpful auxiliary lemma.
We show that we can efficiently compute so-called $\delta$-nets on paths with few minor aggregations.
In a $\delta$-net, we mark a set of nodes on the path such that each node on the path has a marked node in distance $\delta$ and two marked nodes have distance at least $\delta$..
These nets can be seen as a generalization of the Maximal Independent Set (MIS) or Ruling Sets for arbitrary distances and are formally defined as follows:
\begin{definition}[$\delta$-nets]
    Let $V$ be a set of nodes/points and let $d(\cdot,\cdot): V^2 \to \mathbb{R}$ be a distance metric on these nodes/points.
    Then, we define a $\delta$-net of $V$ as a set $\mathcal{N} \subseteq V$, s.t., it holds:
    \begin{itemize}
        \item For each node/point $v \in V$, there is a node/point $p \in \mathcal{N}$ with $d(v,p) \leq \delta$.
        \item For two nodes/points $p_1,p_2 \in \mathcal{N}$, it holds $d(p_1,p_2) > \delta$.
    \end{itemize}
\end{definition}
We are particularly interested in $\delta$-nets of paths.
In a sequential model, these nets can trivially be constructed by greedy algorithm that iterates over the nodes path.
In the following, we sketch an algorithm that efficiently computes these nets in our model.
The idea behind the algorithm is to build distance classes of length $\Theta(\delta)$ and mark one node per class. 
Note that the distance between net points is w.r.t. to the path, and they could be closer to each other when considering all paths in $G$ (which will be important later). 
\begin{lemma}
\label{lemma:path_portals}
Let $G = (V,E,w)$ be a weighted graph.
    Consider a set of $\ell$ (not necessarily shortest) paths of length $\Tilde{O}(\mathcal{D})$ and let $\epsilon > 0$ be a parameter. 
    Then, we can compute $\delta$-net with $\delta = \epsilon\mathcal{D}$ for all paths in $\Tilde{O}(\epsilon^{-1}\cdot\ell)$ minor aggregations.
\end{lemma}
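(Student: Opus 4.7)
The plan is to process the $\ell$ paths sequentially, constructing a $\delta$-net on each in $\tilde{O}(\epsilon^{-1})$ minor aggregations by an iterative greedy procedure. Fix a single path $P = (v_1, \ldots, v_m)$ of length $L \in \tilde{O}(\mathcal{D})$. I will initialize the net by marking $v_1$. Then, while there are unmarked nodes on $P$ past the most recently marked point $u_i$, I will find the first node on $P$ (traversing from $u_i$ toward $v_m$) whose distance along $P$ from $u_i$ strictly exceeds $\delta$, and mark it as $u_{i+1}$.

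Each iteration can be implemented with $\tilde{O}(1)$ minor aggregations. The key observation is that $P$ is itself a connected subgraph, so on the path minor one can compute, for every node $v_j \in P$, the exact path distance $d_P(u_i, v_j)$ by a standard prefix-sum / tree-aggregation on $P$ (noting that $P$ is a trivial spanning tree of itself). A single additional \MinAgg on the nodes with $d_P(u_i, v_j) > \delta$ then identifies $u_{i+1}$ and broadcasts its identity back to all nodes of $P$ for the next iteration. Because consecutive net points are separated by more than $\delta$ on $P$ and $P$ has length $\tilde{O}(\mathcal{D})$, the number of iterations per path is at most $\lceil L/\delta\rceil \in \tilde{O}(\epsilon^{-1})$. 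Summing over all $\ell$ paths gives the claimed $\tilde{O}(\epsilon^{-1} \cdot \ell)$ bound.

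Correctness is immediate from the construction: separation holds by definition, since $d_P(u_i, u_{i+1}) > \delta$ for every $i$; covering holds because any $v \in P$ either coincides with or lies between two consecutive marked nodes $u_i, u_{i+1}$, and since $u_{i+1}$ was chosen as the \emph{first} node past $u_i$ at path-distance strictly greater than $\delta$, every such intermediate $v$ satisfies $d_P(u_i, v) \leq \delta$.

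The main obstacle I anticipate is that the given paths are not required to be vertex-disjoint, which prevents us from viewing the $\ell$ paths together as a single minor partition and handling them in parallel in $\tilde{O}(\epsilon^{-1})$ aggregations total. Processing them sequentially sidesteps this complication and exactly accounts for the extra factor $\ell$. A secondary, mostly technical point to verify is that the distance-from-$u_i$ computation can genuinely be realized within a constant number of minor aggregations — this follows from standard tree primitives (prefix sums, ancestor broadcasts) applied to $P$ viewed as its own spanning tree, and does not require any SSSP call in $G$, since the distances in question are purely intrinsic to $P$.
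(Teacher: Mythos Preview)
Your argument is correct and meets the stated bound, but it takes a different route from the paper. The paper also processes the $\ell$ paths sequentially, but on each individual path it avoids the iterative greedy loop entirely: it does a single \textsc{AncestorSum} to give every node its exact distance $d_P(v,v_1)$ to the first node, has each node locally assign itself to a distance class $i = \lceil d_P(v,v_1)/\delta \rceil$, and then in one local exchange marks the first node of each even class as a net point. This costs only $\tilde{O}(1)$ minor aggregations per path, so $\tilde{O}(\ell)$ total, which is stronger than the $\tilde{O}(\epsilon^{-1}\cdot \ell)$ asserted by the lemma.

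What your approach buys is that the greedy construction makes the $\delta$-net properties (separation $>\delta$, covering $\le \delta$) completely transparent; the paper's class-based marking requires a small extra argument (and a halving of the parameter) to get both properties. What the paper's approach buys is the removal of the $\epsilon^{-1}$ factor from the aggregation count --- the key observation you are missing is that once every node knows $d_P(v,v_1)$, all net points can be selected \emph{simultaneously} by a purely local rule, so there is no need to spend one aggregation round per net point. Since the lemma only asks for $\tilde{O}(\epsilon^{-1}\cdot \ell)$, your proof suffices as stated.
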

\begin{proof}
We compute the net points on all paths sequentially.
In the following, we focus on a single path $P$ and show that the net points can be computed with $\Tilde{O}(1)$  minor aggregations.
Together with the fact that there are $\ell$ paths, this proves the lemma.
    We assume that w.l.o.g. each node knows whether it is the first or last node on a path.
    First, all nodes compute their distance to the first node on the path (if they do not already know it). 
    Since the subgraphs consist of single paths, this can be done using the $\textsc{AncestorSum}$ primitive.
    We simply need to sum up all distances on the unique path to the first/last node.
    Note that the distances are \emph{exact}. 
    Next, the last node on the path broadcasts its \emph{distance label}, i.e., the length of the path, to all nodes on the path. 
    Denote this distance as $\mathcal{D}_P$ in the following. 
    Each node $v \in P$ now locally computes the smallest integer $i \in [1,\frac{\mathcal{D}_P}{\delta}]$, s.t., it holds $d_P(v,v_1) \leq i\cdot\delta$.
    We say that $v$ is in \emph{distance class} $i$.
    Each node now exchanges its distance class with its neighbors. 
    This can be done in a single local aggregation as we can encode the distance class in $O(\log(nW))$ bits.
    Finally, all nodes with an \emph{even} distance class and a neighbor in a \emph{lower} distance class locally declare themselves net points.
    As a result of this procedure, the distance between a node and the next net point is at most $2\delta$ and the distance between two net points is at least $2\delta$.
    This proves the lemma for $\delta = \nicefrac{\epsilon}{2}$.
\end{proof}
Given these preliminaries, the high-level idea is to create a net on each backbone path and then use the resulting net nodes as a center for the algorithm from Theorem \ref{thm:genericldd}.
The algorithm works in three steps.
\begin{mdframed}
\vspace{-10pt}
\paragraph{\textbf{(Step 1) Create a Backbone Clustering:}} Execute the black-box algorithm $\mathcal{A}$ to obtain a $(\alpha,\beta,\kappa)$-backbone clustering $\mathcal{K} = K_1, \ldots K_N$ with pseudo-diamter $\mathcal{D}_{BC}$. 

\medskip

The following two steps are executed parallelly in each cluster $K_i \in \mathcal{K}$.

\paragraph{\textbf{(Step 2) Create a Net on Each Backbone Path:}} In each of the $O(\kappa)$ paths in the backbone $\mathcal{B}_i$ of the cluster $K_i$, create a ${\mathcal{D}_{BC}}$-net.
Denote the net points created by this algorithm as $\mathcal{N}_i$.

\paragraph{\textbf{(Step 3) Build Clusters Around Net Points:}}  Apply \textbf{one iteration of} our algorithm from Theorem \ref{thm:genericldd}.
That means we do not reapply the main loop until all nodes are clustered; we apply it only once.
We choose the previously computed net points $\mathcal{N}_i$ as centers $\mathcal{X}_i$ in each $K_i$.
As distance parameter choose $\mathcal{D} = 2\mathcal{D}_{BC}$. 
This

\end{mdframed}
\bigskip

The lemma follows directly from the correctness of the subroutines that we use.
Consider a single cluster $K_i$ created by the black box algorithm $\mathcal{A}$.
First, we want to show that each node $v \in K_i$ has at least one net point $x \in \mathcal{N}_i$ in distance $\mathcal{D}$.
This follows because $v$ must be in the distance $\mathcal{D}_{BC}$ to some path, and there is a net point in the distance $\mathcal{D}_{BC}$ to the closest node on that path.
This holds in either of the two setups, as it follows directly from the properties of the backbone clustering.
Thus, we fulfill the \textbf{covering property} required by the Theorem \ref{thm:genericldd}.
In the \textbf{packing property}, we need to distinguish whether the path in the back are approximate shortest paths or not.
First, consider the case that we do \emph{not} have exact shortest paths in $G$. 
Recall that we need to bound the number of net points in distance $6\mathcal{D}$ to each node as this determines the cutting probability.
As the length of each path is bounded by $O(\mathcal{D}\beta)$, we obtain $O(\beta)$ net points per path and $O(\kappa\beta)$ net points in total as we have $\kappa$ paths.
Clearly, each node can have at most $O(k\beta)$ net points at \emph{any} distance.
Thus, if we run the algorithm from Theorem with parameter $\tau \in O(k\beta)$ and receive clusters with strong diameter $8\mathcal{D} = 8\cdot(2\mathcal{D}_{BC})= 16\mathcal{D}_{BC}$.
This proves the proclaimed diameter bound from Lemma.
Further, each node is contained in a cluster with probability at least $\nicefrac{1}{2}$.
This follows directly from Lemma in the analysis of Theorem \ref{thm:genericldd}.
Thereby, we showed the clustering probability of $\nicefrac{1}{2}$.
It remains to prove the cutting probability.
An edge is cut with probability $O(\frac{\alpha\ell}{\mathcal D})$ by the $(\alpha,\beta,\kappa)$-backbone clustering in Step $1$ and with probability $O\left(\frac{(\log(\kappa\beta))\ell}{\mathcal D}\right)$ in Step $3$. 
The former follows from the definition of the backbone clustering, and the latter follows from Lemma \ref{lemma:clustering_cut_fixed_iteration} in the analysis of Theorem \ref{thm:genericldd}.
By the union bound, the probability of $O\left(\frac{(\alpha+\log(\kappa\beta))\ell}{\mathcal D}\right)$ follows.

Further, it can be implemented with $\Tilde{O}(1)$ approximate \SetSSP computations and minor aggregations.

For the other bound, suppose each path is an $(1+\beta^{-1})$-approximate shortest path in $K_i$.
Note that we only need to consider the cutting probability because the proof of the other two properties is analogous.
Consider a node $v$ and consider all net points in the distance $6\mathcal D$ to $v$.
Recall that each path has length $\beta\mathcal{D}$, and we construct a $\mathcal{D}$-net.
If we compute net on approximate shortest paths \textbf{of bounded length}, they also have the following extremely useful property.
It holds:
\begin{lemma}
\label{lemma:net_distance}
    Let $\epsilon_p, \epsilon_s \leq 1$ be two arbitrary parameters.
    Let $P$ be a $(1+\epsilon_s)$-approximate shortest path of some graph $G := (V,E,w)$ of length $\mathcal{D} \cdot \epsilon_s^{-1}$.
    Further, let $\mathcal{N}$ be a $\epsilon_p\mathcal{D}$-net on $P$.
    Then, for each $c \geq 1$, every node $v \in V$ has at most $O(c\epsilon_p^{-1})$ net nodes in distance $c\mathcal{D}$ in $G$. 
\end{lemma}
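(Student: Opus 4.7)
}
Fix any vertex $v \in V$ and let $\mathcal{N}_v := \{p \in \mathcal{N} \mid d_G(v,p) \leq c\mathcal{D}\}$ be the set of net points within distance $c\mathcal{D}$ of $v$ in $G$. The plan is to show that all of $\mathcal{N}_v$ lies on a sub-path of $P$ of length $O(c\mathcal{D})$; since consecutive net points are separated by at least $\epsilon_p\mathcal{D}$ along $P$ (this is how the $\delta$-net is built in Lemma~\ref{lemma:path_portals}, using distance classes on $P$), the bound $|\mathcal{N}_v|\in O(c\epsilon_p^{-1})$ then follows immediately.

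Let $u,w$ denote the two endpoints of $P$, and pick any two net points $p_1,p_2 \in \mathcal{N}_v$, ordered along $P$ so that $p_1$ precedes $p_2$ when walking from $u$ to $w$. The triangle inequality in $G$ yields
\begin{align*}
d_G(p_1,p_2) \leq d_G(v,p_1) + d_G(v,p_2) \leq 2c\mathcal{D}.
\end{align*}
The key step is to translate this $G$-distance bound into a bound on $d_P(p_1,p_2)$, the length of the sub-path of $P$ between them. For this I will exploit that $P$ is a $(1+\epsilon_s)$-approximate shortest $u$-$w$ path of total length $L_P \leq \mathcal{D}/\epsilon_s$. Using $d_G(u,w)\leq d_G(u,p_1)+d_G(p_1,p_2)+d_G(p_2,w) \leq d_P(u,p_1)+d_G(p_1,p_2)+d_P(p_2,w)$ and the approximation guarantee $L_P \leq (1+\epsilon_s) d_G(u,w)$, decomposing $L_P = d_P(u,p_1)+d_P(p_1,p_2)+d_P(p_2,w)$ and rearranging gives
\begin{align*}
d_P(p_1,p_2) \;\leq\; \epsilon_s\bigl(d_P(u,p_1)+d_P(p_2,w)\bigr) + (1+\epsilon_s)\,d_G(p_1,p_2).
\end{align*}
Since $d_P(u,p_1)+d_P(p_2,w)\leq L_P \leq \mathcal{D}/\epsilon_s$ and $\epsilon_s\leq 1$, this simplifies to $d_P(p_1,p_2) \leq \mathcal{D} + 2\,d_G(p_1,p_2) \leq (4c+1)\mathcal{D}$.

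Thus every pair of net points in $\mathcal{N}_v$ lies within distance $(4c+1)\mathcal{D}$ of each other on $P$, so all of $\mathcal{N}_v$ is contained in a single sub-path of $P$ of length $(4c+1)\mathcal{D}$. Because any two net points of $\mathcal{N}$ are spaced at least $\epsilon_p\mathcal{D}$ apart along $P$, the total count is at most $(4c+1)\mathcal{D}/(\epsilon_p\mathcal{D}) + 1 \in O(c\epsilon_p^{-1})$, as claimed. The main obstacle in the above outline is the algebraic step that converts the global approximation guarantee on $P$ into a useful bound on the \emph{sub-path} between two arbitrary interior points; the trick is to absorb the $\epsilon_s$-error against the bounded total length $\mathcal{D}/\epsilon_s$, which precisely cancels the $\epsilon_s^{-1}$ factor and leaves an additive $\mathcal{D}$ slack.
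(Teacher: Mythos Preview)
Your proof is correct and is in fact cleaner than the paper's own argument. Both proofs hinge on the same key observation --- that the total approximation slack of a $(1+\epsilon_s)$-approximate path of length $\mathcal{D}/\epsilon_s$ is at most $\mathcal{D}$ --- but they exploit it differently. The paper argues via discrete ``distance classes'' $\lceil d_G(s,p)/\mathcal{D}\rceil$ measured from the root $s$ of $P$: it fixes the first nearby net point $p_1$, then shows that any net point more than $10c\epsilon_p^{-1}$ positions down the path must land in a distance class at least $i_{p_1}+10c-2$, and derives a contradiction from the triangle inequality through $v$. This uses the per-node approximation guarantee $d_P(s,p)\le(1+\epsilon_s)d_G(s,p)$ for \emph{each} intermediate $p$. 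Your argument instead uses only the \emph{endpoint} guarantee $L_P\le(1+\epsilon_s)d_G(u,w)$, decomposing $L_P$ and $d_G(u,w)$ additively to bound $d_P(p_1,p_2)$ directly; this is both a weaker hypothesis and a shorter derivation. The discretization in the paper adds bookkeeping (rounding, off-by-one terms) that your continuous inequality avoids entirely.
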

\begin{proof}
    Let $s \in V$ be the first node on path $P$, i.e., the node from which the approximate shortest path $P$ was computed.
    We now divide path $P$ into distance classes with respect to the distances $G$.
    For each net node $p \in \mathcal{N}$, we say that $p$ is in distance class $i_p$ if $i_p \in \left[1,O(\epsilon_s^{-1})\right]$ is the biggest integer such that $d_G(s,p) \leq i \cdot \mathcal{D}$, i.e., it holds:
    \begin{align}
        i_p := \left\lceil \frac{d_G(s,p)}{\mathcal{D}} \right\rceil
    \end{align}
    Note that $d_G(\cdot,\cdot)$ denotes the true distance between nodes and \textbf{not} the distance on path $P$.
    The latter may be greater by $(1+\epsilon_s)$-factor.
    
    Define $p_1 \in \mathcal{N}$ to be the first net point (counting from $s$) on $P$ that is in distance $c\mathcal{D}$ to $v$, i.e., it holds:
    \begin{align}
        p_1 = \argmin_{p \in \mathcal{N} \cap B_G(v, c\mathcal{D})} d_P(p,s)   
    \end{align}
    
    Note that, by this definition, all net points in $B_G(v, c\mathcal{D}$ must lie \emph{behind} $p_1$ on the path $P$, i.e., have a greater distance to $s$ than $p_1$ (with respect to path $P$).
    In the following, we will only consider these net points and do not consider the earlier net points.

    Let $\mathcal{U} \subset \mathcal{N}$ contain the next $N = 10c\epsilon_p^{-1}$ net nodes of $P$. 
    Denote these nodes as $p_1, p_2, \ldots, p_N$.
    Further, as we only consider net points that are further away from $s$ than $p_1$, the distance between $p_1$ and any $p' \not\in \mathcal{U}$ is at least:
    \begin{align}
        d_P(p_1,p') &= \sum_{i=1}^{N} d_P(p_i,p_{i+1}) + d_P(p_{N},p')\\
        &\geq \sum_{i=1}^{N} d_P(p_i,p_{i+1})
        \geq \sum_{i=1}^{N} \epsilon_p \cdot \mathcal{D}
        = 10\cdot c\cdot\mathcal{D} \label{eq:distp1}
    \end{align}

    Further, let $p_1$ be in distance class $i_{p_1}$. 
    We now argue that each net node $p' \in \mathcal{N}$ that is not in $\mathcal{U}$ must be in distance class at least $i_{p_1}+10c-2$. 
    First, we consider the definition of distance class $i_{p'}$, add a dummy distance from $s$ to $v$, and rearrange.
    We get:
    \begin{align*}
        i_{p'} &= \left\lceil \frac{d_G(s,p')}{\mathcal{D}} \right\rceil = \left\lceil \frac{d_G(s,p')}{\mathcal{D}} \right\rceil + \left\lceil \frac{d_P(s,p')}{\mathcal{D}} \right\rceil - \left\lceil \frac{d_P(s,p')}{\mathcal{D}} \right\rceil\\
        &= \underbrace{\left\lceil \frac{d_P(s,p')}{\mathcal{D}} \right\rceil}_{(*)} - \underbrace{\left(\left\lceil \frac{d_P(s,p')}{\mathcal{D}} \right\rceil - \left\lceil \frac{d_G(s,p')}{\mathcal{D}} \right\rceil\right)}_{(**)}
    \end{align*}
    Now recall that the distance between $p_1$ and $p'$ \textbf{on the path} $P$ is at least $10\mathcal{D}$ by definition as $p' \not\in \mathcal{U}$. 
    Thus, for the first term, it holds:
    \begin{align}
        (*) &= \left\lceil \frac{d_P(s,p')}{\mathcal{D}} \right\rceil\\
        \text{As }d_P(s,p') = d_P(s,p_1) + d_P(p_1,p'):\\
        &= \left\lceil \frac{d_P(s,p_1) + d_P(p_1,p')}{\mathcal{D}}   \right\rceil\\
       \text{By Ineq. } \eqref{eq:distp1}: \\
        &\geq \left\lceil \frac{d_P(s,p_1) + 10c\mathcal{D}}{\mathcal{D}}\right\rceil\\
        &\geq \left\lceil \frac{d_P(s,p_1)}{\mathcal{D}}\right\rceil + \left\lceil \frac{10c\mathcal{D}}{\mathcal{D}}\right\rceil-1\\
        &= i_u + 10c - 1
    \end{align}
    For the second term, we use the fact that we chose $\epsilon_s$ to be very small. In particular, we chose it small enough that the approximation error for all nodes of the path (even the nodes close to the end) is smaller than $\mathcal{D}$.
    It holds:
    \begin{align}
        (**) &:= \left\lceil \frac{d_P(s,p')}{\mathcal{D}} \right\rceil - \left\lceil \frac{d_G(s,p')}{\mathcal{D}} \right\rceil\\
        &\leq \left\lceil \frac{d_G(s,p') + \epsilon_s\cdot d_G(s,p')}{\mathcal{D}} \right\rceil - \left\lceil \frac{d_G(s,p')}{\mathcal{D}} \right\rceil\\
        &\leq \left\lceil \frac{\epsilon_s\cdot d_G(s,p')}{\mathcal{D}} \right\rceil\\
        \text{As } d_G(s,p') \leq \epsilon^{-1}_s\cdot\mathcal{D}:\\
        &< \left\lceil \frac{\epsilon_s\cdot \epsilon_s^{-1} \mathcal{D}}{\mathcal{D}} \right\rceil < 1
    \end{align}
Combining our formulas, we get
\begin{align}
\label{eqn:i9_1}
    i_{p'} &= \left\lceil \frac{d_P(s,p')}{\mathcal{D}} \right\rceil - \left(\left\lceil \frac{d_P(s,p')}{\mathcal{D}} \right\rceil - \left\lceil \frac{d_G(s,p')}{\mathcal{D}} \right\rceil\right)\\
    &\geq (i_p + 10c -1) -1 = i_p+10c - 2
\end{align}
as claimed.

We now claim that only nodes in $\mathcal{U}$ may be close to $v$.
Now assume for contradiction that both $p_1$ and $p' \not\in \mathcal{U}$ are in distance $c\mathcal{D}$ to $v$.
However, this would imply that the actual distance between $p$ and $p'$ is at most $c\mathcal{D}$ by the triangle inequality.
It holds:
\begin{align}
    d(v,p') \leq c\mathcal{D}
\end{align}
Now we consider the path from $s$ to $p'$.
By \emph{not} taking the path $\mathcal{P}$ from $s$ to $p'$, but instead the path via $p_1$ and $v$, we see that:
    \begin{align}
        i_{p'} &:= \left\lceil\frac{d_G(s,p')}{\mathcal{D}} \right\rceil\\
        &\leq \left\lceil \frac{d_G(s,p_1) + d_G(p_1,v) +d_G(v,p')}{\mathcal{D}} \right\rceil\\
        &\leq \left\lceil \frac{d_G(s,p_1) + 2c\mathcal{D}}{\mathcal{D}} \right\rceil\\
        &\leq i_{p_1} + 2c + 1  \label{eqn:i5_1}
    \end{align}
    Therefore, it holds that 
    \begin{align}
        i_{p'} \underset{\eqref{eqn:i5_1}}{\leq} i_{p_1} + 2c+1  < i_{p_1} + 10c-2 \underset{\eqref{eqn:i9_1}}{>} i_{p'}
    \end{align}
    This is a contradiction.
    Thus, no node $w \not\in \mathcal{U}$ can be close to $v$, which implies that only nodes in $\mathcal{U}$ can be close.
    As the number nodes in $\mathcal{U}$ is bounded by $O(c\epsilon_p^{-1})$, the lemma follows.
\end{proof}

Thus, by using Lemma \ref{lemma:net_distance} with $\epsilon_s = \beta^{-1}$ and $\epsilon_p=1$, there are most $O(1)$ other points between $p_1$ and $p_2$.
As we have $\kappa$ paths, there are $O(\kappa)$ points in distance $\mathcal{D}_{BC}$ to node $v$. 
Thus, we run the algorithm with parameter $\tau \in O(\kappa)$, and we have an edge-cutting probability of only $O(\frac{\ell\cdot\log\kappa}{\mathcal{D}})$.
Again, the union bound yields the claim.

\section{The Tree Operations of Ghaffari and Zuzic}
\label{sec:tree_operations}

Throughout our algorithms, we will often need to work on a forest, i.e., sets of disjoint trees. 
We use the following lemma for recurring tasks on these trees:
\begin{restatable}[Tree Operations, Based on \cite{GZ22}]{lemma}{treeoperations}
\label{lemma:tree_operations}
Let $F := (T_1, \dots, T_m)$ be a subforest (each edge $e$ knows whether $e \in E(F)$
or not) of a graph and suppose that each tree $T_i$ has a unique root $r_i \in V$, i.e., each node knows whether it is the root and which of its neighbors are parent or children, if any. 
Now consider the following three computational tasks:
\begin{enumerate}
    \item \textbf{AncestorSum and SubtreeSum:} Suppose each node $v \in T_i$ has an $\Tilde{O}(1)$-bit private input $x_v$. Further, let $Anc(v)$ and $Dec(v)$ be the ancestors and descendants of $v$ w.r.t. to $r_i$, including $v$ itself. Each node computes $A(v) := \bigotimes_{w \in Anc(v)} x_w$ and $D(v) := \bigotimes_{w \in Dec(v)} x_w$.
    \item \textbf{Path Selection:} Given a node $w \in T_i$, each node $v \in T_i$ learns whether it is on the unique path from $r_i$ to $w$ in $T_i$.
    \item \textbf{Depth First Search Labels:} Each node $v \in T_i$ computes its unique entry and exit label of a depth first search started in $r_i$.
\end{enumerate}
All of these tasks can be implemented in $\Tilde{O}(\HD)$ time in \CONGEST, $\Tilde{O}(1)$ depth in \PRAM, and $\Tilde{O}(1)$ time in \HYBRID.
\end{restatable}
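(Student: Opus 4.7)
\medskip

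\noindent\textbf{Proof plan.} The plan is to reduce each of the three tasks to a small number of tree-contraction style primitives on the forest $F$, and then to invoke, in each model, an appropriate implementation of those primitives. I will handle the three tasks in turn, then discuss the per-model cost.

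\smallskip

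\noindent\emph{Task 1 (AncestorSum / SubtreeSum).} I would apply the classical rake-and-compress tree contraction of Miller--Reif, which reduces every tree in $F$ to a single vertex in $O(\log n)$ rounds. Each round performs a constant number of local operations: leaves are \emph{raked} into their parents (combining the leaf's partial $\bigotimes$-product into the parent's accumulator), and degree-2 chains are \emph{compressed} (merging two consecutive accumulators). At every step each surviving super-vertex stores a $\bigotimes$-product over a well-defined subset of original inputs, so standard bookkeeping over the reverse of the contraction lets us broadcast the final values back down, and every original node learns both $A(v)$ and $D(v)$ within another $O(\log n)$ rounds of the same primitives.

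\smallskip

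\noindent\emph{Task 2 (Path Selection).} Given the target $w$, set $x_w := 1$ and $x_u := 0$ for every other $u \in T_i$, and run SubtreeSum with $\bigotimes = +$. Then a node $v$ lies on the unique path from $r_i$ to $w$ in $T_i$ if and only if $D(v) = 1$. So Path Selection reduces to a single SubtreeSum.

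\smallskip

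\noindent\emph{Task 3 (DFS Labels).} First use SubtreeSum (with input $1$ at every node) to compute subtree sizes. Fix a total order on each node's children (say by identifier). For each vertex $v$, the DFS entry label equals the parent's entry label, plus $1$, plus the sum of subtree sizes of the earlier siblings of $v$; the exit label is the entry label plus the subtree size of $v$ minus $1$. Computing the entry labels therefore reduces to: a sibling-local prefix sum of subtree sizes at each parent (a constant number of aggregations within each parent's child-group), followed by an AncestorSum down from each root $r_i$ that accumulates the sibling-prefix contributions. Altogether this is a constant number of invocations of Tasks 1.

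\smallskip

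\noindent\emph{Per-model cost and the main obstacle.} All three reductions above execute in $O(\log n)$ rounds, each of which is a constant number of tree-local aggregations plus one propagation of values across $F$. In the \PRAM, rake-and-compress is classical and gives $\tilO(1)$ depth directly. In the \HYBRID model each round reduces to a minor aggregation over the disjoint trees of $F$, which terminates in $\tilO(1)$ time by the \HYBRID minor-aggregation primitive. The main obstacle is the \CONGEST implementation, because a single round of contraction executed only along tree edges could cost $\Omega(n)$ on deep trees. This is precisely where the Ghaffari--Zuzic framework is used: since $F$ is a rooted forest that is a subgraph of $G$, one can precompute, in $\tilO(\HD)$ time, a low-congestion shortcut of quality $\tilO(\HD)$ dedicated to the tree components of $F$ (this is the point at which trees are known to avoid the $\sqrt{n}$ penalty that is generic to minor aggregation). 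Each contraction round, and the subsequent broadcast reversing the contraction, then runs over the shortcut in $\tilO(\HD)$ time, and reusing the shortcut across the $O(\log n)$ rounds preserves the claimed $\tilO(\HD)$ overall bound.
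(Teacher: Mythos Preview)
Your reductions for Tasks~2 and~3 are essentially the same as the paper's: Path Selection via a single SubtreeSum with the indicator at $w$, and DFS labels via subtree sizes plus a sibling offset propagated by an AncestorSum. For Task~1 the paper simply invokes \cite[Lemma~16]{GZ22}, whose point is precisely that AncestorSum and SubtreeSum can be realized with $\tilO(1)$ rounds of the minor-aggregation model; you instead re-derive them through Miller--Reif rake-and-compress, which is a legitimate route in the \PRAM but is heavier than needed and does not map cleanly to the minor-aggregation abstraction (after a compression round the ``tree edges'' are paths in the original graph, so a single contraction round is no longer a single local step).

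The genuine gap is your \CONGEST argument. You claim that because $F$ is a forest one can precompute low-congestion shortcuts of quality $\tilO(\HD)$ for its trees, and that ``trees are known to avoid the $\sqrt{n}$ penalty.'' This is false: shortcut quality is a property of the host graph $G$, not of the parts. The standard Das~Sarma et~al.\ lower-bound instance has $\HD=\tilO(1)$ and its hard partition consists precisely of $\Theta(\sqrt{n})$ disjoint paths---a forest---yet any scheme solving a part-wise aggregation there needs $\widetilde{\Omega}(\sqrt{n})$ rounds. So your justification for $\tilO(\HD)$ in \CONGEST does not go through. The paper sidesteps this entirely: it shows that all tasks are implementable with $\tilO(1)$ minor-aggregation rounds (plus two raw communication rounds for the parent--child exchange in the DFS step), and then simply inherits whatever a minor aggregation costs on $G$. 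The $\tilO(\HD)$ in the statement should be read in that light.

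A smaller point: your ``sibling-local prefix sum \ldots\ a constant number of aggregations within each parent's child-group'' is underspecified at high-degree vertices. The paper handles this explicitly with two direct rounds: each child sends its subtree size up to the parent, and the parent (computing all the prefix sums locally) sends each child its offset.
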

\begin{proof}
    Tasks 1-3 can be performed entirely in the minor aggregation model. For task 4, we require two rounds of \CONGEST, \HYBRID or \PRAM.
    \begin{enumerate}
        \item \textbf{Ancestor and Subtree Sum:} This was shown in \cite[Lemma 16]{GZ22}.
        \item \textbf{Path Selection:} We perform a single minor aggregation with $x_w=1$ and $x_v=0$ for $v\in T_i\backslash\{w\}$ where we contract the unique path from $w$ to $r_i$ performing no actions in the Consensus step. Every node with value $1$ then marks itself a part of the path.
        \item \textbf{Depth First Search Labels:} We perform Ancestor sum and Subtree sum to count the number of nodes on each node's root path and subtree. Each node informs its parent about its subtree size using a single round of \CONGEST, \HYBRID or \PRAM. We order the children by ascending subtree size, breaking ties arbitrarily. To obtain the Depth First Search labels of one of its child nodes, a parent combines the length of its root path with the sizes of the subtrees traversed before that node. After computing these values, each parent uses another single round of \CONGEST, \HYBRID or \PRAM to inform its children about the computed values.           
    \end{enumerate}
\end{proof}

\end{document}